\documentclass[11pt,letterpaper]{article}
\usepackage[left=1in,right=1in,top=1in,bottom=1in]{geometry}
\usepackage{xspace}
\usepackage{latexsym}
\usepackage{stmaryrd} % for \llbracket and \rrbracket
\usepackage{amsmath}
\usepackage{amssymb}
\usepackage{amsthm}
\usepackage{comment}
\usepackage{xy}
\usepackage{hyperref}
\usepackage{url}

%\excludecomment{comment:parityP}
\excludecomment{comment:FOCS}
\includecomment{comment:inFOCS}

\theoremstyle{definition} 
\newtheorem{definition}{Definition}[section]
\newtheorem{open}[definition]{Open Question}
\newtheorem{question}[definition]{General Question}
\newtheorem{example}[definition]{Example}
\newtheorem{remark}[definition]{Remark}
\newtheorem{observation}[definition]{Observation}

\newtheorem*{definition*}{Definition}

\theoremstyle{plain}
\newtheorem{proposition}[definition]{Proposition}
\newtheorem{lemma}[definition]{Lemma}
\newtheorem{corollary}[definition]{Corollary}
\newtheorem{theorem}[definition]{Theorem}

\newtheorem*{VNPthm}{Theorem~\ref{thm:VNP}}
\newtheorem*{VNPlem}{Lemma~\ref{lem:VNP}}
\newtheorem*{EFthm}{Theorem~\ref{thm:EF}}
\newtheorem*{AC0thm}{Theorem~\ref{thm:AC0}}

\newtheorem*{AC0pcor}{Corollary~\ref{cor:AC0p}}
\newtheorem*{pitassiProp}{Proposition~\ref{prop:pitassi}}
\newtheorem*{depthThm}{Theorem~\ref{thm:depth}}

\newcommand{\rmkqed}{\nobreak\hfill$\lhd$}
\newenvironment{proof-idea}{\noindent \textit{Proof idea.}}{\nobreak\hfill\qed\bigskip}

\input xy
\xyoption{all}

%%%%%%%%% Formatting commands

\newcommand{\myparagraph}[1]{
\medskip
\noindent \textbf{#1}
\medskip
}

\newcommand{\definedWord}[1]{\emph{#1}}
\newcommand{\ie}{i.\,e.\xspace}
\newcommand{\eg}{e.\,g.\xspace}

%%%%%%%%% General math
\newcommand{\Z}{\ensuremath{\mathbb{Z}}}

\newcommand{\C}{\ensuremath{\mathbb{C}}}

\newcommand{\F}{\ensuremath{\mathbb{F}}}

\DeclareMathOperator{\poly}{poly}

\newcommand{\image}{\mathrm{Im}}

%%%%%%%% For use in cases

%\newcommand{\defeq}{\ensuremath{=_{\text{\tiny def}}}}
\newcommand{\defeq}{\stackrel{def}{=}}

% Set theory
%\newcommand{\union}{\ensuremath{\cup}}
%\newcommand{\set}[1]{\{#1\}}
%\newcommand{\tuple}[1]{\langle #1 \rangle}

%\newcommand{\io}{\ensuremath{\text{ i.\,o.}}}
%\newcommand{\ae}{\ensuremath{\text{ a.\,e.}}} % NOTE TO SELF: \ae already defined

%%%%%%%%%%% Complexity
\newcommand{\parity}{\oplus}
\newcommand{\cc}[1]{\ensuremath{\mathsf{#1}}} % complexity class
%\newcommand{\lang}[1]{\mathit{#1}}            % language
%\newcommand{\oracle}[1]{^{\lang{#1}}}

%%%%%%%%%%%% Logic

%%%%%%%%%%% Constants - names

\newcommand{\Burgisser}{B\"{u}rgisser\xspace}
\newcommand{\Grobner}{Gr\"{o}bner\xspace}
\newcommand{\Krajicek}{Kraj{\'{\i}}{\v{c}}ek\xspace}

\newcommand{\I}{\ensuremath{\text{IPS}}\xspace}
\newcommand{\Itext}{IPS\xspace}
\newcommand{\f}{y}
\newcommand{\prop}[1]{\underline{#1}}

\begin{document}

\title{Circuit Complexity, Proof Complexity and Polynomial Identity Testing%\\
%{\small \sc Private Draft - Do Not Distribute}
}
\author{Joshua A. Grochow and Toniann Pitassi}

\maketitle

\pagestyle{myheadings}
%\markboth{Circuits, Proofs, and PIT - J. A. Grochow and T. Pitassi}{Circuits, Proofs, and PIT - J. A. Grochow and T. Pitassi}
\markboth{Circuit Complexity, Proof Complexity, and PIT- J. A. Grochow and T. Pitassi}{Circuit Complexity, Proof Complexity, and PIT - J. A. Grochow and T. Pitassi}

\begin{abstract}
We introduce a new and very natural algebraic proof system, which has tight connections to (algebraic) circuit complexity. In particular, we show that any super-polynomial lower bound on any Boolean tautology in our proof system implies that the permanent does not have polynomial-size algebraic circuits ($\cc{VNP} \neq \cc{VP}$). 
As a corollary to the proof, we also show that super-polynomial lower bounds on the number of lines in Polynomial Calculus proofs (as opposed to the usual measure of number of monomials) imply the Permanent versus Determinant Conjecture.
Note that, prior to our work, there was no proof system for which lower bounds on an arbitrary tautology implied \emph{any} computational lower bound. 

Our proof system helps clarify the relationships between previous algebraic proof systems, and begins to shed light on why proof complexity lower bounds for various proof systems have been so much harder than lower bounds on the corresponding circuit classes. In doing so, we highlight the importance of polynomial identity testing (PIT) for understanding proof complexity.

More specifically, we introduce certain propositional axioms satisfied by any Boolean circuit computing PIT. (The existence of efficient proofs for our PIT axioms appears to be somewhere in between the major conjecture that PIT$\in \cc{P}$ and the known result that PIT$\in \cc{P/poly}$.) We use these PIT axioms to shed light on $\cc{AC}^0[p]$-Frege lower bounds, which have been open for nearly 30 years, with no satisfactory explanation as to their apparent difficulty. We show that either:
\begin{enumerate}
\renewcommand{\theenumi}{\alph{enumi}}
\item Proving super-polynomial lower bounds on $\cc{AC}^0[p]$-Frege implies $\cc{VNP}_{\F_p}$ does not have polynomial-size circuits of depth $d$---a notoriously open question for any $d \geq 4$---thus explaining the difficulty of lower bounds on $\cc{AC}^0[p]$-Frege, or

\item $\cc{AC}^0[p]$-Frege cannot efficiently prove the depth $d$ PIT axioms, and hence we have a lower bound on $\cc{AC}^0[p]$-Frege.
\end{enumerate}
We also prove many variants on this statement for other proof systems and other computational lower bounds.

Finally, using the algebraic structure of our proof system, we propose a novel way to extend techniques from algebraic circuit complexity to prove lower bounds in proof complexity. Although we have not yet succeeded in proving such lower bounds, this proposal should be contrasted with the difficulty of extending $\cc{AC}^0[p]$ circuit lower bounds to $\cc{AC}^0[p]$-Frege lower bounds.
\end{abstract}

\section{Extended abstract} \label{sec:eabs}

\subsection{Introduction}
$\cc{NP}$ versus $\cc{coNP}$ is the very natural question of whether, for every graph that doesn't have a Hamiltonian path, there is a short proof of this fact. One of the arguments for the utility of proof complexity is that by proving lower bounds against stronger and stronger proof systems, we ``make progress'' towards proving $\cc{NP} \neq \cc{coNP}$. However, until now this argument has been more the expression of a philosophy or hope, as there is no known proof system for which lower bounds imply computational complexity lower bounds of any kind, let alone $\cc{NP} \neq \cc{coNP}$.

We remedy this situation by introducing a very natural algebraic proof system, which has tight connections to (algebraic) circuit complexity. We show that any super-polynomial lower bound on any Boolean tautology in our proof system implies that the permanent does not have polynomial-size algebraic circuits ($\cc{VNP} \neq \cc{VP}$). Note that, prior to our work, essentially all implications went the opposite direction: a circuit complexity lower bound implying a proof complexity lower bound. We use this result to begin to explain why several long-open lower bound questions in proof complexity---lower bounds on Extended Frege, on $\cc{AC}^0[p]$-Frege, and on number-of-lines in Polynomial Calculus-style proofs---have been so apparently difficult.

\subsubsection{Background and Motivation}
\paragraph{Algebraic Circuit Complexity.} The most natural way to compute a polynomial function $f(x_1,\dotsc,x_n)$ is
with a sequence of instructions $g_1,\dotsc,g_m = f$, starting from the inputs $x_1, \dotsc, x_n$, and where each instruction $g_i$ is of the form $g_j \circ g_k$ for some $j,k < i$, where $\circ$ is either a linear combination or multiplication. Such computations are called algebraic circuits or straight-line programs. The goal of algebraic complexity is to understand the optimal asymptotic complexity of computing a given polynomial family $(f_n(x_1,\dotsc,x_{\poly(n)})_{n=1}^{\infty}$, typically in terms of size and depth. In addition to the intrinsic interest in these questions, since Valiant's work \cite{valiant, valiantPerm, valiantProjections} algebraic complexity has become more and more important for Boolean computational complexity. Valiant argued that understanding algebraic complexity could give new intuitions that may lead to better understanding of other models of computation (see also \cite{Gat2}); several direct connections have been found between algebraic and Boolean complexity \cite{kabanetsImpagliazzo, burgisserCookValiant, jansenSanthanam, mulmuleyPRAM}; and the Geometric Complexity Theory Program (see, \eg, the survey \cite{gctCACM} and references therein) suggests how algebraic techniques might be used to resolve major Boolean complexity conjectures.

Two central functions in this area are the determinant and permanent polynomials, 
which are fundamental both because of their prominent role in many areas of mathematics and because they are complete for various natural complexity classes. In particular, the permanent of $\{0,1\}$-matrices is $\cc{\# P}$-complete, and the permanent of arbitrary matrices is $\cc{VNP}$-complete. Valiant's Permanent versus Determinant Conjecture \cite{valiant} states that the permanent of an $n \times n$ matrix, as a polynomial in $n^2$ variables, cannot be written as the determinant of any polynomially larger matrix all of whose entries are variables or constants. In some ways this is an algebraic analog of $\cc{P} \neq \cc{NP}$, although it is in fact much closer to $\cc{FNC}^2 \neq \cc{\# P}$. In addition to this analogy, the Permanent versus Determinant Conjecture is also known to be a formal consequence of the nonuniform lower bound $\cc{NP} \not\subseteq \cc{P/poly}$ \cite{burgisserCookValiant}, and is thus thought to be an important step towards showing $\cc{P} \neq \cc{NP}$.

Unlike in Boolean circuit complexity, (slightly) non-trivial lower bounds for the size of algebraic circuits are known \cite{strassenDegree,baurStrassen}. Their methods, however, only give lower bounds up to $\Omega (n\log n)$. Moreover, their methods are based on  a degree analysis of certain algebraic varieties and do not give lower bounds for polynomials of constant degree. Recent exciting work \cite{agrawalVinay, koiranChasm, tavenas} has shown that polynomial-size algebraic circuits computing functions of polynomial degree can in fact be computed by subexponential-size depth 4 algebraic circuits. Thus, strong enough lower bounds for depth 4 algebraic circuits for the permanent would already prove $\cc{VP} \neq \cc{VNP}$.

\medskip

\paragraph{Proof Complexity.} Despite considerable progress obtaining super-polynomial lower bounds for many weak proof systems (resolution, cutting planes, bounded-depth Frege systems), there has been essentially no progress in the last 25 years for stronger proof systems such as Extended Frege systems or Frege systems. More surprisingly, no nontrivial lower bounds are known for the seemingly weak $\cc{AC}^0[p]$-Frege system. Note that in contrast, the analogous result in circuit complexity---proving  super-polynomial $\cc{AC}^0[p]$ lower bounds for an explicit function---was resolved by Smolensky over 25 years ago \cite{smolensky}. To date, there has been no satisfactory explanation for this state of affairs.

In proof complexity, there are no known formal barriers such as relativization \cite{bakerGillSolovay}, Razborov--Rudich natural proofs \cite{razborovRudich}, or algebrization \cite{aaronsonWigderson} that exist in Boolean function complexity. Moreover, there has not even been progress by way of conditional lower bounds. That is, trivially $\cc{NP} \neq \cc{coNP}$ implies superpolynomial lower bounds for $\cc{AC}^0[p]$-Frege, but we know of no weaker complexity assumption that implies such lower bounds. The only formal implication in this direction shows that certain circuit lower bounds imply lower bounds for proof systems that admit feasible interpolation, but unfortunately only weak proof systems (not Frege nor even $\cc{AC}^0$-Frege) have this property \cite{Bonet,Bonet2}. In the converse direction, there are essentially no implications at all. For example, we do not know if $\cc{AC}^0[p]$-Frege lower bounds---nor even Frege nor Extended Frege lower bounds---imply any nontrivial circuit lower bounds.

\subsubsection{Our Results}
In this paper, we define a simple and natural proof system that we call the Ideal Proof System (IPS)
based on Hilbert's Nullstellensatz. Our system is similar in spirit to related
algebraic proof systems that have been previously studied, but is different in a crucial way that we explain below.

Given a set of polynomials $F_1,\ldots,F_m$ in $n$ variables $x_1,\ldots,x_n$ over a field $\F$ without a
common zero over the algebraic closure of $\F$, Hilbert's Nullstellensatz says that there exist polynomials
$G_1,\ldots,G_m \in \F[x_1,\ldots,x_n]$ such that $\sum F_i G_i =1$, \ie, that $1$ is in the ideal generated by the $F_i$. In the Ideal Proof System, we introduce new variables $\f_i$ which serve as placeholders into which the original polynomials $F_i$ will
eventually be substituted:

\begin{definition}[Ideal Proof System] \label{def:IPS}
An \definedWord{\I certificate} that a system of $\F$-polynomial equations
$F_1(\vec{x})=F_2(\vec{x}) = \dotsb = F_m(\vec{x}) = 0$ is unsatisfiable over $\overline{\F}$ is
a polynomial $C(\vec{x}, \vec{\f})$ in the variables $x_1,\ldots,x_n$ and $\f_1,\ldots,\f_m$ such that
\begin{enumerate}
\item \label{condition:ideal} $C(x_1,\dotsc,x_n,\vec{0}) = 0$, and
\item \label{condition:nss} $C(x_1,\dotsc,x_n,F_1(\vec{x}),\dotsc,F_m(\vec{x})) = 1$.
\end{enumerate}
The first condition is equivalent to $C$ being in the ideal generated by $\f_1, \dotsc, \f_m$, and the two conditions together therefore imply that $1$ is in the ideal generated by the $F_i$, and hence that $F_1(\vec{x}) = \dotsb = F_m(\vec{x})=0$ is unsatisfiable. 

An \definedWord{\I proof} of the unsatisfiability of the polynomials $F_i$ is an $\F$-algebraic circuit on inputs $x_1,\ldots,x_n,\f_1,\ldots,\f_m$ computing some \I certificate of unsatisfiability.
\end{definition}

For any class $\mathcal{C}$ of polynomial families, we may speak of $\mathcal{C}$-\I proofs of a family of systems of equations $(\mathcal{F}_n)$ where $\mathcal{F}_n$ is $F_{n,1}(\vec{x}) = \dotsb = F_{n,\poly(n)}(\vec{x}) = 0$. When we refer to \I without further qualification, we mean $\cc{VP}$-\I, that is, the family of \I proofs should be computed by circuits of polynomial size \emph{and polynomial degree}, unless specified otherwise.

The Ideal Proof System (without any size bounds) is easily shown to be sound, and its completeness follows from the Nullstellensatz.

We typically consider \I as a propositional proof system by translating a CNF tautology $\varphi$ into a system of equations as follows. We translate a clause $\kappa$ of $\varphi$ into a single algebraic equation $F(\vec{x})$ as follows: $x \mapsto 1-x$, $x \vee y \mapsto xy$. This translation has the property that a $\{0,1\}$ assignment satisfies $\kappa$ if and only if it satisfies the equation $F = 0$. Let $\kappa_1, \dotsc, \kappa_m$ denote all the clauses of $\varphi$, and let $F_i$ be the corresponding polynomials. Then the system of equations we consider is $F_1(\vec{x}) = \dotsb = F_m(\vec{x}) = x_1^2 - x_1 = \dotsb = x_n^2 - x_n = 0$. The latter equations force any solution to this system of equations to be $\{0,1\}$-valued. Despite our indexing here, when we speak of the system of equations corresponding to a tautology, we always assume that the $x_i^2 - x_i$ are among the equations.

Like previously defined algebraic systems \cite{BIKPP,CEI,pitassi96,pitassiICM}, proofs in our system can be
checked in randomized polynomial time.
The key difference between our system and previously studied
ones is that those systems are axiomatic in the sense that they require that \emph{every}
sub-computation (derived polynomial) be in the ideal generated by the original polynomial equations $F_i$, and thus be a sound consequence of the equations $F_1=\dotsb=F_m=0$.
In contrast our system has no such requirement; an \I proof can compute potentially
unsound sub-computations (whose vanishing does not follow from $F_1=\dotsb=F_m=0$), as long as the \emph{final polynomial} is in the ideal
generated by the equations. This key difference allows \I proofs to be
\emph{ordinary algebraic circuits}, and thus nearly all results in
algebraic circuit complexity apply directly to the Ideal Proof System. To quote the tagline of a common US food chain, the Ideal Proof System is a ``No rules, just right'' proof system.

Our first main theorem shows one of the advantages of this close connection with algebraic circuits. To the best of our knowledge, this is the first implication showing that a proof complexity lower bound implies any sort of computational complexity lower bound. 

\begin{VNPthm}
Super-polynomial lower bounds for the Ideal Proof System imply that the permanent does not have polynomial-size
algebraic circuits, that is, $\cc{VNP} \neq \cc{VP}$.
\end{VNPthm}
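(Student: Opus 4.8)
The plan is to prove the contrapositive: assuming $\cc{VP} = \cc{VNP}$ --- equivalently, that the permanent has polynomial-size algebraic circuits --- I will show that \I is polynomially bounded on every family of unsatisfiable systems arising from tautologies, so that no super-polynomial lower bound can exist. Concretely it suffices to exhibit, for each unsatisfiable CNF $\varphi$ with clause polynomials $F_1, \dotsc, F_m$ together with the Boolean axioms $x_j^2 - x_j$ (more generally, any unsatisfiable system containing the Boolean axioms), an \I certificate $C$ computable by a circuit of polynomial size and polynomial degree. The hypothesis will be used only at the very end, to collapse a $\cc{VNP}$ certificate down to a $\cc{VP}$ one; the real work is to produce a $\cc{VNP}$ certificate in the first place.

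First I would write down an explicit Nullstellensatz certificate by brute force over the Boolean cube. For $a \in \{0,1\}^n$ let $\delta_a(\vec x) = \prod_j \bigl(a_j x_j + (1-a_j)(1-x_j)\bigr)$ be the point indicator; these obey the polynomial identity $\sum_{a \in \{0,1\}^n} \delta_a(\vec x) = \prod_j \bigl(x_j + (1-x_j)\bigr) = 1$. Since $\varphi$ is unsatisfiable every $a$ falsifies some clause, so fix $c(a)$ to be the least falsified clause index; then $F_{c(a)}(a) = 1$. The key observation is that when $\kappa_{c(a)}$ is falsified by $a$, each of its translated literals is \emph{equal} to the corresponding factor of $\delta_a$ (a positive literal $x_j$ forces $a_j = 0$, giving the factor $1 - x_j$ on both sides; dually for a negative literal). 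Hence $\delta_a \cdot F_{c(a)}$ agrees with $\delta_a$ except that the clause variables occur squared, and using $x_j^2 \equiv x_j$ one can telescope the difference into an explicit combination $\sum_j (x_j^2 - x_j)\, T_{a,j}(\vec x)$ of size $O(n)$. Summing over $a$ and collecting terms yields a Nullstellensatz certificate $1 = \sum_i F_i\, G_i(\vec x) + \sum_j (x_j^2 - x_j)\, H_j(\vec x)$ with $G_i = \sum_{a : c(a) = i} \delta_a$ and $H_j = -\sum_{a} T_{a,j}$ (summed over those $a$ whose falsified clause mentions $x_j$), and the associated \I certificate is $C = \sum_i y_i G_i + \sum_j z_j H_j$, where $\vec y, \vec z$ are the placeholders for the $F_i$ and for the Boolean axioms.

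Next I would verify that $C$ defines a $\cc{VNP}$ family. Its total degree is $O(n)$, hence polynomial. And $C$ is literally a sum over $a \in \{0,1\}^n$ of the quantity $\delta_a(\vec x)\, y_{c(a)} - \sum_{j} z_j T_{a,j}(\vec x)$ (the inner sum over variables of $\kappa_{c(a)}$), which --- given $a$ and the list of clauses hard-wired as non-uniform advice --- is computed by a polynomial-size circuit: a polynomial-size subcircuit locates the first falsified clause $c(a)$ from $a$, and then the factors of $\delta_a$ and of each $T_{a,j}$ are read off position by position via multiplexers driven by that subcircuit. So the certificate family lies in $\cc{VNP}$. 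Invoking $\cc{VNP} = \cc{VP}$, it has polynomial-size, polynomial-degree algebraic circuits, \ie polynomial-size $\cc{VP}$-\I proofs for the tautologies $\varphi$; the contrapositive then gives the theorem.

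I expect the main obstacle to be exactly the treatment of the Boolean axioms in the second step. The abstract fact ``a polynomial vanishing on $\{0,1\}^n$ lies in the ideal $(x_1^2 - x_1, \dotsc, x_n^2 - x_n)$'' is non-constructive, and the obvious constructive witness --- full multilinearization --- does not obviously yield coefficients of polynomial circuit size, much less ones organized into a single $\cc{VNP}$ sum. The matching-polarity observation is what makes the correction terms $T_{a,j}$ simultaneously explicit, small, and of polynomial degree; carrying this out while keeping the per-$a$ circuit simple enough to be assembled into one $\cc{VNP}$ computation (and keeping the overall degree polynomial) is the crux.
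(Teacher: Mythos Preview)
Your approach is essentially the paper's: both start from $1 = \sum_{a \in \{0,1\}^n} \delta_a(\vec x)$, partition assignments greedily by first-falsified clause, and then argue the resulting certificate is in $\cc{VNP}$. The one difference is that you take an unnecessary detour through the Boolean axioms. You already observe that when $a$ falsifies $\kappa_i$, each literal factor of $F_i$ \emph{equals} the corresponding factor of $\delta_a$; but this means $F_i$ literally \emph{divides} $\delta_a$ as a polynomial, so you can write $\delta_a = F_i(\vec x) \cdot \prod_{j \colon x_j \notin \kappa_i} b(a_j, x_j)$ exactly --- no squaring, no correction by $x_j^2 - x_j$, no $T_{a,j}$ terms at all. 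The paper exploits exactly this and obtains the cleaner certificate $\sum_i c_i \sum_{a \in A_i} \prod_{j \notin \kappa_i} b(a_j, x_j)$, which uses only the clause placeholders and none of the Boolean-axiom placeholders. So the ``main obstacle'' you anticipate simply does not arise; your $H_j$ coefficients can all be taken to be zero.
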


From the proof of this result, together with one of our simulation results (Proposition~\ref{prop:pitassi}), we also get:

\begin{corollary} \label{cor:PC}
Super-polynomial lower bounds on the number of lines in Polynomial Calculus proofs imply the Permanent versus Determinant Conjecture.\footnote{Although Corollary~\ref{cor:PC} may seem to be saying that lower bounds on PC imply a circuit lower bound, this is not precisely the case, because complexity in PC is emphatically not measured by the number of lines, but rather by the total number of monomials appearing in a PC proof. This is true both definitionally and in practice, in that all previous papers on PC use the number-of-monomials complexity measure.}
\end{corollary}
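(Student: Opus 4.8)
The plan is to specialize the proof of Theorem~\ref{thm:VNP} and feed its output into the simulation of Proposition~\ref{prop:pitassi}. Recall that the proof of Theorem~\ref{thm:VNP} produces, for every unsatisfiable CNF $\varphi$ on $n$ variables, an \I certificate $C_\varphi(\vec x,\vec\f)$ that is a $\poly(|\varphi|)$-size \emph{projection of the permanent}: there is a matrix $M_\varphi$ of size $\poly(|\varphi|)$ whose entries are variables among $\vec x,\vec\f$ or field constants with $\perm(M_\varphi)=C_\varphi$. I will also use the standard fact that the Permanent versus Determinant Conjecture is equivalent (up to the completeness and simulation facts relating the determinant, $\cc{VBP}$, and (weakly) skew circuits) to $\perm\notin\cc{VBP}$: if $\perm_n$ were a $\poly(n)$-size projection of $\det$ then $\perm\in\cc{VBP}$, since $\det\in\cc{VBP}$ and $\cc{VBP}$ is closed under $p$-projections, and conversely $\det$ is $\cc{VBP}$-complete under $p$-projections.

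I would argue by contraposition. Suppose the Permanent versus Determinant Conjecture fails, so $\perm\in\cc{VBP}$. Since $C_\varphi$ is a $p$-projection of $\perm$, closure of $\cc{VBP}$ under $p$-projections gives that $C_\varphi$ is computed by a $\poly(|\varphi|)$-size algebraic branching program, equivalently a $\poly(|\varphi|)$-size skew circuit. Now apply Proposition~\ref{prop:pitassi}, which converts such a structured \I proof into a Polynomial Calculus refutation with only polynomially many lines: process the branching program layer by layer, keeping one PC line per node equal to the partial sum of path-products from the source; advancing one layer costs $O(1)$ PC steps per edge (each multiplication by an edge variable is one PC multiplication step, substituting a clause polynomial $\f_i\mapsto F_i$ of bounded width costs $O(1)$ steps, and the combination at a node is a single linear-combination step). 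The source-to-sink value equals $C_\varphi(\vec x,F_1(\vec x),\dots,F_m(\vec x))=1$, so we obtain a PC refutation of $\varphi$ with $\poly(|\varphi|)$ lines. Since $\varphi$ was arbitrary, no family of CNFs has a super-polynomial lower bound on the number of PC lines; contrapositively, any such lower bound implies the Permanent versus Determinant Conjecture.

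The main obstacle is making sure the certificate $C_\varphi$ built in the proof of Theorem~\ref{thm:VNP} really has the shape that Proposition~\ref{prop:pitassi}'s simulation consumes. The PC multiplication rule only allows multiplying a derived line by a single variable, so the branching program for $C_\varphi$ must be massaged so that its edge labels are variables (or, after substituting $\f_i\mapsto F_i$, bounded-width clause polynomials), and one must be careful with the constant terms of edge labels, since a nonzero constant is not freely available as a PC line mid-proof; this is handled either by arranging the ABP so that every partial sum is legitimately derivable, or by pushing constants into the substitution, using that $x_j^2-x_j$ (hence $1-x_j$) is among the axioms. The remaining ingredients---closure of $\cc{VBP}$ under $p$-projections, $\cc{VBP}$-completeness of $\det$, and the identification of $C_\varphi$ as a small permanent-projection coming from the proof of Theorem~\ref{thm:VNP}---are standard or already in hand.
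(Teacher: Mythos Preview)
Your contrapositive argument is correct and is essentially the same as the paper's: combine Lemma~\ref{lem:VNP} (every CNF has a Hilbert-like \I certificate in $\cc{VNP}$) with Proposition~\ref{prop:pitassi} (number-of-lines PC is p-equivalent to Hilbert-like $\det$-\I, i.e.\ $\cc{VP}_{ws}$-\I). The paper states it in the direct direction---a super-polynomial PC-lines lower bound means the $\cc{VNP}$ certificate of Lemma~\ref{lem:VNP} lies outside $\cc{VP}_{ws}$, hence $\cc{VNP}\neq\cc{VP}_{ws}$---but this is the same argument.

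Two small corrections. First, Lemma~\ref{lem:VNP} does not literally exhibit $C_\varphi$ as $\perm(M_\varphi)$; it shows $C_\varphi\in\cc{VNP}$, and you then need to invoke $\cc{VNP}$-completeness of the permanent under p-projections (over characteristic $\neq 2$) to get the projection. Second, your entire second and third paragraphs---processing the ABP layer by layer, worrying about edge labels and constants---are unnecessary: this is precisely the content of the weakly-skew direction of Proposition~\ref{prop:pitassi}, which you should simply cite as a black box. The proposition already extracts the coefficients $G_i$ from the Hilbert-like certificate and handles the conversion to PC lines; there is no ``main obstacle'' left for you to address.
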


Under a reasonable assumption on polynomial identity testing (PIT), which we discuss further below, we are able to show that Extended Frege is equivalent to the Ideal Proof System. Extended Frege (EF) is the strongest natural deduction-style propositional proof system that has been proposed, and is the proof complexity analog of $\cc{P/poly}$ (that is, Extended Frege = $\cc{P/poly}$-Frege). 

\begin{EFthm}
Let $K$ be a family of polynomial-size Boolean circuits for PIT such that the PIT axioms for $K$ (see Definition~\ref{def:PITaxioms}) have polynomial-size EF proofs. Then EF polynomially simulates \I, and hence the EF and \I are polynomially equivalent.
\end{EFthm}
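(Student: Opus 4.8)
The plan is to convert any IPS proof into an EF proof of the same tautology with only polynomial blow-up. An IPS proof is an algebraic circuit $C(\vec x,\vec y)$ of some size $s$ with $C(\vec x,\vec 0)\equiv 0$ and $C(\vec x,F_1(\vec x),\dots,F_m(\vec x))\equiv 1$ (Definition~\ref{def:IPS}), where $F_1,\dots,F_m$ are the polynomial translations of the clauses of $\varphi$ together with the Boolean axioms $x_j^2-x_j$. The one-line semantic soundness argument is: if $\vec a\in\{0,1\}^n$ satisfied every clause then $F_i(\vec a)=0$ for all $i$ (the clause polynomials because the clauses hold, the $x_j^2-x_j$ because $\vec a$ is Boolean), so plugging $\vec a$ into the second identity gives $C(\vec a,\vec 0)=1$ while the first gives $C(\vec a,\vec 0)=0$, a contradiction. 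The difficulty in formalizing this in EF is that the two defining conditions are polynomial \emph{identities}, so one cannot simply evaluate them at $\vec a$; the PIT axioms for $K$ are exactly what lets EF cross this gap.

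First I would fix a field $\F$ for which field arithmetic, in a standard binary encoding, is computable by polynomial-size Boolean circuits (so that the arithmetizations below stay polynomial size); $\F=\F_p$ with $p=\poly(n)$ is the cleanest case, and over $\Q$ or $\Z$ the coefficient growth is handled by the usual reduction modulo a polynomially large prime, which is harmless here because the certificates built below vanish identically. Arithmetizing gate by gate turns $C$ and the $F_i$ into polynomial-size Boolean circuits $\hat C(\vec x,\vec v)$ and $\hat F_i(\vec x)$ computing the encodings of their values, and a basic fact about them that EF needs---that a circuit's output depends only on the values of its inputs---has polynomial-size EF proofs by induction on circuit structure. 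Now form the two algebraic circuits $C_1:=C(\vec x,\vec 0)$ and $C_2:=C(\vec x,F_1(\vec x),\dots,F_m(\vec x))-1$: each is polynomial size and, since $C$ is a valid IPS certificate, each computes the zero polynomial. Because $K$ correctly decides PIT, $K(\langle C_1\rangle)=K(\langle C_2\rangle)=1$; and since these are assertions about a \emph{fixed} Boolean circuit run on a \emph{fixed} input, EF proves each of them outright by evaluating $K$ gate by gate, in size $\poly(|K|)$.

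Next I would invoke the PIT axioms for $K$ (Definition~\ref{def:PITaxioms}), which assert the soundness of $K$ as a zero-test: instantiating the circuit slot with $C_1$ and with $C_2$ (a substitution of constants into a proven tautology), they give that if $K$ accepts then the encoded circuit evaluates to zero at every Boolean point, with polynomial-size EF proofs by hypothesis. Modus ponens with the previous step yields $\hat C_1(\vec x)=0$ and $\hat C_2(\vec x)=0$; since $\hat C_1$ is by construction $\hat C(\vec x,\vec 0)$ and $\hat C_2$ is $\hat C(\vec x,\hat F_1(\vec x),\dots,\hat F_m(\vec x))-1$, this reads $\hat C(\vec x,\vec 0)=0$ and $\hat C(\vec x,\hat F_1(\vec x),\dots,\hat F_m(\vec x))=1$. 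Finally, assuming $\vec x$ satisfies every clause of $\varphi$, EF derives $\hat F_i(\vec x)=0$ for all $i$---one polynomial-size derivation per clause from the defining property of the arithmetization, and trivially for the $x_j^2-x_j$---and then, using that $\hat C$'s output depends only on its inputs, concludes $1=\hat C(\vec x,\hat F_1(\vec x),\dots,\hat F_m(\vec x))=\hat C(\vec x,\vec 0)=0$, a contradiction. Hence EF proves $\varphi$ in size $\poly(s,|\varphi|,|K|)$, so EF $p$-simulates IPS; combined with the standard reverse simulation of EF by IPS, the two are polynomially equivalent.

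I expect the crux to be the interface with the PIT axioms: one has to verify that the axioms of Definition~\ref{def:PITaxioms}, once instantiated at the specific circuits $C_1$ and $C_2$, genuinely deliver the ``$K$ accepts $\Rightarrow$ vanishes on $\{0,1\}^n$'' implications used above, and that every component---the evaluation of $K$ on a fixed input, the arithmetized circuits $\hat C,\hat F_i$, and the substitutions---remains polynomial size (a weaker axiomatization, or an unbounded blow-up in coefficients, would break the simulation). The rest---the inductive EF proofs of the circuit-evaluation lemmas and the per-clause tautologies---is routine but should be carried out with care.
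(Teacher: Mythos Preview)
Your outline has a real gap at the step where you ``invoke the PIT axioms'' to pass from $K$ accepting to actual evaluation. Axiom~\ref{axiom:Boolean} does \emph{not} say ``if $K$ accepts $[C(\vec x)]$ then $C$ evaluates to $0$ on Boolean inputs''; its conclusion is again a $K$-statement, namely $K(\prop{[C(\vec p)]})$. None of the four axioms ever leaves the $K$-world: every consequent is of the form $K(\prop{[\,\cdot\,]})$ or $\neg K(\prop{[\,\cdot\,]})$. To get from $K(\prop{[C_1(\vec p)]})$ to your $\hat C_1(\vec p)=0$ you would need EF to prove, uniformly in the propositional variables $\vec p$, that $K$ is correct on the family of constant circuits $C_1(\vec p)$---essentially a correctness statement about $K$ that is not among the axioms and has no evident short EF proof. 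The same problem recurs in your final step: ``$\hat C$'s output depends only on its inputs'' is fine as a statement about your arithmetized evaluator $\hat C$, but it does nothing to connect $\hat C$ to $K$, and only $K$-statements are what the axioms deliver.

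The paper's proof avoids this entirely by never leaving the $K$-world. From $K(\prop{[C(\vec x,\vec 0)]})$ it uses Axiom~\ref{axiom:Boolean} to get $K(\prop{[C(\vec p,\vec 0)]})$; from $Truth_{bool}$ it gets $K(\prop{[Q_i^\varphi(\vec p)]})$ for each clause; then---and this is the step you are trying to do via $\hat C$---it uses Axioms~\ref{axiom:subzero} and~\ref{axiom:perm} repeatedly to substitute the $Q_i$'s into the zero slots, obtaining $K(\prop{[C(\vec p,\vec Q(\vec p))]})$. Axiom~\ref{axiom:one} then gives $\neg K(\prop{[1-C(\vec p,\vec Q(\vec p))]})$, which contradicts what Axiom~\ref{axiom:Boolean} yields from $K(\prop{[1-C(\vec x,\vec Q(\vec x))]})$. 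Notice that all four axioms are used, whereas your sketch invokes only (a misreading of) Axiom~\ref{axiom:Boolean}; the substitution axiom is precisely what replaces your ``depends only on inputs'' step inside $K$.
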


Under this assumption about PIT, Theorems~\ref{thm:VNP} and \ref{thm:EF} in combination suggest a precise reason that proving lower bounds on Extended Frege is so difficult, namely, that doing so implies $\cc{VP} \neq \cc{VNP}$. Theorem~\ref{thm:EF} also suggests that to make progress toward proving lower bounds in proof complexity, it may be necessary to prove lower bounds for the 
Ideal Proof System, which we feel is more natural, and creates the possibility of harnessing tools from algebra, representation theory, and algebraic circuit complexity. We give a specific suggestion of how to apply these tools towards proof complexity lower bounds in Section~\ref{sec:syzygy}.

\begin{remark} \label{rmk:PIT}
Given that $PIT \in \cc{P}$ is known to imply lower bounds, one may wonder if the combination of the above two theorems really gives any explanation at all for the difficulty of proving lower bounds on Extended Frege. There are at least two reasons that it does. 

First, the best lower bound known to follow from $PIT \in \cc{P}$ is an algebraic circuit-size lower bound on an integer polynomial that can be evaluated in $\cc{NEXP} \cap \cc{coNEXP}$ \cite{jansenSanthanam} (via personal communication we have learned that Impagliazzo and Williams have also proved similar results), whereas our conclusion is a lower bound on algebraic circuit-size for an integer polynomial computable in $\cc{\# P} \subseteq \cc{PSPACE}$. 

Second, the hypothesis that our PIT axioms can be proven efficiently in Extended Frege seems to be somewhat orthogonal to, and may be no stronger than, the widely-believed hypothesis that PIT is in $\cc{P}$. As Extended Frege is a nonuniform proof system, efficient Extended Frege proofs of our PIT axioms are unlikely to have any implications about the uniform complexity of PIT (and given that we already know unconditionally that PIT is in $\cc{P/poly}$, uniformity is what the entire question of derandomizing PIT is about). In the opposite direction, it's a well-known observation in proof complexity that nearly all natural uniform polynomial-time algorithms have feasible (Extended Frege) correctness proofs. If this phenomenon doesn't apply to PIT, it would be interesting for both proof complexity and circuit complexity, as it indicates the difficulty of proving that PIT is in $\cc{P}$. \rmkqed
\end{remark}

Although PIT has long been a central problem of study in computational complexity---both because of its importance in many algorithms, as well as its strong connection to circuit lower bounds---our theorems highlight the importance of PIT in proof complexity. Next we prove that Theorem~\ref{thm:EF} can be scaled down to obtain similar results for weaker Frege systems, and discuss some of its more striking consequences.

\begin{AC0thm}
Let $\mathcal{C}$ be any of the standard circuit classes $\cc{AC}^k, \cc{AC}^k[p], \cc{ACC}^k, \cc{TC}^k, \cc{NC}^k$. Let $K$ be a family of polynomial-size Boolean circuits for PIT (not necessarily in $\mathcal{C}$) such that the PIT axioms for $K$ have polynomial-size $\mathcal{C}$-Frege proofs. Then $\mathcal{C}$-Frege is polynomially equivalent to \I, and consequently to Extended Frege as well.
\end{AC0thm}

Theorem~\ref{thm:AC0} also highlights the importance of our PIT axioms for getting $\cc{AC}^0[p]$-Frege lower bounds, which has been an open question for nearly thirty years. (For even weaker systems, Theorem~\ref{thm:AC0} in combination with known results yields an unconditional lower bound on $\cc{AC}^0$-Frege proofs of the PIT axioms.) In particular, 
we are in the following win-win scenario: 

\begin{AC0pcor}
For any $d$, either:
\begin{itemize}
\item There are polynomial-size $\cc{AC}^0[p]$-Frege proofs of the depth $d$ PIT axioms, in which case \emph{any superpolynomial lower bounds on $\cc{AC}^0[p]$-Frege imply $\cc{VNP}_{\F_p}$ does not have polynomial-size depth $d$ algebraic circuits}, thus explaining the difficulty of obtaining such lower bounds, or

\item There are no polynomial-size $\cc{AC}^0[p]$-Frege proofs of the depth $d$ PIT axioms, in which case we've gotten $\cc{AC}^0[p]$-Frege lower bounds.
\end{itemize}
\end{AC0pcor}

Finally, in Section~\ref{sec:syzygy} we suggest a new framework for proving lower bounds for
the Ideal Proof System which we feel has promise. Along the way, we make precise the difference in difficulty between proof complexity lower bounds (on \I, which may also apply to Extended Frege via Theorem~\ref{thm:EF}) and algebraic circuit lower bounds. In particular, the set of \emph{all $\I$-certificates} for a given unsatisfiable system of equations is, in a certain precise sense, ``finitely generated.'' We suggest how one might take advantage of this finite generation to transfer techniques from algebraic circuit complexity to prove lower bounds on $\I$, and consequently on Extended Frege (since $\I$ p-simulates Extended Frege unconditionally), giving hope for the long-sought length-of-proof lower bounds on an algebraic proof system. We hope to pursue this approach in future work.

\subsubsection{Related Work}
We will see in Section~\ref{sec:others} that many previously studied proof systems can be p-simulated by \I, and furthermore can be viewed simply as different complexity measures on $\I$ proofs, or as $\mathcal{C}$-\I for certain classes $\mathcal{C}$. In particular, the Nullstellensatz system \cite{BIKPP}, the Polynomial Calculus (or \Grobner) proof system \cite{CEI}, and Polynomial Calculus with Resolution \cite{PCR} are all particular measures on \I, and Pitassi's previous algebraic systems \cite{pitassi96, pitassiICM} are subsystems of \I. 

Raz and Tzameret \cite{razTzameret} introduced various multilinear algebraic proof systems. Although their systems are not so easily defined in terms of \I, the Ideal Proof System nonetheless p-simulates all of their systems. Amongst other results, they show that a super-polynomial separation between two variants of their system---one representing lines by multilinear circuits, and one representing lines by general algebraic circuits---would imply a super-polynomial separation between general and multilinear circuits computing multilinear polynomials. However, they only get implications to lower bounds on multilinear circuits rather than general circuits, and they do not prove a statement analogous to our Theorem~\ref{thm:VNP}, that lower bounds on a single system imply algebraic circuit lower bounds.

\subsubsection{Outline}
The remainder of Section~\ref{sec:eabs} gives proofs of some foundational results, and summarizes the rest of the paper, giving detailed versions of all statements and discussing their proofs and significance. In Section~\ref{sec:eabs} many proofs are only sketched or are delayed until later in the paper, but all proofs of all results are present either in Section~\ref{sec:eabs} or in Sections~\ref{sec:simulations}--\ref{sec:PIT}.

We start in Section~\ref{sec:general}, by proving several basic facts about \I (some proofs are deferred to Section~\ref{sec:simulations}). We discuss the relationship between \I and previously studied proof systems.
We also highlight several consequences of results from algebraic complexity theory for the Ideal Proof System, such as division elimination \cite{strassenDivision} and the chasms at depth 3 \cite{GKKSchasm,tavenas} and 4 \cite{agrawalVinay,koiranChasm,tavenas}. 

In Section~\ref{sec:VNPeabs}, we outline the proof that lower bounds on \I imply algebraic circuit lower bounds (Theorem~\ref{thm:VNP}; full proof in Section~\ref{sec:VNP}). We also show how this result gives as a corollary a new, simpler proof that $\cc{NP} \not\subseteq \cc{coMA} \Rightarrow \cc{VNP}^0 \neq \cc{VP}^0$. In Section~\ref{sec:PITeabs} we introduce our PIT axioms in detail and outline the proof of Theorems~\ref{thm:EF} and \ref{thm:AC0} (full proofs in Section~\ref{sec:EF}). We also discuss in detail many variants of Theorem~\ref{thm:AC0} and their consequences, as briefly mentioned above. In Section~\ref{sec:syzygy} we suggest a new framework for transferring techniques from algebraic circuit complexity to (algebraic) proof complexity lower bounds. Finally, in Section~\ref{sec:conclusion} we gather a long list of open questions raised by our work, many of which we believe may be quite approachable.

Appendix~\ref{app:background} contains more complete preliminaries. In Appendices~\ref{app:RIPS} and \ref{app:geom} we introduce two variants of the Ideal Proof System---one of which allows certificates to be rational functions and not only polynomials, and one of which has a more geometric flavor---and discuss their relationship to \I. These systems further suggest that tools from geometry and algebra could potentially be useful for understanding the complexity of various propositional tautologies and more generally the complexity of individual instances of $\cc{NP}$-complete problems.

\subsection{A few preliminaries}
In this section we cover the bare bones preliminaries that we think may be less familiar to some of our readers. Remaining background material on algebraic complexity, proof complexity, and commutative algebra can be found in Appendix~\ref{app:background}. As general references, we refer the reader to \Burgisser--Clausen--Shokrollahi \cite{BCS} and the surveys \cite{shpilkaYehudayoff,chenKayalWigderson} for algebraic complexity, to \Krajicek \cite{krajicekBook} for proof complexity, and to any of the standard books \cite{eisenbud,atiyahMacdonald,matsumura,reidCA} for commutative algebra.

\subsubsection{Algebraic Complexity} \label{sec:prelim:comp}
Over a ring $R$, $\cc{VP}_{R}$ is the class of families $f=(f_n)_{n=1}^{\infty}$ of formal polynomials---that is, considered as symbolic polynomials, rather than as functions---$f_n$ such that $f_n$ has $\poly(n)$ input variables, is of $\poly(n)$ degree, and can be computed by algebraic circuits over $R$ of $\poly(n)$ size. $\cc{VNP}_{R}$ is the class of families $g$ of polynomials $g_n$ such that $g_n$ has $\poly(n)$ input variables and is of $\poly(n)$ degree, and can be written as
\[
g_n(x_1,\dotsc,x_{\poly(n)}) = \sum_{\vec{e} \in \{0,1\}^{\poly(n)}} f_n(\vec{e}, \vec{x})
\]
for some family $(f_n) \in \cc{VP}_{R}$.

A family of algebraic circuits is said to be \definedWord{constant-free} if the only constants used in the circuit are $\{0,1,-1\}$. Other constants can be used, but must be built up using algebraic operations, which then count towards the size of the circuit. We note that over a fixed finite field $\F_q$, $\cc{VP}^0_{\F_q} = \cc{VP}_{\F_q}$, since there are only finitely many possible constants. Consequently, $\cc{VNP}^0_{\F_q} = \cc{VNP}_{\F_q}$ as well. Over the integers, $\cc{VP}^0_{\Z}$ coincides with those families in $\cc{VP}_{\Z}$ that are computable by algebraic circuits of polynomial total \emph{bit-size}: note that any integer of polynomial bit-size can be constructed by a constant-free circuit by using its binary expansion $b_n \dotsb b_1 = \sum_{i=0}^{n-1} b_i 2^i$, and computing the powers of $2$ by linearly many successive multiplications. A similar trick shows that over the algebraic closure $\overline{\F}_p$ of a finite field, $\cc{VP}^0_{\overline{\F}_p}$ coincides with those families in $\cc{VP}_{\overline{\F}_p}$ that are computable by algebraic circuits of polynomial total bit-size, or equivalently where the constants they use lie in subfields of $\overline{\F}_p$ of total size bounded by $2^{n^{O(1)}}$. (Recall that $\F_{p^{a}}$ is a subfield of $\F_{p^b}$ whenever $a | b$, and that the algebraic closure $\overline{\F}_p$ is just the union of $\F_{p^{a}}$ over all integers $a$.)

\subsubsection{Proof Complexity} \label{sec:prelim:proof}
In brief, a \definedWord{proof system} for a language $L \in \cc{coNP}$ is a nondeterministic algorithm for $L$, or equivalently a deterministic polynomial-time verifier $P$ such that $x \in L \Leftrightarrow (\exists y)[P(x,y)=1]$, and we refer to any such $y$ as a $P$-proof that $x \in L$.\footnote{This notion is essentially due to Cook and Reckhow \cite{cookReckhow}; although their definition was formalized slightly differently, it is essentially equivalent to the one we give here.} We say that $P$ is \definedWord{polynomially bounded} if for every $x \in L$ there is a $P$-proof of length polynomially bounded in $|x|$: $|y| \leq \poly(|x|)$. We will generally be considering proof systems for the $\cc{coNP}$-complete language TAUT consisting of all propositional tautologies; there is a polynomially bounded proof system for TAUT if and only if $\cc{NP} = \cc{coNP}$.

Given two proof systems $P_1$ and $P_2$ for the same language $L \in \cc{coNP}$, we say that $P_1$ \definedWord{polynomially simulates} or \definedWord{p-simulates} $P_2$ if there is a polynomial-time function $f$ that transforms $P_1$-proofs into $P_2$-proofs, that is, $P_1(x,y)=1 \Leftrightarrow P_2(x,f(y))=1$. We say that $P_1$ and $P_2$ are \definedWord{polynomially equivalent} or \definedWord{p-equivalent} if each p-simulates the other. (This is the proof complexity version of Levin reductions between $\cc{NP}$ problems.)

For TAUT (or UNSAT), there are a variety of standard and well-studied proof systems. In this paper we will be primarily concerned with Frege---a standard, school-style line-by-line deductive system---and its variants such as Extended Frege (EF) and $\cc{AC}^0$-Frege. 
Bounded-depth Frege or $\cc{AC}^0$-Frege are Frege proofs but with the additional restriction that each formula appearing in the
proof has bounded depth \emph{syntactically} (the \emph{syntactic} nature of this condition is crucial: since every formula appearing in a proof is a tautology, semantically all such formulas are the constant-true function and can be computed by trivial circuits). As with $\cc{AC}^0$ circuits, $\cc{AC}^0$-Frege has rules for handling unbounded fan-in AND and OR connectives, in addition to negations. 

For almost any syntactically-defined class of circuits $\mathcal{C}$, one can similarly define $\mathcal{C}$-Frege. For example, $\cc{NC}^1$-Frege is p-equivalent to Frege. However, despite the seeming similarities, there are some differences between a circuit class and its corresponding Frege system. Exponential lower bounds are known for $\cc{AC}^0$-Frege \cite{BIKPPW}, which use the Switching Lemma as for lower bounds on $\cc{AC}^0$ circuits, but in a more complicated way. However, unlike the case of $\cc{AC}^0[p]$ circuits for which we have exponential lower bounds \cite{razborov, smolensky}, essentially no nontrivial lower bounds are known for $\cc{AC}^0[p]$-Frege.

Extended Frege systems generalize Frege systems by allowing, in addition to all of the Frege rules, a new axiom schema of the form $y \leftrightarrow A$, where $A$ can be any formula, and $y$ is a new variable not occurring in $A$. Whereas polynomial-size Frege proofs allow a polynomial number of lines, each of which must be a polynomial-sized formula, using the new axiom, polynomial-size EF proofs allow a polynomial number of lines, each of which can essentially be a polynomial-sized circuit (you can think of the new variables introduced by this axiom schema as names for the gates of a circuit, in that once a formula is named by a single variable, it can be reused without having to create another copy of the whole formula). In particular, a natural definition of $\cc{P/poly}$-Frege is equivalent to Extended Frege. Extended Frege is the strongest natural system known for proving propositional tautologies. One may also consider seemingly much stronger systems such as Peano Arithmetic or ZFC, but it is unclear and unknown if these systems can prove Boolean tautologies (with no quantifiers) any more efficiently than Extended Frege.

We define all of the algebraic systems we consider in Section~\ref{sec:others} below.

\subsection{Foundational results} \label{sec:general}
\subsubsection{Relation with \texorpdfstring{$\cc{coMA}$}{coMA}}

\begin{proposition} \label{prop:coMA}
For any field $\F$, if every propositional tautology has a polynomial-size constant-free $\I_{\F}$-proof, then $\cc{NP} \subseteq \cc{coMA}$, and hence the polynomial hierarchy collapses to its second level. 
\end{proposition}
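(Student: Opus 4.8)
The plan is to exhibit a Merlin--Arthur protocol for the $\cc{coNP}$-complete language UNSAT of unsatisfiable CNF formulas; since this gives $\cc{coNP}\subseteq\cc{MA}$, equivalently $\cc{NP}\subseteq\cc{coMA}$, the stated collapse of the polynomial hierarchy to its second level then follows by the standard argument ($\cc{coNP}\subseteq\cc{MA}\subseteq\cc{AM}$ forces $\cc{PH}=\cc{AM}$). On input a CNF $\varphi$, Arthur first forms in polynomial time the associated system $F_1(\vec x)=\dotsb=F_m(\vec x)=x_1^2-x_1=\dotsb=x_n^2-x_n=0$ as described after Definition~\ref{def:IPS}, with $m=\poly(|\varphi|)$ and each $F_i$ of small size and degree. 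Merlin sends a constant-free algebraic circuit $C(\vec x,\vec\f)$, of size at most the polynomial bound furnished by the hypothesis, on the inputs $x_1,\dotsc,x_n,\f_1,\dotsc,\f_m$, claiming it is a constant-free $\I_{\F}$ proof. Arthur checks that $C$ is a syntactically well-formed constant-free circuit of the prescribed size, and then verifies the two conditions of Definition~\ref{def:IPS}: (i) $C(\vec x,\vec 0)\equiv 0$, which is a polynomial identity test on the polynomial-size constant-free circuit obtained from $C$ by hardwiring the $\f$-inputs to $0$; and (ii) $C(\vec x,F_1(\vec x),\dotsc,F_m(\vec x))-1\equiv 0$, which is a polynomial identity test on the polynomial-size constant-free circuit obtained by composing $C$ with the small circuits for the $F_i$ and subtracting $1$. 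Arthur accepts iff both tests report ``identically zero''.

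The only point requiring real care --- and the sole place the constant-free hypothesis is used --- is that these two tests can be carried out in randomized polynomial time, \ie that polynomial identity testing of polynomial-size constant-free circuits lies in $\cc{coRP}$. A constant-free circuit over $\F$ computes a polynomial over the prime field of $\F$, so it suffices to treat $\F=\Q$ and $\F=\F_p$. If $\F$ has characteristic $p$, Arthur evaluates the circuit at a uniformly random point of an extension $\F_{p^k}$ of the (constant-size) prime field with $p^k$ larger than the formal degree (which is at most exponential in the circuit size); Schwartz--Zippel then gives constant soundness, and all field elements and arithmetic have polynomial bit-length. If $\F$ has characteristic $0$ the circuit computes an integer polynomial, and Arthur evaluates it at a random point with coordinates in $\{0,\dotsc,2^{s+2}-1\}$, where $s$ bounds the circuit size, so that Schwartz--Zippel kills a nonzero polynomial with probability at most $1/4$ --- but he does this \emph{modulo} a second, independently sampled, uniformly random prime of polynomial bit-length. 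This is sound because the value of a size-$s$ constant-free circuit on such inputs is an integer of absolute value at most $2^{2^{O(s)}}$, hence has at most $2^{O(s)}$ distinct prime factors, whereas there are $2^{\Theta(s)}$ primes of the chosen bit-length, so a random such prime divides any fixed nonzero output with probability at most $1/4$; the combined test then has one-sided error at most $1/2$, driven down by independent repetition. (Without the constant-free restriction there is no such control on intermediate magnitudes, and the hypothesis as stated would not obviously suffice.)

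Granting efficient PIT, completeness and soundness of the protocol are immediate. If $\varphi$ is unsatisfiable then so is the associated system, so by hypothesis it has a polynomial-size constant-free $\I_{\F}$ proof $C$; for this $C$ the polynomials tested in (i) and (ii) are literally the zero polynomial, so every evaluation vanishes and Arthur accepts with probability $1$. If $\varphi$ is satisfiable then the associated system has a common $\{0,1\}$-zero, so by soundness of \I (the remark following Definition~\ref{def:IPS}) no circuit can satisfy both (i) and (ii); hence whatever $C$ Merlin sends, at least one of the two tested polynomials is nonzero and Arthur rejects with probability at least $1/2$ before amplification. Thus UNSAT $\in\cc{MA}$, which completes the proof. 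I expect the bit-complexity bookkeeping for PIT in the middle paragraph to be the only real obstacle --- and it is exactly where constant-freeness is needed; the rest is a direct unwinding of the definition of \I together with the Schwartz--Zippel lemma and the classical relationship between $\cc{MA}$ and the polynomial hierarchy.
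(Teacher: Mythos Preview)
Your proof is correct and follows essentially the same approach as the paper: Merlin guesses the constant-free \I proof and Arthur verifies conditions (\ref{condition:ideal}) and (\ref{condition:nss}) via randomized PIT. You supply considerably more detail than the paper does on why PIT for constant-free circuits lies in $\cc{coRP}$ (the paper simply invokes the standard Schwartz--Zippel--DeMillo--Lipton algorithm), and you correctly identify this as the place where constant-freeness is essential; the paper's one-line justification emphasizes instead that constant-freeness is what allows Merlin to describe the circuit in polynomially many bits, but both observations are needed and your more careful treatment subsumes the paper's sketch.
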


If we wish to drop the restriction of ``constant-free'' (which, recall, is no restriction at all over a finite field), we may do so either by using the Blum--Shub--Smale analogs of $\cc{NP}$ and $\cc{coMA}$ using essentially the same proof, or over fields of characteristic zero using the Generalized Riemann Hypothesis (Proposition~\ref{prop:coMAGRH}).

\begin{proof}
Merlin nondeterministically guesses the polynomial-size constant-free \I proof, and then Arthur must check conditions (\ref{condition:ideal}) and (\ref{condition:nss}) of Definition~\ref{def:IPS}. (We need constant-free so that the algebraic proof has polynomial bit-size and thus can in fact be guessed by a Boolean Merlin.) Both conditions of Definition~\ref{def:IPS} are instances of Polynomial Identity Testing (PIT), which can thus be solved in randomized polynomial time by the standard Schwarz--Zippel--DeMillo--Lipton $\cc{coRP}$ algorithm for PIT.
\end{proof}

\subsubsection{Chasms, depth reduction, and other circuit transformations}
Recently, many strong depth reduction theorems have been proved for circuit complexity \cite{agrawalVinay,koiranChasm,GKKSchasm,tavenas}, which have been called ``chasms'' since Agrawal and Vinay \cite{agrawalVinay}. In particular, they imply that sufficiently strong lower bounds against depth 3 or 4 circuits imply super-polynomial lower bounds against arbitrary circuits. Since an \I proof is just a circuit, these depth reduction chasms apply equally well to \I proof size. Note that it was not clear to us how to adapt the proofs of these chasms to the type of circuits used in the Polynomial Calculus or other previous algebraic systems \cite{pitassiICM}, and indeed this was part of the motivation to move to our more general notion of \I proof.

\begin{observation}[Chasms for \I proof size] \label{obs:chasm}
If a system of $n^{O(1)}$ polynomial equations in $n$ variables has an \I proof of unsatisfiability of size $s$ and (semantic) degree $d$, then it also has:
\begin{enumerate}
\item A $O(\log d (\log s + \log d))$-depth \I proof of size $poly(ds)$ (follows from Valiant--Skyum--Berkowitz--Rackoff \cite{VSBR});

\item A depth 4 \I formula proof of size $n^{O(\sqrt{d})}$ (follows from Koiran \cite{koiranChasm}) or a depth 4 \I proof of size $2^{O(\sqrt{d \log(ds) \log n})}$ (follows from Tavenas \cite{tavenas}).

\item (Over fields of characteristic zero) A depth 3 \I proof of size $2^{O(\sqrt{d \log d \log n \log s})}$ (follows from Gupta, Kayal, Kamath, and Saptharishi \cite{GKKSchasm}) or even $2^{O(\sqrt{d \log n \log s})}$ (follows from Tavenas \cite{tavenas}). \rmkqed
\end{enumerate}
\end{observation}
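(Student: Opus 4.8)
The plan is to observe that an \I proof is, by definition, nothing more than an algebraic circuit, and that whether the polynomial it computes is an \I certificate depends only on that polynomial --- through the two identities $C(\vec x,\vec 0)=0$ and $C(\vec x,F_1(\vec x),\dotsc,F_m(\vec x))=1$ of Definition~\ref{def:IPS} --- and not on any syntactic feature of the circuit presenting it. Consequently, every circuit transformation that preserves the computed polynomial carries \I proofs to \I proofs of the \emph{same} certificate. In particular, each of the cited depth-reduction ``chasms,'' applied to the circuit underlying a given \I proof, outputs another valid \I proof, so the observation is essentially a transcription of the statements of \cite{VSBR}, \cite{koiranChasm}, \cite{tavenas}, and \cite{GKKSchasm} into the language of \I.

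Concretely, suppose the given system of $m = n^{O(1)}$ equations in $n$ variables has an \I proof $C$ of size $s$ whose certificate has (semantic) degree $d$; thus $C$ is an algebraic circuit on the $n+m = n^{O(1)}$ variables $x_1,\dotsc,x_n,\f_1,\dotsc,\f_m$. Since $\log(n+m)=\Theta(\log n)$, the parameter ``$n$'' appearing in the cited bounds may be taken to be the actual number of variables of $C$ without affecting the asymptotics. For part~(1) I would apply Valiant--Skyum--Berkowitz--Rackoff \cite{VSBR} to obtain an equivalent circuit of depth $O(\log d(\log s+\log d))$ and size $\poly(ds)$; for part~(2), Koiran's \cite{koiranChasm} depth-4 chasm, giving a depth-4 \I formula of size $n^{O(\sqrt d)}$, or Tavenas's \cite{tavenas} sharpening, giving a depth-4 \I proof of size $2^{O(\sqrt{d\log(ds)\log n})}$; and for part~(3), over fields of characteristic zero, the depth-3 chasm of Gupta, Kayal, Kamath, and Saptharishi \cite{GKKSchasm} or its improvement by Tavenas \cite{tavenas}. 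In each case the transformation outputs a circuit (or formula) $C'$ of the claimed depth and size computing the same polynomial as $C$, which by the previous paragraph is again an \I proof with the same certificate.

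The one point requiring care is the gap between the \emph{formal} (syntactic) degree of the circuit $C$, which the depth-reduction theorems nominally constrain, and the \emph{semantic} degree $d$ of the certificate, which is all we have assumed. This is handled by the standard homogenization step that these theorems either already incorporate or that one prepends to them: from a size-$s$ circuit computing a polynomial of degree $d$ one builds, at the cost of only an $O(d^2)$ multiplicative increase in size, a circuit of formal degree $d$ computing the same polynomial, by propagating at every gate the homogeneous components of degrees $0,1,\dotsc,d$ and discarding the rest. The $O(d^2)$ factor is absorbed into $\poly(ds)$ in part~(1) and affects only the logarithmic factors $\log s,\log d$ in the exponents of parts~(2)--(3), so all the stated bounds are unchanged. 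I expect this bookkeeping --- lining up the ``size $s$, semantic degree $d$'' normalization with the precise hypotheses of each cited chasm --- to be essentially the only content of the argument; the passage from algebraic circuits to \I proofs is immediate, which is exactly the point of working with \I rather than with Polynomial Calculus--style systems.
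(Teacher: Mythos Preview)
Your proposal is correct and matches the paper's approach exactly: the paper gives no separate proof of this observation (it ends with \rmkqed), having already noted in the preceding paragraph that ``since an \I proof is just a circuit, these depth reduction chasms apply equally well to \I proof size.'' Your write-up is in fact more detailed than the paper's, particularly your remark on bridging semantic versus formal degree via homogenization, which the paper leaves entirely implicit.
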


This observation helps explain why size lower bounds for algebraic proofs for the stronger notion of size---namely number of lines, used here and in Pitassi \cite{pitassi96}, rather than number of monomials---have been difficult to obtain. This also suggests that size lower bounds for \I proofs in restricted circuit classes would be interesting, even for restricted kinds of depth 3 circuits.

Similarly, since \I proofs are just circuits, any \I certificate family of polynomially bounded degree that is computed by a polynomial-size family of algebraic circuits with divisions can also be computed by a polynomial-size family of algebraic circuits without divisions (follows from Strassen \cite{strassenDivision}). We note, however, that one could in principle consider \I certificates that were not merely polynomials, but even rational functions, under suitable conditions; divisions for computing these cannot always be eliminated. We discuss this ``Rational Ideal Proof System,'' the exact conditions needed, and when such divisions can be effectively eliminated in Appendix~\ref{app:RIPS}.

\subsubsection{Simulations and definitions of other algebraic proof systems in terms of \texorpdfstring{$\I$}{\Itext}}
\label{sec:others}
Previously studied algebraic proof systems can be viewed as particular complexity measures on the Ideal Proof System, including the Polynomial Calculus (or \Grobner) proof system (PC) \cite{CEI}, Polynomial Calculus with Resolution (PCR) \cite{PCR}, the Nullstellensatz proof system \cite{BIKPP}, and Pitassi's algebraic systems \cite{pitassi96, pitassiICM}, as we explain below.

Before explaining these, we note that although the Nullstellensatz says that if $F_1(\vec{x}) = \dotsb = F_m(\vec{x}) = 0$ is unsatisfiable then there always exists a certificate that is linear in the $\f_i$---that is, of the form $\sum \f_i G_i(\vec{x})$---our definition of \I certificate does not enforce $\vec{\f}$-linearity. The definition of \I certificate allows certificates with $\vec{\f}$-monomials of higher degree, and it is conceivable that one could achieve a savings in size by considering such certificates rather than only considering $\vec{\f}$-linear ones. As the linear form is closer to the original way Hilbert expressed the Nullstellensatz (see, \eg, the translation \cite{hilbertPapers}), we refer to certificates of the form $\sum \f_i G_i(\vec{x})$ as \definedWord{Hilbert-like \I certificates}. 

All of the previous algebraic proof systems are rule-based systems, in that they syntactically enforce the condition that every line of the proof is a polynomial in the ideal of the original polynomials $F_1(\vec{x}), \dotsc, F_m(\vec{x})$. Typically they do this by allowing two derivation rules: 1) from $G$ and $H$, derive $\alpha G + \beta H$ for $\alpha,\beta$ constants, and 2) from $G$, derive $Gx_i$ for any variable $x_i$. By ``rule-based circuits'' we mean circuits with inputs $\f_1, \dotsc, \f_m$ having linear combination gates and, for each $i=1,\dotsc,n$, gates that multiply their input by $x_i$. (Alternatively, one may view the $x_i$ as inputs, require that the circuit by syntactically \emph{linear} in the $\f_i$, and that each $x_i$ is only an input to multiplication gates, each of which syntactically depends on at least one $\f_i$. Again alternatively, one may view the $x_i$ as inputs, but with the requirement that the polynomial computed \emph{at each gate} is a polynomial of $\f_i$-degree one in the ideal $\langle \f_1, \dotsc, \f_m \rangle \subseteq \F[\vec{x}, \vec{\f}]$.) In particular, rule-based circuits necessarily produce Hilbert-like certificates.

Now we come to the definitions of previous algebraic proof systems in terms of complexity measures on the Ideal Proof System:
\begin{itemize}
\item Complexity in the Nullstellensatz proof system, or ``Nullstellensatz degree,'' is simply the minimal degree of any Hilbert-like certificate (for systems of equations of constant degree, such as the algebraic translations of tautologies.)

\item ``Polynomial Calculus size'' is the sum of the (semantic) number of monomials at each gate in $C(\vec{x}, \vec{F}(\vec{x}))$, where $C$ ranges over rule-based circuits.

\item ``PC degree'' is the minimum over rule-based circuits $C(\vec{x}, \vec{\f})$ of the maximum semantic degree at any gate in $C(\vec{x}, \vec{F}(\vec{x}))$. 

\item Pitassi's 1997 algebraic proof system \cite{pitassiICM} is essentially PC, except where size is measured by number of lines of the proof (rather than total number of monomials appearing). This corresponds exactly to the smallest size of any rule-based circuit $C(\vec{x}, \vec{\f})$ computing any Hilbert-like \I certificate.

\item Polynomial Calculus with Resolution (PCR) \cite{PCR} also allows variables $\overline{x}_i$ and adds the equations $\overline{x}_i = 1 - x_i$ and $x_i \overline{x}_i = 0$. This is easily accommodated into the Ideal Proof System: add the $\overline{x}_i$ as new variables, with the same restrictions as are placed on the $x_i$'s in a rule-based circuit, and add the polynomials $\overline{x}_i - 1 + x_i$ and $x_i \overline{x}_i$ to the list of equations $F_i$. Note that while this may have an effect on the PC size as it can decrease the total number of monomials needed, it has essentially no effect on the number of lines of the proof.
\end{itemize}

\begin{pitassiProp}
Pitassi's 1996 algebraic proof system \cite{pitassi96} is p-equivalent to Hilbert-like \I.

Pitassi's 1997 algebraic proof system \cite{pitassiICM}---equivalent to the number-of-lines measure on PC proofs---is p-equivalent to Hilbert-like $\det$-\I or $\cc{VP}_{ws}$-\I.
\end{pitassiProp}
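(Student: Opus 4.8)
The plan is to prove the two p-equivalences separately, in each case by exhibiting size-preserving transformations on algebraic circuits that preserve (i) the validity of an \I certificate (conditions~\ref{condition:ideal} and~\ref{condition:nss} of Definition~\ref{def:IPS}) and (ii) the Hilbert-like form, \ie, linearity in the placeholder variables $\vec{\f}$. I would first recall the precise formulations of the two systems of \cite{pitassi96,pitassiICM}: up to inessential syntactic differences, a Pitassi-1996 refutation of $F_1(\vec{x}) = \dotsb = F_m(\vec{x}) = 0$ is a tuple of algebraic circuits $C_1,\dotsc,C_m$ computing polynomials $G_i(\vec{x})$ together with the (PIT-checkable, as in Proposition~\ref{prop:coMA}) guarantee that $\sum_i F_i(\vec{x}) G_i(\vec{x}) = 1$; and a Pitassi-1997 refutation is a Polynomial Calculus refutation with complexity measured by number of lines, which as recalled above coincides with the smallest rule-based circuit $C(\vec{x},\vec{\f})$ computing a Hilbert-like \I certificate.

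For the first equivalence, the point is that a Hilbert-like \I certificate is by definition an algebraic circuit computing a polynomial of the form $C(\vec{x},\vec{\f}) = \sum_i \f_i G_i(\vec{x})$, and that one can pass freely, with only polynomial overhead, between such a circuit and a tuple of circuits for the individual $G_i$: from a size-$s$ circuit for $C$, substituting $\f_i = 1$ and $\f_j = 0$ for $j \neq i$ gives a circuit of size $\le s$ for $G_i$ (with no $\vec{\f}$-inputs), and there are at most $s$ values of $i$; conversely, from circuits $C_i$ for the $G_i$ one forms $\sum_i \f_i C_i$ with additive overhead $O(m)$. Comparing with the description of the 1996 system above, this back-and-forth is exactly a p-simulation in each direction; the only things to check are that $C(\vec{x},\vec{0}) = 0$ holds automatically when $C$ is $\vec{\f}$-linear, and that $C(\vec{x},\vec{F}(\vec{x})) = \sum_i F_i G_i = 1$ is precisely the Nullstellensatz identity certified by a Pitassi-1996 proof.

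For the second equivalence, by the identification of the 1997 system with rule-based circuits recalled above it suffices to show that the system whose proofs are rule-based circuits (which necessarily compute Hilbert-like certificates) is p-equivalent to Hilbert-like $\cc{VP}_{ws}$-\I, and that the latter is p-equivalent to Hilbert-like $\det$-\I. The last p-equivalence is immediate from the standard fact that $\cc{VP}_{ws}$ is exactly the class of polynomials expressible as polynomial-size projections of the determinant ($\det \in \cc{VP}_{ws}$ gives one direction, the projection statement the other), since in both conversions the computed polynomial is unchanged, so validity and $\vec{\f}$-linearity are inherited. For rule-based $\Rightarrow \cc{VP}_{ws}$: a rule-based circuit uses only linear-combination gates and gates that multiply their input by a single variable $x_j$, so introducing a private copy of the leaf $x_j$ at each such gate turns the circuit into a skew (in particular weakly-skew) circuit, which therefore already witnesses a $\cc{VP}_{ws}$-\I proof of the same Hilbert-like certificate. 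For $\cc{VP}_{ws} \Rightarrow$ rule-based: given a weakly-skew circuit computing $C = \sum_i \f_i G_i$, restricting $\f_i = 1$ and $\f_j = 0\ (j \neq i)$ yields a weakly-skew --- hence, by the standard equivalence of weakly-skew and skew circuits up to polynomial size, skew --- circuit for $G_i$ whose only leaves are $\vec{x}$-variables and constants; one then multiplies the whole computation through by $\f_i$, maintaining the invariant that a gate computing $p(\vec{x})$ is replaced by a rule-based gate computing $\f_i \cdot p(\vec{x})$ (multiplication by a leaf $x_j$ becomes a rule-based $x_j$-gate, multiplication by a constant and linear combinations become linear-combination gates, and the base cases $\f_i \cdot x_j$, $\f_i \cdot c$ are immediate), and finally sums the $m$ resulting rule-based circuits for $\f_i G_i$. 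The result is a rule-based circuit for $C$ of size $\poly(s)$ (using $m \le s$), and it is $\vec{\f}$-linear by construction.

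I expect the main obstacle to be bookkeeping rather than any single hard idea: each transformation above --- substituting constants for $\vec{\f}$-variables, leaf duplication, the conversions among weakly-skew circuits, skew circuits, and determinants, and multiplying a circuit through by $\f_i$ --- must be verified to simultaneously stay polynomial-size, preserve $\vec{\f}$-linearity (so the output is genuinely Hilbert-like and not merely an \I certificate), and leave the computed polynomial unchanged (so conditions~\ref{condition:ideal}--\ref{condition:nss} transfer automatically). A secondary source of friction is pinning down the precise definitions of Pitassi's two systems from \cite{pitassi96,pitassiICM} --- in particular, whether ``lines'' of the 1997 system correspond to rule-based circuit gates exactly once the $x_j^2 - x_j$ equations and the PCR-style auxiliary variables $\overline{x}_j$ are accounted for --- since any definitional slack there would force a slightly more elaborate simulation than the clean ``read off / multiply through'' transformations sketched here.
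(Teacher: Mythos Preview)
Your proposal is correct and follows essentially the same strategy as the paper's proof. The first equivalence is handled identically (the paper also mentions your substitution trick, alongside a Baur--Strassen alternative). For the second equivalence, both you and the paper use the same core idea---``multiply a circuit for $G_i$ through by $\f_i$,'' which is exactly the paper's induction ``from $P$ derive $P\cdot\Gamma$''---but you route through the known polynomial equivalence of weakly-skew and skew circuits, so that every product gate has a leaf input and the multiply-through step is a single rule-based $x_j$-gate, whereas the paper handles weakly-skew circuits directly via a slightly more delicate induction on multi-output circuits (using the isolation of one child at each product gate to process the two factors sequentially). Your detour through skew circuits is cleaner at the inductive step but relies on the weakly-skew/skew equivalence as a black box; the paper's version is more self-contained and makes explicit why the weakly-skew condition is exactly what the simulation needs.
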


Combining Proposition~\ref{prop:pitassi} with the techniques used in Theorem~\ref{thm:VNP} shows that super-polynomial lower bounds on the number of lines in PC proofs would positively resolve the Permanent Versus Determinant Conjecture, explaining the difficulty of such proof complexity lower bounds.

In light of this proposition (which we prove in Section~\ref{sec:pitassi}), 
we henceforth refer to the systems from \cite{pitassi96} and \cite{pitassiICM} as Hilbert-like \I and Hilbert-like $\det$-\I, respectively. Pitassi \cite[Theorem~1]{pitassi96} showed that Hilbert-like \I p-simulates Polynomial Calculus and Frege. Essentially the same proof shows that Hilbert-like \I p-simulates Extended Frege as well. 

Unfortunately, the proof of the simulation in \cite{pitassi96} does not seem to generalize to give a depth-preserving simulation. Nonetheless, our next proposition shows that there is indeed a depth-preserving simulation.

\begin{depthThm}
For any $d(n)$, depth-$(d+2)$ $\I_{\F_p}$ p-simulates depth-$d$ Frege proofs with unbounded fan-in $\lor, \land, MOD_p$ connectives (for $d=O(1)$, this is $\cc{AC}^0_{d}[p]$-Frege).
\end{depthThm}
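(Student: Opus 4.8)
The plan is to upgrade the known simulation of Frege (and Extended Frege) by Hilbert-like $\I$ -- which, as recalled just above, Pitassi's argument provides only in a depth-oblivious form -- into a depth-preserving one, and the entire game is controlling two depths: that of the arithmetization over $\F_p$ of a single depth-$d$ formula in $\{\land,\lor,MOD_p\}$, and that of the ``glue'' that assembles the per-inference certificates. First I would fix the arithmetization $\widehat{(\cdot)}$: push all negations to the leaves (De Morgan's laws for $\land$ and $\lor$; and $\neg MOD_p$ is again, up to constants, of the form $(\text{linear form})^{p-1}$, so this costs nothing), flatten consecutive $\land$'s and consecutive $\lor$'s, and (as in PCR) adjoin fresh variables $\bar x_i$ with the equations $\bar x_i-1+x_i=0$ so that every leaf is literally an input. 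Then take $\land\mapsto\prod$, $\lor\mapsto 1-\prod(1-\cdot)$ with the ``$\cdot$'' being the arithmetizations of the negated children (so that no extra ``$1-$'' is spent at the bottom), and $MOD_p\mapsto 1-(\text{linear form})^{p-1}$ (one $\prod$-gate of fan-in $p-1=O(1)$). The technical heart is a careful layer count showing that, after adjoining a dummy top $\lor$ to every line, $\widehat{\neg L}$ is a $\prod$-rooted $\F_p$-circuit of depth $\le d$ and size $\poly(|L|)$ that is $\{0,1\}$-valued on the Boolean cube with $\widehat{\neg L}(\vec b)=[L(\vec b)=0]$; along the way one also checks that $\widehat{A}^2-\widehat{A}$ has a depth-$\le d$, $\poly$-size certificate of membership in $\langle x_1^2-x_1,\dots,x_n^2-x_n\rangle$.

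Next I would reduce the theorem to exhibiting a small, depth-$\le d$ certificate that $\widehat{\neg\psi}$ lies in the Boolean ideal $\langle x_1^2-x_1,\dots,x_n^2-x_n\rangle$, where $\psi=\neg\varphi$ is the tautology and $\varphi=\kappa_1\land\dots\land\kappa_m$ the unsatisfiable CNF whose clauses arithmetize to $F_1,\dots,F_m$. This suffices, because the telescoping inclusion--exclusion identity $1-\prod_i(1-F_i)=\widehat\psi$ writes $\widehat\psi$ explicitly as the Hilbert-like combination $\sum_i F_i\prod_{j<i}(1-F_j)$ of depth $O(1)$, so that $1=\widehat\psi+\widehat{\neg\psi}$ becomes a genuine (Hilbert-like) $\I_{\F_p}$ refutation of $\{F_i\}\cup\{x_k^2-x_k\}$ once $\widehat{\neg\psi}$ is written over $\langle x_k^2-x_k\rangle$. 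For the latter I would induct along the Frege proof of $\psi$: each Frege rule is a fixed sound scheme, so its soundness has a fixed $O(1)$-depth Nullstellensatz certificate in terms of the arithmetizations of the formulas it involves, and for modus ponens this certificate has the clean form $\widehat{\neg B}=1\cdot\widehat{\neg(A\Rightarrow B)}+\widehat{\neg B}\cdot\widehat{\neg A}$ -- a coefficient $1$ on one premise and a lone $\widehat{\neg(\cdot)}$ factor on the other. Composing these along the proof, the certificate for $\widehat{\neg\psi}$ becomes a sum, indexed by root-to-leaf branches, of products of the $\prod$-rooted factors $\widehat{\neg L_i}$; since products of $\prod$-rooted circuits collapse into a single $\prod$-gate, each summand has depth $\le d$, and the outer sum together with the Booleanization step contribute the remaining layers, giving a depth-$(d+2)$ $\I_{\F_p}$ proof.

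The hard part will be the depth bookkeeping itself. Naively, a depth-$d$ alternating formula already arithmetizes to depth $\approx 2d$, so the bound $d+2$ is reached only if (i) the arithmetization and normal form are chosen so that every intermediate ``$1-$'' is absorbed into an adjacent affine gate or re-expressed as a $\prod$-rooted $\widehat{\neg(\cdot)}$, never accumulating along a branch, and (ii) the assembled certificate is organized as one big sum of products built directly on the depth-$\le d$ arithmetizations, rather than as an iterated composition of per-inference gadgets, so that the length of the Frege proof never enters the depth. A further point requiring care is that the ``sum over branches'' is literally a sum over root-to-leaf paths of the proof DAG, of which there may be exponentially many; this forces one either to first pass to a tree-like normal form for $\cc{AC}^0_d[p]$-Frege (and control the blowup) or to replace the path sum by a telescoping/potential-function reformulation that remains polynomial-size, all while checking that neither the $\bar x_i$ variables nor the dummy top $\lor$'s perturb provability or depth by more than the claimed additive constant.
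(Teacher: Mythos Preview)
Your plan is in the right spirit---induct along a tree-like proof, keep a product-rooted invariant on the pieces, assemble into one sum of products at the end---but the arithmetization you chose ($\land\mapsto\prod$, $\lor\mapsto 1-\prod(1-\cdot)$, $MOD_p\mapsto 1-(\text{linear})^{p-1}$) is precisely what makes the depth bookkeeping the ``hard part,'' and your proposed fix does not close the gap. For $L=\lor_i(\land_j(\ldots))$ you get $\widehat{\neg L}=\prod_i\bigl(1-\prod_j(\ldots)\bigr)$, and the inner ``$1-$'' sandwiched between two product layers is a genuine extra $\Sigma$-layer that cannot be absorbed into either side; an alternating depth-$d$ formula really does arithmetize to depth $\approx 2d$ under your convention. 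Insisting that $\widehat{A}\in\{0,1\}$ (hence invoking Fermat for $MOD_p$) is also stronger than needed and costs an extra layer there too.

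The paper sidesteps all of this by flipping the convention: it sets $t(A)=0$ iff $A$ is \emph{true}, so that $t(\lor)$ is a bare product, $t(\oplus)$ is a bare linear form over $\F_p$ (no $(p-1)$-th power, since only ``$=0$'' matters), and $t(\neg A)=1-t(A)$. It works in a sequent calculus over $\{\lor,\neg,\oplus\}$, translates a cedent $\Gamma\to\Delta$ as the single product $\prod_i(1-t(\Gamma_i))\prod_j t(\Delta_j)$, converts to a tree-like proof, and maintains the running certificate as a \emph{list} of terms $G_i\f_i$ with each $G_i$ product-rooted---deliberately never gathering like $\f_i$-terms until the final step. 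Every sequent rule is then dispatched by either multiplying each term in the list by one fixed factor or concatenating two lists, both of which preserve product-rootedness at no added depth; the final gather supplies the extra two layers. Your modus-ponens recursion $\widehat{\neg B}=\widehat{\neg(A\Rightarrow B)}+\widehat{\neg B}\cdot\widehat{\neg A}$ and the resulting ``sum over root-to-leaf branches'' are the Hilbert-style shadow of exactly this construction, but the choice of translation is what turns the depth bound from the crux of the argument into a triviality.
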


\subsection{Lower bounds on \texorpdfstring{\I}{\Itext} imply circuit lower bounds} \label{sec:VNPeabs}
\begin{VNPthm}
A super-polynomial lower bound on [constant-free] Hilbert-like $\I_{R}$ proofs of any family of tautologies implies $\cc{VNP}_{R} \neq \cc{VP}_{R}$ [respectively, $\cc{VNP}^0_{R} \neq \cc{VP}^0_{R}$], for any ring $R$.

A super-polynomial lower bound on the number of lines in Polynomial Calculus proofs implies the Permanent versus Determinant Conjecture ($\cc{VNP} \neq \cc{VP}_{ws}$).
\end{VNPthm}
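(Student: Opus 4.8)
The plan is to prove the contrapositive of each statement, using the fact that an \I proof is literally an algebraic circuit. For the first statement I will show that if $\cc{VP}_R = \cc{VNP}_R$ (respectively $\cc{VP}^0_R = \cc{VNP}^0_R$) then \emph{every} family of tautologies has a polynomial-size (respectively constant-free) Hilbert-like $\I_R$ proof, so no family of tautologies can witness a super-polynomial lower bound. Hence it suffices, given an arbitrary unsatisfiable system $F_1 = \dotsb = F_m = x_1^2-x_1 = \dotsb = x_n^2-x_n = 0$ arising from a polynomial-size CNF $\varphi$, to produce a Hilbert-like \I certificate whose defining circuit lies in $\cc{VNP}_R$ (resp.\ $\cc{VNP}^0_R$); the hypothesis then converts it into a polynomial-size (constant-free) circuit, and combining the pieces yields the proof.

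The certificate I would use is the ``Lagrange/selector'' certificate. For $a \in \{0,1\}^n$ put $\delta_a(\vec x) = \prod_{j:a_j=1}x_j\prod_{j:a_j=0}(1-x_j)$, so that $\sum_a \delta_a = 1$ as a polynomial identity. Since $\varphi$ is unsatisfiable, every $a$ falsifies some clause; let $i(a)$ be the index of the first one, so $F_{i(a)}(a)=1$ and $\delta_a(1-F_{i(a)})$ vanishes on $\{0,1\}^n$, hence lies in $(x_1^2-x_1,\dotsc,x_n^2-x_n)$. Writing $\sum_a \delta_a(1-F_{i(a)}) = \sum_j (x_j^2-x_j)K_j(\vec x)$ via the standard ``multilinearize one variable at a time'' reduction (whose quotients $K_j$ are explicit: eliminating a monomial $x_j^k$, $k\ge 2$, in favor of $x_j$ contributes $1+x_j+\dotsb+x_j^{k-2}$ to $K_j$), and setting $G_i = \sum_{a:i(a)=i}\delta_a$, one obtains the polynomial identity $\sum_i F_i G_i + \sum_j (x_j^2-x_j)K_j = 1$. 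With placeholders $f_i$ for $F_i$ and $b_j$ for $x_j^2-x_j$, the certificate is $C(\vec x,\vec f,\vec b) = \sum_i f_i G_i(\vec x)+\sum_j b_j K_j(\vec x)$: it is linear in the placeholders (so Hilbert-like), it vanishes termwise at $\vec f=\vec b=\vec 0$, and it collapses to $1$ under back-substitution; a circuit for $C$ comes from circuits for the $G_i$ and $K_j$ plus $O(m+n)$ extra gates.

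The heart of the matter is to show that each $G_i$ and each $K_j$ lies in $\cc{VNP}_R$ (resp.\ $\cc{VNP}^0_R$), which I would do through Valiant's criterion (Lemma~\ref{lem:VNP}): the coefficient functions are computable in $\cc{GapP}/\poly$, with $\varphi$ as the advice. Indeed, the coefficient of $x^\alpha$ in $\delta_a$ is the explicit value $(-1)^{|\alpha|-|a|}$ when $\mathrm{supp}(a)\subseteq\mathrm{supp}(\alpha)$ and $0$ otherwise; the predicate ``$i(a)=i$'' is polynomial-time decidable from $a$ and $\varphi$; and $\cc{GapP}$ is closed under products of polynomially many functions and under summation over exponential-size, polynomial-time-decidable index sets. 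These closure properties are exactly what push the explicit $\delta_a$-coefficients through the exponential sum defining $G_i$, through the convolution computing the coefficients of $\delta_a(1-F_{i(a)})$, and through the $n$ successive single-variable reductions defining the $K_j$. Over a fixed $\F_p$ there is nothing extra to check, since $\cc{VNP}^0_{\F_p}=\cc{VNP}_{\F_p}$; over $\Z$ one invokes the form of Valiant's criterion producing $\cc{VNP}^0$ from a $\cc{GapP}/\poly$ coefficient function of polynomial bit-length, which our coefficients have. I expect this coefficient bookkeeping --- certifying the Booleanity quotients $K_j$ in $\cc{GapP}/\poly$, and handling the constant-free and arbitrary-$R$ forms of Valiant's criterion uniformly --- to be the only real obstacle; everything else is the Nullstellensatz together with these closure properties.

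For the second statement, Proposition~\ref{prop:pitassi} identifies the number-of-lines measure on Polynomial Calculus proofs with Hilbert-like $\cc{VP}_{ws}$-\I up to p-equivalence, so it suffices to show that the negation of the Permanent versus Determinant Conjecture --- namely $\cc{VNP} = \cc{VP}_{ws}$ --- implies that every tautology has a polynomial-size Hilbert-like $\cc{VP}_{ws}$-\I proof. I would rerun the construction above verbatim: under this hypothesis $G_i,K_j\in\cc{VP}_{ws}$, and since $\cc{VP}_{ws}$ is closed under addition and under multiplication by a single new variable (a skew product, the other argument being a leaf), the certificate $\sum_i f_i G_i+\sum_j b_j K_j$ again lies in $\cc{VP}_{ws}$ and is Hilbert-like, contradicting the assumed super-polynomial lower bound on the number of lines in Polynomial Calculus.
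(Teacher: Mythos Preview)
Your overall strategy---contrapositive via a $\cc{VNP}$ certificate, then invoke Proposition~\ref{prop:pitassi} for the second part---matches the paper's. However, you take a more circuitous route to the certificate. The key observation you miss is that when an assignment $a$ falsifies clause $\kappa_i$, the clause polynomial $F_i$ \emph{divides} $\delta_a$ as a polynomial identity, not merely on Boolean points: both are products of factors of the form $x_j$ or $1-x_j$, and the (at most three) factors of $F_i$ are literally among the $n$ factors of $\delta_a$. Hence $1 = \sum_a \delta_a = \sum_i F_i \cdot \bigl(\sum_{a:\, i(a)=i} \delta_a / F_i\bigr)$ is already a polynomial identity, so the Boolean axioms and your correction terms $K_j$ are entirely unnecessary. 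Writing the indicator $[i(a)=i]$ as $F_i(\vec a)\prod_{j<i}(1-F_j(\vec a))$, the resulting certificate is directly $\sum_{a\in\{0,1\}^n}$ of a polynomial-size constant-free formula in $(\vec a,\vec c,\vec x)$, placing it in $\cc{VNP}^0_R$ for \emph{any} ring $R$ without appeal to Valiant's criterion. Your detour through $\cc{GapP}/\poly$ coefficients and the $K_j$ bookkeeping is plausible, but it creates exactly the obstacle you flag (handling the constant-free and arbitrary-$R$ versions uniformly), which the divisibility observation simply sidesteps. As a minor point, the label ``Lemma~\ref{lem:VNP}'' in this paper is the certificate lemma itself, not Valiant's criterion.
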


Together with Proposition~\ref{prop:coMA}, this immediately gives an alternative, and we believe simpler, proof of the following result:

\begin{corollary}
If $\cc{NP} \not\subseteq \cc{coMA}$, then $\cc{VNP}^{0}_{R} \neq \cc{VP}^0_{R}$, for any ring $R$.
\end{corollary}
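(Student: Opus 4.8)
The plan is a direct contrapositive chain through Theorem~\ref{thm:VNP} and Proposition~\ref{prop:coMA}. Suppose toward a contradiction that $\cc{NP} \not\subseteq \cc{coMA}$ yet $\cc{VNP}^0_R = \cc{VP}^0_R$. First I would invoke the contrapositive of the constant-free half of Theorem~\ref{thm:VNP}: since $\cc{VNP}^0_R = \cc{VP}^0_R$ rules out a super-polynomial lower bound on constant-free Hilbert-like $\I_R$ proofs for \emph{any} family of tautologies, the constant-free Hilbert-like $\I_R$ proof system must be polynomially bounded on tautologies; that is, every tautology $\varphi$ has a constant-free Hilbert-like $\I_R$ proof of size $\poly(|\varphi|)$.

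Next, observe that a Hilbert-like $\I$ certificate — one of the form $\sum_i \f_i G_i(\vec x)$ — is in particular an $\I$ certificate in the sense of Definition~\ref{def:IPS}: it vanishes when $\vec\f = \vec 0$ and equals $1$ under the substitution $\f_i \mapsto F_i(\vec x)$, so a constant-free circuit computing it is a constant-free $\I$ proof. Hence every propositional tautology has a polynomial-size constant-free $\I_R$ proof. When $R$ is a field this is precisely the hypothesis of Proposition~\ref{prop:coMA}, whence $\cc{NP} \subseteq \cc{coMA}$ — contradicting our assumption — and taking the contrapositive gives the corollary for fields. For a general ring $R$ I would reduce to this case: the polynomials $F_i$ encoding a tautology have $\{0,1\}$-coefficients, the two verification conditions of Definition~\ref{def:IPS} are polynomial identities, and a constant-free certificate over $R$ is the image of a constant-free $\Z$-circuit, so Arthur's checks are instances of PIT over the prime subring of $R$, solvable in $\cc{coRP}$ by Schwartz--Zippel--DeMillo--Lipton exactly as in the proof of Proposition~\ref{prop:coMA}; Merlin guesses the polynomial-bit-size constant-free certificate, yielding $\cc{NP} \subseteq \cc{coMA}$ in all cases.

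There is no deep obstacle here; the work is essentially bookkeeping, and the two points that need care are both minor. The first is matching quantifier structures: Theorem~\ref{thm:VNP} asserts that a super-polynomial lower bound on \emph{some} family of tautologies implies $\cc{VNP}^0_R \neq \cc{VP}^0_R$, so its negation reads ``\emph{every} family of tautologies has polynomial-size proofs,'' and one must check this delivers the per-tautology polynomial bound that Proposition~\ref{prop:coMA} requires — which it does, since a bad single tautology would generate a bad family. The second, addressed above, is the passage from the Hilbert-like proofs produced by Theorem~\ref{thm:VNP} to the more permissive $\I$ proofs used in Proposition~\ref{prop:coMA}, together with the extension from fields to arbitrary rings via reduction to the prime subring. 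Beyond these two observations the proof is immediate.
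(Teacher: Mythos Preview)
Your proposal is correct and follows exactly the route the paper intends: the corollary is stated immediately after Theorem~\ref{thm:VNP} with the remark that it follows ``together with Proposition~\ref{prop:coMA},'' and your contrapositive chain through those two results is precisely that argument. The extra care you take---observing that Hilbert-like certificates are in particular \I certificates, matching the family-level quantifier in Theorem~\ref{thm:VNP} to the per-tautology bound needed in Proposition~\ref{prop:coMA}, and reducing PIT for constant-free circuits over a general ring $R$ to PIT over its prime subring---fills in details the paper leaves implicit, but does not depart from its approach.
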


For comparison, here is a brief sketch of the only previous proof of this result that we are aware of, which only seems to work when $R$ is a finite field or, assuming the Generalized Riemann Hypothesis, a field of characteristic zero, and uses several other significant results. The previous proof combines: 1) \Burgisser's results \cite{burgisserCookValiant} relating $\cc{VP}$ and $\cc{VNP}$ over various fields to standard Boolean complexity classes such as $\cc{NC/poly}$, $\cc{\# P/poly}$ (uses GRH), and $\cc{Mod}_{p}\cc{P/poly}$, and 2) the implication $\cc{NP} \not\subseteq \cc{coMA} \Rightarrow \cc{NC/poly} \neq \cc{\# P/poly}$ (and similarly with $\cc{\# P/poly}$ replaced by $\cc{Mod}_{p}\cc{P/poly}$), which uses the downward self-reducibility of complete functions for $\cc{\# P/poly}$ (the permanent \cite{valiant}) and $\cc{Mod}_{p}\cc{P/poly}$ \cite{feigenbaumFortnow}, as well as Valiant--Vazirani \cite{VV}.

The following lemma is the key to Theorem~\ref{thm:VNP}.

\begin{VNPlem}
Every family of CNF tautologies $(\varphi_n)$ has a Hilbert-like family of \I certificates $(C_n)$ in $\cc{VNP}^{0}_{R}$.
\end{VNPlem}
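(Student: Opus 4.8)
The plan is to write down, for each $n$, an explicit Hilbert-like certificate by a case analysis over all $2^n$ Boolean assignments, and then to observe that this case analysis has exactly the shape that defines $\cc{VNP}^0$: a single sum over $\vec a \in \{0,1\}^n$ of a polynomial-size, polynomial-degree, constant-free algebraic circuit in the variables $\vec a$, $\vec x$, $\vec \f$.

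Write $F_1, \dots, F_m$ for the polynomial translations of the clauses $\kappa_1, \dots, \kappa_m$ of $\varphi_n$ and $F_{m+i} := x_i^2 - x_i$ for the Boolean axioms; by hypothesis this system has no common zero, so every $\vec a \in \{0,1\}^n$ falsifies at least one clause, and we let $j(\vec a)$ be the smallest index of such a clause. For $\vec a \in \{0,1\}^n$ set $\delta_{\vec a}(\vec x) := \prod_{i:a_i=1} x_i \prod_{i:a_i=0}(1-x_i)$, so that $\sum_{\vec a \in \{0,1\}^n}\delta_{\vec a}(\vec x) = \prod_i(x_i + (1-x_i)) = 1$ identically. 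The first-draft certificate is $C_0(\vec x, \vec\f) := \sum_{\vec a}\delta_{\vec a}(\vec x)\,\f_{j(\vec a)}$, which is manifestly linear in $\vec \f$ and vanishes at $\vec \f = \vec 0$. Substituting $\f_j \mapsto F_j(\vec x)$ and evaluating at a Boolean point $\vec b$ gives $\sum_{\vec a}\delta_{\vec a}(\vec b)\,F_{j(\vec a)}(\vec b) = F_{j(\vec b)}(\vec b) = 1$, the last equality because $\vec b$ falsifies $\kappa_{j(\vec b)}$. Hence $C_0(\vec x, \vec F(\vec x)) - 1$ vanishes on all of $\{0,1\}^n$ and therefore lies in the ideal $(x_1^2-x_1, \dots, x_n^2-x_n)$.

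The second step upgrades this ``correct on the Boolean cube'' statement to a genuine polynomial identity by adding a correction built from the Boolean axioms $F_{m+i}$, while staying inside $\cc{VNP}^0$. The naive route --- multilinearizing $C_0(\vec x, \vec F(\vec x)) - 1$ monomial by monomial --- is unavailable, since that polynomial, though of polynomial degree, can have superpolynomially many monomials. Instead I track the correction combinatorially. Since $\vec a$ falsifies $\kappa_{j(\vec a)}$, for each variable $x_i$ occurring in $\kappa_{j(\vec a)}$ the factor $\lambda_i \in \{x_i,\,1-x_i\}$ that $x_i$ contributes to $\delta_{\vec a}$ coincides with the polynomial form of the corresponding literal, so the only difference between $F_{j(\vec a)}(\vec x)\,\delta_{\vec a}(\vec x)$ and $\delta_{\vec a}(\vec x)$ is that each such $\lambda_i$ gets squared. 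Using $\lambda_i^2 - \lambda_i = x_i^2 - x_i$ (valid whether $\lambda_i = x_i$ or $\lambda_i = 1-x_i$) and telescoping over the variables of $\kappa_{j(\vec a)}$, one obtains $(F_{j(\vec a)}(\vec x) - 1)\,\delta_{\vec a}(\vec x) = \sum_{i \in \kappa_{j(\vec a)}}(x_i^2 - x_i)\,w_{\vec a,i}(\vec x)$, where each cofactor $w_{\vec a,i}$ is a single product of at most $2n$ linear forms. The final certificate is
\[
C(\vec x, \vec\f) \;:=\; \sum_{\vec a \in \{0,1\}^n}\Bigl( \delta_{\vec a}(\vec x)\,\f_{j(\vec a)} \;-\; \sum_{i \in \kappa_{j(\vec a)}} w_{\vec a,i}(\vec x)\,\f_{m+i} \Bigr),
\]
which is Hilbert-like and satisfies $C(\vec x, \vec F(\vec x)) = \sum_{\vec a}\bigl(F_{j(\vec a)}\delta_{\vec a} - (F_{j(\vec a)}-1)\delta_{\vec a}\bigr) = \sum_{\vec a}\delta_{\vec a}(\vec x) = 1$ as a polynomial identity.

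It then remains to check $C \in \cc{VNP}^0_R$. The summand inside the outer sum is, with $\vec a$ regarded as input variables, a polynomial-size constant-free circuit of polynomial degree in $(\vec a, \vec x, \vec \f)$: $\delta_{\vec a}(\vec x)$ is a product of $n$ forms affine in $(a_i, x_i)$; the selector $\f_{j(\vec a)}$ is realized, for Boolean $\vec a$, as $\sum_j \f_j\,F_j(\vec a)\prod_{j'<j}(1 - F_{j'}(\vec a))$ using prefix products, each $F_j(\vec a)$ being itself a short product of affine forms; and for each fixed $i$, $\sum_j \bigl(F_j(\vec a)\prod_{j'<j}(1-F_{j'}(\vec a))\bigr)\,w^{(j)}_{\vec a,i}(\vec x)$ implements $\sum_{i \in \kappa_{j(\vec a)}} w_{\vec a,i}(\vec x)$, where $w^{(j)}_{\vec a,i}$ is the explicit product of linear forms above for the clause $\kappa_j$ (taken to be $0$ when $x_i \notin \kappa_j$) --- the combinatorial data (which variables of $\kappa_j$ precede $x_i$, and with what sign) may simply be hard-coded, which is permissible since only non-uniform circuits are required. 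All constants used lie in $\{0,1,-1\}$. So $C$ is a sum over $\{0,1\}^n$ of a member of $\cc{VP}^0_R$ and has polynomial degree, whence $C \in \cc{VNP}^0_R$. I expect the main obstacle to be exactly this second step: producing the correction that turns Boolean-cube correctness into a formal identity \emph{without leaving $\cc{VNP}^0$} --- one must resist a generic multilinearization, which blows the number of monomials past polynomial, and instead keep the cofactors in the explicit short product-of-linear-forms form so that each summand stays a polynomial-size, polynomial-degree circuit.
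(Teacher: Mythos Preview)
Your argument is correct, and shares with the paper the same skeleton: the partition of unity $1=\sum_{\vec a}\delta_{\vec a}(\vec x)$, the greedy assignment of each $\vec a$ to the first clause it falsifies, and the selector polynomial $F_j(\vec a)\prod_{j'<j}(1-F_{j'}(\vec a))$ to express $j(\vec a)$ algebraically. The difference is in how the polynomial identity is obtained.

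You multiply $\f_{j(\vec a)}$ by the full $\delta_{\vec a}$, which after substitution yields $F_{j(\vec a)}\delta_{\vec a}$ rather than $\delta_{\vec a}$; you then repair the discrepancy with an explicit telescoping correction charged to the Boolean axioms $x_i^2-x_i$. This works, and your care in keeping the cofactors $w_{\vec a,i}$ as short explicit products (rather than attempting a generic multilinearization) is exactly what is needed to stay inside $\cc{VNP}^0$. The paper instead observes that when $\vec a$ falsifies $\kappa_j$, the clause polynomial $C_j(\vec x)$ literally \emph{divides} $\delta_{\vec a}(\vec x)$: the factors of $C_j$ are already among the factors of $\delta_{\vec a}$. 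So the paper takes as coefficient of the placeholder $c_j$ the quotient $\delta_{\vec a}/C_j = \prod_{i:x_i\notin\kappa_j} b(a_i,x_i)$, and substitution then returns $\delta_{\vec a}$ on the nose---a polynomial identity with no correction term and no use of the Boolean axioms at all. Your route costs an extra page of bookkeeping and uses $n$ more placeholder variables; the paper's buys a one-line identity. Both land in $\cc{VNP}^0$ for the same reason.
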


Here we show how Theorem~\ref{thm:VNP} follows from Lemma~\ref{lem:VNP}. Lemma~\ref{lem:VNP} is proved in Section~\ref{sec:VNP}.

\begin{proof}[Proof of Theorem~\ref{thm:VNP}, assuming Lemma~\ref{lem:VNP}]
For a given set $\mathcal{F}$ of unsatisfiable polynomial equations $F_1=\dotsb=F_m=0$, a lower bound on \I refutations of $\mathcal{F}$ is equivalent to giving the same circuit lower bound on \emph{all} \I certificates for $\mathcal{F}$. A super-polynomial lower bound on Hilbert-like \I implies that some function in $\cc{VNP}$---namely, the $\cc{VNP}$-\I certificate guaranteed by Lemma~\ref{lem:VNP}---cannot be computed by polynomial-size algebraic circuits, and hence that $\cc{VNP} \neq \cc{VP}$. Since Lemma~\ref{lem:VNP} even guarantees a constant-free certificate, we get the analogous consequence for constant-free lower bounds.

The second part of Theorem~\ref{thm:VNP} follows from the fact that number of lines in a PC proof is p-equivalent to Hilbert-like $\det$-\I (Proposition~\ref{prop:pitassi}). As in the first part, a super-polynomial lower bound on Hilbert-like $\det$-\I implies that some function family in $\cc{VNP}$ is not a p-projection of the determinant. Since the permanent is $\cc{VNP}$-complete under p-projections, the result follows.
\end{proof}

\subsection{PIT as a bridge between circuit complexity and proof complexity} \label{sec:PITeabs}
In this section we state our PIT axioms and give an outline of the proof of Theorems~\ref{thm:EF} and \ref{thm:AC0}, which say that Extended Frege (EF) (respectively, $\cc{AC}^0$- or $\cc{AC}^0[p]$-Frege) is polynomially equivalent to the Ideal Proof System if there are polynomial-size circuits for PIT whose correctness---suitably formulated---can be efficiently proved in EF (respectively, $\cc{AC}^0$- or $\cc{AC}^0[p]$-Frege).
More precisely, we identify a small set of natural axioms for PIT and show that if these axioms can be proven efficiently in EF, then EF is p-equivalent to $\I$. Theorem~\ref{thm:AC0} begins to explain why $\cc{AC}^0[p]$-Frege lower bounds have been so difficult to obtain, and highlights the importance of our PIT axioms for $\cc{AC}^0[p]$-Frege lower bounds.  We begin by describing and discussing these axioms.

Fix some standard Boolean encoding of constant-free algebraic circuits, so that the encoding of any size-$m$ constant-free algebraic circuit has size $\poly(m)$. We use ``$[C]$'' to denote the encoding of the algebraic circuit $C$. Let $K = \{K_{m,n}\}$ denote a family of Boolean circuits for solving polynomial identity testing. That is, $K_{m,n}$ is a Boolean function that takes as input the encoding of a size $m$ constant-free algebraic circuit, $C$, over variables $x_1, \ldots, x_n$, and if $C$ has polynomial degree, then $K$ outputs 1 if and only if the polynomial computed by $C$ is the 0 polynomial.

\paragraph{Notational convention:} We underline parts of a statement that involve propositional variables. For example, if in a propositional statement we write ``$[C]$'', this refers to a fixed Boolean string that is encoding the (fixed) algebraic circuit $C$. In contrast, if we write $\prop{[C]}$, this denotes a Boolean string \emph{of propositional variables}, which is to be interpreted as a description of an as-yet-unspecified algebraic circuit $C$; any setting of the propositional variables corresponds to a particular algebraic circuit $C$. Throughout, we use $\vec{p}$ and $\vec{q}$ to denote propositional variables (which we do not bother underlining except when needed for emphasis), and $\vec{x}, \vec{y}, \vec{z},\dotsc$ to denote the algebraic variables that are the inputs to algebraic circuits. Thus, $C(\vec{x})$ is an algebraic circuit with inputs $\vec{x}$, $[C(\vec{x})]$ is a fixed Boolean string encoding some particular algebraic circuit $C$, $\prop{[C(\vec{x})]}$ is a string of propositional variables encoding an unspecified algebraic circuit $C$, and $[C(\prop{\vec{p}})]$ denotes a Boolean string together with propositional variables $\vec{p}$ that describes a fixed algebraic circuit $C$ whose inputs have been set to the propositional variables $\vec{p}$.

\begin{definition} \label{def:PITaxioms}
Our PIT axioms for a Boolean circuit $K$ are as follows. (This definition makes sense even if $K$ does not correctly compute PIT, but that case isn't particularly interesting or useful.)

\begin{enumerate}
\item\label{axiom:Boolean} Intuitively, the first axiom states that if $C$ is a circuit computing
the identically 0 polynomial, then the polynomial evaluates to 0 on all
Boolean inputs.
\[
K(\prop{[C(\vec{x})]}) \rightarrow K(\prop{[C(\vec{p})]})
\]
Note that the only variables on the left-hand side of the implication
are Boolean propositional variables, $\vec{q}$, that encode an algebraic circuit of size $m$ over
$n$ algebraic variables $\vec{x}$ (these latter are \emph{not} propositional variables of the above formula). The variables on the right-hand side are $\vec{q}$ plus
Boolean variables ${\vec p}$, where some of the variables in $\vec{q}$---those encoding the $x_i$---have been replaced by constants or $\vec{p}$ in such a way that $[C(\vec{p})]$ encodes a circuit that plugs in the $\{0,1\}$-valued $p_i$ for its algebraic inputs $x_i$. In other words, when we say $\prop{[C({\vec p})]}$ we mean
the encoding of the circuit $C$ where Boolean constants are plugged in for
the original algebraic $\vec{x}$ variables, as specified by the variables $\vec{p}$.

\item\label{axiom:one} Intuitively, the second axiom states that if $C$ is a circuit computing
the zero polynomial, then the circuit $1-C$ does not compute the zero polynomial.
\[
K(\prop{[C({\vec x})]}) \rightarrow \neg K(\prop{[1-C({\vec x})]})
\]
Here, if $\vec{q}$ are the propositional variables describing $C$, these are the only variables that appear in the above statement. We abuse syntax slightly in writing $[1-C]$: it is meant to denote a Boolean formula $\varphi(\vec{q})$ such that if $\vec{q}=[C]$ describes a circuit $C$, then $\varphi(\vec{q})$ describes the circuit $1-C$ (with one subtraction gate more than $C$).

\item\label{axiom:subzero} Intuitively, the third axiom states that PIT circuits respect certain substitutions.
More specifically, if the polynomial computed by circuit
$G$ is 0, then $G$ can be substituted for the constant $0$.
\[
K(\prop{[G({\vec x})]}) \land K(\prop{[C({\vec x},0)]}) \rightarrow
	K(\prop{[C({\vec x},G({\vec x}))]})
\]
Here the notations $[C(\vec{x},0)]$ and $[C(\vec{x}, G(\vec{x}))]$ are similar abuses of notation to above; we use these and similar shorthands without further mention.

\item\label{axiom:perm} Intuitively, the last axiom states that PIT is closed under
permutations of the (algebraic) variables. More specifically if $C(\vec{x})$ is identically 0,
then so is $C(\pi(\vec{x}))$ for all permutations $\pi$.
\[
K(\prop{[C({\vec x})]}) \rightarrow K(\prop{[C(\pi({\vec x}))]})
\]
\end{enumerate}

\end{definition}

We can now state and discuss two of our main theorems precisely.

\begin{EFthm}
If there is a family $K$ of polynomial-size Boolean circuits that correctly compute PIT, such that the PIT axioms for $K$ have polynomial-size EF proofs, then EF is polynomially equivalent to $\I$.
\end{EFthm}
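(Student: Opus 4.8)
The plan is to show the two simulations separately, the easy direction first and then the hard one using the PIT axioms. For the easy direction, that $\I$ p-simulates EF, we already have more than we need: the excerpt notes (following Pitassi \cite{pitassi96}) that Hilbert-like $\I$ p-simulates Extended Frege, and since Hilbert-like $\I$ is a special case of $\I$, we immediately get that $\I$ p-simulates EF. So the entire content of the theorem is in the reverse simulation: given a $\cc{VP}$-$\I$ proof, i.e.\ an algebraic circuit $C(\vec x,\vec\f)$ of polynomial size and polynomial degree satisfying conditions (\ref{condition:ideal}) and (\ref{condition:nss}) of Definition~\ref{def:IPS}, together with the hypothesis that the PIT axioms for some good $K$ have short EF proofs, produce a polynomial-size EF proof of the propositional tautology $\varphi$ that $C$ refutes.

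The core idea is that an EF proof can ``run'' the $\I$ proof symbolically, maintaining at each gate a propositional encoding of the subcircuit rooted there, and use the PIT axioms (plus the short EF proofs of those axioms) to justify the arithmetic manipulations along the way. Concretely, I would proceed as follows. First, fix the $\I$ certificate $C(\vec x,\vec\f)$ and note that condition (\ref{condition:nss}) says that the circuit $C(\vec x, F_1(\vec x),\dots,F_m(\vec x))$ computes the constant polynomial $1$; equivalently the circuit $1 - C(\vec x,\vec F(\vec x))$ is identically zero, so $K$ accepts its encoding. Second, I would show that EF can reason propositionally about the evaluation of $C$ on a Boolean assignment $\vec p$ to the $\vec x$-variables: introduce EF extension variables naming the value of each gate of $C(\vec p, \vec F(\vec p))$ on the assignment $\vec p$, derive line-by-line that these variables compute the gate values correctly, and conclude that the output variable equals $1$. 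The subtle point is that a truth assignment falsifying $\varphi$ corresponds to a Boolean $\vec p$ for which every clause-polynomial $F_i(\vec p)$ vanishes and every $p_i^2-p_i$ vanishes, so on such a $\vec p$ the certificate would have to evaluate to $0$ by condition (\ref{condition:ideal}) (substituting $\vec\f=\vec 0$), contradicting the value $1$ just derived. The role of the PIT axioms is precisely to let EF carry out this substitution argument at the level of circuits rather than merely pointwise: Axiom~\ref{axiom:subzero} lets EF replace a subcircuit computing zero by the constant $0$, Axiom~\ref{axiom:Boolean} transfers the identical-vanishing fact to vanishing on Boolean inputs, Axiom~\ref{axiom:one} rules out the degenerate case, and Axiom~\ref{axiom:perm} handles the bookkeeping of variable substitutions. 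Each of these axioms, applied to the concrete circuits arising from $C$, has a short EF proof by hypothesis; the EF proof of $\varphi$ then stitches together polynomially many such instances.

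I expect the main obstacle to be the second step: making precise, and formally provable in EF, the chain of reasoning ``$C(\vec x,\vec F(\vec x))=1$ as a polynomial identity, hence $K$ accepts it, hence (via the axioms) $K$ accepts $1 - C(\vec x,\vec F(\vec x))$'s negation, hence on any Boolean falsifying assignment the value is simultaneously forced to be $0$ and $1$.'' The difficulty is bookkeeping: EF must manipulate \emph{encodings} of algebraic circuits (the $\prop{[\cdot]}$ objects), and one has to check that the particular circuit transformations needed---substituting the $F_i$ for the placeholders $\f_i$, substituting Boolean constants for the $x_i$, forming $1-C$---are exactly the operations the four PIT axioms are designed to validate, and that the sizes stay polynomial throughout (here polynomial \emph{degree} of the $\I$ proof is used to keep $K$'s promise in force, since $K$ is only guaranteed to be correct on polynomial-degree circuits). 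A secondary issue is that $K$ is an arbitrary polynomial-size PIT circuit, not necessarily one EF ``knows'' anything about beyond the four axioms, so every use of $K$ inside the EF proof must be routed through an axiom instance; verifying that the four axioms suffice---that no further closure property of PIT is secretly needed---is the crux. Once that is done, assembling the final EF refutation of $\varphi$ from the axiom proofs and the gate-by-gate evaluation argument is routine, and combined with the easy direction gives the claimed p-equivalence.
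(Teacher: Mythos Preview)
Your proposal lands on the same argument as the paper, but the exposition mixes in a misleading sub-plan that you should drop. The sentence about ``introducing EF extension variables naming the value of each gate of $C(\vec p,\vec F(\vec p))$'' and deriving gate values line-by-line is not how the argument goes, and taken literally it would fail: intermediate gate values in an algebraic circuit on Boolean inputs need not be Boolean (over $\Z$ or $\Q$ they can have doubly-exponential bit-length), so you cannot name them with polynomially many Boolean extension variables. The paper never evaluates $C$ pointwise inside EF. All reasoning stays at the level of the \emph{encoding} $\prop{[C]}$ and the single Boolean predicate $K(\prop{[\cdot]})$; the only facts EF ever manipulates are of the form ``$K$ accepts this encoded circuit.'' You seem to realize this a few lines later (``at the level of circuits rather than merely pointwise''), so just delete the gate-by-gate language.

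The paper organizes the hard direction into two lemmas you may find cleaner than a direct argument. First, it formulates a single propositional family $Soundness_\I(\prop{[C]},\prop{[\varphi]},\vec p) \equiv Proof_\I(\prop{[C]},\prop{[\varphi]}) \rightarrow \neg Truth_{bool}(\prop{[\varphi]},\vec p)$, where $Proof_\I$ is the conjunction $K(\prop{[C(\vec x,\vec 0)]}) \wedge K(\prop{[1-C(\vec x,\vec Q^\varphi(\vec x))]})$, and shows (Lemma~\ref{lem:soundness}) that short EF proofs of $Soundness_\I$ suffice for EF to p-simulate $\I$: given a concrete $\I$ refutation $C$ of $\varphi$, the string $Proof_\I([C],[\varphi])$ is variable-free and true (this is the one place correctness of $K$ is genuinely used), so EF can verify it outright, instantiate $Soundness_\I$, and conclude $\neg\varphi$ via $\varphi\rightarrow Truth_{bool}([\varphi],\vec p)$. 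Second (Lemma~\ref{lem:axioms}), it shows the PIT axioms yield short EF proofs of $Soundness_\I$ by exactly the contradiction you sketch: from $Truth_{bool}$ derive $K(\prop{[Q_i^\varphi(\vec p)]})$ for each clause $i$ (brute force, since each clause involves $O(1)$ variables); from $K(\prop{[C(\vec x,\vec 0)]})$ use Axiom~\ref{axiom:Boolean} to get $K(\prop{[C(\vec p,\vec 0)]})$; then repeatedly apply Axioms~\ref{axiom:subzero} and~\ref{axiom:perm} to substitute the $Q_i^\varphi(\vec p)$ for the zeros, obtaining $K(\prop{[C(\vec p,\vec Q(\vec p))]})$; Axiom~\ref{axiom:one} turns this into $\neg K(\prop{[1-C(\vec p,\vec Q(\vec p))]})$, which contradicts what Axiom~\ref{axiom:Boolean} gives from the other hypothesis $K(\prop{[1-C(\vec x,\vec Q(\vec x))]})$.

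Your identification of which axiom does what is accurate, and your anticipated obstacle (checking that the four axioms really suffice, with no hidden closure property needed) is exactly the content of Lemma~\ref{lem:axioms}. The decomposition through an explicit $Soundness_\I$ statement is not strictly necessary---your ``for each specific $C,\varphi$'' version would also work---but it makes the size accounting and the role of $K$'s correctness more transparent.
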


Note that the issue is not the existence of small circuits for PIT since we would be happy with nonuniform polynomial-size PIT circuits, which do exist. Unfortunately the known constructions are highly nonuniform---they involve picking uniformly random points---and we do not see how to prove the above axioms for these constructions. Nonetheless, it seems very plausible to us that there exists a polynomial-size family of PIT circuits where the above axioms are efficiently provable in EF, especially in light of Remark~\ref{rmk:PIT}.

To prove the theorem (which we do in Section~\ref{sec:EF}), we first show that EF is p-equivalent to $\I$ if a family of propositional formulas expressing soundness of $\I$ are  efficiently EF provable. Then we show that efficient EF proofs of $Soundness_{\I}$ follows from efficient EF proofs for the PIT axioms.

Our next main result shows that the previous result can be scaled down to much weaker proof systems than EF.

\begin{AC0thm}
Let $\mathcal{C}$ be any class of circuits closed under $\cc{AC}^0$ circuit reductions. If there is a family $K$ of polynomial-size Boolean circuits computing PIT such that the PIT axioms for $K$ have polynomial-size $\mathcal{C}$-Frege proofs, then $\mathcal{C}$-Frege is polynomially equivalent to $\I$, and consequently polynomially equivalent to Extended Frege.
\end{AC0thm}

Note that here we \emph{do not} need to restrict the circuit family $K$ to be in the class $\mathcal{C}$. This requires one more (standard) technical device compared to the proof of Theorem~\ref{thm:EF}, namely the use of auxiliary variables for the gates of $K$. Here we prove and discuss some corollaries of Theorem~\ref{thm:AC0}; the proof of Theorem~\ref{thm:AC0} is given in Section~\ref{sec:AC0p}.

As $\cc{AC}^0$ is known unconditionally to be strictly weaker than Extended Frege \cite{Ajtai}, we immediately get that $\cc{AC}^0$-Frege cannot efficiently prove the PIT axioms for any Boolean circuit family $K$ correctly computing PIT. 

Using essentially the same proof as Theorem~\ref{thm:AC0}, we also get the following result. By ``depth $d$ PIT axioms'' we mean a variant where the algebraic circuits $C$ (encoded as $[C]$ in the statement of the axioms) have depth at most $d$. Note that, even over finite fields, for any $d \geq 4$ super-polynomial lower bounds on depth $d$ algebraic circuits are a notoriously open problem. (The chasm at depth $4$ says that depth $4$ lower bounds of size $2^{\omega(\sqrt{n} \log n)}$ imply super-polynomial size lower bounds on general algebraic circuits, but this does not give any indication of why merely super-polynomial lower bounds on depth $4$ circuits should be difficult.)

\begin{corollary} \label{cor:AC0p}
For any $d$, if there is a family of tautologies with no polynomial-size $\cc{AC}^0[p]$-Frege proof, and $\cc{AC}^0[p]$-Frege has polynomial-size proofs of the [depth $d$] PIT axioms for some $K$, then $\cc{VNP}_{\F_p}$ does not have polynomial-size [depth $d$] algebraic circuits.
\end{corollary}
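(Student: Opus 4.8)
\begin{proof-idea}
The plan is to run the argument that derives Theorem~\ref{thm:VNP} from Lemma~\ref{lem:VNP}, but with a depth-tracked version of the proof of Theorem~\ref{thm:AC0} in place of Theorem~\ref{thm:AC0} itself. Throughout, we take the tautology family appearing in the hypothesis of the corollary to consist of CNFs, as is standard for bounded-depth and algebraic propositional proof systems (and as is needed to apply Lemma~\ref{lem:VNP}). The one new ingredient is the following ``depth-$d$'' form of the interesting direction of Theorem~\ref{thm:AC0}: \emph{if $K$ is a polynomial-size family of Boolean circuits that correctly decides PIT on the algebraic circuits it is queried on, and the depth-$d$ PIT axioms for $K$ have polynomial-size $\cc{AC}^0[p]$-Frege proofs, then $\cc{AC}^0[p]$-Frege polynomially simulates depth-$d$ $\I$}, that is, $\I$ proofs whose underlying algebraic circuit has depth at most $d$. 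Granting this, suppose toward a contradiction that $\cc{VNP}_{\F_p}$ has polynomial-size depth-$d$ algebraic circuits. By Lemma~\ref{lem:VNP}, an arbitrary family of CNF tautologies $(\varphi_n)$ has a Hilbert-like family of $\I$ certificates $(C_n)$ in $\cc{VNP}^0_{\F_p}$, and $\cc{VNP}^0_{\F_p}=\cc{VNP}_{\F_p}$ since $\F_p$ is finite; by our assumption each $C_n$ is then computed by a polynomial-size depth-$d$ algebraic circuit, which is exactly a polynomial-size depth-$d$ $\I$ proof of the equations attached to $\varphi_n$. By the depth-$d$ form of Theorem~\ref{thm:AC0}, invoking the hypothesis on the PIT axioms, $(\varphi_n)$ then has polynomial-size $\cc{AC}^0[p]$-Frege proofs. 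As $(\varphi_n)$ was arbitrary, this contradicts the assumed existence of a tautology family with no polynomial-size $\cc{AC}^0[p]$-Frege proof, so $\cc{VNP}_{\F_p}$ has no polynomial-size depth-$d$ algebraic circuit.

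To establish the depth-$d$ form of Theorem~\ref{thm:AC0}, one reruns the proof of Theorem~\ref{thm:AC0} while tracking the depth of every algebraic circuit the $\cc{AC}^0[p]$-Frege proof reasons about. That proof converts an $\I$ refutation $C$ of a system $\mathcal F$ into an $\cc{AC}^0[p]$-Frege proof by formalizing the soundness formulas $Soundness_{\I}$ and discharging them from the PIT axioms; the crucial observation is that this formalization only ever applies the PIT axioms to $C$ and to a bounded list of its syntactic modifications---$1-C$, the substitution of the low-depth polynomials $F_i$ (products of literals) for the placeholder variables $\vec{y}$, the substitution of the constant $0$ for an input, and permutations of the $\vec{x}$ variables---each of which has depth $d+O(1)$ when $C$ has depth $d$. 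Hence only the depth-$(d+O(1))$ PIT axioms are invoked. Exactly as in Theorem~\ref{thm:AC0}, $K$ may have arbitrary Boolean depth: one introduces auxiliary propositional variables naming the gates of $K$, which is what lets ``$\vec{q}$ encodes an accepting computation of $K$'' be expressed by constant-depth formulas. One finally checks that every algebraic circuit produced along the way is constant-free, so that the appeal to Lemma~\ref{lem:VNP}---whose certificate is constant-free---in the previous paragraph is legitimate.

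The main obstacle is precisely this depth-locality bookkeeping: verifying that none of the manipulations in the soundness-of-$\I$ simulation underlying Theorem~\ref{thm:AC0} (substituting the $F_i$, the $1-C$ step, the variable permutations, the zero-substitution) silently produces an algebraic subcircuit whose depth exceeds that of the given $\I$ proof by more than an additive constant. That additive $O(1)$ is suppressed in the statement: one should read ``depth $d$'' in the corollary up to this constant, or equivalently understand ``depth-$d$ PIT axioms'' to already allow the handful of extra gates the closure operations introduce. With depth-locality in hand, the remainder is the routine composition of Lemma~\ref{lem:VNP}, the depth-$d$ form of Theorem~\ref{thm:AC0}, and the standard observation (used in the proof of Theorem~\ref{thm:VNP}) that the minimum $\I$-proof size of a system of equations equals the minimum algebraic-circuit size among all of its $\I$ certificates.
\end{proof-idea}
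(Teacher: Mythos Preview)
Your proposal is correct and matches the paper's intended argument. The paper does not give a standalone proof of this corollary; it simply remarks (just before the statement) that it follows ``using essentially the same proof as Theorem~\ref{thm:AC0},'' with ``depth $d$ PIT axioms'' defined as the variant where the encoded algebraic circuits have depth at most $d$. Your sketch spells out exactly what the paper leaves implicit: rerun the simulation underlying Theorem~\ref{thm:AC0} while tracking that every algebraic circuit fed to $K$ has depth $d+O(1)$, obtain that $\cc{AC}^0[p]$-Frege p-simulates depth-$d$ \I, and then combine with Lemma~\ref{lem:VNP} and the contrapositive argument of Theorem~\ref{thm:VNP}. Your observation about the additive $O(1)$ in depth (from the substitutions $C(\vec{x},\vec{Q}(\vec{x}))$, $1-C$, etc.) and the reduction of the hard tautology family to CNF are the right caveats to flag, and are consistent with how the paper treats depth throughout.
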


This corollary makes the following question of central importance in getting lower bounds on $\cc{AC}^0[p]$-Frege:

\begin{open}
For some $d \geq 4$, is there some $K$ computing depth $d$ PIT, for which the depth $d$ PIT axioms have $\cc{AC}^0[p]$-Frege proofs of polynomial size?
\end{open}

This question has the virtue that answering it either way is highly interesting:
\begin{itemize}
\item If $\cc{AC}^0[p]$-Frege does not have polynomial-size proofs of the [depth $d$] PIT axioms for any $K$, then we have super-polynomial size lower bounds on $\cc{AC}^0[p]$-Frege, answering a question that has been open for nearly thirty years.

\item Otherwise, super-polynomial size lower bounds on $\cc{AC}^0[p]$-Frege imply that the permanent does not have polynomial-size algebraic circuits [of depth $d$] over any finite field of characteristic $p$. This would then explain why getting superpolynomial lower bounds on $\cc{AC}^0[p]$-Frege has been so difficult.
\end{itemize}

This dichotomy is in some sense like a ``completeness result for $\cc{AC}^0[p]$-Frege, modulo proving strong algebraic circuit lower bounds on $\cc{VNP}$'': if one hopes to prove $\cc{AC}^0[p]$-Frege lower bounds \emph{without proving} strong lower bounds on $\cc{VNP}$, then one must prove $\cc{AC}^0[p]$-Frege lower bounds on the PIT axioms. For example, if you believe that proving $\cc{VP} \neq \cc{VNP}$ [or that proving $\cc{VNP}$ does not have bounded-depth polynomial-size circuits] is very difficult, and that proving $\cc{AC}^0[p]$-Frege lower bounds is comparatively easy, then to be consistent you must also believe that proving $\cc{AC}^0[p]$-Frege lower bounds \emph{on the [bounded-depth] PIT axioms} is easy.

Similarly, along with Theorem~\ref{thm:depth}, we get the following corollary.

\begin{corollary}
If for every constant $d$, there is a constant $d'$ such that the depth $d$ PIT axioms have polynomial-size depth $d'$ $\cc{AC}^0_{d'}[p]$-Frege proofs , then $\cc{AC}^0[p]$-Frege is polynomially equivalent to constant-depth $\I_{\F_p}$.
\end{corollary}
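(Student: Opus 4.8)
The plan is to establish the two directions of the equivalence separately. One direction is immediate from Theorem~\ref{thm:depth} and does not use the hypothesis: by Theorem~\ref{thm:depth}, depth-$d$ $\cc{AC}^0_d[p]$-Frege proofs are p-simulated by depth-$(d+2)$ $\I_{\F_p}$ proofs, so taking the union over all constant $d$ shows that constant-depth $\I_{\F_p}$ p-simulates $\cc{AC}^0[p]$-Frege. The content of the corollary is the reverse simulation: under the hypothesis, $\cc{AC}^0[p]$-Frege p-simulates constant-depth $\I_{\F_p}$.

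For that reverse simulation I would run the proof of Theorem~\ref{thm:AC0} (equivalently Theorem~\ref{thm:EF}) from Section~\ref{sec:AC0p}, but bounding the depth of every circuit and every formula that occurs. Fix $d$ and a depth-$d$ $\I_{\F_p}$-refutation, i.e.\ a depth-$d$ algebraic circuit $C(\vec{x},\vec{\f})$ computing an $\I$ certificate of (the equational translation of) an unsatisfiable CNF. Its soundness reduces, through the PIT axioms together with bounded-depth Frege manipulations, to two polynomial-identity facts: that $[C(\vec{x},\vec 0)]$ computes the zero polynomial (condition~\ref{condition:ideal} of Definition~\ref{def:IPS}) and that $[1-C(\vec{x},\vec{F}(\vec{x}))]$ computes the zero polynomial (condition~\ref{condition:nss}). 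Because the $F_i$ are the fixed constant-degree polynomials coming from the clauses and the equations $x_i^2-x_i$, both of these circuits have depth at most $d+c$ for an absolute constant $c$; set $d'':=d+c$. Hence the simulation invokes only the depth-$d''$ PIT axioms, which by hypothesis have polynomial-size $\cc{AC}^0_{d'}[p]$-Frege proofs for some constant $d'=d'(d'')$.

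Granting those, I would finish exactly as in Theorem~\ref{thm:AC0}: the propositional formulas $Soundness_{\I}$ asserting validity of the refutation have polynomial-size bounded-depth Frege proofs from polynomial-size bounded-depth Frege proofs of the PIT axioms, and from $Soundness_{\I}$ one derives the simulated proof of the tautology. As there, since the PIT circuit family $K$ is not assumed to have bounded depth, one uses the standard device of introducing auxiliary variables naming the gates of $K$ (and likewise naming the gates of $C$); each defining equation is a depth-$O(1)$ formula, and replacing a subformula by its name only decreases depth, so this device does not raise the depth. The steps sitting on top of the PIT-axiom proofs---evaluating $C$ at a Boolean point via axiom~\ref{axiom:Boolean}, the substitution $\vec{\f}\mapsto\vec{F}(\vec{x})$ via axiom~\ref{axiom:subzero}, and deriving the contradiction $1=0$---each add only $O(1)$ to the depth. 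Thus the simulation produces a polynomial-size $\cc{AC}^0_{D}[p]$-Frege proof for a constant $D=D(d,d')$, establishing that $\cc{AC}^0[p]$-Frege p-simulates constant-depth $\I_{\F_p}$; combined with the first direction this gives the polynomial equivalence.

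The step I expect to be the main obstacle is precisely the depth bookkeeping in the middle: one must check that the reduction of $Soundness_{\I}$ to the PIT axioms carried out in Section~\ref{sec:AC0p}, together with the closure-under-reductions arguments used there, are depth-preserving up to additive constants---in other words, that no step of the simulation covertly requires formulas of unbounded depth. Given the modular structure of the proof of Theorem~\ref{thm:AC0}, this is routine, but it is where the care lies; everything else is a direct instantiation of results already proved.
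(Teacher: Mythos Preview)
Your proposal is correct and is exactly the argument the paper has in mind: the paper offers no explicit proof, merely the sentence ``Similarly, along with Theorem~\ref{thm:depth}, we get the following corollary,'' i.e.\ one direction from Theorem~\ref{thm:depth} and the other by rerunning the proof of Theorem~\ref{thm:AC0} with depth bookkeeping, precisely as you outline. Your observation that a depth-$d$ \I certificate yields PIT instances of depth $d+O(1)$ (so only the bounded-depth PIT axioms are needed) and your use of the auxiliary-variable device for $K$ are the two points that make the depth-bounded version go through, and both are handled as the paper intends.
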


Using the chasms at depth 3 and 4 for algebraic circuits \cite{agrawalVinay,koiranChasm,tavenas} (see Observation~\ref{obs:chasm} above), we can also help explain why sufficiently strong exponential lower bounds for $\cc{AC}^0$-Frege---that is, lower bounds that don't depend on the depth, or don't depend so badly on the depth, which have also been open for nearly thirty years---have been difficult to obtain:

\begin{corollary}
Let $\F$ be any field, and let $c$ be a sufficiently large constant. If there is a family of tautologies $(\varphi_n)$ such that any $\cc{AC}^0$-Frege proof of $\varphi_n$ has size at least $2^{c\sqrt{n} \log n}$, and $\cc{AC}^0$-Frege has polynomial-size proofs of the depth $4$ PIT$_{\F}$ axioms for some $K$, then $\cc{VP}^0_{\F} \neq \cc{VNP}^0_{\F}$.

If $\F$ has characteristic zero, we may replace ``depth $4$'' above with ``depth $3$.''
\end{corollary}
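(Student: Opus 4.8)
The plan is to run a proof by contradiction that chains together three facts already in hand: Lemma~\ref{lem:VNP} (every CNF tautology has a Hilbert-like \I certificate in $\cc{VNP}^0_{\F}$, and the construction there produces certificates of degree $O(n)$), the depth-$4$ chasm for \I proof size (Observation~\ref{obs:chasm}(2)), and the depth-bounded version of Theorem~\ref{thm:AC0} underlying Corollary~\ref{cor:AC0p} (if $\cc{AC}^0$-Frege has polynomial-size proofs of the depth-$d$ PIT axioms for some $K$, then $\cc{AC}^0$-Frege polynomially simulates depth-$d$ \I proofs). Assuming $\cc{VP}^0_{\F} = \cc{VNP}^0_{\F}$, these combine to yield an $\cc{AC}^0$-Frege proof of the hard family $(\varphi_n)$ of size $2^{O(\sqrt{n}\log n)}$, contradicting the hypothesized lower bound as soon as $c$ exceeds the constant absorbed in that $O(\cdot)$.

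In detail, I would proceed as follows. First, apply Lemma~\ref{lem:VNP} to $(\varphi_n)$ to obtain a Hilbert-like \I certificate family $(C_n) \in \cc{VNP}^0_{\F}$ of semantic degree $d = O(n)$. Suppose toward a contradiction that $\cc{VP}^0_{\F} = \cc{VNP}^0_{\F}$; then $(C_n)$ lies in $\cc{VP}^0_{\F}$, so $\varphi_n$ has a constant-free \I proof of size $\poly(n)$ and degree $O(n)$. Second, feed this into Observation~\ref{obs:chasm}(2): the depth-$4$ chasm turns a $\poly(n)$-size, $O(n)$-degree \I proof into a depth-$4$ \I proof of $\varphi_n$ of size $n^{O(\sqrt{n})} = 2^{O(\sqrt{n}\log n)}$. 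Third, since $\cc{AC}^0$-Frege has polynomial-size proofs of the depth-$4$ PIT$_{\F}$ axioms for some $K$, the depth-$4$ instance of Theorem~\ref{thm:AC0} converts this size-$S$ depth-$4$ \I proof into an $\cc{AC}^0$-Frege proof of $\varphi_n$ of size $\poly(S) = 2^{O(\sqrt{n}\log n)}$. Taking $c$ larger than the implicit constant contradicts the assumed lower bound, so $\cc{VP}^0_{\F} \neq \cc{VNP}^0_{\F}$. Over a field of characteristic zero one instead invokes Observation~\ref{obs:chasm}(3), the depth-$3$ chasm, which for $d = O(n)$ and $s = \poly(n)$ also yields size $2^{O(\sqrt{n}\log n)}$, and replaces ``depth $4$'' by ``depth $3$'' throughout.

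The step I expect to require the most care is the depth-and-constant bookkeeping rather than any new idea. On depth: the chasm outputs a general depth-$4$ circuit $C'(\vec{x},\vec{\f})$ (the transformation need not preserve $\vec{\f}$-linearity), and simulating the verification of $C'$ inside $\cc{AC}^0$-Frege requires PIT for the circuits arising in conditions (1) and (2) of Definition~\ref{def:IPS}; condition (2) substitutes the clause polynomials $F_i$ --- products of linear forms, hence of depth about $2$ --- into the leaves of $C'$, producing a circuit of depth $4+O(1)$, so one really needs the depth-$(4+O(1))$ PIT axioms and should read ``depth $4$ PIT axioms'' up to this additive constant. On constants: it is essential to apply the chasm to the \emph{specific} degree-$O(n)$ certificate from Lemma~\ref{lem:VNP} rather than to an arbitrary $\poly(n)$-size \I proof (whose degree could be $n^{\Omega(1)}$, blowing the chasm bound up past $2^{\sqrt{n}\log n}$), and one must check that the exponent of the final $\cc{AC}^0$-Frege upper bound --- assembled from the $O(n)$ degree in Lemma~\ref{lem:VNP}, the implied constant of the chasm, and the degree of the polynomial in the simulation of Theorem~\ref{thm:AC0} --- is absolute, which is precisely why the hypothesis is phrased with an explicit (sufficiently large) $c$ rather than merely ``super-polynomial.'' Aside from this accounting, the argument is a mechanical composition of the quoted results.
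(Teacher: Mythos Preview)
Your proposal is correct and follows essentially the same approach as the paper: assume $\cc{VP}^0_{\F} = \cc{VNP}^0_{\F}$, invoke Lemma~\ref{lem:VNP} to get a polynomial-size \I proof, apply the depth-$4$ chasm (Observation~\ref{obs:chasm}) to obtain a depth-$4$ \I proof of size $2^{O(\sqrt{n}\log n)}$, and then use the depth-bounded analogue of Theorem~\ref{thm:AC0} to convert this into an $\cc{AC}^0$-Frege proof of the same size, contradicting the assumed lower bound for $c$ large enough. Your bookkeeping---tracking that the certificate from Lemma~\ref{lem:VNP} has degree $O(n)$ (needed for the chasm bound to come out as $2^{O(\sqrt{n}\log n)}$) and noting the additive depth offset from substituting the clause polynomials $F_i$---is more explicit than the paper's own proof, which glosses over both points.
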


\begin{proof}
Suppose that $\cc{AC}^0$-Frege can efficiently prove the depth $4$ PIT$_\F$ axioms for some Boolean circuit $K$. Let $(\varphi_n)$ be a family of tautologies. If $\cc{VNP}^0_{\F} = \cc{VP}^0_{\F}$, then there is a polynomial-size \I proof of $\varphi_n$. By Observation~\ref{obs:chasm}, the same certificate is computed by a depth $4$ $\F$-algebraic circuit of size $2^{O(\sqrt{n} \log n)}$. By assumption, $\cc{AC}^0$-Frege can efficiently prove the depth $4$ PIT$_\F$ axioms for $K$, and therefore $\cc{AC}^0$-Frege p-simulates depth 4 \I. Thus there are $\cc{AC}^0$-Frege proofs of $\varphi_n$ of size $2^{O(\sqrt{n} \log n)}$. 

If $\F$ has characteristic zero, we may instead use the best-known chasm at depth 3, for which we only need depth 3 PIT and depth 3 \I, and yields the same bounds.
\end{proof}

As with Corollary~\ref{cor:AC0p}, we conclude a similar dichotomy: either $\cc{AC}^0$-Frege can efficiently prove the depth 4 PIT axioms (depth 3 in characteristic zero), or proving $2^{\omega(\sqrt{n} \log n)}$ lower bounds on $\cc{AC}^0$-Frege implies $\cc{VP}^0 \neq \cc{VNP}^0$.

\subsection{Towards lower bounds} \label{sec:syzygy}
Theorem~\ref{thm:VNP} shows that proving lower bounds on (even Hilbert-like) $\I$, or on the number of lines in Polynomial Calculus proofs (equivalent to Hilbert-like $\det$-\I),  is at least as hard as proving algebraic circuit lower bounds. In this section we begin to make the difference between proving proof complexity lower bounds and proving circuit lower bounds more precise, and use this precision to suggest a direction for proving new proof complexity lower bounds, aimed at proving the long-sought-for length-of-proof lower bounds on an algebraic proof system.

The key fact we use is embodied in Lemma~\ref{lem:fgsyz}, which says that the set of (Hilbert-like) certificates for a given unsatisfiable system of equations is, in a precise sense, ``finitely generated.'' The basic idea is then to leverage this finite generation to extend lower bound techniques from individual polynomials to entire ``finitely generated'' sets of polynomials.

Because Hilbert-like certificates are somewhat simpler to deal with, we begin with those and then proceed to general certificates. But keep in mind that all our key conclusions about Hilbert-like certificates will also apply to general certificates. For this section we will need the notion of a module over a ring (the ring-analogue of a vector space over a field) and a few basic results about such modules; these are reviewed in Appendix~\ref{app:background:algebra}.

Recall that a \definedWord{Hilbert-like} $\I$-certificate $C(\vec{x}, \vec{\f})$ is one that is linear in the $\f$-variables, that is, it has the form $\sum_{i=1}^{m}G_i(\vec{x}) \f_i$.
Each function of the form $\sum_i G_i(\vec{x})\f_i$ is completely determined by the tuple $(G_1(\vec{x}), \dotsb, G_m(\vec{x}))$, and the set of all such tuples is exactly the $R[\vec{x}]$-module $R[\vec{x}]^{m}$. 

The algebraic circuit size of a Hilbert-like certificate $C=\sum_i G_i(\vec{x}) \f_i$ is equivalent (up to a small constant factor and an additive $O(n)$) to the algebraic circuit size of computing the entire tuple $(G_1(\vec{x}), \dotsc, G_m(\vec{x}))$. A circuit computing the tuple can easily be converted to a circuit computing $C$ by adding $m$ times gates and a single plus gate. 
Conversely, for each $i$ we can recover $G_i(\vec{x})$ from $C(\vec{x}, \vec{\f})$ by plugging in $0$ for all $\f_j$ with $j \neq i$ and $1$ for $\f_i$. 
So from the point of view of lower bounds, we may consider Hilbert-like certificates, and their representation as tuples, essentially without loss of generality. This holds even in the setting of Hilbert-like depth 3 \I-proofs.

Using the representation of Hilbert-like certificates as tuples, we find that Hilbert-like \I-certificates are in bijective correspondence with $R[\vec{x}]$ solutions (in the new variables $g_i$) to the following $R[\vec{x}]$-linear equation:
\[
\left(\begin{array}{ccc}
F_1(\vec{x}) & \dotsb & F_m(\vec{x}) 
\end{array}\right)
\left(\begin{array}{c}
g_1 \\
\vdots \\
g_m
\end{array}
\right) = 1
\]
Just as in linear algebra over a field, the set of such solutions can be described by taking one solution and adding to it all solutions to the associated homogeneous equation:
\begin{equation} \label{eqn:homog}
\left(\begin{array}{ccc}
F_1(\vec{x}) & \dotsb & F_m(\vec{x}) 
\end{array}\right)
\left(\begin{array}{c}
g_1 \\
\vdots \\
g_m
\end{array}
\right) = 0
\end{equation}
(To see why this is so, mimic the usual linear algebra proof: given two solutions of the inhomogeneous equation, consider their difference.) Solutions to the latter equation are commonly called ``syzygies'' amongst the $F_i$. 
Syzygies and their properties are well-studied---though not always well-understood---in commutative algebra and algebraic geometry, so lower and upper bounds on Hilbert-like \I-proofs may benefit from known results in algebra and geometry. 

We now come to the key lemma for Hilbert-like certificates.

\begin{lemma} \label{lem:fgsyz}
For a given set of unsatisfiable polynomial equations $F_1(\vec{x})=\dotsb=F_m(\vec{x})=0$ over a Noetherian ring $R$ (such as a field or $\Z$), the set of Hilbert-like \I-certificates is a coset of a finitely generated submodule of $R[\vec{x}]^{m}$. 
\end{lemma}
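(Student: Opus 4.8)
The plan is to exploit the correspondence, already set up in the text, between Hilbert-like \I-certificates $\sum_i G_i(\vec{x})\f_i$ and $R[\vec{x}]$-solutions $(g_1,\dotsc,g_m)$ of the inhomogeneous linear equation $\sum_i F_i(\vec{x}) g_i = 1$, together with the observation (mimicking linear algebra) that this solution set is a coset of the solution module of the homogeneous equation \eqref{eqn:homog}. Concretely, since the equations $F_1 = \dotsb = F_m = 0$ are unsatisfiable over $\overline{\F}$ (equivalently, over the algebraic closure of the fraction field when $R$ is a domain; in general one uses the Nullstellensatz certificate whose existence is guaranteed by completeness of \I), there is at least one Hilbert-like certificate $C_0 = \sum_i G_i^{(0)}(\vec{x})\f_i$, corresponding to one particular solution $(G_1^{(0)},\dotsc,G_m^{(0)})$. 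Then the full set of Hilbert-like certificates is $(G_1^{(0)},\dotsc,G_m^{(0)}) + \mathrm{Syz}(F_1,\dotsc,F_m)$, where $\mathrm{Syz}(F_1,\dotsc,F_m) \subseteq R[\vec{x}]^m$ denotes the module of syzygies, i.e.\ the kernel of the $R[\vec{x}]$-linear map $R[\vec{x}]^m \to R[\vec{x}]$ sending $\vec{e}_i \mapsto F_i(\vec{x})$.

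\textbf{Key steps, in order.} First I would record that the map $\varphi \colon R[\vec{x}]^m \to R[\vec{x}]$, $(g_1,\dotsc,g_m)\mapsto \sum_i F_i g_i$, is a homomorphism of $R[\vec{x}]$-modules, so its kernel $\mathrm{Syz}$ is a submodule and the preimage $\varphi^{-1}(1)$ of the element $1 \in R[\vec{x}]$ is either empty or a coset of $\mathrm{Syz}$. Second, I would invoke completeness of the (Hilbert-like) Ideal Proof System --- which follows from the Nullstellensatz as noted right after Definition~\ref{def:IPS} --- to conclude $\varphi^{-1}(1)$ is nonempty, hence a genuine coset. Third, and this is the real content, I would argue that $\mathrm{Syz}$ is finitely generated. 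Here the hypothesis that $R$ is Noetherian is exactly what is needed: by the Hilbert Basis Theorem, if $R$ is Noetherian then the polynomial ring $R[\vec{x}] = R[x_1,\dotsc,x_n]$ (finitely many variables) is Noetherian; a finitely generated module over a Noetherian ring is a Noetherian module, so $R[\vec{x}]^m$ is Noetherian; and every submodule of a Noetherian module is finitely generated. Applying this to $\mathrm{Syz} \subseteq R[\vec{x}]^m$ gives finite generation. Finally I would translate back through the tuple-to-certificate bijection to state the conclusion in terms of Hilbert-like \I-certificates, noting that the examples $R = \F$ and $R = \Z$ are Noetherian, which justifies the parenthetical in the statement.

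\textbf{Main obstacle.} The structure of the argument is routine once the translation to modules is in hand, so there is no deep obstacle; the only point requiring a little care is the nonemptiness of $\varphi^{-1}(1)$, i.e.\ the existence of \emph{some} Hilbert-like certificate. Over a field (or, via base change, over a domain $R$ by clearing denominators from a certificate over $\mathrm{Frac}(R)$, then checking the constant works out --- one may need to be slightly careful about whether the scaled certificate still lies in $R[\vec{x}]^m$ computing exactly $1$ rather than a nonzero constant) this is immediate from the Nullstellensatz applied to $\overline{\F}$; the excerpt has already asserted that \I is complete, so I would simply cite that. For general Noetherian $R$ that is not a domain the cleanest route is to note that the lemma is only used for the cases $R$ a field or $R = \Z$, where unsatisfiability over $\overline{\F}$ and the resulting Nullstellensatz certificate give what we want directly; so I would not belabor the general case. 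Thus the one-line summary of the proof is: the set of Hilbert-like certificates is the (nonempty, by completeness) fiber $\varphi^{-1}(1)$ of an $R[\vec{x}]$-module homomorphism, hence a coset of the syzygy module $\ker\varphi$, which is finitely generated by the Hilbert Basis Theorem since $R$ is Noetherian.
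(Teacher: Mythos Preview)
Your proposal is correct and follows essentially the same approach as the paper: identify Hilbert-like certificates with the fiber $\varphi^{-1}(1)$ of the $R[\vec{x}]$-module map $(g_1,\dotsc,g_m)\mapsto\sum F_i g_i$, so that the set is a coset of the syzygy module $\ker\varphi$, and then invoke Hilbert's Basis Theorem to pass from $R$ Noetherian to $R[\vec{x}]$ Noetherian to $R[\vec{x}]^m$ a Noetherian module, whence every submodule is finitely generated. The paper's proof is terser---it leaves nonemptiness implicit in the ``unsatisfiable'' hypothesis and the preceding discussion---but the argument is the same.
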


\begin{proof}
The discussion above shows that the set of Hilbert-like certificates is a coset of a $R[\vec{x}]$-submodule of $R[\vec{x}]^{m}$, namely the solutions to (\ref{eqn:homog}). As $R$ is a Noetherian ring, so is $R[\vec{x}]$ (by Hilbert's Basis Theorem). Thus $R[\vec{x}]^{m}$ is a Noetherian $R[\vec{x}]$-module, and hence every submodule of it is finitely generated.
\end{proof}

Lemma~\ref{lem:fgsyz} seems so conceptually important that it is worth re-stating: 
\begin{quotation}
\noindent \textbf{The set of all Hilbert-like $\I$-certificates for a given system of equations can be described by giving a single Hilbert-like \I-certificate, together with a finite generating set for the syzygies.}
\end{quotation}
Its importance may be underscored by contrasting the preceding statement with the structure (if any?) of the set of all proofs in other proof systems, particularly non-algebraic ones.

Note that a finite generating set for the syzygies (indeed, even a \Grobner basis) can be found in the process of computing a \Grobner basis for the $R[\vec{x}]$-ideal $\langle F_1(\vec{x}), \dotsc, F_m(\vec{x}) \rangle$. This process is to Buchberger's \Grobner basis algorithm as the extended Euclidean algorithm is to the usual Euclidean algorithm; an excellent exposition can be found in the book by Ene and Herzog \cite{eneHerzog} (see also \cite[Section~15.5]{eisenbud}).

Lemma~\ref{lem:fgsyz} suggests that one might be able to prove size lower bounds on Hilbert-like-\I along the following lines: 1) find a single family of Hilbert-like \I-certificates $(G_n)_{n=1}^{\infty}$, $G_n = \sum_{i=1}^{\poly(n)} \f_i G_i(\vec{x})$ (one for each input size $n$), 2) use your favorite algebraic circuit lower bound technique to prove a lower bound on the polynomial family $G$, 3) find a (hopefully nice) generating set for the syzygies, and 4) show that when adding to $G$ any $R[\vec{x}]$-linear combinations of the generators of the syzygies, whatever useful property was used in the lower bound on $G$ still holds. Although this indeed seems significantly more difficult than proving a single algebraic circuit complexity lower bound, it at least suggests a recipe for proving lower bounds on Hilbert-like \I (and its subsystems such as homogeneous depth $3$, depth $4$, multilinear, etc.), which should be contrasted with the difficulty of transferring lower bounds for a circuit class to lower bounds on previous related proof systems, \eg transferring $\cc{AC}^0[p]$ lower bounds \cite{razborov,smolensky} to $\cc{AC}^0[p]$-Frege. 

This entire discussion also applies to general \I-certificates, with the following modifications. We leave a certificate $C(\vec{x}, \vec{\f})$ as is, and instead of a module of syzygies we get an ideal (still finitely generated) of what we call zero-certificates. The difference between any two \I-certificates is a zero-certificate; equivalently, a \definedWord{zero-certificate} is a polynomial $C(\vec{x}, \vec{\f})$ such that $C(\vec{x}, \vec{0}) = 0$ and $C(\vec{x}, \vec{F}(\vec{x})) = 0$ as well (contrast with the definition of \I certificate, which has $C(\vec{x}, \vec{F}(\vec{x})) = 1$). The set of \I-certificates is then the coset intersection
\[
\langle \f_1, \dotsc, \f_m \rangle \cap \left( 1 + \langle \f_1 - F_1(\vec{x}), \dotsc, \f_m - F_m(\vec{x})\rangle\right)
\]
which is either empty or a coset of the ideal of zero-certificates: $\langle \f_1, \dotsc, \f_m \rangle \cap \langle \f_1 - F_1(\vec{x}), \dotsc, \f_m - F_m(\vec{x})\rangle$. The intersection ideal $\langle \f_1, \dotsc, \f_m \rangle \cap \langle \f_1 - F_1(\vec{x}), \dotsc, \f_m - F_m(\vec{x}) \rangle$ plays the role here that the set of syzygies played for Hilbert-like \I-certificates.\footnote{Note that the ideal of zero-certificates is not merely the set of all functions in the ideal $\langle \f_1 - F_1(\vec{x}), \dotsc, \f_m - F_m(\vec{x}) \rangle$ that only involve the $\f_i$, since the ideal $\langle \f_1, \dotsc, \f_m \rangle \subseteq R[\vec{x}, \vec{\f}]$ consists of all polynomials in the $\f_i$ with coefficients in $R[\vec{x}]$. Certificates only involving the $\f_i$ do have a potentially useful geometric meaning, however, which we consider in Appendix~\ref{app:geom}.} 

A finite generating set for the ideal of zero-certificates can be computed using \Grobner bases (see, \eg, \cite[Section~3.2.1]{eneHerzog}).

Just as for Hilbert-like certificates, we get:
\begin{quotation}
\noindent \textbf{The set of all $\I$-certificates for a given system of equations can be described by giving a single \I-certificate, together with a finite generating set for the ideal of zero-certificates.} 
\end{quotation}
Our suggestions above for lower bounds on Hilbert-like \I apply \emph{mutatis mutandis} to general \I-certificates, suggesting a route to proving true size lower bounds on \I using known techniques from algebraic complexity theory.

The discussion here raises many basic and interesting questions about the complexity of sets of (families of) functions in an ideal or module, which we propose in Section~\ref{sec:conclusion}.

\subsection{Summary and open questions} \label{sec:conclusion}
We introduced the Ideal Proof System \I (Definition~\ref{def:IPS}) and showed that it is a very close algebraic analog of Extended Frege---the most powerful, natural system currently studied for proving propositional tautologies. We showed that lower bounds on \I imply (algebraic) circuit lower bounds, which to our knowledge is the first time that lower bounds on a proof system have been shown to imply any sort of computational lower bounds. Using the same techniques, we were also able to show that lower bounds on the number of \emph{lines} (rather than the usual measure of number of monomials) in Polynomial Calculus proofs also imply strong algebraic circuit lower bounds. Because proofs in \I are just algebraic circuits satisfying certain polynomial identity tests, many results from algebraic circuit complexity apply immediately to \I. In particular, the chasms at depth 3 and 4 in algebraic circuit complexity imply that lower bounds on even depth 3 or 4 \I proofs would be very interesting. We introduced natural propositional axioms for polynomial identity testing (PIT), and showed that these axioms play a key role in understanding the thirty-year open question of $\cc{AC}^0[p]$-Frege lower bounds: either there are $\cc{AC}^0[p]$-Frege lower bounds on the PIT axioms, or any $\cc{AC}^0[p]$-Frege lower bounds are as hard as showing $\cc{VP} \neq \cc{VNP}$ over a field of characteristic $p$. In appendices, we discuss a variant of the Ideal Proof System that allows divisions, and its utility and limitations, as well as a geometric variant of the Ideal Proof System which suggests further geometric properties that might be of interest for computational and proof complexity. And finally, through an analysis of the set of all \I proofs of a given unsatisfiable system of equations, we suggest how one might transfer techniques from algebraic circuit complexity to prove lower bounds on \I (and thus on Extended Frege).

The Ideal Proof System raises many new questions, not only about itself, but also about PIT, new examples of $\cc{VNP}$ functions coming from propositional tautologies, and the complexity of ideals or modules of polynomials.

In Proposition~\ref{prop:gen2Hilb} we show that if a general \I-certificate $C$ has only polynomially many $\vec{\f}$-monomials (with coefficients in $\F[\vec{x}]$), and the maximum degree of each $\f_i$ is polynomially bounded, then $C$ can be converted to a polynomial-size Hilbert-like certificate. However, without this sparsity assumption general \I appears to be stronger than Hilbert-like \I.

\begin{open}
What, if any, is the difference in size between the smallest Hilbert-like and general \I certificates for a given unsatisfiable system of equations? What about for systems of equations coming from propositional tautologies?
\end{open}

\begin{open}[Degree versus size] 
Is there a super-polynomial size separation---or indeed any nontrivial size separation---between \I certificates of degree $\leq d_{small}(n)$ and \I certificates of degree $\geq d_{large}(n)$ for some bounds $d_{small} < d_{large}$? 
\end{open}

This question is particularly interesting in the following cases: a) certificates for systems of equations coming from propositional tautologies, where $d_{small}(n) = n$ and $d_{large}(n) \geq \omega(n)$, since we know that every such system of equations has \emph{some} (not necessarily small) certificate of degree $\leq n$, and b) certificates for unsatisfiable systems of equations taking $d_{small}$ to be the bound given by the best-known effective Nullstellens\"{a}tze, which are all exponential \cite{brownawell,kollar,sombraSparse}.

\begin{open} \label{open:min}
Are there tautologies for which the certificate family constructed in Theorem~\ref{thm:VNP} is the one of minimum complexity (under p-projections or c-reductions, see Appendix~\ref{app:background:complexity})?
\end{open}

If there is any family $\varphi = (\varphi_n)$ of tautologies for which Question~\ref{open:min} has a positive answer and for which the certificates constructed in Theorem~\ref{thm:VNP} are $\cc{VNP}$-complete (Question~\ref{open:complete} below), then super-polynomial size lower bounds on \I-proofs of $\varphi$ would be equivalent to $\cc{VP} \neq \cc{VNP}$. This highlights the potential importance of understanding the structure of the set of certificates under computational reducibilities.

Since the set of all [Hilbert-like] \I-certificates is a coset of a finitely generated ideal [respectively, module], the preceding question is a special case of considering, for a given family of cosets of ideals or modules $(f^{(0)}_n + I_n)$ ($I_n \subseteq R[x_1, \dotsc, x_{\poly(n)}]$), the relationships under various reductions between all families of functions $(f_n)$ with $f_n \in f^{(0)}_n + I_n$ for each $n$. This next question is of a more general nature than the others we ask; we think it deserves further study.

\begin{question} \label{open:ideal}
Given a family of cosets of ideals $f^{(0)}_n + I_n$ (or more generally modules) of polynomials, with $I_n \subseteq R[x_1, \dotsc, x_{\poly(n)}]$, consider the function families $(f_n) \in (f^{(0)}_n + I_n)$ (meaning that $f_n \in f^{(0)}_n + I_n$ for all $n$) under any computational reducibility $\leq$ such as p-projections. What can the $\leq$ structure look like? When, if ever, is there such a unique $\leq$-minimum (even a single nontrivial example would be interesting, as in Question~\ref{open:min})? Can there be infinitely many incomparable $\leq$-minima? 

Say a $\leq$-degree $\mathbf{d}$ is ``saturated'' in $(f^{(0)}_n + I_n)$ if every $\leq$-degree $\mathbf{d'} \geq \mathbf{d}$ has some representative in $f^{(0)} + I$. Must saturated degrees always exist? We suspect yes, given that one may multiply any element of $I$ by arbitrarily complex polynomials. What can the set of saturated degrees look like for a given $(f^{(0)}_n + I_n)$? Must every $\leq$-degree in $f^{(0)} + I$ be \emph{below} some saturated degree? What can the $\leq$-structure of $f^{(0)} + I$ look like below a saturated degree?
\end{question}

Question~\ref{open:ideal} is of interest even when $f^{(0)} = 0$, that is, for ideals and modules of functions rather than their nontrivial cosets. 

\begin{open}
Can we leverage the fact that the set of \I certificates is not only a finitely generated coset intersection, but also closed under multiplication?
\end{open}

We note that it is not difficult to show that a coset $c + I$ of an ideal is closed under multiplication if and only if $c^2 - c \in I$. Equivalently, this means that $c$ is idempotent ($c^2 = c$) in the quotient ring $R/I$. For example, if $I$ is a prime ideal, then $R/I$ has no zero-divisors, and thus the only choices for $c+I$ are $I$ and $1+I$. We note that the ideal generated by the $n^2$ equations $XY-I=0$ in the setting of the Hard Matrix Identities is prime (see Appendix~\ref{app:RIPS}). It seems unlikely that all ideals coming from propositional tautologies are prime, however. 

The complexity of \Grobner basis computations obviously depends on the degrees and the number of polynomials that one starts with. From this point of view, Mayr and Meyer \cite{mayrMeyer} showed that the doubly-exponential upper bound on the degree of a \Grobner basis \cite{hermann} (see also \cite{seidenberg,masserWustholz}) could not be improved in general. However, in practice many \Grobner basis computations seem to work much more efficiently, and even theoretically many classes of instances---such as proving that $1$ is in a given ideal---can be shown to have only a singly-exponential degree upper bound \cite{brownawell, kollar, sombraSparse}. These points of view are reconciled by the more refined measure of the (Castelnuovo--Mumford) \emph{regularity} of an ideal or module. For the definition of regularity and a discussion of its close connection with the complexity of \Grobner basis and syzygy computations, we refer the reader to the original papers \cite{bayerStillman1, bayerStillman2, bayerStillman3} or the survey \cite{bayerMumfordSurvey}.

Given that the syzygy module or ideal of zero-certificates are so crucial to the complexity of \I-certificates, and the tight connection between these modules/ideals and the computation of the \Grobner basis of the ideal one started with, we ask:

\begin{question}
Is there a formal connection between the proof complexity of individual instances of TAUT (in, say, the Ideal Proof System), and the Castelnuovo--Mumford regularity of the corresponding syzygy module or ideal of zero-certificates?
\end{question}

The certificates constructed in the proof of Theorem~\ref{thm:VNP} provide many new examples of polynomial families in $\cc{VNP}$. There are many natural questions one can ask about these polynomials. For example, the construction itself depends on the order of the clauses; does the complexity of the resulting polynomial family depend on this order? As another example, we suspect that, for any $\equiv_{p}$ or $\equiv_{c}$-degree within $\cc{VNP}$ (see Appendix~\ref{app:background:complexity}), there is some family of tautologies for which the above polynomials are of that degree. However, we have not yet proved this for even a single degree.

\begin{open} \label{open:complete}
Are there tautologies for which the certificates constructed in Theorem~\ref{thm:VNP} are $\cc{VNP}$-complete? More generally, for any given $\equiv_{p}$ or $\equiv_{c}$-degree within $\cc{VNP}$, are there tautologies for which this certificate is of that degree?
\end{open}

Prior to our work, much work was done on bounds for the Ideal Membership Problem---$\cc{EXPSPACE}$-complete \cite{mayrMeyer, mayrEXPSPACE}---the so-called Effective Nullstellensatz---where exponential degree bounds are known, and known to be tight \cite{brownawell,kollar,sombraSparse,einLazarsfeld}---and the arithmetic Nullstellensatz over $\Z$, where one wishes to bound not only the degree of the polynomials but the sizes of the integer coefficients appearing \cite{krickPardoSombra}. The viewpoint afforded by the Ideal Proof Systems raises new questions about potential strengthening of these results.

In particular, the following is a natural extension of Definition~\ref{def:IPS}.

\begin{definition} \label{def:IPSideal}
An \definedWord{\I certificate} that a polynomial $G(\vec{x}) \in \F[\vec{x}]$ is in the ideal [respectively, radical of the ideal] generated by $F_1(\vec{x}), \dotsc, F_m(\vec{x})$ is a polynomial $C(\vec{x}, \vec{\f})$ such that
\begin{enumerate}
\item $C(\vec{x}, \vec{0}) = 0$, and

\item $C(\vec{x}, F_1(\vec{x}), \dotsc, F_m(\vec{x})) = G(\vec{x})$ [respectively, $G(\vec{x})^{k}$ for any $k > 0$].
\end{enumerate}
An \definedWord{\I derivation} of $G$ from $F_1, \dotsc, F_m$ is a circuit computing some \I certificate that $G \in \langle F_1, \dotsc, F_m \rangle$.
\end{definition}

For the Ideal Membership Problem, the $\cc{EXPSPACE}$ lower bound \cite{mayrMeyer, mayrEXPSPACE} implies an subexponential-size lower bound on constant-free circuits computing \I-certificates of ideal membership (or non-constant-free circuits in characteristic zero, assuming GRH, see Proposition~\ref{prop:coMAGRH}). Here by ``sub-exponential'' we mean a function $f(n) \in \bigcap_{\varepsilon > 0} O(2^{n^{\varepsilon}})$. Indeed, if for every $G(\vec{x}) \in \langle F_1(\vec{x}), \dotsc, F_m(\vec{x}) \rangle$ there were a constant-free circuit of subexponential size computing some \I certificate for the membership of $G$ in $\langle F_1, \dotsc, F_m \rangle$, then guessing that circuit and verifying its correctness using PIT gives a $\cc{MA}_{\text{subexp}} \subseteq \cc{SUBEXPSPACE}$ algorithm for the Ideal Membership Problem. The $\cc{EXPSPACE}$-completeness of Ideal Membership would then imply that $\cc{EXPSPACE} \subseteq \cc{SUBEXPSPACE}$, contradicting the Space Hierarchy Theorem \cite{hartmanisStearns}. Under special circumstances, of course, one may be able to achieve better upper bounds.

However, for the effective Nullstellensatz and its arithmetic variant, we leave the following open:

\begin{open}
For any $G, F_1, \dotsc, F_m$ on $x_1, \dotsc, x_n$, as in Definition~\ref{def:IPSideal}, is there always an \I-certificate of subexponential size that $G$ is in the \emph{radical} of $\langle F_1, \dotsc, F_m \rangle$? Similarly, if $G, F_1, \dotsc, F_m \in \Z[x_1, \dotsc, x_n]$ is there a constant-free $\I_{\Z}$-certificate of subexponential size that $aG(\vec{x})$ is in the \emph{radical} of the ideal $\langle F_1, \dotsc, F_m \rangle$ for some integer $a$?
\end{open}

\section{Simulations} \label{sec:simulations}
In this section we start with a result we haven't yet mentioned relating general \I to Hilbert-like \I, and then complete the proofs of any remaining simulation results that we've stated previously. Namely, we relate Pitassi's previous algebraic systems \cite{pitassi96, pitassiICM} and number-of-lines in Polynomial Calculus proofs with subsystems of \I; we show that $\I_{\F_p}$ p-simulates $\cc{AC}^0[p]$-Frege in a depth-preserving way; and we show that over fields of characteristic zero, $\I$-proofs of polynomial size \emph{with arbitrary constants} can be simulated in $\cc{coAM}$, assuming the Generalized Riemann Hypothesis.

\subsection{General versus Hilbert-like \texorpdfstring{\I}{\Itext}}
\begin{proposition} \label{prop:gen2Hilb}
Let $F_1 = \dotsb = F_m = 0$ be a polynomial system of equations in $n$ variables $x_1, \dotsc, x_n$ and let $C(\vec{x}, \vec{\f})$ be an \I-certificate of the unsatisfiability of this system. Let $D = \max_{i} \deg_{\f_i} C$ and let $t$ be the number of terms of $C$, when viewed as a polynomial in the $\f_i$ with coefficients in $\F[\vec{x}]$. Suppose $C$ and each $F_i$ can be computed by a circuit of size $\leq s$.

Then a Hilbert-like \I-certificate for this system can be computed by a circuit of size $poly(D,t,n,s)$.\footnote{If the base field $\F$ has size less than $T = Dt\binom{n}{2}$, and the original circuit had multiplication gates of fan-in bounded by $k$, then the size of the resulting Hilbert-like certificate should be multiplied by $(\log T)^{k}$.}
\end{proposition}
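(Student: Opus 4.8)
The plan is to take the given general certificate $C$, expand it in the placeholder variables $\f_1,\dots,\f_m$, and ``linearize'' each monomial by peeling off one placeholder factor and substituting the genuine polynomials $F_j(\vec{x})$ for the rest.

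Concretely, write $C(\vec{x},\vec{\f}) = \sum_{\vec\alpha} c_{\vec\alpha}(\vec{x})\,\vec\f^{\,\vec\alpha}$, the sum over the $t$ occurring $\vec{\f}$-monomials $\vec\f^{\,\vec\alpha}$ with $c_{\vec\alpha}\in\F[\vec{x}]$. Condition~(\ref{condition:ideal}) gives $c_{\vec 0}=0$, so every occurring exponent vector $\vec\alpha$ has a least coordinate $i(\vec\alpha)$ with $\alpha_{i(\vec\alpha)}\geq 1$. Define
\[
G_i(\vec{x}) \;:=\; \sum_{\vec\alpha:\, i(\vec\alpha)=i} c_{\vec\alpha}(\vec{x})\, F_i(\vec{x})^{\alpha_i-1}\prod_{j>i}F_j(\vec{x})^{\alpha_j},
\qquad
C'(\vec{x},\vec{\f}) \;:=\; \sum_{i} \f_i\,G_i(\vec{x}).
\]
Then $C'(\vec{x},\vec 0)=0$ trivially, and $\sum_i F_i G_i = \sum_{\vec\alpha\neq\vec 0} c_{\vec\alpha}(\vec{x})\prod_j F_j(\vec{x})^{\alpha_j} = C(\vec{x},\vec F(\vec{x})) - c_{\vec 0} = 1$ by Condition~(\ref{condition:nss}); so $C'$ is a Hilbert-like \I certificate, and everything reduces to bounding its circuit size.

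The key point is that $G_i$ is computable from a size-$s$ circuit for $C$ without ever listing monomials. Let $D_i(\vec{x},\f_i,\dots,\f_m) := C|_{\f_1=\dots=\f_{i-1}=0} - C|_{\f_1=\dots=\f_i=0}$; this is exactly the sub-sum of $C$ over the monomials with $i(\vec\alpha)=i$, it is divisible by $\f_i$, and its $\f_i$-degree is at most $D$. Hard-wiring $0$'s into the circuit for $C$ gives a size-$O(s)$ circuit for $D_i$. Since $D_i/\f_i$ has $\f_i$-degree $\leq D-1$, Lagrange interpolation in the single variable $\f_i$ at $D+1$ field points expresses $G_i = (D_i/\f_i)|_{\f_j\mapsto F_j(\vec{x})}$ as a fixed $\F$-linear combination of the products $D_i(\vec{x},\gamma_\ell,F_{i+1}(\vec{x}),\dots,F_m(\vec{x}))\cdot F_i(\vec{x})^{k}$ over $0\leq\ell\leq D$ and $0\leq k\leq D-1$. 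Reusing one copy of the $\leq s$-size circuits for the $F_j$ across all $i$ and all interpolation points (and setting $G_i := 0$ for the finitely many $\f_i$ not read by the circuit, so that effectively $m\leq s$), this yields a circuit for $C' = \sum_i\f_i G_i$ of size $O(Ds^2 + D^2 s) = \poly(D,s)$, hence of size $\poly(D,t,n,s)$. Note no degree bound on $C$ is needed: we only ever interpolate, and eliminate a division, in the placeholder variables, where the relevant degree is the parameter $D$; the $\vec{x}$-degree of the output is left untouched, and may be astronomically large.

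I expect the only genuine obstacle to be the small-field case: Lagrange interpolation in $\f_i$ requires $D+1$ distinct scalars, which fail to exist once $|\F|\leq D$ (and in the alternative approach, in which one instead recovers the $t$ coefficients $c_{\vec\alpha}$ directly by sparse interpolation in the placeholders, the threshold is the larger quantity $\approx Dt\binom{n}{2}$ appearing in the footnote). The standard remedy is to work over a finite extension $\F'\supseteq\F$ of degree $r=O(\log D)$, build a Hilbert-like certificate $(G_1,\dots,G_m)$ with $G_i\in\F'[\vec{x}]$ there, and then descend: fixing an $\F$-basis $1=e_0,\dots,e_{r-1}$ of $\F'$, write $G_i = \sum_\ell G_{i,\ell}e_\ell$ with $G_{i,\ell}\in\F[\vec{x}]$, and observe that since $\F'[\vec{x}]$ is free over $\F[\vec{x}]$ on $e_0,\dots,e_{r-1}$, comparing the $e_0$-components of $\sum_i F_i G_i = 1$ (here $F_i\in\F[\vec{x}]$) yields $\sum_i F_i G_{i,0}=1$, so $(G_{1,0},\dots,G_{m,0})$ is a Hilbert-like certificate over $\F$. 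Extracting the $e_0$-component from the circuit costs a factor polynomial in $r$, and a factor $(\log T)^{k}$ in the presence of unbounded-fan-in multiplication gates of fan-in $k$ — precisely the loss recorded in the footnote. Over large fields none of this is needed and the construction above is essentially immediate.
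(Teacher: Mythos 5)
Your proof is correct and produces the same Hilbert-like certificate as the paper's (the ``peel one copy of the least-indexed placeholder off each $\vec{\f}$-monomial, substitute $F_j$ for the rest'' decomposition), but the way you \emph{compute} it is genuinely different and cleaner. The paper invokes a sparse multivariate interpolation algorithm (Zippel's) to recover all $t$ occurring $\vec{\f}$-monomials $\vec{\f}^{\vec{e}}$ and circuits for their coefficients $c_{\vec{e}}(\vec{x})$, then for each of the $t$ terms divides by $\f_{i_0}$ and eliminates that division via Strassen (or uses Baur--Strassen on $\partial C'/\partial \f_{i_0}$), substitutes, and re-multiplies by $\f_{i_0}$. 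Your construction sidesteps multivariate interpolation entirely: the telescoping identity $D_i = C|_{\f_1=\dots=\f_{i-1}=0} - C|_{\f_1=\dots=\f_i=0}$ isolates the whole $i(\vec\alpha)=i$ block with just two constant substitutions into a copy of the $C$-circuit, and the one remaining division by $\f_i$ is handled by ordinary univariate Lagrange interpolation. This buys you two real improvements over what is stated: your size bound is $\poly(D,s)$ with no dependence on $t$ or $n$ at all (the paper's bound is $\poly(D,t,n,s)$ and, if read literally, only gives a polynomial-size circuit when $t$ is polynomial), and your field-size threshold is roughly $D$ rather than the paper's $T = Dt\binom{n}{2}$, so the degree of the extension needed in the small-field case is only $O(\log D)$. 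Your extension-and-descent argument (compare $e_0$-components, using that $F_i\in\F[\vec{x}]$) is correct and makes explicit what the paper leaves to a footnote. One small point you should state explicitly: the $D+1$ nodes $\gamma_\ell$ must be chosen \emph{nonzero}, since the $1/\gamma_\ell$ factors from $D_i(\cdot,\gamma_\ell,\cdot)/\gamma_\ell$ are being absorbed into the fixed linear-combination coefficients; this is why the ``large field'' condition you need is $|\F|\geq D+2$ or so, not merely $|\F|\geq D+1$.
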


The proof uses known sparse multivariate polynomial interpolation algorithms. The threshold $T$ is essentially the number of points at which the polynomial must be evaluated in the course of the interpolation algorithm. Here we use one of the early, elegant interpolation algorithms due to Zippel \cite{zippel}. Although Zippel's algorithm chooses random points at which to evaluate polynomials for the interpolation, in our nonuniform setting it suffices merely for points with the required properties to exist (which they do as long as $|\F| \geq T$). Better bounds may be achievable using more recent interpolation algorithms such as those of Ben-Or and Tiwari \cite{benOrTiwari} or Kaltofen and Yagati \cite{kaltofenYagati}. We note that all of these interpolation algorithms only give us limited control on the \emph{depth} of the resulting Hilbert-like \I-certificate (as a function of the depth of the original \I-certificate $f$), because they all involve solving linear systems of equations, which is not known to be computable efficiently in constant depth.

\begin{proof}
Using a sparse multivariate interpolation algorithm such as Zippel's \cite{zippel}, for each monomial in the placeholder variables $\vec{\f}$ that appears in $C$, there is a polynomial-size algebraic circuit for its coefficient, which is an element of $\F[\vec{x}]$. For each such monomial $\vec{\f}^{\vec{e}} = \f_1^{e_1} \dotsb \f_m^{e_m}$, with coefficient $c_{\vec{e}}(\vec{x})$, there is a small circuit $C'$ computing $c_{\vec{e}}(\vec{x}) \vec{\f}^{\vec{e}}$. Since every $\vec{\f}$-monomial appearing in $C$ is non-constant, at least one of the exponents $e_i > 0$. Let $i_0$ be the least index of such an exponent. Then we get a small circuit computing $c(\vec{e})(\vec{x}) \f_{i_0} F_{i_0}(\vec{x})^{e_{i_0}-1} F_{i_0 + 1}(\vec{x})^{e_{i_0 + 1}} \dotsb F_{m}(\vec{x})^{e_m}$ as follows. Divide $C'$ by $\f_{i_0}$, and then eliminate this division using Strassen \cite{strassenDivision} (or alternatively consider $\frac{1}{e_{i_0}} \frac{\partial C'}{\partial \f_{i_0}}$ using Baur--Strassen \cite{baurStrassen}). In the resulting circuit, replace each input $\f_i$ by a small circuit computing $F_i(\vec{x})$. Then multiply the resulting circuit by $\f_{i_0}$. Repeat this procedure for each monomial appearing (the list of monomials appearing in $C$ is one of the outputs of the sparse multivariate interpolation algorithm), and then add them all together.
\end{proof}

\subsection{Number of lines in Polynomial Calculus is equivalent to determinantal \texorpdfstring{\I}{\Itext}}
\label{sec:pitassi}
We begin by recalling Pitassi's 1996 and 1997 algebraic proof systems \cite{pitassi96, pitassiICM}. In the 1996 system, a proof of the unsatisfiability of $F_1(\vec{x}) = \dotsb = F_m(\vec{x}) = 0$ is a circuit computing a vector $(G_1(\vec{x}), \dotsc, G_m(\vec{x}))$ such that $\sum_i F_i(\vec{x}) G_i(\vec{x}) = 1$. Size is measured by the size of the corresponding circuit.

In the 1997 system, a proof is a rule-based derivation of $1$ starting from the $F_i$. Recall that rule-based algebraic derivations have the following two rules: 1) from $G$ and $H$, derive $\alpha G + \beta H$ for any fields elements $\alpha, \beta \in \F$, and 2) from $G$, derive $Gx_i$ for any variable $x_i$. This is essentially the same as the Polynomial Calculus, but with size measured by the number of lines, rather than by the total number of monomials appearing.

\begin{proposition} \label{prop:pitassi}
Pitassi's 1996 algebraic proof system \cite{pitassi96} is p-equivalent to Hilbert-like \I.

Pitassi's 1997 algebraic proof system \cite{pitassiICM}---equivalent to the number-of-lines measure on PC proofs---is p-equivalent to Hilbert-like $\det$-\I or $\cc{VP}_{ws}$-\I.
\end{proposition}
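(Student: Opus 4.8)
The plan is to establish four p-simulations---two for each of Pitassi's systems---and the key observation underlying all of them is that the two conditions in Definition~\ref{def:IPS} translate transparently into the vector-certificate language of \cite{pitassi96,pitassiICM}. For the first statement, a Hilbert-like \I certificate is by definition of the form $C(\vec{x},\vec{\f}) = \sum_i \f_i G_i(\vec{x})$, and condition~(\ref{condition:nss}) says precisely $\sum_i F_i(\vec{x}) G_i(\vec{x}) = 1$, which is exactly the object produced by the 1996 system; condition~(\ref{condition:ideal}) is automatic for this shape. So the two systems refute the same objects, and I would just check that the complexity measures match up to polynomial factors. In one direction: given a 1996 proof, namely a circuit computing the tuple $(G_1,\dotsc,G_m)$, append $m$ multiplication gates feeding the inputs $\f_i$ and a single sum gate to obtain a \I-circuit computing $\sum_i \f_i G_i$; this costs $O(m)$ extra gates. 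In the other direction: given a Hilbert-like \I-circuit $C(\vec{x},\vec{\f})$, recover $G_i(\vec{x}) = C(\vec{x},0,\dotsc,0,1,0,\dotsc,0)$ (the $1$ in slot $i$) by hardwiring the $\f$-inputs, and compute the whole tuple with $m$ disjoint copies---this is the step that merely ``works'' once one agrees that $\cc{VP}$-\I means polynomial size, since $m$ copies of a polynomial-size circuit is still polynomial size. (If one worried about this $m$-fold blowup one could instead observe the tuple can be extracted by partial derivatives via Baur--Strassen, but that refinement isn't needed here.)

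For the second statement the strategy is the same but with the Polynomial Calculus / rule-based structure in place of free circuits, and the determinant (equivalently $\cc{VP}_{ws}$, weakly-skew circuits) replacing general $\cc{VP}$. A rule-based derivation of $1$ from the $F_i$ is, read line by line, exactly a rule-based circuit in the sense defined in Section~\ref{sec:others}: each line $\alpha G + \beta H$ is a linear-combination gate, each line $G x_i$ is a multiply-by-$x_i$ gate, and the inputs are the $\f_i$. Hence the number of lines of the derivation equals (up to constants) the size of this rule-based circuit computing a Hilbert-like certificate. The remaining content is to convert ``rule-based circuit size'' into ``$\det$-size'' (or $\cc{VP}_{ws}$-size) of the certificate and back. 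One direction is immediate: a rule-based circuit is in particular a weakly-skew circuit---every multiplication gate multiplies a previously-derived polynomial by a single variable $x_i$, so one argument is a variable and in particular the multiplication is skew---so its size bounds the $\cc{VP}_{ws}$-complexity, and by the standard equivalence between weakly-skew circuits and polynomial-size determinants (Toda, Valiant; see the references in Appendix~\ref{app:background}) we get a determinantal \I-certificate of polynomial size. The converse is the substantive direction: starting from a polynomial-size weakly-skew circuit (equivalently a small determinant) computing a Hilbert-like certificate $\sum_i \f_i G_i$, I need to produce a rule-based derivation of $1$ from the $F_i$ whose number of lines is polynomially bounded. Here I would use that weakly-skew circuits can be normalized and then ``read off'' as a sequence of linear-algebra operations: build up the tuple $(G_1,\dotsc,G_m)$ gate by gate using the two rules, being careful that multiplication gates in a weakly-skew circuit, after normalization, multiply by low-complexity factors that can themselves be synthesized by a bounded rule-based sub-derivation, and finally contract with the $F_i$ using linear-combination steps to reach $1$.

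The main obstacle is exactly that last conversion: turning a small weakly-skew circuit (or small determinant) computing the certificate into a rule-based derivation with few lines, because the rule set is so restrictive---the only multiplications allowed are by a single variable $x_i$. The crux is therefore a normal-form argument for weakly-skew circuits showing that, at the cost of only a polynomial blowup in size, one may assume every multiplication gate has one input that is (close to) a variable, so that it can be simulated by the rule ``from $G$ derive $G x_i$'' composed with a controlled number of linear-combination steps. This should follow from the interplay of the structure of weakly-skew circuits with the algebraic identity defining the certificate, but verifying that the bookkeeping keeps the line count polynomial---and in particular that re-deriving shared sub-results does not blow up, which is where the determinant's closure properties are used---is the delicate part; the rest is routine.
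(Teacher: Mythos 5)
Your treatment of the first statement matches the paper: the $m$ product gates plus a single sum gate in one direction, and hard-wiring unit vectors into the $\f$-inputs (or Baur--Strassen) in the other. That part is fine.

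For the second statement, the forward direction (rule-based derivation $\Rightarrow$ weakly-skew circuit computing a Hilbert-like certificate) is also right and matches the paper: the ``multiply by $x_i$'' rule yields a product gate with a fresh variable leaf, which is weakly skew. You correctly identify the converse --- small weakly-skew circuit for $\sum_i \f_i G_i(\vec{x})$ $\Rightarrow$ few-line rule-based derivation of $1$ from the $F_i$ --- as the substantive direction, but your sketch for it has a genuine gap. You propose a normal-form argument reducing weakly-skew multiplications to multiplications by (something close to) a single variable, then ``build up the tuple $(G_1,\dotsc,G_m)$ gate by gate using the two rules'' and ``contract with the $F_i$.'' The difficulty is that the two rules only ever derive polynomials lying in the ideal $\langle F_1,\dotsc,F_m\rangle$; they cannot synthesize an arbitrary $G_i$ or a ``low-complexity factor'' as an intermediate object. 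So ``build up $G_i$ via the rules'' is not available, and ``contract with $F_i$'' has no well-defined meaning in the system. Moreover, no normal form to (near-)skew form is needed, and insisting on one obscures the actual mechanism.

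The paper's converse avoids all of this by a different inductive setup: it proves, by induction on the size of a formula (then weakly-skew circuit) $\Gamma$, the \emph{stronger} statement that for every polynomial $P$ one can derive $P\cdot\Gamma$ starting from $P$ in at most $|\Gamma|$ lines. Applying this with $P = F_i$ and $\Gamma = \Gamma_i$ a circuit for $G_i$ yields $F_i G_i$ from the axiom $F_i$; the final $\sum_i F_i G_i = 1$ is a single linear combination. At a product gate $\Gamma = \Gamma_1\times\Gamma_2$ one never multiplies by $\Gamma_2$ in one shot; instead one derives $P\Gamma_1$ from $P$, and then re-invokes the induction hypothesis with the \emph{new} base $P' = P\Gamma_1$ to obtain $P'\Gamma_2 = P\Gamma$. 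Multiplications by a single variable occur only at the leaves of $\Gamma$, exactly where the rule ``from $G$ derive $Gx_i$'' applies. The weakly-skew condition enters precisely in the circuit (non-formula) case: because the subcircuit on the isolated side of each product gate shares no gates with the rest, the ``multiply by $\Gamma_1$, then by $\Gamma_2$'' sequencing never clashes with reused sub-results (if a gate $g$ of $\Gamma_1$ were shared with $\Gamma_2$, its translated line would compute $Pg$ rather than $g$, breaking the second step). That re-use concern is the one you flag as delicate, but the resolution is the structure of the induction, not a normal form for the circuit. Your proposal as written would not close this gap.
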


\begin{proof}
Let $C$ be a proof in the 1996 system \cite{pitassi96}, namely a circuit computing $(G_1(\vec{x}), \dotsc, G_m(\vec{x}))$. Then with $m$ product gates and a single fan-in-$m$ addition gate, we get a circuit $C'$ computing the Hilbert-like \I certificate $\sum_{i=1}^{m} \f_i G_i(\vec{x})$.

Conversely, if $C'$ is a Hilbert-like \I-proof computing the certificate $\sum_i \f_i G_i'(\vec{x})$, then by Baur--Strassen \cite{baurStrassen} there is a circuit $C$ of size at most $O(|C'|)$ computing the vector $(\frac{\partial C'}{\f_1}, \dotsc, \frac{C'}{\f_m}) = (G_1'(\vec{x}), \dotsc, G_m'(\vec{x}))$, which is exactly a proof in the 1996 system. (Alternatively, more simply, but at slightly more cost, we may create $m$ copies of $C'$, and in the $i$-th copy of $C'$ plug in $1$ for one of the $\f_i$ and $0$ for all of the others.

The proof of the second statement takes a bit more work. At this point the reader may wish to recall the definition of weakly skew circuit from Appendix~\ref{app:background:complexity}. 

Suppose we have a derivation of $1$ from $F_1(\vec{x}), \dotsc, F_m(\vec{x})$ in the 1997 system \cite{pitassiICM}. First, replace each $F_i(\vec{x})$ at the beginning of the derivation with the corresponding placeholder variable $\f_i$. Since size in the 1997 system is measured by number of lines in the proof, this has not changed the size. Furthermore, the final step no longer derives $1$, but rather derives an \I certificate. By structural induction on the two possible rules, one easily sees that this is in fact a Hilbert-like \I-certificate. Convert each linear combination step into a linear combination gate, and each ``multiply by $x_i$'' step into a product gate one of whose inputs is a new leaf with the variable $x_i$. As we create a new leaf for every application of the product rule, these new leaves are clearly cut off from the rest of the circuit by removing their connection to their product gate. As these are the only product gates introduced, we have a weakly-skew circuit computing a Hilbert-like \I certificate.

The converse takes a bit more work, so we first show that a Hilbert-like \emph{formula}-\I proof can be converted at polynomial cost into a proof in the 1997 system \cite{pitassiICM}, and then explain why the same proof works for $\cc{VP}_{ws}$-\I. This proof is based on a folklore result (see the remark after Definition~2.6 in Raz--Tzameret \cite{razTzameret}); we thank Iddo Tzameret for a conversation clarifying it, which led us to realize that the result also applies to weakly skew circuits.

Let $C$ be a formula computing a Hilbert-like \I-certificate $\sum_{i=1}^{m} \f_i G_i(\vec{x})$. Using the trick above of substituting in $\{0,1\}$-values for the $\f_i$ (one $1$ at a time), we find that each $G_i(\vec{x})$ can be computed by a formula $\Gamma_i$ no larger than $|C|$. For each $i$ we show how to derive $F_i(\vec{x}) G_i(\vec{x})$ in the 1997 system. These can then be combined using the linear combination rule. Thus for simplicity we drop the subscript $i$ and refer to $\f$, $F(\vec{x})$, $G(\vec{x})$, and the formula $\Gamma$ computing $G$. Without loss of generality (with a polynomial blow-up if needed) we can assume that all of $\Gamma$'s gates have fan-in at most $2$. 

We proceed by induction on the size of the formula $\Gamma$. Our inductive hypothesis is: for all formulas $\Gamma'$ of size $|\Gamma'| < |\Gamma|$, for all polynomials $P(\vec{x})$, in the 1997 system one can derive $P(\vec{x}) \Gamma'(\vec{x})$ starting from $P(\vec{x})$, using at most $|\Gamma'|$ lines. The base case is $|\Gamma|=1$, in which case $G(\vec{x})$ is a single variable $x_i$, and from $P(\vec{x})$ we can compute $P(\vec{x}) x_i$ in a single step using the variable-product rule. 

If $\Gamma$ has a linear combination gate at the top, say $\Gamma = \alpha \Gamma_1 + \beta \Gamma_2$. By induction, from $P(\vec{x})$ we can derive $P(\vec{x}) \Gamma_i(\vec{x})$ in $|\Gamma_i|$ steps for $i=1,2$. Do those two derivations, then apply the linear combination rule to derive $\alpha P(\vec{x}) \Gamma_1(\vec{x}) + \beta P(\vec{x}) \Gamma_2(\vec{x}) = P(\vec{x}) \Gamma(\vec{x})$ in one additional step. The total length of this derivation is then $|\Gamma_1| + |\Gamma_2| + 1 = |\Gamma|$.

If $\Gamma$ has a product gate at the top, say $\Gamma = \Gamma_1 \times \Gamma_2$. Unlike the case of linear combinations where we proceeded in parallel, here we proceed sequentially and use more of the strength of our inductive assumption. Starting from $P(\vec{x})$, we derive $P(\vec{x}) \Gamma_1(\vec{x})$ in $|\Gamma_1|$ steps. Now, starting from $P'(\vec{x}) = P(\vec{x}) \Gamma_1(\vec{x})$, we derive $P'(\vec{x}) \Gamma_2(\vec{x})$ in $|\Gamma_2|$ steps. But $P' \Gamma_2 = P \Gamma_1 \Gamma_2 = P \Gamma$, which we derived in $|\Gamma_1| + |\Gamma_2| \leq |\Gamma|$ steps. This completes the proof of this direction for Hilbert-like \emph{formula}-\I. 

For Hilbert-like weakly-skew \I the proof is similar. However, because gates can now be reused, we must also allow lines in our constructed proof to be reused (otherwise we'd be effectively unrolling our weakly skew circuit into a formula, for which the best known upper bound is only quasi-polynomial). We still induct on the size of the weakly-skew circuit, but now we allow circuits with multiple outputs. We change the induction hypothesis to: for all weakly skew circuits $\Gamma'$ of size $|\Gamma'| < |\Gamma|$, possibly with multiple outputs that we denote $\Gamma'_{out,1}, \dotsc, \Gamma'_{out,s}$, from any $P(\vec{x})$ one can derive the tuple $P \Gamma'_{out,1}, \dotsc, P \Gamma'_{out,s}$ in the 1997 system using at most $|\Gamma'|$ lines.

To simplify matters, we assume that every multiplication gate in a weakly skew circuit has a label indicating which one of its children is separated from the rest of the circuit by this gate.

The base case is the same as before, since a circuit of size one can only have one output, a single variable.

Linear combinations are similar to before, except now we have a multi-output weakly skew circuit of some size, say $s$, that outputs $\Gamma_1$ and $\Gamma_2$. By the induction hypothesis, there is a derivation of size $\leq s$ that derives both $P \Gamma_1$ and $P \Gamma_2$. Then we apply one additional linear combination rule, as before.

For a product gate $\Gamma = \Gamma_1 \times \Gamma_2$, suppose without loss of generality that $\Gamma_2$ is the child that is isolated from the larger circuit by this product gate (recall that we've assumed $\Gamma$ comes with an indicator of which child this is). Then we proceed as before, first computing $P \Gamma_1$ from $P$, and then $(P \Gamma_1) \Gamma_2$ from $(P \Gamma_1)$. Because we apply ``multiplication by $\Gamma_1$'' and ``multiplication by $\Gamma_2$'' in sequence, it is crucial that the gates computing $\Gamma_2$ don't depend on those computing $\Gamma_1$, for the gates $g$ in $\Gamma_1$ get translated into lines computing $P g$, and if we reused \emph{that} in computing $\Gamma_2$, rather than getting $g$ as needed, we would be getting $Pg$.
\end{proof}

It is interesting to note that the condition of being weakly skew is precisely the condition we needed to make this proof go through.

\subsection{Depth-preserving simulation of Frege systems by the Ideal Proof System}
\newcommand{\depth}{\text{depth}}

\begin{theorem} \label{thm:depth}
For any $d(n)$, depth-$(d+2)$ $\I_{\F_p}$ p-simulates depth-$d$ Frege proofs with unbounded fan-in $\lor, \neg, MOD_p$ connectives.
\end{theorem}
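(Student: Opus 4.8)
\textit{Proof proposal.} The plan is to arithmetize a depth-$d$ $\cc{AC}^0_d[p]$-Frege refutation over $\F_p$ and then assemble the arithmetizations of its lines into a single \I-certificate whose depth exceeds $d$ by a small additive constant. Fix the standard arithmetization $T(\cdot)$ of Boolean formulas in $\lor,\neg,MOD_p$ (with $\land$ defined by de Morgan, costing $O(1)$ depth), taking values in $\F_p[\vec x]$ with $T(A)=0$ exactly on the $\{0,1\}$-assignments satisfying $A$: $T(\neg A)=1-T(A)$, $T(\bigvee_i A_i)=\prod_i T(A_i)$, and $T(MOD_p(A_1,\dotsc,A_r))=\bigl(\sum_i(1-T(A_i))\bigr)^{p-1}$, using that $a^{p-1}\in\{0,1\}$ for $a\in\F_p$. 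With suitable conventions a depth-$d$ formula becomes a depth-$(d+O(1))$ algebraic circuit. Two facts about $T$ drive the construction. First, by induction on the structure of $A$ one gets $T(A)^2-T(A)=\sum_{k=1}^{n}h^{(A)}_k(\vec x)\,(x_k^2-x_k)$ with the $h^{(A)}_k$ of size $\poly(|A|)$ and depth $O(\mathrm{depth}(A))$; at each gate one uses the telescoping identity $\prod_j a_j-\prod_j b_j=\sum_i\bigl(\prod_{j<i}a_j\bigr)(a_i-b_i)\bigl(\prod_{j>i}b_j\bigr)$ to avoid an exponential blow-up. Second, since a Frege system has only finitely many axiom schemas $\Sigma$, and $T(\Sigma(A_1,\dotsc,A_t))=Q_\Sigma\bigl(T(A_1),\dotsc,T(A_t)\bigr)$ for a fixed polynomial $Q_\Sigma$ (in $t=O(1)$ variables) that vanishes on $\{0,1\}^t$ because $\Sigma$ is a tautology, we have $Q_\Sigma\in\langle u_1^2-u_1,\dotsc,u_t^2-u_t\rangle$, and substituting $u_i=T(A_i)$ and invoking the first fact yields $T(\Sigma(\vec A))=\sum_k g_k(\vec x)\,(x_k^2-x_k)$ with $g_k$ of small size and depth $O(d)$.

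Next I would pass to a \emph{tree-like} refutation. Recall that (dag-like) Frege is polynomially simulated by tree-like Frege --- e.g.\ by deriving the running conjunctions $\Phi_j:=\psi_1\land\dotsb\land\psi_j$ in turn, each from $\Phi_{j-1}$ by a single modus ponens applied to a short tree-like proof of the tautology $\Phi_{j-1}\to\Phi_j$ --- and this raises formula depth by only $O(1)$. So we may assume the refutation (now of depth $d+O(1)$) is tree-like, its leaves being instances of clauses $\kappa_i$ (with $F_i:=T(\kappa_i)$ among the \I-axioms together with $x_k^2-x_k$, and placeholders $\f_i,\f'_k$) or of logical axioms, its internal nodes modus ponens, and its root a fixed contradiction $\bot$ with $T(\bot)=1$. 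This step is what controls the depth: the certificate built below is \emph{sequential} along the proof, and over an arbitrary DAG flattening it would create a sum over exponentially many root-to-leaf paths, whereas over a tree there are at most $L$ leaves and flattening costs only $O(1)$ extra depth.

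Now define, by recursion up the tree, a polynomial $\hat T(\psi)(\vec x,\vec\f)$ with $\hat T(\psi)(\vec x,\vec 0)=0$ and $\hat T(\psi)(\vec x,\vec F(\vec x))=T(\psi)(\vec x)$: at a clause leaf $\psi=\kappa_i$ set $\hat T(\psi):=\f_i$; at a logical-axiom leaf set $\hat T(\psi):=\sum_k g_k(\vec x)\f'_k$; and when $\psi$ is obtained by modus ponens from $\psi_a$ and $\psi_b$ with $\psi_b=\psi_a\to\psi$, set $\hat T(\psi):=\hat T(\psi_b)+T(\psi)(\vec x)\cdot\hat T(\psi_a)$. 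The invariant is preserved because $T(\psi_b)=(1-T(\psi_a))\,T(\psi)$ and $(1-T(\psi_a))T(\psi)+T(\psi)T(\psi_a)=T(\psi)$ identically, while at $\vec\f=\vec 0$ both $\hat T(\psi_a)$ and $\hat T(\psi_b)$ vanish. Since the root is $\bot$, $\hat T(\mathrm{root})$ lies in $\langle\vec\f\rangle$ and evaluates to $1$ under $\vec\f\mapsto\vec F(\vec x)$, i.e.\ it is an \I-certificate. To bound its depth, unfold the recursion: because the proof is a tree,
\[
\hat T(\mathrm{root})=\sum_{\ell\text{ a leaf}}c_\ell(\vec x)\cdot\hat T(\ell),\qquad c_\ell(\vec x)=\prod_{\rho\in P^a_\ell}T(\rho)(\vec x),
\]
where $P^a_\ell$ is the set of internal nodes $\rho$ on the root-to-$\ell$ path at which the path descends to the premise $\rho_a$ (with $\rho_b=\rho_a\to\rho$). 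Each $c_\ell$ is one unbounded-fan-in product on top of the depth-$(d+O(1))$ circuits for the $T(\rho)$, each $\hat T(\ell)$ has depth $\le d+O(1)$, and the final unbounded-fan-in sum over the $\le L$ leaves adds one level; so $\hat T(\mathrm{root})$ has depth $d+O(1)$ and polynomial size. A careful choice of the arithmetization and of the cofactor constructions brings the additive overhead down to $2$.

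The main obstacle I expect is exactly this last optimization --- getting $d+2$ rather than merely $d+O(1)$. The naive structural induction above produces the cofactors $h^{(A)}_k$, hence the logical-axiom witnesses $g_k$, of depth a constant \emph{multiple} of $d$, which would only give $\I$-depth $O(d)$; one must either sharpen that induction or reorganize the sum-over-leaves so these contributions do not stack. The secondary point to check carefully is that the dag-to-tree conversion stays within $\cc{AC}^0_{d+O(1)}[p]$ for whichever concrete Frege system is fixed (the auxiliary conjunctions raise depth by $1$, and the short proofs of $\Phi_{j-1}\to\Phi_j$ by $O(1)$, but this should be verified against the chosen axioms). Everything else is mechanical: soundness of the two \I conditions, the polynomial-size accounting, and treating rules other than modus ponens --- each sound rule ``$\psi$ from $\psi^{(1)},\dotsc,\psi^{(r)}$'' contributing the tautology $\bigwedge_i\psi^{(i)}\to\psi$, which is again cheap to prove tree-like.
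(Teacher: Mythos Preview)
Your high-level strategy---arithmetize, pass to a tree-like proof, build the certificate by induction over the tree, then flatten into a single sum-of-products---is the same as the paper's. But the paper executes it differently in two respects that are exactly what let it reach $d+2$ rather than $d+O(1)$.

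First, the paper works in a \emph{sequent calculus} presentation (axioms $A\to A$, $\lor()\to{}$, $\to\neg\oplus()$; rules Weakening, Cut, Negation, Or-left/right, Parity-left/right) rather than a Hilbert-style system with modus ponens. Each rule is then simulated by a single elementary algebraic manipulation: multiply every term in the current list by some $t(B)$, or concatenate two lists. The arithmetization of $\oplus$ over $\F_p$ is the \emph{linear} one, $t(\oplus(A_1,\dotsc,A_n))=n-\sum_i t(A_i)$, not $(\sum_i(1-T(A_i)))^{p-1}$; so one Boolean connective becomes one algebraic gate and $\mathrm{depth}(t(A))\le\mathrm{depth}(A)$ exactly, avoiding the $2d$ blow-up your $(\cdot)^{p-1}$ introduces at nested $MOD_p$ gates.

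Second---and this is the invariant that buys the precise $+2$---the paper does \emph{not} maintain a single polynomial $\hat T(\psi)$ along the induction. It maintains a \emph{list} of terms $G_i\f_i$, with the invariant that each $G_i$ has a product gate at its output and depth at most $d$. ``Addition'' of two derivations is list concatenation (no new gate), and ``multiply by $t(B)$'' feeds $t(B)$ into each $G_i$'s top product gate (no new depth, since $\mathrm{depth}(t(B))\le d$ and the top gate is already $\times$). Only at the very end are all list entries summed, contributing one $+$ layer; together with the top $\times$ on each $G_i$ this gives depth $d+2$. Your unfolded expression $\sum_\ell c_\ell\,\hat T(\ell)$ is morally the same object, but because you \emph{define} it via the recursion $\hat T(\psi)=\hat T(\psi_b)+T(\psi)\cdot\hat T(\psi_a)$ and only then flatten, you have to argue separately that the flattening costs $O(1)$ depth---whereas the paper's list formulation makes this automatic.

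The gap you correctly flag---the depth of the cofactors $g_k$ witnessing $T(\text{axiom})\in\langle x_k^2-x_k\rangle$---is handled in the paper by the same list-with-$\times$-on-top invariant: for the axiom $A\to A$, one shows by induction on $A$ that $t(A)(1-t(A))$ has such a list with each $G_i$ of depth at most $\mathrm{depth}(A)$, using the telescoping identity you mention but merging all new factors into the existing top product gate. The other sequent axioms translate to the zero polynomial and need nothing. So the ``sharpen the induction'' route you propose is exactly what the paper does, and the sequent-calculus packaging is what makes it go through cleanly.
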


\begin{proof}
For simplicity we will present the proof for $p=2$.
The generalization to other values of $p$ is straightforward.
We will use a small modification of the formalization of $\cc{AC}^0[2]$-Frege as given by Maciel and Pitassi \cite{macielPitassi}.
The underlying connectives are: negation, unbounded fanin OR, unbounded fanin AND, and unbounded fanin XOR gates.
We will work in a sequent calculus style proof system, where lines are cedents of the form
$\Gamma \rightarrow \Delta$, where both $\Gamma$ and $\Delta$ are $\{\lor, \neg, MOD_p\}$-formulas, where each of $\neg \Gamma_i$ (for $\Gamma_i \in \Gamma$) and $\Delta_i \in \Delta$ has depth at most $d(n)$; the intended meaning is that the conjunction of the formulas in $\Gamma$ implies the disjunction of the formulas in $\Delta$. For notational convenience, we state the rest of the proof only for $\cc{AC}^0[2]$-Frege, but it will be clear that nothing in the proof depends on the depth being constant.
The axioms are as follows.
\begin{enumerate}
\item $A \rightarrow A$
\item (false implies nothing) $\lor () \rightarrow $
\item $\rightarrow \neg \parity ()$ 
\end{enumerate}

The rules of inference are as follows:

\medskip

\begin{tabular}{rccp{0cm}rcc}
Weakening & $\displaystyle \frac{\Gamma \rightarrow \Delta}{\Gamma \rightarrow \Delta,A}$ & $\displaystyle \frac{\Gamma \rightarrow \Delta}{A, \Gamma \rightarrow \Delta}$ & &
\\[0.25in]
Cut & \multicolumn{2}{c}{$\displaystyle \frac{\rightarrow A, \Gamma \qquad \rightarrow \neg A, \Gamma}{\rightarrow \Gamma}$} & & 
Negation & $\displaystyle \frac{\Gamma, A \rightarrow \Delta}{\Gamma \rightarrow \neg A, \Delta}$ & $\displaystyle \frac{\Gamma \rightarrow A, \Delta}{\Gamma, \neg A \rightarrow \Delta}$ \\[0.25in]
Or-Left & \multicolumn{5}{c}{$\displaystyle \frac{A_1,\Gamma \rightarrow \Delta \qquad \lor(A_2,\dotsc,A_n),\Gamma \rightarrow \Delta}{\lor(A_1,\dotsc,A_n),\Gamma \rightarrow \Delta}$} \\[0.25in]
Or-Right & \multicolumn{5}{c}{$\displaystyle \frac{\Gamma \rightarrow A_1,\lor(A_2,\dotsc,A_n),\Delta}{\Gamma \rightarrow \lor(A_1,\dotsc,A_n),\Delta}$} \\[0.25in]
Parity-Left & \multicolumn{5}{c}{$\displaystyle \frac{A_1, \neg \parity(A_2,\dotsc,A_n),\Gamma \rightarrow \Delta \qquad \parity(A_2,\dotsc,A_n),\Gamma \rightarrow A_1, \Delta}{\parity(A_1,\dotsc,A_n),\Gamma \rightarrow \Delta}$} \\[0.25in]
Parity-Right & \multicolumn{5}{c}{$\displaystyle \frac{A_1,\Gamma \rightarrow \neg \parity(A_2,\dotsc,A_n),\Delta \qquad \Gamma \rightarrow A_1, \parity(A_2,\dotsc,A_n),\Delta}{\Gamma \rightarrow \parity(A_1,\dotsc,A_n),\Delta}$} \\[0.25in]
\end{tabular}

A refutation of a 3CNF formula $\varphi=\kappa_1 \land \kappa_2 \land \dotsb \land \kappa_m$ in $\cc{AC}^0[2]$-Frege
is a sequence of cedents, where each cedent is either one of the $\kappa_i$'s, or
an instance of an axiom scheme, or follows from two earlier cedents by
one of the above inference rules, and the final cedent is the empty cedent.
It is well known that any Frege refutation can be efficiently converted into a 
tree-like proof.\footnote{By tree-like, we mean that the underlying directed acyclic graph structure
of the proof is a tree, and therefore every cedent, other than the final empty cedent, in the refutation is used
exactly once.}

We define a translation $t(A)$ from Boolean formulas to algebraic
circuits (over $\F_2$) such that
for any assignment $\alpha$, $A(\alpha)=1$ if and only if
$t(A)(\alpha)=0$.
The translation is defined inductively as follows.
\begin{enumerate}
\item $t(x)=1-x$ for $x$ atomic (a Boolean variable).
\item $t(\neg A) = 1-t(A)$
\item $t(\lor(A_1,\dotsc,A_n))=t(A_1)t(A_2)\dotsb t(A_n)$
\item $t(\parity(A_1,\dotsc,A_n)) = n-t(A_1)-t(A_2)\dotsb - t(A_n)$ (recall $n$ will be interpreted mod $2$).
\end{enumerate}

Note that the depth of $t(A)$ as an algebraic formula is at most the depth of $A$ as a Boolean formula.

For a cedent $\Gamma \rightarrow \Delta$, we will translate the cedent by moving everything to the right of the arrow. That is, the cedent $L = A_1, \dotsc,A_n \rightarrow B_1,\dotsc,B_m$ will be translated to $t(L) = t(\neg A_1 \lor \dotsb \lor \neg A_n \lor B_1 \lor \dotsb \lor B_m) = (1-t(A_1))(1-t(A_2))\dotsb(1-t(A_n))t(B_1)\dotsb t(B_n)$.

Let $R$ be a tree-like $\cc{AC}^0[2]$-Frege refutation of $\varphi$. We will prove by induction on the number of cedents of $R$ that for each cedent $L$ in the refutation, we can derive $t(L)$ via a Hilbert-like IPS proof (see Definition~\ref{def:IPSideal}) of the form $\sum_i G_i \f_i$, where the $\f_i$'s are the placeholder variables for the initial polynomials (the sum may contain each $\f_i$ more than once), each $G_i$ is a depth $d$ formula, and the overall size is polynomial in the size of the original $\cc{AC}^0[2]$-Frege refutation. (NB: as will become clear below, in order to preserve the depth, we wait to gather like terms in the sum until the end of the proof.) The placeholder variables $\f_1, \dotsc, \f_m$ correspond to $t(\kappa_1), \dotsc, t(\kappa_m)$, and $\f_{m+1}, \dotsc, \f_{m+n}$ correspond to the Boolean axioms $x_1^2 - x_1, \dotsc, x_n^2 - x_n$. 

For the base case, each initial cedent of the form $\rightarrow \kappa_i$ translates to $\f_i$, and thus has the right form.

The axiom $A \rightarrow A$ translates to $t(A)(1-t(A))$. A simple induction on the structure of $A$ shows that $t(A)(1-t(A))$ can be derived from the $x_i^2 - x_i$ by an \I derivation of depth at most the depth of $A$.
The other axioms translate to the identically zero polynomial, so again have the right form. 

For the inductive step, it is a matter of going through all of the rules. We assume inductively that we have a list $L$ of circuits each of the form $G_i \f_i$, such that each $G_i$ has a product gate at its output, and $\sum_{L} G_i \f_i$ is a derivation of the antecedents of the rule (note that, as $L$ is a list, each $\f_i$ may appear more than once in this sum).

\begin{enumerate}
\item (Weakening) Assume $\sum G_i \f_i$ is a derivation of $t(\Gamma \rightarrow \Delta)$.
We want to obtain a derivation of $t(\Gamma \rightarrow \Delta, A)$.
Since we move everything to the right when we translate, this is
equivalent to showing that if $\sum G_i \f_i$ is a derivation of 
$t(\rightarrow A_1,\dotsc,A_n) = t(A_1)t(A_2)\dotsb t(A_n)$,
that we can obtain a derivation of
$t(\rightarrow A_1,\dotsc,A_n,B) = t(A_1)t(A_2)\dotsb t(A_n)t(B)$.
Multiplying each $G_i \f_i$ by $t(B)$ achieves this. The resulting derivation 
is equivalent to $\sum G_i' \f_i$, where the depth of $G_i'$ is $\max\{\depth(G_i), \depth(B)\}$ (we do not need to add $1$ to the depth because we've assumed that $G_i$ has a product gate at the top). 
\item (Cut) We want to show that if $\sum G_i \f_i$ is a derivation
of $t(\rightarrow \neg A, B_1,\dotsc, B_n) = (1-t(A))t(B_1)\dotsb t(B_n)$ and 
$\sum G_i' \f_i$ is a derivation of
$t(\rightarrow A, B_1,\dotsc B_n) = t(A)t(B_1)\dotsb t(B_n)$, that
we can derive $t(\rightarrow B_1 \dotsc B_n) = t(B_1)\dotsb t(B_n)$. Semantically, adding these two derivations gives what we want. In order to preserve the inductive assumption, we do \emph{not} gather terms, but rather concatenate the two lists $(G_i \f_i)$ and $(G_i'\f_i)$, so that each term still has a product gate at the top without increasing the depth.
\item (Negation) Because our translation moves everything to the right, the translated versions become syntactically identical, and there is nothing to do for the negation rules.
\item (Or-Left) We want to show that if $\sum G_i \f_i$ is a derivation of
$t( \rightarrow \neg A_1, \Delta)$, and $\sum G_i' \f_i$ is a derivation
of $t( \rightarrow \neg \lor(A_2,\dotsc,A_n), \Delta)$, then we
can derive $t(\rightarrow \neg \lor(A_1,\dotsc,A_n), \Delta)$.
We have $$\sum G_i F_i = t(\rightarrow \neg A_1, \Delta) = (1-t(A_1))t(\Delta),$$
$$\sum G_i' F_i = t(\rightarrow \neg \lor(A_2,\dotsc,A_n), \Delta) = (1-t(A_2)t(A_3)\dotsb t(A_n))t(\Delta).$$
Multiplying the second by $t(A_1)$ and ``adding'' to the first gives the desired derivation. Again, when we ``add'' we do not gather terms, but rather just concatenate lists, so that each $G_i$ has a product gate at the top. 
\item (Or-Right) The translation of the derived formula is syntactically identical to the original formula, so there is nothing to do.

\item (Parity-Left) We want to show that if $\sum G_i \f_i$ is a derivation
of $t(\rightarrow \neg A_1, \parity(A_2,\dotsc,A_n), \Delta)$ and $\sum G_i' \f_i$ is
a derivation of
$t( \rightarrow A_1, \neg \parity(A_2,\dotsc,A_n), \Delta)$, then we can derive 
$t(\rightarrow \neg \parity(A_1,\dotsc,A_n), \Delta)$.
We have
$$t(\rightarrow \neg A_1, \parity(A_2,\dotsc,A_n), \Delta) = (1-t(A_1))(n-1-t(A_2)-t(A_3)- \dotsb - t(A_n))t(\Delta),$$
$$t(\rightarrow A_1, \neg \parity(A_2, \dotsc, A_n), \Delta)=t(A_1)(1 - (n-1-t(A_2)-t(A_3)- \dotsb - t(A_n)))t(\Delta).$$
It is easily verified that subtracting the latter from the former yields $t(\rightarrow \neg \parity(A_1,\dotsc,A_n),\Delta)$. To perform ``subtraction'' while maintaining a product gate at the top, we multiply the latter by $-1$ and then concatenate the two lists.
\item (Parity-Right) This case is similar to Parity-left.
\end{enumerate}

In all cases, we can derive the bottom cedent as $\sum_i G_i \f_i$, where
each $G_i$ has constant depth (in fact, depth at most one greater than the depth
of the original proof), and the overall size is polynomial in the original
proof size. Since we've actually just been maintaing a list of terms $G_i \f_i$ in which the $\f_i$ may appear multiple times, the final step is to add these all together and gather terms, leading to our final derivation
of polynomial size, and depth at most $d+2$, where $d$ was the original depth.
\end{proof}

\subsection{Simulating \texorpdfstring{\I}{\Itext}-proofs with arbitrary constants in \texorpdfstring{$\cc{coAM}$}{coAM}}
The following proposition shows how we may conclude that $\cc{NP} \subseteq \cc{coAM}$ from the assumption of polynomial-size \I proofs for all tautologies, \emph{without} assuming the \I proofs are constant-free (but using the Generalized Riemann Hypothesis). We thank Pascal Koiran for the second half of the proof.

\begin{proposition} \label{prop:coMAGRH}
Assuming the Generalized Riemann Hypothesis, over any field $\F$ of characteristic zero, if every propositional tautology has a polynomial-size $\I_{\F}$-proof of polynomial degree, then $\cc{NP} \subseteq \cc{coAM}$.
\end{proposition}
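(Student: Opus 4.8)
The plan is to show that UNSAT, which is $\cc{coNP}$-complete, lies in $\cc{AM}$; since $\cc{AM}$ is closed under polynomial-time many-one reductions, this gives $\cc{coNP}\subseteq\cc{AM}$, equivalently $\cc{NP}\subseteq\cc{coAM}$. Fix an unsatisfiable CNF $\varphi$ and let $F_1 = \dotsb = F_m = x_1^2-x_1 = \dotsb = x_n^2-x_n = 0$ be its associated system; these polynomials have integer coefficients and polynomial degree and size. By hypothesis there is a $\poly(n)$-size, $\poly(n)$-degree $\I_{\F}$-proof $C(\vec x,\vec\f)$ of unsatisfiability. The obstruction to simply running the $\cc{coMA}$ argument of Proposition~\ref{prop:coMA} is that the constant-leaves of $C$ may be arbitrary elements of $\F$, so a Boolean Merlin cannot transmit $C$. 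The idea is instead to have Merlin transmit a \emph{reduction of $C$ modulo a prime $p$}: an $\I$-proof over $\overline{\F}_p$ whose constants lie in a polynomial-size extension $\F_{p^{\poly(n)}}$, which can be sent and PIT-checked.

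First (the ``first half'') I eliminate the transcendental constants by genericity. Let $\gamma_1,\dotsc,\gamma_k$ ($k=\poly(n)$) be the constants occurring in $C$, set $R = \Z[\gamma_1,\dotsc,\gamma_k]\subseteq\F$, and, after reindexing, let $\gamma_1,\dotsc,\gamma_r$ be a transcendence basis of $\Q(\gamma_1,\dotsc,\gamma_k)$ over $\Q$. Conditions~(\ref{condition:ideal}) and (\ref{condition:nss}) of Definition~\ref{def:IPS} assert only that certain elements of $R$ are equal to $0$ or $1$ -- namely the coefficients, as polynomials in $\vec x$, of $C(\vec x,\vec 0)$ and of $C(\vec x,\vec F(\vec x))-1$, each obtained from the $\gamma_i$ and the integer coefficients of the $F_i$ by the (division-free) circuit operations. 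Since $\gamma_1,\dotsc,\gamma_r$ are algebraically independent, replacing them by fresh variables $z_1,\dotsc,z_r$ turns the ``$=0$'' conditions into polynomial identities over $\Q(z_1,\dotsc,z_r)$ that hold \emph{identically} in the $z_i$; consequently \emph{any} specialization of the $z_i$ (even to $0$) again yields a valid $\I$-proof, and these specializations make sense over any field. After this step the only irrational constants are the algebraic numbers $\gamma_{r+1},\dotsc,\gamma_k$, which generate a number field $K$.

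Second (the ``second half,'' due to Koiran) I set up the $\cc{AM}$ protocol: Arthur picks a uniformly random prime $p$ of polynomial bit-length; Merlin sends the circuit obtained from $C$ by specializing the $z_i$ to $0$ and reducing the constants $\gamma_{r+1},\dotsc,\gamma_k$ modulo a prime $\mathfrak p$ of $K$ above $p$; Arthur verifies conditions~(\ref{condition:ideal}) and (\ref{condition:nss}) for this circuit over $\overline{\F}_p$ using the standard Schwartz--Zippel--DeMillo--Lipton $\cc{coRP}$ algorithm for PIT, which applies since the degree is still polynomial. Because $\I$-proofs are division-free, applying the ring homomorphism $R\to\overline{\F}_p$ to all constants of $C$ commutes with the circuit's evaluation, so the reduced circuit still satisfies conditions~(\ref{condition:ideal}) and (\ref{condition:nss}): it is a genuine $\I_{\overline{\F}_p}$-proof of the mod-$p$ reduction of the system. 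Soundness is immediate: if $\varphi$ is satisfiable it has a $\{0,1\}$-solution, hence a solution over $\overline{\F}_p$ for \emph{every} $p$, so no $\I_{\overline{\F}_p}$-proof exists and Arthur always rejects. For completeness one must show that a fraction at least $1/\poly(n)$ of primes of polynomial bit-length are ``good,'' meaning: (i) a ring homomorphism $R\to\overline{\F}_p$ of residue characteristic $p$ exists -- this excludes only primes dividing certain denominators and discriminants, which the effective arithmetic Nullstellensatz of Krick--Pardo--Sombra \cite{krickPardoSombra} bounds in terms of the polynomially bounded degree, number, and coefficient-size of the $F_i$; and (ii) the reductions $\gamma_j\bmod\mathfrak p$ can be taken inside a polynomial-size extension $\F_{p^{\poly(n)}}$, so that Merlin's message and Arthur's check have polynomial length. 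Given (i) and (ii), amplifying by independent repetition yields the $\cc{AM}$ protocol, and hence $\cc{NP}\subseteq\cc{coAM}$.

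I expect part (ii) of completeness to be the main obstacle, and it is precisely here that the Generalized Riemann Hypothesis enters (in Koiran's portion of the argument): controlling the degree of the extension of $\F_p$ over which the reduced proof is defined requires an effective form of the Chebotarev density theorem (equivalently, effective point-counting over finite fields), and the unconditional effective bounds only produce primes of super-polynomial bit-length, which a polynomial-time Arthur cannot sample. This is also why the conclusion is $\cc{NP}\subseteq\cc{coAM}$ and not $\cc{NP}\subseteq\cc{coMA}$: Arthur must commit to the prime \emph{before} Merlin speaks. Everything else -- the genericity elimination of the transcendental constants, the preservation of conditions~(\ref{condition:ideal}) and (\ref{condition:nss}) under ring homomorphisms (using that $\I$-proofs contain no divisions), the preservation of polynomial degree, and the final $\cc{coRP}$ PIT check -- is routine.
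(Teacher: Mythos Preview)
Your approach is genuinely different from the paper's, and unfortunately the ``first half'' has a real gap.

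The paper does \emph{not} try to transmit or reduce the specific constants of a given \I-proof. Instead it introduces a single \emph{generic} polynomial-size layered circuit $C(\vec x,\vec\f,\vec z)$, where the $z_e$ are fresh variables for the coefficients at linear-combination gates, so that every size-$s$ circuit equals $C_{\vec\zeta}$ for some $\vec\zeta$. Arthur picks polynomially many random integer points $\vec\xi^{(1)},\dotsc,\vec\xi^{(h)}$ and writes the equations $C(\vec\xi^{(i)},\vec 0,\vec z)=0$ and $C(\vec\xi^{(i)},\vec F(\vec\xi^{(i)}),\vec z)=1$; these are polynomial equations in $\vec z$ with \emph{integer} coefficients of polynomial bit-size. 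Koiran's result \cite{koiranNS} (this is where GRH enters) is then applied as a black box to test solvability of this integer system over $\overline\F$. Completeness uses that the promised \I-certificate supplies some $\vec\zeta$ satisfying all the equations; the Schwartz--Zippel argument handles the random $\vec\xi^{(i)}$. The point is that the constants of the unknown \I-proof never need to be named: they are absorbed into the existential quantifier over~$\vec z$.

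Your specialization step does not work as written. You claim that replacing a transcendence basis $\gamma_1,\dotsc,\gamma_r$ by $0$ and keeping $\gamma_{r+1},\dotsc,\gamma_k$ preserves the \I\ identities, and that the remaining $\gamma_j$ then ``generate a number field.'' But $\gamma_{r+1},\dotsc,\gamma_k$ are only algebraic over $\Q(\gamma_1,\dotsc,\gamma_r)$, not over $\Q$; they are fixed elements of $\F$ and do not change when you alter the circuit's other constant-leaves. Concretely, take $\gamma_1$ transcendental and $\gamma_2=\gamma_1^2$: the relation $w_2-w_1^2$ vanishes at $(\gamma_1,\gamma_2)$ but $P(0,\gamma_2)=\gamma_2\neq 0$. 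So there is no reason the identities $C(\vec x,\vec 0)=0$ and $C(\vec x,\vec F)=1$ survive setting $\gamma_1,\dotsc,\gamma_r$ to $0$. A correct version would observe that the \I\ conditions cut out a $\Q$-defined subvariety of constant-space containing $\vec\gamma$, and then look for a $\overline\Q$-point of controlled height and degree on it---but making that effective is essentially the content of Koiran's theorem, and the paper's parametrize-then-black-box route gets there directly. Your ``second half'' inherits a related problem: even if one did obtain algebraic-number constants, the number field $K$ they generate could have degree exponential in $k$, so Merlin's description of the reduction mod $\mathfrak p$ (and Arthur's PIT over $\F_{p^{[K:\Q]}}$) need not be polynomial-size; effective Chebotarev controls how primes split in $K$ but does not make $[K:\Q]$ small.
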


We do not know how to improve this result from $\cc{coAM}$ to $\cc{coMA}$ (as in Proposition~\ref{prop:coMA}).

\begin{proof}[Proof (with P. Koiran)]
We reduce to the fact that deciding Hilbert's Nullstellensatz---that is, given a system of integer polynomials over $\Z$, deciding if they have a solution over $\C$---is in $\cc{AM}$ \cite{koiranNS}. Rather than looking at solvability of the original set of equations $F_1(\vec{x}) = \dotsb = F_m(\vec{x}) = 0$, we consider solvability of a set of equations whose solutions describe all of the polynomial-size $\I$-certficiates for $F$. Namely, consider a \emph{generic} polynomial-size circuit, meaning a layered circuit of $\poly(n)$ depth and $\poly(n)$ width, with $n$ inputs $x_1, \dotsc, x_n,\f_1,\dotsc,\f_m$, and alternating layers of linear combination and product gates, where every edge $e$ terminating at any linear combination gate gets its own independent variable $z_{e}$. The output gate of this generic circuit computes a polynomial $C(\vec{x}, \vec{\f}, \vec{z})$, and for any setting of the $z_{e}$ variables to constants $\zeta_{e}$, we get a particular polynomial-size circuit computing a polynomial $C_{\vec{\zeta}}(\vec{x}, \vec{\f}) := C(\vec{x}, \vec{\f}, \vec{\zeta})$. Furthermore, any function computed by a polynomial-size circuit is equal to $C_{\vec{\zeta}}(\vec{x},\vec{\f})$ for some setting of $\vec{\zeta}$. In particular, if there is a polynomial size \I proof $C'$ for $F$, then there is some $\vec{\zeta} \in \F^{n}$ such that $C' = C_{\vec{\zeta}}(\vec{x}, \vec{\f})$.

We will translate the conditions that a circuit be an \I certificate into \emph{equations} on the new $z$ variables. Pick sufficiently many random values $\vec{\xi}^{(1)}, \vec{\xi}^{(2)}, \dotsc, \vec{\xi}^{(h)}$ to be substituted into $\vec{x}$; think of the $\vec{\xi}^{(i)}$ as a hitting set for the $x$-variables. Then we consider the solvability of the following set of $2h$ equations in $\vec{z}$:
\begin{eqnarray*}
\text{(For $i=1,\dotsc,h$)} & & C(\vec{\xi}^{(i)}, \vec{0}, \vec{z}) = 0 \\
\text{(For $i=1,\dotsc,h$)} & & C(\vec{\xi}^{(i)}, \vec{F}(\vec{\xi}^{(i)}), \vec{z}) = 1 \\
\end{eqnarray*}
Determining whether a system of polynomial equations, given by circuits over a field $\F$ of characteristic zero, has a solution in the algebraic closure $\overline{\F}$ can be done in $\cc{AM}$ \cite{koiranNS}. If $\vec{\zeta}$ is such that $C_{\vec{\zeta}}(\vec{x}, \vec{\f})=C(\vec{x}, \vec{\f}, \vec{\zeta})$ is in fact an \I certificate, then the preceding equalities will be satisfied regardless of the choices of the $\vec{\xi}^{(i)}$. Otherwise, at least one monomial in $C(\vec{x}, 0, \vec{\zeta})$ or $C(\vec{x}, \vec{F}(\vec{x}),\vec{\zeta})-1$ will be nonzero. Since all the monomials have polynomial degree, the usual DeMillo--Lipton--Schwarz--Zippel lemma implies that with high probability, a random point $\vec{\xi}$ will make any such nonzero monomial evaluate to a nonzero value. Choosing polynomially many points thus suffices. Composing Koiran's $\cc{AM}$ algorithm for the Nullstellensatz with the random guesses for the $\vec{\xi}^{(i)}$, and assuming that every family of propositional tautologies has $\cc{VP}$-\I certificates, we get an $\cc{AM}$ algorithm for TAUT. 
\end{proof}

\section{Lower bounds on \texorpdfstring{\I}{\Itext} imply circuit lower bounds} \label{sec:VNP}
Here we complete the proof of the following theorem:

\begin{theorem} \label{thm:VNP}
A super-polynomial lower bound on [constant-free] Hilbert-like $\I_{R}$ proofs of any family of tautologies implies $\cc{VNP}_{R} \neq \cc{VP}_{R}$ [respectively, $\cc{VNP}^0_{R} \neq \cc{VP}^0_{R}$], for any ring $R$.

A super-polynomial lower bound on the number of lines in Polynomial Calculus proofs implies the Permanent versus Determinant Conjecture ($\cc{VNP} \neq \cc{VP}_{ws}$).
\end{theorem}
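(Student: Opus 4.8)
Since we showed in Section~\ref{sec:eabs} that Theorem~\ref{thm:VNP} follows from Lemma~\ref{lem:VNP}, it remains only to prove that lemma.

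\begin{lemma} \label{lem:VNP}
Every family of CNF tautologies $(\varphi_n)$ has a Hilbert-like family of \I certificates $(C_n)$ in $\cc{VNP}^{0}_{R}$.
\end{lemma}

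\noindent\textit{Proof plan.} Fix $n$, write $\varphi=\varphi_n$ with clauses $\kappa_1,\dotsc,\kappa_m$ on variables $x_1,\dotsc,x_n$ and algebraic translations $F_1,\dotsc,F_m$, and include the Boolean axioms $x_j^2-x_j$ among the equations as usual, with placeholder $\f_{m+j}$ for $x_j^2-x_j$. The only property of $\varphi$ we use is that the resulting system is unsatisfiable, i.e.\ every $\vec a\in\{0,1\}^n$ falsifies some clause, so $F_i(\vec a)=1$ for at least one $i$; let $i(\vec a)$ be the least such index. The starting point is the ``partition of unity'' identity
\[
1 \;=\; \prod_{j=1}^{n}\bigl(x_j+(1-x_j)\bigr) \;=\; \sum_{\vec a\in\{0,1\}^{n}}\chi_{\vec a}(\vec x),
\qquad
\chi_{\vec a}(\vec x) := \prod_{j:\,a_j=1}x_j\prod_{j:\,a_j=0}(1-x_j),
\]
which holds as a formal polynomial identity.

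The plan is to replace each term $\chi_{\vec a}$ by an equivalent expression that ``charges'' it to the clause $\kappa_{i(\vec a)}$, and then to sum. Because $\chi_{\vec a}$ and $F_{i(\vec a)}$ are both multilinear and every literal of $\kappa_{i(\vec a)}$ is falsified by $\vec a$, the factor of $F_{i(\vec a)}$ contributed by each of its variables $x_k$ coincides with the factor of $\chi_{\vec a}$ contributed by $x_k$; hence $\chi_{\vec a}\,F_{i(\vec a)}$ is just $\chi_{\vec a}$ with the (at most $\mathrm{width}(\kappa_{i(\vec a)})$ many) factors indexed by that clause's variables squared. Using the congruence $y^2\equiv y\pmod{x_j^2-x_j}$ for $y\in\{x_j,1-x_j\}$ and telescoping over these squared factors (one term per clause variable), I would write down explicit polynomials $H^{(\vec a)}_j(\vec x)$, each of polynomially bounded size and degree, with
\[
\chi_{\vec a}(\vec x) \;=\; \chi_{\vec a}(\vec x)\,F_{i(\vec a)}(\vec x) \;+\; \sum_{j=1}^{n}(x_j^2-x_j)\,H^{(\vec a)}_j(\vec x).
\]
Summing over $\vec a\in\{0,1\}^n$ and grouping by the charged clause then yields the Hilbert-like certificate
\[
C(\vec x,\vec\f) \;=\; \sum_{i=1}^{m}\f_i\,G_i(\vec x) \;+\; \sum_{j=1}^{n}\f_{m+j}\,H_j(\vec x),
\qquad
G_i := \sum_{\vec a:\,i(\vec a)=i}\chi_{\vec a}, \quad H_j := \sum_{\vec a\in\{0,1\}^{n}}H^{(\vec a)}_j,
\]
for which $C(\vec x,\vec 0)=0$ and $\sum_i F_i G_i+\sum_j(x_j^2-x_j)H_j=1$ hold by construction. (The second assertion of Theorem~\ref{thm:VNP} then follows from this certificate exactly as in the proof already given, via Proposition~\ref{prop:pitassi} and the $\cc{VNP}$-completeness of the permanent.)

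The remaining step---which I expect to be the only delicate one---is to check $G_i,H_j\in\cc{VNP}^0_R$, and hence $C\in\cc{VNP}^0_R$. Each of $G_i$ and $H_j$ is already displayed as a sum over $\vec a\in\{0,1\}^n$, so by the definition of $\cc{VNP}$ it suffices to exhibit, for the summand, a single polynomial-size, polynomial-degree, \emph{constant-free} algebraic circuit in the joint variables $(\vec a,\vec x)$ that agrees with $[i(\vec a)=i]\cdot\chi_{\vec a}(\vec x)$, respectively $H^{(\vec a)}_j(\vec x)$, for every \emph{Boolean} $\vec a$ (its value at non-Boolean $\vec a$ being irrelevant). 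The factor $\chi_{\vec a}(\vec x)=\prod_j\bigl(a_jx_j+(1-a_j)(1-x_j)\bigr)$ is visibly of this form. The one genuine obstacle is the combinatorially defined index $i(\vec a)$; I would eliminate it by the arithmetization
\[
[i(\vec a)=i] \;=\; F_i(\vec a)\prod_{i'<i}\bigl(1-F_{i'}(\vec a)\bigr),
\]
a constant-free polynomial in $\vec a$ of degree and size $\poly(n)$, and likewise write $H^{(\vec a)}_j=\sum_{i=1}^{m}[i(\vec a)=i]\,H^{(\vec a),i}_j(\vec x)$, where $H^{(\vec a),i}_j$ is the cofactor produced above \emph{uniformly in $\vec a$} under the hypothesis that $\vec a$ falsifies $\kappa_i$ (when it does not, the factor $[i(\vec a)=i]$ kills the term). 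After this substitution every ingredient uses only the constants $0,1,-1$ and has polynomial size and degree, so the summands lie in $\cc{VP}^0_R$ and therefore $G_i$, $H_j$, and $C$ lie in $\cc{VNP}^0_R$, completing the lemma and hence Theorem~\ref{thm:VNP}. Thus the crux is not any single algebraic identity but the bookkeeping needed to turn each ingredient---above all the clause-selection step---into an honest small constant-free circuit in $(\vec a,\vec x)$, so that Valiant's criterion applies across the exponential sum.
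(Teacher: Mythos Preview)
Your proof is correct and follows essentially the same arc as the paper's: the partition-of-unity identity $1=\sum_{\vec a}\chi_{\vec a}$, the greedy clause selection $i(\vec a)$, and the arithmetization $[i(\vec a)=i]=F_i(\vec a)\prod_{i'<i}(1-F_{i'}(\vec a))$ all appear verbatim in the paper's argument. The one substantive difference is how each $\chi_{\vec a}$ gets charged to its clause. You multiply $\chi_{\vec a}$ by $F_{i(\vec a)}$, producing squared factors, and then invoke the Boolean axioms $x_j^2-x_j$ to telescope back down; this forces you to carry the auxiliary cofactors $H_j^{(\vec a)}$ through the whole argument. The paper instead observes directly that $F_{i(\vec a)}$ \emph{divides} $\chi_{\vec a}$ (both are multilinear and share exactly the factors indexed by the variables of $\kappa_{i(\vec a)}$), so one may simply write $\chi_{\vec a}=F_{i(\vec a)}\cdot\prod_{j:\,x_j\notin\kappa_{i(\vec a)}}b(a_j,x_j)$ with no remainder. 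This yields a cleaner final expression---a single $\cc{VNP}^0$ sum with no $H_j$ bookkeeping---and the mildly sharper conclusion (noted explicitly in the paper) that the clause equations alone, without the Boolean axioms, already admit a $\cc{VNP}^0$ Hilbert-like refutation. Your route reaches the same destination but takes one unnecessary detour.
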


In Section~\ref{sec:VNPeabs} we proved this theorem assuming the following key lemma, which we now prove in full.

\begin{lemma} \label{lem:VNP}
Every family of CNF tautologies $(\varphi_n)$ has a Hilbert-like family of \I certificates $(C_n)$ in $\cc{VNP}^{0}_{R}$.
\end{lemma}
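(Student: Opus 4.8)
The plan is to write down one explicit Hilbert-like certificate, assembled from the Lagrange indicator polynomials of the Boolean cube, and then to observe that this certificate --- although it is a sum over all $2^n$ points of the cube --- is already presented in the ``sum of a small constant-free circuit'' form that defines $\cc{VNP}^0_R$. So the work splits into: (i) build the certificate and verify the two conditions of Definition~\ref{def:IPS}, and (ii) check that the summand is a $\cc{VP}^0_R$-family in the summation variables.

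For step (i): fix a CNF $\varphi_n$ on $x_1,\dots,x_n$ with clauses $\kappa_1,\dots,\kappa_m$ whose associated system $F_1=\dots=F_m=x_1^2-x_1=\dots=x_n^2-x_n=0$ is unsatisfiable (the only case in which a certificate exists). The translation makes each $F_i$ a product of linear forms, one per literal of $\kappa_i$, each of the shape $x_j$ or $1-x_j$, and $\vec{a}\in\{0,1\}^n$ falsifies $\kappa_i$ exactly when all those forms are $1$ at $\vec{a}$. For a coordinate $j$ and bit $b$ let $\ell_{j,b}$ be the form in $\{x_j,1-x_j\}$ equal to $1$ at $x_j=b$, and set $\delta_{\vec{a}}(\vec{x})=\prod_{j=1}^n\ell_{j,a_j}$, so that $\sum_{\vec{a}\in\{0,1\}^n}\delta_{\vec{a}}(\vec{x})=\prod_j\bigl(x_j+(1-x_j)\bigr)=1$ identically. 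Since $\varphi_n$ is unsatisfiable every $\vec{a}$ falsifies some clause; let $c(\vec{a})$ be the least such index. As $\vec{a}$ falsifies $\kappa_{c(\vec{a})}$, each literal-form of $F_{c(\vec{a})}$ is $1$ at $\vec{a}$ and hence equals $\ell_{j,a_j}$ for the variable $x_j$ it names, so $F_{c(\vec{a})}=\prod_{j\in\mathrm{vars}(\kappa_{c(\vec{a})})}\ell_{j,a_j}$ and therefore $\delta_{\vec{a}}=F_{c(\vec{a})}\cdot\prod_{j\notin\mathrm{vars}(\kappa_{c(\vec{a})})}\ell_{j,a_j}$. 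Replacing the $F_{c(\vec{a})}$ factor by the placeholder $\f_{c(\vec{a})}$, define
\[
C_n(\vec{x},\vec{\f})\;=\;\sum_{\vec{a}\in\{0,1\}^n}\ \f_{c(\vec{a})}\prod_{j\notin\mathrm{vars}(\kappa_{c(\vec{a})})}\ell_{j,a_j}(x_j).
\]
This $C_n$ is linear in $\vec{\f}$ with $\vec{x}$-degree $\le n$, it vanishes at $\vec{\f}=\vec{0}$, and substituting $\f_i\mapsto F_i(\vec{x})$ returns each summand to $\delta_{\vec{a}}$, so $C_n(\vec{x},\vec{F}(\vec{x}))=\sum_{\vec{a}}\delta_{\vec{a}}=1$. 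Thus $C_n$ is a Hilbert-like \I certificate (the placeholders for the $x_j^2-x_j$ get coefficient $0$).

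For step (ii) --- which I expect to be the only part requiring care --- one must arithmetize the combinatorial choices $c(\vec{a})$ and $\mathrm{vars}(\kappa_{c(\vec{a})})$ into a single constant-free circuit of size and degree $\poly(m,n)$ in the variables $(\vec{a},\vec{x},\vec{\f})$ that agrees with the summand on every $\vec{a}\in\{0,1\}^n$. Let $V_i(\vec{a})$ be the product of the $\le n$ forms ($a_j$ or $1-a_j$) of $F_i$ evaluated at $\vec{a}$, so on the cube $V_i(\vec{a})=1$ iff $\vec{a}$ falsifies $\kappa_i$; put $\chi_i(\vec{a})=V_i(\vec{a})\prod_{i'<i}(1-V_{i'}(\vec{a}))$, so on the cube $\chi_i(\vec{a})=[c(\vec{a})=i]$ and $\sum_i\chi_i(\vec{a})=1$. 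Then $\sum_i\chi_i(\vec{a})\,\f_i$ computes $\f_{c(\vec{a})}$, and with $\sigma_{i,j}\in\{0,1\}$ the constant indicator that $x_j$ occurs in $\kappa_i$, the quantity $\sum_i\chi_i(\vec{a})\sigma_{i,j}$ computes $[\,j\in\mathrm{vars}(\kappa_{c(\vec{a})})\,]$; so the $j$-th factor of the product can be written $\sum_i\chi_i\sigma_{i,j}+\bigl(1-\sum_i\chi_i\sigma_{i,j}\bigr)\ell_{j,a_j}(x_j)$. Multiplying these $n$ factors together with $\sum_i\chi_i\f_i$ yields the circuit, and summing over $\vec{a}\in\{0,1\}^n$ exhibits $C_n\in\cc{VNP}^0_R$. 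Finally, every constant appearing lies in $\{0,1,-1\}$, so the whole construction is constant-free and makes sense over $\Z$, hence over any ring $R$. The genuinely non-routine ingredient is exactly this arithmetization of ``the first falsified clause''; the indicator identity $\sum_{\vec{a}}\delta_{\vec{a}}=1$ and the factorization of $\delta_{\vec{a}}$ through $F_{c(\vec{a})}$ are elementary, and nothing in the argument uses anything about $\varphi_n$ beyond unsatisfiability of its system of equations.
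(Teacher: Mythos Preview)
Your proposal is correct and follows essentially the same approach as the paper: both use the Lagrange/indicator identity $\sum_{\vec{e}\in\{0,1\}^n}\prod_j b(e_j,x_j)=1$, partition assignments by the first falsified clause, factor each indicator through the corresponding clause polynomial, and arithmetize ``first falsified clause'' via $C_i(\vec{e})\prod_{j<i}(1-C_j(\vec{e}))$ to get a $\cc{VNP}^0$ summand. The only cosmetic difference is that the paper keeps an explicit inner sum over $i$ and hardcodes $\prod_{j:x_j\notin\kappa_i}b(e_j,x_j)$ for each $i$, while you further arithmetize this product via the $\sigma_{i,j}$ constants; both yield polynomial-size, polynomial-degree, constant-free summands.
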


\begin{proof}
We mimic one of the proofs of completeness for Hilbert-like \I \cite[Theorem~1]{pitassi96} (recall Proposition~\ref{prop:pitassi}), and then show that this proof can in fact be carried out in $\cc{VNP}^{0}$. We omit any mention of the ground ring, as it will not be relevant.

Let $\varphi_n(\vec{x}) = \kappa_1(\vec{x}) \wedge \dotsb \wedge \kappa_m(\vec{x})$ be an unsatisfiable CNF, where each $\kappa_i$ is a disjunction of literals. Let $C_i(\vec{x})$ denote the (negated) polynomial translation of $\kappa_i$ via $\neg x \mapsto x$, $x \mapsto 1-x$ and $f \vee g \mapsto fg$; in particular, $C_i(\vec{x}) = 0$ if and only if $\kappa_i(\vec{x}) = 1$, and thus $\varphi_n$ is unsatisfiable if and only if the system of equations $C_1(\vec{x})=\dotsb=C_m(\vec{x})=x_1^2 - x_1 = \dotsb = x_n^2 - x_n = 0$ is unsatisfiable. In fact, as we'll see in the course of the proof, we won't need the equations $x_i^2 - x_i = 0$. It will be convenient to introduce the function $b(e,x)=ex + (1-e)(1-x)$, \ie, $b(1,x) = x$ and $b(0,x)=1-x$. For example, the clause $\kappa_i(\vec{x}) = (x_1 \vee \neg x_{17} \vee x_{42})$ gets translated into $C_i(\vec{x}) = (1-x_1)x_{17}(1-x_{42}) = b(0,x_1)b(1,x_{17})b(0,x_{42})$, and therefore an assignment falsifies $\kappa_i$ if and only if $(x_1,x_{17},x_{42}) \mapsto (0,1,0)$.

Just as $1 = x_1 x_2 + x_1(1-x_2) + (1-x_2)x_1 + (1-x_2)(1-x_1)$, an easy induction shows that
\begin{equation} \label{eqn:1}
1 = \sum_{\vec{e} \in \{0,1\}^{n}} \prod_{i=1}^{n}b(e_i,x_i).
\end{equation}
We will show how to turn this expression---which is already syntactically in $\cc{VNP}^{0}$ form---into a $\cc{VNP}$ certificate refuting $\varphi_n$. Let $c_i$ be the placeholder variable corresponding to $C_i(\vec{x})$.

The idea is to partition the assignments $\{0,1\}^{n}$ into $m$ parts $A_1,\dotsc,A_m$, where all assignments in the $i$-th part $A_i$ falsify clause $i$. This will then allow us to rewrite equation (\ref{eqn:1}) as
\begin{equation} \label{eqn:rewrite}
1 = \sum_{i=1}^{m} C_i(\vec{x})\left(\sum_{\vec{e} \in A_i} \prod_{j : x_j \notin \kappa_i} b(e_j,x_j)\right),
\end{equation} 
where ``$x_j \notin \kappa_i$'' means that neither $x_j$ nor its negation appears in $\kappa_i$. Equation (\ref{eqn:rewrite}) then becomes the \I-certificate $\sum_{i=1}^{m} c_i \cdot \left(\sum_{\vec{e} \in A_i} \prod_{j : x_j \notin \kappa_i} b(e_j,x_j)\right)$. What remains is to show that the sum can indeed be rewritten this way, and that there is some partition $(A_1,\dotsc, A_m)$ as above such that the resulting certificate is in fact in $\cc{VNP}$.

First, let us see why such a partition allows us to rewrite (\ref{eqn:1}) as (\ref{eqn:rewrite}). The key fact here is that the clause polynomial $C_i(\vec{x})$ divides the term $t_{\vec{e}}(\vec{x}) := \prod_{i=1}^{n} b(e_i, x_i)$ if and only if $C_i(\vec{e}) = 1$, if and only if $\vec{e}$ \emph{falsifies} $\kappa_i$. Let $C_i(\vec{x})=\prod_{i \in I} b(f_i,x_i)$, where $I \subseteq [n]$ is the set of indices of the variables appearing in clause $i$. By the properties of $b$ discussed above, $1=C_i(\vec{e})=\prod_{i \in I} b(f_i, e_i)$ if and only if $b(f_i,e_i)=1$ for all $i \in I$, if and only if $f_i=e_i$ for all $i \in I$. In other words, if $1=C_i(\vec{e})$ then $C_i = \prod_{i \in I} b(e_i, x_i)$, which clearly divides $t_{\vec{e}}$. Conversely, suppose $C_i(\vec{x})$ divides $t_{\vec{e}}(\vec{x})$. Since $t_{\vec{e}}(\vec{e})=1$ and every factor of $t_{\vec{e}}$ only takes on Boolean values on Boolean inputs, it follows that every factor of $t_{\vec{e}}$ evaluates to $1$ at $\vec{e}$, in particular $C_i(\vec{e})=1$.

Let $A_1, \dotsc, A_m$ be a partition of $\{0,1\}^n$ such that every assignment in $A_i$ falsifies $\kappa_i$. Since $C_i$ divides every term $t_{\vec{e}}$ such that $\vec{e}$ falsifies clause $i$, $C_i$ divides every term $t_{\vec{e}}$ with $\vec{e} \in A_i$, and thus we can indeed rewrite (\ref{eqn:1}) as (\ref{eqn:rewrite}). 

Next, we show how to construct a partition $A_1, \dotsc, A_m$ as above so that the resulting certificate is in $\cc{VNP}$. The partition we'll use is a greedy one. $A_1$ will consist of \emph{all} assignments that falsify $\kappa_1$. $A_2$ will consist of all \emph{remaining} assignments that falsify $\kappa_2$. And so on. In particular, $A_i$ consists of all assignments that falsify $\kappa_i$ and \emph{satisfy} all $A_j$ with $j < i$. (If at some clause $\kappa_i$ before we reach the end, we have used up all the assignments---which happens if and only if the first $i$ clauses on their own are unsatisfiable---that's okay: nothing we've done so far nor anything we do below assumes that all $A_i$ are nonempty.) 

Equivalently, $A_i = \{\vec{e} \in \{0,1\}^n | C_i(\vec{e})=1 \text{ and } C_j(\vec{e})=0 \text{ for all } j < i\}$. For any property $\Pi$, we write $\llbracket \Pi(\vec{e}) \rrbracket$ for the indicator function of $\Pi$: $\llbracket \Pi(\vec{e}) \rrbracket=1$ if and only if $\Pi(\vec{e})$ holds, and $0$ otherwise. We thus get the certificate:
\begin{eqnarray*}
& & \sum_{i=1}^{m} c_i \cdot \left(\sum_{\vec{e} \in \{0,1\}^n} \llbracket \vec{e} \text{ falsifies } \kappa_i \text{ and satisfies $\kappa_j$ for all } j < i \rrbracket \prod_{j : x_j \notin \kappa_i} b(e_j, x_j) \right) \\
& = & \sum_{i=1}^{m} c_i \cdot \left(\sum_{\vec{e} \in \{0,1\}^n} \llbracket C_i(\vec{e})=1 \text{ and } C_j(\vec{e})=0 \text{ for all } j < i \rrbracket \prod_{j : x_j \notin \kappa_i} b(e_j, x_j) \right) \\
& = & \sum_{i=1}^{m} c_i \cdot \left(\sum_{\vec{e} \in \{0,1\}^n} \left(C_i(\vec{e}) \prod_{j < i}(1-C_j(\vec{e})) \right) \prod_{j : x_j \notin \kappa_i} b(e_j, x_j) \right) \\
& = & \sum_{e \in \{0,1\}^{n}} \sum_{i=1}^{m} c_i C_i(\vec{e})\left(\prod_{j < i}(1-C_j(\vec{e}))\right)\left(\prod_{j : x_j \notin \kappa_i} b(e_j, x_j)\right)
\end{eqnarray*}
Finally, it is readily visible that the polynomial function of $\vec{c}$, $\vec{e}$, and $\vec{x}$ that is the summand of the outermost sum $\sum_{\vec{e} \in \{0,1\}^{n}}$ is computed by a polynomial-size circuit of polynomial degree, and thus the entire certificate is in $\cc{VNP}$. Indeed, the expression as written exhibits it as a small \emph{formula} of constant depth with unbounded fan-in gates. By inspection, this circuit only uses the constants $0,1,-1$, hence the certificate is in $\cc{VNP}^{0}$.
\end{proof}

\section{PIT as a bridge between circuit complexity and proof complexity} \label{sec:PIT}
Having already introduced and discussed our PIT axioms in Section~\ref{sec:PITeabs}, here we complete the proofs of Theorems~\ref{thm:EF} and \ref{thm:AC0}. We maintain the notations and conventions of Section~\ref{sec:PITeabs}.

\subsection{Extended Frege is p-equivalent to \texorpdfstring{\I}{\Itext} if PIT is EF-provably easy} \label{sec:EF}
\begin{theorem} \label{thm:EF}
If there is a family $K$ of polynomial-size Boolean circuits computing PIT, such that the PIT axioms for $K$ have polynomial-size EF proofs, then EF is polynomially equivalent to $\I$.
\end{theorem}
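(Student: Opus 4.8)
\subsection*{Proof strategy}

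The plan is to establish the two directions separately. One of them is already in place: $\I$ (indeed even Hilbert-like $\I$, which is a subsystem) p-simulates Extended Frege unconditionally, by essentially Pitassi's argument \cite{pitassi96} as noted above. So the whole content of the theorem is the reverse simulation, under the hypothesis that $K$ is a correct polynomial-size PIT circuit whose PIT axioms (Definition~\ref{def:PITaxioms}) have polynomial-size EF proofs. I would obtain it through a reflection-principle argument, split into two lemmas: (A) if a suitable propositional formalization $Soundness^{K}_{\I}$ of the soundness of $\I$ has polynomial-size EF proofs, then EF p-simulates (hence is p-equivalent to) $\I$; and (B) polynomial-size EF proofs of the PIT axioms for $K$ yield polynomial-size EF proofs of $Soundness^{K}_{\I}$.

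For Lemma~(A), the crucial design choice is \emph{how} to say ``$C(\vec{x},\vec{\f})$ is a valid $\I$-certificate'' propositionally. Honest verification of conditions \ref{condition:ideal} and \ref{condition:nss} of Definition~\ref{def:IPS} is two instances of PIT, which is only known to be checkable in randomized polynomial time, so EF cannot be expected to verify it directly. Instead I would let $K$ do the checking: write $\mathrm{Valid}^{K}(\prop{[C]})$ for $K(\prop{[C(\vec{x},\vec{0})]}) \land K(\prop{[1 - C(\vec{x},\vec{F}(\vec{x}))]})$, where $\vec{F}$ is the (fixed) list of clause polynomials of the tautology under consideration; this is a \emph{deterministic} polynomial-size predicate of $\prop{[C]}$ which, because $K$ is correct, implies genuine validity. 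Then $Soundness^{K}_{\I}$ asserts: if $\prop{[C]}$ encodes an algebraic circuit of size $\le s$ with $\mathrm{Valid}^{K}(\prop{[C]})$, then the CNF is unsatisfiable, i.e. its negation (the target tautology) holds under the free propositional variables $\prop{\vec{p}}$. Given an actual size-$s$ $\I$-refutation $C_{0}$ of a tautology $\varphi$, I substitute the fixed bit-string $[C_{0}]$ for $\prop{[C]}$ into the (hypothesized polynomial-size) EF proof of $Soundness^{K}_{\I}$; the hypothesis $\mathrm{Valid}^{K}([C_{0}])$ is now a variable-free true statement about the fixed polynomial-size objects $K$, $C_{0}$, $\vec{F}$, which EF proves in polynomial size simply by evaluating the circuit $K$ gate by gate on the two fixed derived encodings; modus ponens then yields a polynomial-size EF proof of $\varphi$.

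Lemma~(B) is the heart of the matter and is where the PIT axioms are consumed. Working inside EF, I assume $\mathrm{Valid}^{K}(\prop{[C]})$, let $\prop{\vec{p}}$ be an arbitrary Boolean assignment, and (arguing for a contradiction) suppose it satisfies every clause, so $F_{i}(\prop{\vec{p}}) = 0$ for each $i$; after a standard EF-provable reduction to bounded clause width this is a bounded fact, obtained by evaluating the $O(1)$-size monomial circuits $F_{i}$, and it also yields $K(\prop{[F_{i}(\prop{\vec{p}})]})$ since the relevant finitely many constant sub-circuits are handled by evaluating $K$. Now axiom~\ref{axiom:Boolean} is the bridge from polynomial identities to Boolean specializations: applied to $K(\prop{[C(\vec{x},\vec{0})]})$ it gives $K(\prop{[C(\prop{\vec{p}},\vec{0})]})$, and applied to $K(\prop{[1-C(\vec{x},\vec{F}(\vec{x}))]})$ it gives $K(\prop{[1 - D]})$ where $D := C(\prop{\vec{p}},\vec{F}(\prop{\vec{p}}))$. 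Using axiom~\ref{axiom:subzero} repeatedly I substitute the zero-computing circuits $F_{i}(\prop{\vec{p}})$ into the $\vec{\f}$-slots of $C(\prop{\vec{p}},\vec{0})$ one at a time---axiom~\ref{axiom:perm} taking care of the variable re-indexing needed to line up the slots in the encodings---arriving at $K(\prop{[D]})$. But axiom~\ref{axiom:one}, instantiated at the circuit $D$, says $K(\prop{[D]}) \rightarrow \neg K(\prop{[1-D]})$, contradicting what we just derived. Hence no Boolean assignment satisfies all clauses, which is exactly $Soundness^{K}_{\I}$; and one checks the proof has polynomial size given polynomial-size proofs of the four axioms.

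I expect the main obstacle to be Lemma~(B), and specifically the propositional bookkeeping it requires. One must fix an encoding of algebraic circuits under which ``specializing variables to Boolean values'', ``substituting one circuit into an input slot of another'', and ``composing with the fixed $\vec{F}$'' are all realized by small, explicitly EF-manipulable transformations of the encoding, and under which the four PIT axioms can in fact be invoked: the axioms speak about circuits described by propositional variables $\prop{[C]}$, so every appeal inside the soundness proof must actually produce the relevant derived encoding, in polynomial size, and match the axiom's shape (this is where the abuses of notation $[C(\vec{x},0)]$, $[1-C]$, $[C(\pi(\vec{x}))]$ etc.\ must be pinned down). The conceptually delicate point is keeping straight two different kinds of assertion---what $K$ says about a polynomial being identically zero, versus the (Boolean) value of a circuit at a fixed point---and routing between them only through axiom~\ref{axiom:Boolean}, while the final contradiction is extracted purely at the level of $K$'s acceptances via axiom~\ref{axiom:one} (so that one never needs, and never has, a separate ``$K$ is correct on constant circuits'' principle). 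Lemma~(A), by contrast, is the familiar ``reflection implies simulation'' move, and should go through routinely once $Soundness^{K}_{\I}$ is set up with the deterministic, $K$-based notion of validity rather than the randomized honest one---indeed it is precisely this reformulation that makes the hypothesis of the theorem (a \emph{specific} circuit $K$, not merely the PIT predicate) do its work.
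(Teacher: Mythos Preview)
Your proposal is correct and follows essentially the same approach as the paper: the same two-lemma decomposition (reflection/soundness implies simulation; PIT axioms imply soundness), the same $K$-based propositional notion of ``valid $\I$-proof,'' and the identical chain of axiom applications (Axiom~\ref{axiom:Boolean} to specialize $\vec{x}$ to $\vec{p}$, Axioms~\ref{axiom:subzero} and~\ref{axiom:perm} to feed the $F_i(\vec{p})$ into the $\vec{\f}$-slots, and Axiom~\ref{axiom:one} for the final contradiction). The only noticeable difference is cosmetic: the paper makes $[\varphi]$ a block of propositional variables in $Soundness_{\I}$ (so Step~1 of their Lemma~\ref{lem:axioms} must argue that $Truth_{bool}(\prop{[\kappa_i]},\vec{p}) \rightarrow K(\prop{[Q_i^{\varphi}(\vec{p})]})$ depends on only $O(1)$ variables of $\prop{[\varphi]}$ and can be brute-forced), whereas you fix $\vec{F}$ per tautology and brute-force over the $2^{3}$ settings of the three relevant $p_j$'s directly; both routes rely on correctness of $K$ at exactly the same two points the paper flags in its footnote.
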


To prove the theorem, we will first show that EF is p-equivalent to $\I$ if a family of propositional formulas expressing soundness of $\I$ are efficiently EF provable. Then we will show that efficient EF proofs of $Soundness_{\I}$ follows from efficient EF proofs for our PIT axioms.

\myparagraph{Soundness of \I}

It is well-known that for standard Cook--Reckhow proof systems, a proof system $P$ can p-simulate another proof system $P'$
if and only if $P$ can prove soundness of $P'$. Our proof system is not standard because verifying a proof requires probabilistic, rather than deterministic, polynomial-time. Still we will show how to formalize soundness of $\I$ propositionally, and we will show that if EF can efficiently prove soundness of $\I$ then EF is p-equivalent to $\I$.

Let $\varphi = \kappa_1 \land \ldots \land \kappa_m$ be an unsatisfiable propositional 3CNF formula over variables $p_1,\ldots,p_n$, and let $Q^\varphi_1, \ldots, Q^\varphi_m$ be the corresponding polynomial equations (each of degree at most 3) such that $\kappa_i(\alpha)=1$ if and only if $Q^\varphi_i(\alpha)=0$ for $\alpha \in \{0,1\}^{n}$. An $\I$-refutation of $\varphi$ is an algebraic circuit, $C$, which demonstrates that $1$ is in the ideal generated by the polynomial equations $\vec{Q}^\varphi$. (This demonstrates that the polynomial equations $\vec{Q}^\varphi=0$ are unsolvable, which is equivalent to proving that $\varphi$ is unsatisfiable.) In particular, recall that $C$ has two types of inputs: $x_1,\dotsc,x_n$ (corresponding to the propositional variables $p_1,\dotsc,p_n$) and the placeholder variables $\f_1, \ldots, \f_m$ (corresponding to the equation $Q^{\varphi}_1,\dotsc,Q^\varphi_m$), and satisfies the following two properties:

\begin{enumerate}
\item $C(\vec{x},\vec{0})=0$. This property essentially states that the polynomial computed by $C(\vec{x},\vec{Q}(\vec{x}))$ is in the ideal generated by $Q^\varphi_1,\ldots,Q^\varphi_m$.

\item $C(\vec{x},\vec{Q}^\varphi(\vec{x}))=1$. This property states that the polynomial computed by $C$, when we substitute the $Q^\varphi_i$'s for the $\f_i$'s, is the identically 1 polynomial.
\end{enumerate}

\myparagraph{Encoding \I Proofs}

Let $K$ be a family of polynomial-size circuits for PIT.
Using $K_{m,n}$, we can create a polynomial-size Boolean circuit, $Proof_\I([C], [\varphi])$ that
is true if and only if $C$ is an $\I$-proof of the unsatisfiability of $\vec{Q}^\varphi=0$.
The polynomial-sized Boolean circuit $Proof_\I([C],[\varphi])$ first takes the encoding of the 
algebraic circuit $C$ (which has
$x$-variables and placeholder variables), and creates the encoding of a new algebraic circuit, $[C']$, where
$C'$ is like $C$ but with each $\f_i$ variable replaced by 0.
Secondly, it takes the encoding of $C$ and $[\varphi]$ and creates the encoding of a new
circuit $C''$, where $C''$ is like $C$ but now with each $\f_i$ variable replaced by $Q^\varphi_i$.
(Note that whereas $C$ has $n+m$ underlying algebraic variables, both $C'$ and $C''$ have only $n$ underlying variables.)
$Proof_\I([C], [\varphi])$ is true if and only if 
$K([C'])$---that is, $C'(\vec{x})=C(\vec{x},\vec{0})$ computes the $0$ polynomial---and
$K([1-C''])=0$---that is, $C''(\vec{x})=C(\vec{x},\vec{Q}^{\varphi}(\vec{x}))$ computes the $1$ polynomial.

\begin{definition}
Let formula $Truth_{bool}(\vec{p},\vec{q})$ state that the truth assignment $\vec{q}$
satisfies the Boolean formula coded by $\vec{p}$.
The soundness of $\I$ says that if
$\varphi$ has a refutation in $\I$, then $\varphi$ is unsatisfiable.
That is, $Soundness_{\I,m,n}([C], [\varphi], \vec{p})$ has variables that encode
a size $m$ \I-proof $C$, variables that encode a 3CNF formula $\varphi$ over $n$ variables, and $n$
additional Boolean variables, $\vec{p}$.
$Soundness_{\I,m,n}([C], [\varphi], \vec{p})$ states:
\[
Proof_\I(\prop{[C]},\prop{[\varphi]}) \rightarrow \neg Truth_{bool}(\prop{[\varphi]}, \vec{p}).
\]
\end{definition}

\begin{lemma} \label{lem:soundness}
If EF can efficiently prove $Soundness_\I$ for some polynomial-size Boolean circuit family $K$ computing PIT, then EF is p-equivalent to $\I$.
\end{lemma}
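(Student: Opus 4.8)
The plan is to establish the two directions of p-equivalence separately. The direction that $\I$ p-simulates EF is already in hand: as noted above (following Pitassi's completeness proof for Hilbert-like $\I$ \cite{pitassi96}), $\I$ p-simulates Extended Frege \emph{unconditionally}. So the entire content of the lemma is to show that, under the hypothesis, EF p-simulates $\I$. For this I would follow the classical template ``$P$ p-simulates $P'$ iff $P$ proves soundness of $P'$,'' adapted to handle the fact that $\I$-proofs are checkable only in randomized polynomial time. The key point behind formulating $Soundness_\I$ via a \emph{fixed} polynomial-size Boolean PIT circuit family $K$ (which exists since $\mathrm{PIT}\in\cc{P/poly}$) is precisely that $Proof_\I(\prop{[C]},\prop{[\varphi]})$ is then an honest deterministic polynomial-size Boolean circuit, so $Soundness_\I$ is an honest propositional formula and the classical template applies.

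Concretely, given an unsatisfiable 3CNF $\varphi^*$ over $n$ variables and an $\I$-refutation $C^*$ of size $m$, I would take the assumed polynomial-size EF proof of $Soundness_{\I,m,n}(\prop{[C]},\prop{[\varphi]},\vec p)$ and substitute the actual bit-strings $[C^*]$ and $[\varphi^*]$ for the corresponding propositional variables. Since EF proofs are closed under substitution of constants, this yields a polynomial-size EF proof of
\[
Proof_\I([C^*],[\varphi^*]) \;\rightarrow\; \neg Truth_{bool}([\varphi^*],\vec p),
\]
in which $[C^*]$ and $[\varphi^*]$ are now constants and only $\vec p$ is free. Next I would produce a polynomial-size EF proof of the antecedent $Proof_\I([C^*],[\varphi^*])$. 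This is a \emph{closed} Boolean formula: the value of a fixed polynomial-size circuit (built from $K$ together with the gadgets that form $[C']$ from $[C^*]$ by zeroing the $\f_i$, and $[C'']$ from $[C^*]$, $[\varphi^*]$ by substituting the $Q^\varphi_i$) on a fixed input. Because $C^*$ is a genuine $\I$-refutation, conditions~(\ref{condition:ideal}) and~(\ref{condition:nss}) of Definition~\ref{def:IPS} hold, and since $C^*$ has polynomial degree so do $C'$ and $C''$ (composing with the degree-$\le 3$ clause polynomials); hence, using that $K$ \emph{correctly} computes PIT on polynomial-degree circuits, $Proof_\I([C^*],[\varphi^*])$ evaluates to true. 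As EF efficiently proves every true statement of the form ``this fixed Boolean circuit outputs $1$ on this fixed input,'' we get a polynomial-size EF proof of $Proof_\I([C^*],[\varphi^*])$; one modus ponens then yields a polynomial-size EF proof of $\neg Truth_{bool}([\varphi^*],\vec p)$ — an EF refutation of $\varphi^*$. Finally, since EF translates efficiently (via extension variables) between an arbitrary tautology $\tau$ and the 3CNF encoding of $\neg\tau$, this gives the required polynomial-time transformation of $\I$-proofs into EF-proofs (uniform once one is handed the family of $Soundness_\I$ proofs), and composing with the unconditional reverse simulation gives p-equivalence.

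The step I expect to be the main obstacle — or at least to require the most care — is the verification that $Proof_\I([C^*],[\varphi^*])$ is genuinely true \emph{and} EF-provably true in polynomial size: this is where the hypothesis that $K$ correctly computes PIT is essential, where one must check that every algebraic circuit occurring inside $Proof_\I$ has polynomial degree so that $K$'s guarantee is applicable, and where one invokes the standard but carefully-set-up ``evaluation lemma'' for EF on closed Boolean sentences. A secondary point is the encoding bookkeeping — that substituting $[C^*]$, $[\varphi^*]$ into the $Soundness_\I$ proof really produces a proof of the displayed implication, and that $Truth_{bool}([\varphi^*],\vec p)$ is EF-provably equivalent to $\varphi^*(\vec p)$ — but these are routine once the encoding conventions of Section~\ref{sec:PITeabs} are fixed.
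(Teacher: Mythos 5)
Your proof is correct and takes essentially the same route as the paper's: reduce to the one nontrivial direction (since \I p-simulates EF unconditionally), substitute the fixed encodings $[C^*],[\varphi^*]$ into the assumed EF proof of $Soundness_\I$, observe that $Proof_\I([C^*],[\varphi^*])$ is then a closed true Boolean statement that EF verifies by evaluation (the one place where $K$'s actual correctness on polynomial-degree circuits is invoked), apply modus ponens to obtain $\neg Truth_{bool}([\varphi^*],\vec p)$, and pass to $\neg\varphi^*$ via the EF-provable implication relating $\varphi^*$ and $Truth_{bool}([\varphi^*],\vec p)$. The extra attention you pay to checking that $C'$ and $C''$ still have polynomial degree is a correct and worthwhile refinement of a point the paper relegates to a footnote.
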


\begin{proof} 
Because $\I$ can p-simulate EF, it suffices to show that
if EF can prove Soundness of $\I$, then EF can p-simulate $\I$.
Assume that we have a polynomial-size EF proof
of $Soundness_\I$. 
Now suppose that $C$ is an \I-refutation of an unsatisfiable 3CNF formula $\varphi$
on variables $\vec{p}$.
We will show that EF can also prove $\neg \varphi$ with a proof of size polynomial in $|C|$.

First, 
we claim that it follows from a natural encoding (see Section~\ref{sec:encoding}) that EF can efficiently prove: 
\[
\varphi \rightarrow Truth_{bool}([\varphi],\vec{p}).
\]
(Variables of this statement just the $p$ variables, because $\varphi$ is a fixed 3CNF formula, so the encoding $[\varphi]$ is a variable-free Boolean string.)

Second, 
if $C$ is an $\I$-refutation of $\varphi$, then EF can prove $Proof_\I([C],[\varphi])$.\footnote{The fact that $Proof_\I([C],[\varphi])$ is even true, given that $C$ is an \I-refutation of $\varphi$, follows from the completeness of the circuit $K$ computing PIT---that is, if $C \equiv 0$, then $K([C])$ accepts. This is one of only two places in the proof of Theorem~\ref{thm:EF} that we actually need the assumption that $K$ correctly computes PIT, rather than merely assuming that $K$ satisfies our PIT axioms. However, it is clear that this usage of this assumption is crucial. The other usage is in Step~\ref{step:axioms:1} of Lemma~\ref{lem:axioms}.} This holds because both $C$ and $\varphi$ are fixed, so this formula is variable-free. Thus, EF can just verify that it is true.

Third,
by soundness of $\I$, which we are assuming is EF-provable, and the
fact that EF can prove $Proof_\I([C],[\varphi])$ (step 2),
it follows by modus ponens that EF can prove
$\neg Truth_{bool}([\varphi], \vec{p})$. 
(The statement $Soundness_\I([C],[\varphi],\vec{p})$ for this instance will only involve variables $\vec{p}$: the other two sets of inputs to the $Soundness_\I$ statement, $[C]$ and $[\varphi]$, are constants here since both $C$ and $\varphi$ are fixed.)

Finally, by modus ponens and the contrapositive of $\varphi \rightarrow Truth_{bool}([\varphi], \vec{p})$, we conclude in EF $\neg \varphi$, as desired.
\end{proof}

Theorem~\ref{thm:EF} follows from the preceding lemma and the next one.

\begin{lemma} \label{lem:axioms}
If EF can efficiently prove the PIT axioms for some polynomial-size Boolean circuit family $K$ computing PIT, 
then EF can efficiently prove $Soundness_{\I}$ (for that same $K$).
\end{lemma}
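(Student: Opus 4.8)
The plan is to formalize inside EF the evident semantic argument ``if $C$ is a valid $\I$-refutation of $\varphi$ and $\vec a$ satisfies $\varphi$, then $0=1$,'' invoking a PIT axiom for $K$ at each point where one reasons about the polynomial computed by an algebraic circuit, and using the semantic correctness of $K$ at exactly one atomic sub-step. Throughout, let $\vec a$ denote the assignment variables of $Soundness_{\I}$ (written $\vec p$ in its definition), and let $Q^\varphi_1,\dots,Q^\varphi_m$ be the constant-size clause polynomials, so that $Proof_{\I}(\prop{[C]},\prop{[\varphi]})$ is the conjunction $K(\prop{[C(\vec x,\vec 0)]})\wedge K(\prop{[1-C(\vec x,\vec{Q}^\varphi(\vec x))]})$ of two PIT-assertions about encoded algebraic circuits in the algebraic variables $\vec x = (x_1,\dots,x_n)$. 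Reasoning in EF under the hypotheses $Proof_{\I}(\prop{[C]},\prop{[\varphi]})$ and $Truth_{bool}(\prop{[\varphi]},\vec a)$, I will derive a contradiction; discharging the hypotheses then gives $Soundness_{\I}$.

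First I specialize the algebraic variables to the assignment: applying the Boolean PIT axiom~(\ref{axiom:Boolean}) once to each conjunct of $Proof_{\I}$ (instantiating the axiom's Boolean variables by $\vec a$) produces in EF the two formulas $K(\prop{[C(\vec a,\vec 0)]})$ and $K(\prop{[1-C(\vec a,\vec{Q}^\varphi(\vec a))]})$, which now encode variable-free circuits. Next I show EF proves $Truth_{bool}(\prop{[\varphi]},\vec a)\rightarrow\bigwedge_{i=1}^{m}K(\prop{[Q^\varphi_i(\vec a)]})$: unfolding the encoding of $Truth_{bool}$ shows it implies each $\kappa_i(\vec a)$, and the (EF-provable) encoding-level correspondence between $\kappa_i$ and $Q^\varphi_i$ shows $\kappa_i(\vec a)$ forces the tiny circuit $Q^\varphi_i(\vec a)$ to evaluate to $0$; since $\varphi$ is a $3$CNF, $\prop{[Q^\varphi_i(\vec a)]}$ depends on only $O(1)$ bits of $\vec a$, so $\kappa_i(\vec a)\rightarrow K(\prop{[Q^\varphi_i(\vec a)]})$ is a tautology of constant arity which EF proves by brute-force case analysis, and in each surviving case $K$ is evaluated on a fixed encoding of the constant-$0$ circuit, where it accepts precisely because $K$ correctly computes PIT. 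This is the atomic step (flagged in the footnote) where correctness of $K$, rather than merely the PIT axioms, is used.

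From $K(\prop{[C(\vec a,\vec 0)]})$ together with the $m$ formulas $K(\prop{[Q^\varphi_i(\vec a)]})$, I apply the substitution axiom~(\ref{axiom:subzero}) $m$ times, filling the $m$ zero-slots of $C(\vec a,\vec 0)$ one at a time with $Q^\varphi_1(\vec a),\dots,Q^\varphi_m(\vec a)$; at stage $k$ the hypothesis $K(\prop{[C(\vec a,\dots)]})$ required by the axiom is exactly what stage $k-1$ produced, so after $m$ applications EF derives $K(\prop{[C(\vec a,\vec{Q}^\varphi(\vec a))]})$. Applying the ``one'' axiom~(\ref{axiom:one}) to the circuit $C(\vec a,\vec{Q}^\varphi(\vec a))$ gives the implication $K(\prop{[C(\vec a,\vec{Q}^\varphi(\vec a))]})\rightarrow\neg K(\prop{[1-C(\vec a,\vec{Q}^\varphi(\vec a))]})$, so by modus ponens EF derives $\neg K(\prop{[1-C(\vec a,\vec{Q}^\varphi(\vec a))]})$, contradicting the second formula obtained above. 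Every line is a substitution instance of an EF-provably-efficient schema (a PIT axiom), a routine encoding manipulation, or a constant-arity brute-force proof, and there are $\poly$ many of them, so the resulting EF proof of $Soundness_{\I}$ has polynomial size.

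I expect the main effort to be bookkeeping rather than a single deep obstacle: (i) checking that the chosen ``standard Boolean encoding'' of algebraic circuits lets EF efficiently build and reason about the derived encodings $\prop{[1-C]}$, $\prop{[C(\vec a,\vec 0)]}$, $\prop{[C(\vec a,\vec{Q}^\varphi(\vec a))]}$, and the one-slot-at-a-time substitutions, with only polynomial blow-up and with the iterated uses of axioms~(\ref{axiom:Boolean}) and~(\ref{axiom:subzero}) composing cleanly; and (ii) ensuring that the only appeal to ``$K$ is correct'' is the constant-arity step above, since everything else must go through from the axioms alone --- this discipline is exactly what lets essentially the same proof scale down to Theorem~\ref{thm:AC0}, where the gates of $K$ are tracked by auxiliary propositional variables.
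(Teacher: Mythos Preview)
Your proof follows the same structure as the paper's: assume $Truth_{bool}$ and the two PIT-assertions from $Proof_\I$, derive each $K(\prop{[Q_i^\varphi(\vec a)]})$ by a constant-arity brute force (the one place where semantic correctness of $K$ is invoked), specialize $\vec x$ to $\vec a$ via Axiom~\ref{axiom:Boolean}, substitute the $Q_i$'s into the zero slots, and contradict via Axiom~\ref{axiom:one}. The one omission is in your substitution step: Axiom~\ref{axiom:subzero} as stated substitutes into a single designated (last) slot of $C$, so filling $m$ distinct $\f$-slots ``one at a time'' also requires interleaving the permutation Axiom~\ref{axiom:perm} to rotate each $\f_i$ into the substitutable position---the paper makes this explicit in its Step~(\ref{step:axioms:4}). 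This is pure bookkeeping and does not alter the argument or its polynomial size bound.
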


\begin{proof}
Starting with $Truth_{bool}(\prop{[\varphi]},\vec{p})$, $K(\prop{[C(\vec{x},\vec{0})]})$,
$K(\prop{[1-C(\vec{x},\vec{Q}(\vec{x}))]})$, we will derive a contradiction.
\begin{enumerate}
\item\label{step:axioms:1} First show for every $i \in [m]$,
$Truth_{bool}([\varphi],\vec{p}) \rightarrow K(\prop{[Q_i^\varphi(\vec{p})]})$, where
$Q_i^\varphi$ is the low degree polynomial corresponding to the clause, $\kappa_i$, of $\varphi$. Note that, as $\varphi$ is not a fixed formula but is determined by the propositional variables encoding $\prop{[\varphi]}$, the encoding $\prop{[Q_i^\varphi]}$ depends on a subset of these variables. 

$Truth_{bool}(\prop{[\varphi]},\vec{p})$ states that each clause $\kappa_i$ in $\varphi$ evaluates
to true under $\vec{p}$. It is a tautology that if $\kappa_i$
evaluates to true under $\vec{p}$, then $Q_i^{\varphi}$ evaluates to $0$ at $\vec{p}$. Since $K$ correctly computes PIT, \begin{equation} \label{eqn:truth}
Truth_{bool}(\prop{[\kappa_i]},\vec{p}) \rightarrow K(\prop{[Q_i^\varphi(\vec{p})]})\tag{*}
\end{equation}
is a tautology. Furthermore, although both the encoding $\prop{[\kappa_i]}$ and $\prop{[Q_i^{\varphi}]}$ depend on the propositional variables encoding $\prop{[\varphi]}$, since we assume that $\varphi$ is a 3CNF, these only depend on \emph{constantly many} of the variables encoding $\prop{[\varphi]}$. Thus the tautology (\ref{eqn:truth}) can be proven in EF by brute force. 
Putting these together we can derive $Truth_{bool}(\prop{[\varphi]},\vec{p}) \rightarrow K(\prop{[Q_i^\varphi(\vec{p})]})$,
as desired.

\item\label{step:axioms:2} Using the assumption $Truth_{bool}(\prop{[\varphi]},\vec{p})$ together with (\ref{step:axioms:1}) we
derive $K(\prop{[Q_i^\varphi(\vec{p})]})$ for all $i \in [m]$.

\item\label{step:axioms:3} Using Axiom~\ref{axiom:Boolean} we can prove
$K(\prop{[C(\vec{x},\vec{0})]}) \rightarrow K(\prop{[C(\vec{p},\vec{0})]})$. Using modus ponens with the assumption $K(\prop{[C(\vec{x},\vec{0})]})$, we derive $K(\prop{[C(\vec{p},\vec{0})]})$.

\item\label{step:axioms:4} Repeatedly using Axiom~\ref{axiom:subzero} and Axiom~\ref{axiom:perm} we can prove
\[
K(\prop{[Q_1^\varphi(\vec{p})]}), K(\prop{[Q_2^\varphi(\vec{p})]}), \ldots , K(\prop{[Q_m^\varphi(\vec{p})]}), K(\prop{[C(\vec{p},\vec{0})]}) \rightarrow
K(\prop{[C(\vec{p},\vec{Q}(\vec{p}))]}).
\]

\item\label{step:axioms:5} Applying modus ponens repeatedly with (\ref{step:axioms:4}), (\ref{step:axioms:2}) and (\ref{step:axioms:3}) we can prove
$K(\prop{[C(\vec{p},\vec{Q}(\vec{p}))]})$.

\item\label{step:axioms:6} Applying Axiom~\ref{axiom:one} to (\ref{step:axioms:5}) we get $\neg K(\prop{[ 1 - C(\vec{p},\vec{Q}(\vec{p}))]})$.

\item\label{step:axioms:7} Using Axiom~\ref{axiom:Boolean} we can prove
$K(\prop{[1 - C(\vec{x},\vec{Q}(\vec{x}))]}) \rightarrow K(\prop{[1-C(\vec{p},\vec{Q}(\vec{p}))]})$.
Using our assumption $K(\prop{[1-C(\vec{x},\vec{Q}(\vec{x}))]})$ and modus ponens, we conclude
$K(\prop{[1-C(\vec{p},\vec{Q}(\vec{p}))]})$.
\end{enumerate}
Finally, (\ref{step:axioms:6}) and (\ref{step:axioms:7}) give a contradiction.
\end{proof}

\subsection{Proofs relating \texorpdfstring{$\cc{AC}^0[p]$-Frege}{AC0[p]-Frege} lower bounds, PIT, and circuit lower bounds} 
\label{sec:AC0p}
Having already discussed the corollaries and consequences of Theorem~\ref{thm:AC0}, here we merely complete its proof.

\begin{theorem} \label{thm:AC0}
Let $\mathcal{C}$ be any class of circuits closed under $\cc{AC}^0$ circuit reductions. If there is a family $K$ of polynomial-size Boolean circuits for PIT such that the PIT axioms for $K$ have polynomial-size $\mathcal{C}$-Frege proofs, 
then $\mathcal{C}$-Frege is polynomially equivalent to $\I$, and consequently polynomially equivalent to Extended Frege.
\end{theorem}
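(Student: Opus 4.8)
The plan is to follow the two-step architecture of the proof of Theorem~\ref{thm:EF}: first show that $\mathcal{C}$-Frege is p-equivalent to $\I$ provided $\mathcal{C}$-Frege can efficiently prove $Soundness_\I$, and then deduce efficient $\mathcal{C}$-Frege proofs of $Soundness_\I$ from efficient $\mathcal{C}$-Frege proofs of the PIT axioms, mirroring Lemmas~\ref{lem:soundness} and~\ref{lem:axioms}. The one genuinely new ingredient is that $K$ need not be a $\mathcal{C}$-circuit, so the formula ``$K(\underline{[D]})$'' that appears throughout the Theorem~\ref{thm:EF} argument is not a $\mathcal{C}$-formula and must be replaced by a $\mathcal{C}$-formula gadget built from auxiliary variables for the gates of $K$.

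Concretely, I would first fix the gadget encoding. For each description of an algebraic circuit $\underline{[D]}$ that is fed to $K$, introduce a fresh block of Boolean variables $\vec{g}$, one variable $g_v$ for every gate $v$ of $K$ run on input $\underline{[D]}$, together with the conjunction $\mathrm{Def}_K(\underline{[D]},\vec{g})$ of the local consistency conditions $g_v \leftrightarrow (\text{the operation of gate } v \text{ applied to its inputs})$. Each such condition is a constant-depth formula in the bits of $\underline{[D]}$ and in $\vec{g}$, so $\mathrm{Def}_K$ is an $\cc{AC}^0$-formula, and since closure of $\mathcal{C}$ under $\cc{AC}^0$ circuit reductions gives $\cc{AC}^0 \subseteq \mathcal{C}$, it is a $\mathcal{C}$-formula. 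Every occurrence of ``$K(\underline{[D]})$'' is replaced by the output variable $g_{out}$ of its own private block, and ``$\neg K(\underline{[D]})$'' by $\neg g_{out}$, with the constraints $\mathrm{Def}_K(\underline{[D]},\vec{g})$ carried along as hypotheses pulled to the front of each implication (distinct occurrences of $K$ get disjoint blocks). With this dictionary the PIT axioms, $Proof_\I$ and $Soundness_\I$ all become $\mathcal{C}$-formulas, and the hypothesis of the theorem is precisely that these gadget versions of the PIT axioms have polynomial-size $\mathcal{C}$-Frege proofs.

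Next I would re-run the two lemmas in this setting. For the analogue of Lemma~\ref{lem:axioms}, Steps~\ref{step:axioms:1}--\ref{step:axioms:7} go through essentially verbatim: each step either invokes one of the now-available gadget PIT axioms, applies modus ponens, or reflects an $\cc{AC}^0$-computable syntactic operation on a circuit encoding --- substituting $0$ or $Q_i^\varphi$ for a placeholder $\f_i$, substituting $p_i$ for $x_i$, forming $1-C$, or permuting the algebraic inputs --- and each such operation is an $\cc{AC}^0$ map on the encodings and on their attached $\mathrm{Def}_K$-blocks, whose correctness $\cc{AC}^0$-Frege, hence $\mathcal{C}$-Frege, can verify by brute force. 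For the analogue of Lemma~\ref{lem:soundness}: $\varphi \rightarrow Truth_{bool}([\varphi],\vec p)$ still has a short $\cc{AC}^0$-Frege proof because $\varphi$ is a fixed 3CNF; the gadget form of $Proof_\I([C],[\varphi])$ for a fixed refutation $C$ is no longer variable-free, but all the gate variables in its blocks are forced to their correct constant values by $\mathrm{Def}_K$ (the inputs to $K$ are now constants), so $\cc{AC}^0$-Frege derives it by propagating constants through the fixed circuit $K$ --- here, exactly as in Theorem~\ref{thm:EF}, we use that $K$ correctly computes PIT, so $K$ accepts $[C(\vec x,\vec 0)]$ and rejects $[1-C(\vec x,\vec Q(\vec x))]$; and modus ponens with the $\mathcal{C}$-Frege proof of $Soundness_\I$ then yields $\neg\varphi$. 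This shows $\mathcal{C}$-Frege p-simulates $\I$; combined with the unconditional simulations ($\I$ p-simulates EF, by Pitassi, and EF p-simulates $\mathcal{C}$-Frege since polynomial-size $\mathcal{C}$-circuits lie in $\cc{P/poly}$), all three systems are p-equivalent.

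The main obstacle is the bookkeeping around the auxiliary gate variables: one must ensure that distinct occurrences of $K$ receive disjoint, consistently named blocks; that substituting one circuit encoding into another (as in $[C(\vec p,\vec Q(\vec p))]$) induces the correct substitution on the associated $\mathrm{Def}_K$-blocks, so that the hypothesized axiom instances can actually be matched and used; and that every circuit-encoding transformation invoked is genuinely an $\cc{AC}^0$ reduction (not merely polynomial time) with an $\cc{AC}^0$-Frege-checkable correctness statement, so that the entire derivation stays inside $\mathcal{C}$-Frege. This is the ``one more standard technical device'' referred to in the statement, and it is the only substantive way the argument departs from the proof of Theorem~\ref{thm:EF}.
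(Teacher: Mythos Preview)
Your proposal is correct and follows essentially the same architecture as the paper: introduce auxiliary gate variables for each instance of $K$ with local consistency clauses (the paper calls them ``$K$-clauses,'' you call them $\mathrm{Def}_K$), rewrite $Proof_\I$ and $Soundness_\I$ accordingly, and then rerun Lemmas~\ref{lem:soundness} and~\ref{lem:axioms} inside $\mathcal{C}$-Frege. The one point the paper makes more explicit than you do is that the proof of the $\mathcal{C}$-Frege analogue of Lemma~\ref{lem:axioms} needs $m$ \emph{extra} copies of $K$ (one per clause, for Step~\ref{step:axioms:1}), and since $\mathcal{C}$-Frege cannot introduce new extension variables mid-proof, the variables for those copies must already appear in the \emph{statement} of $Soundness_\I$ as ``dummy'' $K$-clause blocks whose inputs are left unset; you gesture at this under ``bookkeeping'' but do not say it outright.
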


\newcommand{\op}[1]{\, op_{#1} \,}
Note that here we \emph{do not} need to restrict the circuit $K$ to be in the class $\mathcal{C}$. This requires one more technical device compared to the proofs in the previous section. The proof of Theorem~\ref{thm:AC0} follows the proof of Theorem~\ref{thm:EF} very closely. The main new ingredient is a folklore technical device that allows even very weak systems such as $\cc{AC}^0$-Frege to make statements about arbitrary circuits $K$, together with a careful analysis of what was needed in the proof of Theorem~\ref{thm:EF}.

\myparagraph{Encoding $K$ into weak proof systems}

Extended Frege can easily reason about arbitrary circuits $K$: for each gate $g$ of $K$ (or even each gate of each instance of $K$ in a statement, if so desired), with children $g_{\ell}, g_{r}$, EF can introduce a new variable $k_g$ together with the requirement that $k_g \leftrightarrow k_{g_{\ell}} \op{g} k_{g_{r}}$, where $\op{g}$ is the corresponding operation $g = g_{\ell} \op{g} g_{r}$ (\eg, $\land$, $\lor$, etc.). But weaker proof systems such as Frege (=$\cc{NC}^1$-Frege), $\cc{AC}^0[p]$-Frege, or $\cc{AC}^0$-Frege do not have this capability. We thus need to help them out by introducing these new variables and formulae ahead of time.

For each gate $g$, the statement $k_g \leftrightarrow k_{g_{\ell}} \op{g} k_{g_{r}}$ only involves 3 variables, and thus can be converted into a 3CNF of constant size. We refer to these clauses as the ``$K$-clauses.'' Note that the $K$-clauses do not set the inputs of $K$ to any particular values nor require its output to be any particular value. We denote the variables corresponding to $K$'s inputs as $k_{in,i}$ and the variable corresponding to $K$'s output as $k_{out}$. 

The modified statement $Proof_\I(\prop{[C]},\prop{[\varphi]})$ now takes the following form. Recall that $Proof_\I$ involves two uses of $K$: $K(\prop{[C(\vec{x},\vec{0})]})$ and $K(\prop{[1-C(\vec{x}, \vec{Q}^\varphi(\vec{x}))]})$. Each of these instances of $K$ needs to get its own set of variables, which we denote $k^{(1)}_{g}$ for gate $g$ in the first instance, and $k^{(2)}_{g}$ for gate $g$ in the second instance, together with their own copies of the $K$-clauses. For an encoding $[C]$ or $[\varphi]$, let $[C]_{i}$ denote it's $i$-th bit, which may be a constant, a propositional variable, or even a propositional formula. Then $Proof_\I(\prop{[C]}, \prop{[\varphi}])$ is
\[
\begin{split}
& \bigwedge_{g} \left(k^{(1)}_{g} \leftrightarrow k^{(1)}_{g_{\ell}} \op{g} k^{(1)}_{g_{r}}\right) 
\land \bigwedge_{i} \left(k^{(1)}_{in,i} \leftrightarrow \prop{[C(\vec{x},\vec{0})]}_{i}\right) \\
\land & \bigwedge_{g} \left(k^{(2)}_{g} \leftrightarrow k^{(2)}_{g_{\ell}} \op{g} k^{(2)}_{g_{r}}\right) 
\land \bigwedge_{i} \left(k^{(2)}_{in,i} \leftrightarrow \prop{[1-C(\vec{x}, \vec{Q}^{\varphi}(\vec{x}))]}_{i}\right) \\
\rightarrow & k^{(1)}_{out} \land k^{(2)}_{out} \\
\end{split}
\]
Throughout, we use the same notation $Proof_\I(\prop{[C]}, \prop{[\varphi}])$ as before to mean this modified statement (we will no longer be referring to the original, EF-style statement). The modified statement $Soundness_\I(\prop{[C]}, \prop{[\varphi]}, \prop{\vec{p}})$ will now take the form
\[
\left( (\text{dummy statements}) \land Proof_\I(\prop{[C]}, \prop{[\varphi}]) \right)\rightarrow \neg Truth_{bool}(\prop{[\varphi]}, \vec{p}),
\]
using the new version of $Proof_\I$. Here ``dummy statements'' refers to certain statements that we will explain in Lemma~\ref{lem:AC0soundness}. These dummy statements will only involve variables that do not appear in the rest of $Soundness_\I$, and therefore will be immediately seen not to affect its truth or provability.

\myparagraph{The proofs}

Lemmata~\ref{lem:AC0soundness} and \ref{lem:AC0axioms} are the $\cc{AC}^0$-analogs of Lemmata~\ref{lem:soundness} and \ref{lem:axioms}, respectively. The proof of Lemma~\ref{lem:AC0soundness} will cause no trouble, and the proof of Lemma~\ref{lem:AC0axioms} will need one additional technical device (the ``dummy statements'' above). 

Before getting to their proofs, we state the main additional lemma that we use to handle the new $K$ variables. We say that a variable $k^{(i)}_{in,j}$ corresponding to an input gate of $K$ is \definedWord{set to $\psi$} by a propositional statement if $k^{(i)}_{in,j} \leftrightarrow \psi$ occurs in the statement.

\begin{lemma} \label{lem:K}
Let $(\varphi_n)$ be a sequence of tautologies on $\poly(n)$ variables, including any number of copies of the $K$ variables, of the form $\varphi = \left(\left(\bigwedge_{i} \alpha_i\right) \rightarrow \omega\right)$. Let $\vec{p}$ denote the other (non-$K$) variables. Suppose that 1) there are at most $O(\log n)$ non-$K$ variables in $\varphi$, 2) for each copy of $K$, the corresponding $K$-clauses appear amongst the $\alpha_i$, 3) the only $K$ variables that appear in $\omega$ are output variables $k^{(i)}_{out}$, and 4) if $k^{(i)}_{out}$ appears in $\omega$, then all the inputs to $K^{(i)}$ are set to formulas that syntactically depend on at most $\vec{p}$. 

Then there is a $\poly(n)$-size $\cc{AC}^0$-Frege proof of $\varphi$. 
\end{lemma}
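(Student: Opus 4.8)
\quad The plan is to prove $\varphi$ by an explicit $\poly(n)$-size case analysis over the non-$K$ variables $\vec p$, exploiting the fact that a complete assignment to $\vec p$, together with the $K$-clauses, determines the value of every $K$-variable that can possibly influence $\omega$. Since $|\vec p| = O(\log n)$ there are only $2^{O(\log n)} = \poly(n)$ such assignments, so this is affordable for $\cc{AC}^0$-Frege. Concretely, I would first derive the standard polynomial-size, constant-depth tautology $\bigvee_{\vec a \in \{0,1\}^{|\vec p|}} \bigwedge_j p_j^{a_j}$ (writing $p^1 \equiv p$, $p^0 \equiv \neg p$), then for each $\vec a$ derive the sequent $\left(\bigwedge_j p_j^{a_j}\right) \rightarrow \varphi$, and finally combine these $\poly(n)$ implications with the disjunction by repeated Or-left (or cuts) to obtain $\varphi$.

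For a fixed $\vec a$, I would derive $\left(\bigwedge_j p_j^{a_j}\right) \rightarrow \varphi$ in three stages, all carried out under the hypothesis $\vec p = \vec a$ together with the hypotheses $\bigwedge_i \alpha_i$. \emph{(i) Evaluate the input settings.} For each copy $K^{(i)}$ whose output occurs in $\omega$, condition 4 says each input variable $k^{(i)}_{in,j}$ is set, among the $\alpha$'s, to a formula $\psi_{i,j}$ over the variables $\vec p$ alone; using the standard bottom-up ``evaluation'' derivation (for each subformula $\chi$ of $\psi_{i,j}$ derive $\chi \leftrightarrow \chi(\vec a)$ by structural induction, each inference a constant-size tautology), $\cc{AC}^0$-Frege derives $k^{(i)}_{in,j} \leftrightarrow c_{i,j}$ with $c_{i,j} = \psi_{i,j}(\vec a)$, using only formulas of depth $\max_{i,j}\depth(\psi_{i,j}) + O(1) = O(1)$. \emph{(ii) Propagate through $K$.} Processing the gates of each such $K^{(i)}$ in topological order, from $k^{(i)}_{g_\ell} \leftrightarrow c_{g_\ell}$, $k^{(i)}_{g_r} \leftrightarrow c_{g_r}$ and the constant-size $K$-clause $k^{(i)}_g \leftrightarrow k^{(i)}_{g_\ell} \op{g} k^{(i)}_{g_r}$ (present among the $\alpha$'s by condition 2), derive $k^{(i)}_g \leftrightarrow c_g$ with $c_g = c_{g_\ell}\op{g}c_{g_r}$; after $O(|K^{(i)}|)$ steps every gate is pinned, in particular $k^{(i)}_{out} \leftrightarrow c^{(i)}_{out}$, the value of $K^{(i)}$ on the pinned inputs. \emph{(iii) Evaluate and discharge.} By condition 3 the only $K$-variables in $\omega$ are such output variables, so $\omega$ is now, modulo the derived equivalences, a formula over $\vec p$ and the constants $c^{(i)}_{out}$; evaluate it, and likewise each remaining $\alpha_i$, bottom-up to a Boolean constant. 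Since $\varphi = \left(\bigwedge_i \alpha_i \rightarrow \omega\right)$ is a tautology and the pinned values satisfy the $K$-clauses and input settings, this constant is true, and $\cc{AC}^0$-Frege concludes the case.

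The one genuine subtlety — and the reason the statement of $Soundness_\I$ carries its ``dummy statements'' — is condition 2's insistence that the $K$-clauses of \emph{every} copy of $K$ sit among the $\alpha$'s, including copies whose output never reaches $\omega$ and whose inputs are therefore \emph{not} pinned by $\vec p$; case-splitting on those free inputs would cost $2^{\poly(n)}$. The fix is that such a copy can be struck from the antecedent by weakening, $\left(\bigwedge_{i\in S}\alpha_i \rightarrow \omega\right)$ implying $\varphi$ for $S$ the ``essential'' conjuncts, and the ``dummy statements'' are exactly what makes this clean: for each such copy they adjoin its $K$-clauses together with a pinning of that copy's inputs to \emph{fresh} variables occurring nowhere else, so the hypotheses of the lemma are met for the augmented formula while the fresh-variable block is visibly irrelevant both to truth and to provability. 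I expect this bookkeeping to be the main obstacle: partitioning the $\alpha_i$ into ``$\vec p$-pinned'' versus ``dummy'' uniformly in $n$, checking that after weakening each case-wise implication is an honest tautology amenable to bottom-up evaluation, and verifying that every formula in the assembled derivation has depth $O(1)$ and total size $\poly(n)$ — the size bound being immediate, since there are $\poly(n)$ cases and each contributes $\poly\!\left(|K|,\ \textstyle\sum_{i,j}|\psi_{i,j}|,\ |\omega|,\ \sum_i|\alpha_i|\right) = \poly(n)$ lines.
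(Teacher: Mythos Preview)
Your proposal is correct and follows essentially the same approach as the paper's proof sketch: brute-force over the $2^{O(\log n)}=\poly(n)$ assignments to $\vec p$, propagate values gate-by-gate through each relevant copy of $K$ using the $K$-clauses, and dispose of the ``dummy'' copies (those whose output does not appear in $\omega$) by weakening, since their $K$-clauses sit only as disjuncts of $\neg\varphi$. One small inaccuracy in your side remark: the paper's ``dummy statements'' are just the bare $K$-clauses for the extra copies, with inputs and outputs \emph{not} set to anything---there is no pinning to fresh variables---but this does not affect your argument for the lemma itself.
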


\begin{proof}[Proof sketch]
The basic idea is that $\cc{AC}^0$-Frege can brute force over all $\poly(n)$-many assignments to the $O(\log n)$ non-$K$ variables, and for each such assignment can then just evaluate each copy of $K$ gate by gate to verify the tautology. Any copy $K^{(i)}$ of $K$ all of whose input variables are unset must not affect the truth of $\varphi$, since none of the $k^{(i)}$ variables can appear in the consequent $\omega$ of $\varphi$. In fact, for such copies of $K$, the $K$-clauses merely appear as disjuncts of $\varphi$, since it then takes the form $\varphi = \bigvee_{i} (\neg \alpha_i) \vee \omega = \left(\bigvee_{g} \neg (k^{(i)}_{g} \leftrightarrow k^{(i)}_{g_{\ell}} \op{g} k^{(i)}_{g_r}) \right) \vee \left(\bigvee_{\text{remaining clauses $i$}} \neg \alpha_i \right) \vee \omega$. Thus, if $\cc{AC}^0$-Frege can prove that the rest of $\varphi$, namely $\left(\bigvee_{\text{remaining clauses $i$}} \neg \alpha_i \right) \vee \omega$ is a tautology, then it can prove that $\varphi$ is a tautology.
\end{proof}

Now we state the analogs of Lemmata~\ref{lem:soundness} and \ref{lem:axioms} for $\mathcal{C}$-Frege. Because of the similarity of the proofs to the previous case, we merely indicate how their proofs differ from the Extended Frege case.

\begin{lemma}[$\cc{AC}^0$ analog of Lemma~\ref{lem:soundness}] \label{lem:AC0soundness}
Let $\mathcal{C}$ be a class of circuits closed under $\cc{AC}^0$ circuit reductions. If there is a family $K$ of polynomial-size Boolean circuits computing PIT, such that the PIT axioms for $K$ have polynomial-size $\mathcal{C}$-Frege proofs, 
then $\mathcal{C}$-Frege is polynomially equivalent to $\I$.
\end{lemma}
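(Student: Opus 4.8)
The plan is to prove the two directions of the equivalence separately. One direction requires no new work: $\I$ p-simulates Extended Frege unconditionally (Pitassi's completeness argument, recall Proposition~\ref{prop:pitassi} and the surrounding discussion), and for any standard class $\mathcal{C}\subseteq\cc{P/poly}$, $\cc{P/poly}$-Frege $=$ EF p-simulates $\mathcal{C}$-Frege; composing, $\I$ p-simulates $\mathcal{C}$-Frege. So the content is the reverse direction: under the hypothesis that the PIT axioms for $K$ have polynomial-size $\mathcal{C}$-Frege proofs, $\mathcal{C}$-Frege p-simulates $\I$. I would obtain this by mimicking the two-step structure used for EF (Lemmata~\ref{lem:soundness} and \ref{lem:axioms}), but with one extra device to cope with the fact that $K$ itself need not lie in $\mathcal{C}$.

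The extra device is the auxiliary-variable encoding of $K$ described just before the lemma statement. Since a weak system cannot in general even write down a $\mathcal{C}$-formula equivalent to the Boolean circuit $K$, I would replace each occurrence of ``$K(\cdot)$'' inside $Proof_\I$ by a fresh block of variables $k_g$ (one per gate $g$ of that copy of $K$) together with the constant-size ``$K$-clauses'' asserting $k_g \leftrightarrow k_{g_\ell}\op{g}k_{g_r}$, with the input variables $k_{in,i}$ tied to the appropriate bits $\prop{[C(\vec{x},\vec{0})]}_i$ and $\prop{[1-C(\vec{x},\vec{Q}^\varphi(\vec{x}))]}_i$ and the conclusion becoming $k^{(1)}_{out}\wedge k^{(2)}_{out}$; this is exactly the modified $Proof_\I$ and $Soundness_\I$ displayed above. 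With this reformulation in hand, I would invoke Lemma~\ref{lem:AC0axioms} (the $\cc{AC}^0$-analog of Lemma~\ref{lem:axioms}, proved next), which says precisely that if the PIT axioms for $K$ have polynomial-size $\mathcal{C}$-Frege proofs then $\mathcal{C}$-Frege efficiently proves this modified $Soundness_\I$ --- including its harmless ``dummy statements,'' which involve only fresh variables and can be discharged by setting them to true.

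Given efficient $\mathcal{C}$-Frege proofs of the modified $Soundness_\I$, I would then mimic the proof of Lemma~\ref{lem:soundness}. Given an $\I$-refutation $C$ of an unsatisfiable $3$CNF $\varphi$ on variables $\vec{p}$, I would: (1) prove $\varphi \rightarrow Truth_{bool}([\varphi],\vec{p})$ in $\mathcal{C}$-Frege --- here $\varphi$ is fixed, so $[\varphi]$ is variable-free and $Truth_{bool}([\varphi],\vec{p})$ is a bounded-depth formula in $\vec{p}$, and efficient provability of this implication in even $\cc{AC}^0$-Frege follows from the natural encoding, deferred to Section~\ref{sec:encoding} exactly as in the EF case; (2) derive, for the now variable-free strings $[C(\vec{x},\vec{0})]$ and $[1-C(\vec{x},\vec{Q}^\varphi(\vec{x}))]$, all the corresponding $K$-clauses and then evaluate the two copies of $K$ gate by gate to conclude $k^{(1)}_{out}\wedge k^{(2)}_{out}$ --- this is where correctness of $K$ on PIT instances is used, since $C$ being a genuine $\I$-refutation forces $C(\vec{x},\vec{0})\equiv 0$ and $C(\vec{x},\vec{Q}^\varphi(\vec{x}))\equiv 1$, so $K$ indeed accepts both inputs; (3) apply modus ponens with the modified $Soundness_\I$ (all of whose ``$[C]$'' and ``$[\varphi]$'' inputs are constants here, so only $\vec{p}$ remains free) to obtain $\neg Truth_{bool}([\varphi],\vec{p})$; and (4) apply modus ponens with the contrapositive of (1) to obtain $\neg\varphi$, a $\mathcal{C}$-Frege refutation of $\varphi$ of size polynomial in $|C|$. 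All the intervening bookkeeping stays inside $\mathcal{C}$-Frege because $\mathcal{C}$ is closed under $\cc{AC}^0$ reductions.

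I expect the main obstacle to be the interface between the auxiliary-variable encoding of $K$ and step (2): one must check that, after substituting the concrete bit-strings for $[C]$ and $[\varphi]$, the hypotheses of the reformulated $Proof_\I$ (the $K$-clauses) can actually be \emph{discharged} efficiently in $\mathcal{C}$-Frege by gate-by-gate evaluation of the resulting variable-free circuits, and --- symmetrically --- that the reformulated $Soundness_\I$ remains efficiently provable from the PIT axioms despite $K\notin\mathcal{C}$, which is exactly what the ``dummy statements'' and the brute-force-over-$O(\log n)$-variables argument of Lemma~\ref{lem:K} are designed to guarantee. A secondary point to verify is that the encoding results behind $\varphi \rightarrow Truth_{bool}([\varphi],\vec{p})$ go through with bounded-depth proofs; I would handle this with the same encoding choices as in the EF proof, isolated in Section~\ref{sec:encoding}, noting that for a fixed $3$CNF $\varphi$ this implication involves only the variables $\vec{p}$ and is provable by brute force up to the usual depth overhead.
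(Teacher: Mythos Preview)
Your proposal is correct and follows essentially the same route as the paper: reduce to the four-step argument of Lemma~\ref{lem:soundness}, handle the fact that $K\notin\mathcal{C}$ via the auxiliary gate-variables and $K$-clauses, appeal to Lemma~\ref{lem:truth} for step~(1) and to Lemma~\ref{lem:K} for step~(2), and note that steps~(3)--(4) are just modus ponens.

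One small point worth flagging. The paper's proof of this lemma, as written, literally just says ``mimic the proof of Lemma~\ref{lem:soundness}'' and then checks only the two nontrivial steps; it never explicitly passes from the stated hypothesis (efficient $\mathcal{C}$-Frege proofs of the \emph{PIT axioms}) to efficient $\mathcal{C}$-Frege proofs of $Soundness_{\I}$, which is what the mimicked argument actually consumes. You make that bridge explicit by invoking Lemma~\ref{lem:AC0axioms} first. That is the right thing to do given the hypothesis as stated, and it matches how the paper assembles Theorem~\ref{thm:AC0} from the two lemmata; the paper's own proof of Lemma~\ref{lem:AC0soundness} is arguably eliding this step (its stated hypothesis is ``PIT axioms'' but its proof behaves as if the hypothesis were ``$Soundness_{\I}$'', i.e., as a literal analog of Lemma~\ref{lem:soundness}). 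Your write-up is in that sense more complete.
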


\begin{proof}
Mimic the proof of Lemma~\ref{lem:soundness}. The third and fourth steps of that proof are just modus ponens, so we need only check the first two steps. 

The first step is to show that $\mathcal{C}$-Frege can prove $\varphi \rightarrow Truth_{bool}([\varphi], \prop{\vec{p}})$. This follows directly from the details of the encoding of $[\varphi]$ and the full definition of $Truth_{bool}$; see Lemma~\ref{lem:truth}.

The second step is to show that $\mathcal{C}$-Frege can prove $Proof_\I([C],[\varphi])$ for a fixed $C,\varphi$. In Lemma~\ref{lem:soundness}, this followed because this statement was variable-free. Now this statement is no longer variable-free, since it involve two copies of $K$ and the corresponding variables and $K$-clauses. However, $Proof_\I([C],[\varphi])$ satisfies the requirements of Lemma~\ref{lem:K}, and applying that lemma we are done.
\end{proof}

\begin{lemma}[$\cc{AC}^0$ analog of Lemma~\ref{lem:axioms}] \label{lem:AC0axioms}
Let $\mathcal{C}$ be a class of circuits closed under $\cc{AC}^0$ circuit reductions. If $\mathcal{C}$-Frege can efficiently prove the PIT axioms for some polynomial-sized family of circuits $K$ computing PIT, then $\mathcal{C}$-Frege can efficiently prove $Soundness_{\I}$ (for that same $K$).
\end{lemma}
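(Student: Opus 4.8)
The plan is to replay, essentially line for line, the seven-step derivation from the proof of Lemma~\ref{lem:axioms}, the only genuinely new issue being the bookkeeping of the auxiliary variables for the gates of $K$. Recall from Section~\ref{sec:AC0p} that in the $\mathcal{C}$-Frege setting every syntactic occurrence of ``$K(\cdots)$'' must carry its own fresh block of gate-variables $k^{(i)}_g$, its own copy of the constant-size $K$-clauses (one three-variable clause group per gate, asserting that $k^{(i)}_g$ equals the corresponding operation applied to its children), and clauses setting the input variables $k^{(i)}_{in,j}$ to the intended bits or formulas. The derivation of Lemma~\ref{lem:axioms} uses $K$ on $O(m)$ circuits: the two circuits already named inside $Proof_\I$; the circuits $[Q_i^\varphi(\vec p)]$ for $i\in[m]$; $[C(\vec p,\vec 0)]$; the intermediate circuits $[C(\vec p,Q_1(\vec p),\dots,Q_i(\vec p),0,\dots,0)]$ produced by the iterated applications of Axioms~\ref{axiom:subzero} and~\ref{axiom:perm}; and $[1-C(\vec p,\vec Q(\vec p))]$ together with its $\vec x$-analogue. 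So I would take the ``dummy statements'' in the antecedent of $Soundness_\I$ to be exactly the union of the $K$-clause blocks (with unset inputs) for all of these copies. Since those variables occur nowhere else in $Soundness_\I$, adjoining them on the left of the implication changes neither its truth value nor its provability, as anticipated in the discussion preceding Lemma~\ref{lem:K}, and since the number of copies is $O(m)$ the size of $Soundness_\I$ stays polynomial.

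With the copies declared, I would walk through the seven steps. Steps~\ref{step:axioms:2} and~\ref{step:axioms:5} are pure modus ponens; Steps~\ref{step:axioms:3},~\ref{step:axioms:6}, and~\ref{step:axioms:7} each consist of one application of a PIT axiom (Axiom~\ref{axiom:Boolean}, Axiom~\ref{axiom:one}, Axiom~\ref{axiom:Boolean}, respectively) followed by modus ponens, and Step~\ref{step:axioms:4} is an $m$-fold iteration of Axioms~\ref{axiom:subzero} and~\ref{axiom:perm}. By hypothesis each of these axiom instances---written in its gate-variable form, referring to copies among the dummy statements---has a polynomial-size $\mathcal{C}$-Frege proof, so these all go through verbatim; the only care needed is to thread the copy indices so that, e.g., the statement $\neg K[1-C(\vec p,\vec Q(\vec p))]$ produced in Step~\ref{step:axioms:6} (via Axiom~\ref{axiom:one}) and the statement $K[1-C(\vec p,\vec Q(\vec p))]$ produced in Step~\ref{step:axioms:7} (via Axiom~\ref{axiom:Boolean}) refer to the \emph{same} copy of $K$, so that their conjunction is literally a propositional contradiction, and likewise that the antecedents matched by the various modus ponens steps are syntactically the copies we already have on hand. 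The one step that needs something beyond the Extended Frege proof is Step~\ref{step:axioms:1}, where we must prove $Truth_{bool}(\prop{[\kappa_i]},\vec p)\rightarrow K(\prop{[Q_i^\varphi(\vec p)]})$ in $\mathcal{C}$-Frege. In Lemma~\ref{lem:axioms} this was done by brute force because $[\kappa_i]$ and $[Q_i^\varphi(\vec p)]$ depend on only $O(1)$ variables (constantly many of $[\varphi]$, plus the three $p$'s of clause $i$); here the tautology additionally contains one copy of $K$, whose inputs are set to constant-size formulas in those $O(1)$ variables. That is exactly the hypothesis of Lemma~\ref{lem:K} (well within its $O(\log n)$ allowance for non-$K$ variables), so---using that $\mathcal{C}$-Frege p-simulates $\cc{AC}^0$-Frege since $\mathcal{C}$ is closed under $\cc{AC}^0$ reductions---we get the desired polynomial-size $\mathcal{C}$-Frege proof. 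As in Lemma~\ref{lem:axioms}, this is also the place where we invoke the assumption that $K$ \emph{correctly} computes PIT, since that is what makes the implication an actual tautology.

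Putting the pieces together, Steps~\ref{step:axioms:6} and~\ref{step:axioms:7} hand us $K[1-C(\vec p,\vec Q(\vec p))]$ and its negation (about the same copy), hence a $\mathcal{C}$-Frege refutation of the antecedent of $Soundness_\I$; concatenating the fixed polynomial-size subproofs for Steps~\ref{step:axioms:1}--\ref{step:axioms:7} yields a polynomial-size $\mathcal{C}$-Frege proof of $Soundness_\I$ for the same $K$, which combined with Lemma~\ref{lem:AC0soundness} completes the proof of Theorem~\ref{thm:AC0}. I expect the main obstacle here to be organizational rather than mathematical: one must fix in advance the finite list of $K$-copies used anywhere in the derivation, choose the dummy statements to contain precisely those copies' clauses, and then thread the copy indices so that every axiom instance invoked is stated over copies that already exist and so that the statements joined by modus ponens---and the two contradictory statements at the end---are syntactically about the same copy, all while keeping the copy count (and hence the size of $Soundness_\I$ and of the proof) polynomial. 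Everything else is a faithful transcription of the Extended Frege argument of Lemma~\ref{lem:axioms}.
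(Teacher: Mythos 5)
Your proposal is correct and follows essentially the same approach as the paper: declare dummy $K$-clause blocks (with unset inputs) as extra antecedents in $Soundness_\I$ so the $K$-gate variables exist from the start, observe that every step of the Lemma~\ref{lem:axioms} derivation is either a PIT-axiom instance or modus ponens except Step~\ref{step:axioms:1}, and handle that step via Lemma~\ref{lem:K} because only constantly many non-$K$ variables appear. Your inventory of $K$-copies is somewhat more explicit than the paper's (which nominally allocates only $m$ extra copies but implicitly uses a few more for the intermediate circuits in Steps~\ref{step:axioms:3}--\ref{step:axioms:7}), but the count is $O(m)$ either way, so the difference is purely organizational.
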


\begin{proof}
We mimic the proof of Lemma~\ref{lem:axioms}. In steps (\ref{step:axioms:1}), (\ref{step:axioms:2}), and (\ref{step:axioms:4}) of that proof we used $m$ additional copies of $K$, where $m$ is the number of clauses in the CNF $\varphi$ encoded by $\prop{[\varphi]}$, and thus $m \leq \poly(n)$. In order to talk about these copies of $K$ in $\mathcal{C}$-Frege, however, the $K$ variables must already be present in the statement we wish to prove in $\mathcal{C}$-Frege. The ``dummy statements'' in the new version of soundness are the $K$-clauses---with inputs and outputs not set to anything---for each of $m$ new copies of $K$, which we denote $K^{(3)}, \dotsc, K^{(m+2)}$ (recall that the first two copies $K^{(1)}$ and $K^{(2)}$ are already used in the statement of $Proof_\I$). We won't actually need these clauses anywhere in the proof, we just need their variables to be present from the beginning.

Starting with $Truth_{bool}(\prop{[\varphi]},\vec{p})$, $K^{(1)}(\prop{[C(\vec{x},\vec{0})]})$, $K^{(2)}(\prop{[1-C(\vec{x},\vec{Q}(\vec{x}))]})$ we'll derive a contradiction. 
The only step of the proof of Lemma~\ref{lem:axioms} that was not either the use of an axiom or modus ponens was step (\ref{step:axioms:1}), so it suffices to verify that this can be carried out in $\cc{AC}^0$-Frege with the $K$-clauses.

Step (\ref{step:axioms:1}) was to show for every $i \in [m]$, $Truth_{bool}([\varphi],\vec{p}) \rightarrow K(\prop{[Q_i^\varphi(\vec{p})]})$, where $Q_i^\varphi$ is the low degree polynomial corresponding to the clause, $\kappa_i$, of $\varphi$. Note that, as $\varphi$ is not a fixed formula but is determined by the propositional variables encoding $\prop{[\varphi]}$, the encoding $\prop{[Q_i^\varphi]}$ depends on a subset of these variables. 

$Truth_{bool}(\prop{[\varphi]},\vec{p})$ states that each clause $\kappa_i$ in $\varphi$ evaluates
to true under $\vec{p}$. It is a tautology that if $\kappa_i$
evaluates to true under $\vec{p}$, then $Q_i^{\varphi}$ evaluates to $0$ at $\vec{p}$. Since $K$ correctly computes PIT, \begin{equation} \label{eqn:truthAC0}
Truth_{bool}(\prop{[\kappa_i]},\vec{p}) \rightarrow K^{(i+2)}(\prop{[Q_i^\varphi(\vec{p})]})\tag{**}
\end{equation}
is a tautology. Furthermore, although both the encoding $\prop{[\kappa_i]}$ and $\prop{[Q_i^{\varphi}]}$ depend on the propositional variables encoding $\prop{[\varphi]}$, since we assume that $\varphi$ is a 3CNF, these only depend on \emph{constantly many} of the variables encoding $\prop{[\varphi]}$. Writing out (\ref{eqn:truthAC0}) it has the form 
\[
Truth_{bool} \rightarrow \left(\text{$K^{(i+2)}$-clauses } \land (\text{ setting inputs of $K^{(i+2)}$ to $\prop{[Q_i^\varphi(\vec{p})]}$}) \rightarrow k^{(i+2)}_{out}  \right),
\]
which is equivalent to 
\[
Truth_{bool} \land (K^{(i+2)}\text{-clauses}) \land (\text{ setting inputs of $K^{(i+2)}$ to $\prop{[Q_i^\varphi(\vec{p})]}$}) \rightarrow k^{(i+2)}_{out}. 
\]
Thus (\ref{eqn:truthAC0}) satisfies the conditions of Lemma~\ref{lem:K} and has a short $\cc{AC}^0$-Frege proof. Since $Truth_{bool}(\prop{[\varphi]}, \vec{p})$ is defined as $\bigwedge_i Truth_{bool}(\prop{[\kappa_i]}, \vec{p})$ (see Section~\ref{sec:encoding}), we can then derive 
\[
Truth_{bool}(\prop{[\varphi]},\vec{p}) \rightarrow K^{(i+2)}(\prop{[Q_i^\varphi(\vec{p})]}),
\]
as desired.
\end{proof}

\subsection{Some details of the encodings} \label{sec:encoding}
For an $\leq m$-clause, $\leq n$-variable 3CNF $\varphi = \kappa_1 \land \dotsb \land \kappa_m$, its encoding is a Boolean string of length $3m( \lceil \log_2(n) \rceil+1)$. Each literal $x_i$ or $\neg x_i$ is encoded as the binary encoding of $i$ ($\lceil \log_2(n) \rceil$ bits) plus a single other bit indicating whether the literal is positive (1) or negative (0). The encoding of a single clause is just the concatenation of the encodings of the three literals, and the encoding of $\varphi$ is the concatenation of these encodings.

We define
\[
Truth_{bool,n,m}(\prop{[\varphi]}, \vec{p}) \defeq \bigwedge_{i=1}^{m} Truth_{bool,n}(\prop{[\kappa_i]}, \vec{p}).
\]

For a single 3-literal clause $\kappa$, we define $Truth_{bool,n}(\prop{[\kappa]}, \vec{p})$ as follows. For an integer $i$, let $[i]$ denote the standard binary encoding of $i-1$ (so that the numbers $1,\dotsc,2^k$ are put into bijective correspondence with $\{0,1\}^{k}$). Let $\prop{[\kappa]} = \vec{q_1} s_1 \vec{q_2} s_2 \vec{q_3} s_3$ where each $s_i$ is the sign bit (positive/negative) and each $\vec{q_i}$ is a length-$\lceil \log_2 n \rceil$ string of variables corresponding to the encoding of the index of a variable. We write $\vec{q} = [k]$ as shorthand for $\bigwedge_{i=1}^{\lceil \log_2 n \rceil} (q_i \leftrightarrow [k]_i)$, where $x \leftrightarrow y$ is shorthand for $(x \land y) \lor (\neg x \land \neg y)$. Finally, we define:
\[
Truth_{bool,n}(\prop{[\kappa]}, \vec{p}) \defeq \bigvee_{j=1}^{3} \bigvee_{i=1}^{n} (\vec{q}_j = [i] \land (p_i \leftrightarrow s_j)).
\]
(Hereafter we drop the subscripts $n,m$; they should be clear from context.)

\begin{lemma} \label{lem:truth}
For any 3CNF $\varphi$ on $n$ variables, there are $\poly(n)$-size $\cc{AC}^0$-Frege proofs of $\varphi(\vec{p}) \rightarrow Truth_{bool}([\varphi], \vec{p})$.
\end{lemma}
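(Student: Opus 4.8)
\textbf{Proof proposal for Lemma~\ref{lem:truth}.}
The plan is to exploit that, because $\varphi$ is a \emph{fixed} 3CNF, its encoding $[\varphi]$ — and hence each clause-encoding $[\kappa_i] = [a_{i,1}]\,s_{i,1}\,[a_{i,2}]\,s_{i,2}\,[a_{i,3}]\,s_{i,3}$, where $\kappa_i = \ell_{i,1}\lor\ell_{i,2}\lor\ell_{i,3}$, the $j$-th literal of $\kappa_i$ is on variable $p_{a_{i,j}}$ with sign bit $s_{i,j}$ — is a constant Boolean string. Consequently, when we instantiate the definition of Section~\ref{sec:encoding}, every ``index comparison'' gadget $\vec{q}_j = [k]$ occurring in $Truth_{bool,n}([\kappa_i],\vec{p})$ has its $\vec{q}_j$ slot filled by the constant string $[a_{i,j}]$, so $[a_{i,j}] = [k]$ is a variable-free formula of size $O(\log n)$ which is a provable $\cc{AC}^0$-Frege theorem iff $k = a_{i,j}$ and provably false otherwise (just evaluate it bit by bit). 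The only $\vec{p}$-variables appearing anywhere in $Truth_{bool}([\varphi],\vec{p}) = \bigwedge_{i=1}^m Truth_{bool,n}([\kappa_i],\vec{p})$ are inside the atoms $p_k \leftrightarrow s_{i,j}$, and since $s_{i,j}$ is a constant bit, $\cc{AC}^0$-Frege proves $p_k \leftrightarrow s_{i,j}$ equivalent to the literal $p_k$ (if $s_{i,j}=1$) or $\neg p_k$ (if $s_{i,j}=0$).

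Next I would establish, for each fixed clause $\kappa_i$, a constant-size $\cc{AC}^0$-Frege derivation of $\kappa_i \rightarrow Truth_{bool,n}([\kappa_i],\vec{p})$. For each literal position $j \in \{1,2,3\}$: from $\ell_{i,j}$, $\cc{AC}^0$-Frege derives $p_{a_{i,j}} \leftrightarrow s_{i,j}$ (by the remark above); it also derives the variable-free tautology $[a_{i,j}] = [a_{i,j}]$; conjoining these and then applying OR-introduction to hit exactly the $(j,k=a_{i,j})$ disjunct of $Truth_{bool,n}([\kappa_i],\vec{p}) = \bigvee_{j'=1}^3\bigvee_{k=1}^n\big([a_{i,j'}]=[k] \land (p_k \leftrightarrow s_{i,j'})\big)$ yields $Truth_{bool,n}([\kappa_i],\vec{p})$. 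Thus $\ell_{i,j} \rightarrow Truth_{bool,n}([\kappa_i],\vec{p})$ for each $j$, and a three-way OR-elimination on $\kappa_i = \ell_{i,1}\lor\ell_{i,2}\lor\ell_{i,3}$ gives $\kappa_i \rightarrow Truth_{bool,n}([\kappa_i],\vec{p})$. (Note one does not even need to prune the provably-false disjuncts for this direction — only OR-introduction is used.) Each such derivation has size and depth bounded by an absolute constant.

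Finally I would assemble the proof: assume $\varphi = \bigwedge_{i=1}^m \kappa_i$; by $m$ applications of conjunction-elimination derive each $\kappa_i$; by modus ponens with the previous step derive each $Truth_{bool,n}([\kappa_i],\vec{p})$; by conjunction-introduction over $i=1,\dots,m$ derive $\bigwedge_{i=1}^m Truth_{bool,n}([\kappa_i],\vec{p}) = Truth_{bool}([\varphi],\vec{p})$; and discharge the assumption (deduction theorem for Frege) to obtain $\varphi(\vec{p}) \rightarrow Truth_{bool}([\varphi],\vec{p})$. The total size is $\poly(n)$ (there are $m = \poly(n)$ clause blocks, each of size $O(n\log n)$, and $O(m)$ constant-size derivation fragments), and — crucially — every formula appearing has depth bounded by an absolute constant independent of $n$ (the input $\varphi$ has depth $2$; $Truth_{bool,n}([\kappa_i],\vec{p})$ is an OR of ORs of ANDs of depth-$2$ biconditional gadgets, hence depth $O(1)$), so this is a legitimate $\cc{AC}^0$-Frege proof. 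The only place requiring genuine (if routine) care is verifying that $\cc{AC}^0$-Frege can brute-force evaluate the constant-size, variable-free index-comparison gadgets and can perform the OR-introduction into a polynomial-width disjunction while keeping depth bounded; this is the standard fact that $\cc{AC}^0$-Frege proves $A \rightarrow \bigvee_k B_k$ once it has derived some $B_k$, together with provability of variable-free tautologies of bounded size. I expect this bookkeeping — not any conceptual difficulty — to be the main obstacle.
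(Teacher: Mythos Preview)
Your proposal is correct and takes essentially the same approach as the paper: both hinge on the observation that, since $\varphi$ is fixed, the index-comparison gadgets $\vec{q}_j = [k]$ become variable-free constants and the sign bits $s_{i,j}$ are constants, so each clause-level implication is a routine constant-depth derivation. The paper's write-up is marginally slicker---it observes that after simplifying constants ($A\lor 0 \to A$, $A\land 0 \to 0$, etc.) the formula $Truth_{bool}([\kappa],\vec{p})$ becomes \emph{syntactically identical} to $\kappa(\vec{p})$, reducing the implication to $A\to A$---whereas you derive it directly via OR-introduction into the unsimplified disjunction; but this is a purely stylistic difference. One small slip: you say each clause derivation has ``size bounded by an absolute constant,'' but the target formula $Truth_{bool,n}([\kappa_i],\vec{p})$ already has size $\Theta(n\log n)$; your final tally of $O(n\log n)$ per clause is the correct figure.
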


\begin{proof}
In fact, we will see that for a fixed clause $\kappa$, after simplifying constants---that is, $\varphi \land 1$ and $\varphi \lor 0$ both simplify to $\varphi$, $\varphi \land 0$ simplifies to $0$, and $\varphi \lor 1$ simplifies to $1$---that $Truth_{bool}([\kappa], \vec{p})$ in fact becomes \emph{syntactically identical} to $\kappa(\vec{p})$. By the definition of $Truth_{bool}([\varphi], \vec{p})$, we get the same conclusion for any fixed CNF $\varphi$. Simplifying constants can easily be carried out in $\cc{AC}^0$-Frege. 

For a fixed $\kappa$, $\vec{q}_j$ and $s_j$ become fixed to constants for $j=1,2,3$. Denote the indices of the three variables in $\kappa$ by $i_1, i_2, i_3$. The only variables left in the statement $Truth_{bool}([\kappa], \vec{p})$ are $\vec{p}$. Since the $\vec{q}_{j}$ and $[i]$ are all fixed, every term in $\bigvee_{i}( \vec{q}_j = [i] \land (p_i \leftrightarrow s_j))$ except for the $i_j$ term simplifies to $0$, so this entire disjunction simplifies to $(p_{i_j} \leftrightarrow s_j)$. Since the $s_j$ are also fixed, if $s_j=1$ then $(p_{i_j} \leftrightarrow s_j)$ simplifies to $p_{i_j}$, and if $s_j=0$ then it simplifies to $\neg p_{i_j}$. With this understanding, we write $\pm p_{i_j}$ for the corresponding literal. Then $Truth_{bool}([\kappa], \vec{p})$ simplifies to $(\pm p_{i_1} \lor \pm p_{i_2} \lor \pm p_{i_3})$ (with signs as described previously). This is exactly $\kappa(\vec{p})$.
\end{proof}

\section*{Acknowledgments}
We thank David Liu for many interesting discussions, and for collaborating with us on some of the open questions posed in this paper. We thank Eric Allender and Andy Drucker for asking whether ``Extended Frege-provable PIT'' implied that $\I$ was equivalent to Extended Frege, which led to the results of Section~\ref{sec:EF}. We thank Pascal Koiran for providing the second half of the proof of Proposition~\ref{prop:coMAGRH}. We thank Iddo Tzameret for useful discussions that led to Proposition~\ref{prop:pitassi}. Finally, in addition to several useful discussions, we also thank Eric Allender for suggesting the name ``Ideal Proof System''---all of our other potential names didn't even hold a candle to this one. We gratefully acknowledge financial support from NSERC; in particular, J. A. G. was supported by A. Borodin's NSERC Grant \# 482671.

\bibliographystyle{ams-alpha}
\bibliography{algpf}

\appendix
\section{Additional Background} \label{app:background}
\subsection{Algebraic Complexity} \label{app:background:complexity}
A polynomial $f(\vec{x})$ is a \definedWord{projection} of a polynomial $g(\vec{y})$ if $f(\vec{x}) = g(L(\vec{x}))$ identically as polynomials in $\vec{x}$, for some map $L$ that assigns to each $y_i$ either a variable or a constant. A family of polynomials $(f_n)$ is a polynomial projection or \definedWord{p-projection} of another family $(g_n)$, denoted $(f_n) \leq_{p} (g_n)$, if there is a function $t(n) = n^{\Theta(1)}$ such that $f_n$ is a projection of $g_{t(n)}$ for all (sufficiently large) $n$. The primary value of projections is that they are very simple, and thus preserve bounds on nearly all natural complexity measures. Valiant \cite{valiant, valiantProjections} was the first to point out not only their value but their ubiquity in computational complexity---nearly all problems that are known to be complete for some natural class, even in the Boolean setting, are complete under p-projections.  We say that two families $f=(f_n)$ and $g=(g_n)$ are of the same p-degree if each is a p-projection of the other, which we denote $f \equiv_{p} g$. 

By analogy with Turing reductions or circuit reductions, \Burgisser \cite{burgisserbook} introduced the more general, but somewhat messier, notion of c-reduction (``c'' for ``computation''). An oracle computation of $f$ from $g$ is an algebraic circuit $C$ with ``oracle gates'' such that when $g$ is plugged in for each oracle gate, the resulting circuit computes $f$. We say that a family $(f_n)$ is a c-reduction of $(g_n)$ if there is a function $t(n) = n^{\Theta(1)}$ such that there is a polynomial-size oracle reduction from $f_n$ to $g_{t(n)}$ for all sufficiently large $n$. We define c-degrees by analogy with p-degrees, and denote them by $\equiv_{c}$.

Despite its central role in computation, and the fact that $\cc{VP} = \cc{VNC}^2$ \cite{VSBR}, the determinant is not known to be $\cc{VP}$-complete. The determinant is $\cc{VQP}$-complete ($\cc{VQP}$ is defined just like $\cc{VP}$ but with a quasi-polynomial bound on the size and degree of the circuits) under qp-projections (like p-projections, but with a quasi-polynomial bound). Weakly skew circuits help clarify the complexity of the determinant (see Malod and Portier \cite{malodPortier} for some history of weakly skew circuits and for highlights of their utility). A circuit of fan-in at most $2$ is \definedWord{weakly skew} if for every multiplication gate $g$ receiving inputs from gates $g_1$ and $g_2$, at least one of the subcircuits $C_i$ rooted at $g_i$ is only connected to the rest of the circuit through $g$. In other words, for every multiplication gate, one of its two incoming factors was computed entirely and solely for the purpose of being used in that multiplication gate. Toda \cite{todaDet2} (see also Malod and Portier \cite{malodPortier} showed that a polynomial family $f=(f_n)$ is a p-projection of the determinant family $(\det_n)$ if and only if $f$ is computed by polynomial-size weakly skew circuits.

\subsection{Proof Complexity} \label{app:background:proof}
Here we give formal definitions of proof systems and probabilistic proof systems
for $\cc{coNP}$ languages, and discuss several important and standard proof systems
for TAUT.

\begin{definition}
Let $L \subseteq \{0,1\}^*$ be a $\cc{coNP}$ language. A \definedWord{proof system $P$ for $L$} is a polynomial-time
function of two inputs $x,y \in \{0,1\}^*$ with the following properties:
\begin{enumerate}
\item (Perfect Soundness) If $x$ is not in $L$, then 
for every $y$, $P(x,y)=0$.
\item (Completeness) If $x$ is in $L$, then there exists a $y$ 
such that $P(x,y)=1$.
\end{enumerate}
$P$ is \definedWord{polynomially bounded} if for every $x \in L$, there exists a $y$
such that $|y|\leq poly(|x|)$ and $P(x,y)=1$.
\end{definition}

As this is just the definition of an $\cc{NP}$ procedure for $L$,
it follows that for any $\cc{coNP}$-complete language $L$, $L$ has
a polynomially bounded proof system if and only if $\cc{coNP} \subseteq \cc{NP}$.

Cook and Reckhow \cite{cookReckhow} formalized proof systems for the language TAUT (all
Boolean tautologies) in a slightly different way, although their definition is
essentialy equivalent to the one above. We prefer the above definition as it
is consistent with definitions of interactive proofs.

\begin{definition}
A \definedWord{Cook--Reckhow proof system} is a polynomial-time function
$P'$ of just one input $y$, and whose range is the set of all yes instances of $L$. 
If $x \in L$, then any $y$ such that
$P'(y)=x$ is called a $P'$ proof of $x$. $P'$ must satisfy the 
following properties:
\begin{enumerate}
\item  (Soundness) For every $x,y \in \{0,1\}^*$, if $P'(y)=x$, then $x \in L$.
\item (Completeness) For every $x \in L$, there exists an $y$ such that $P'(y)=x$.
\end{enumerate}

$P'$ is \definedWord{polynomially bounded} if for every $x \in L$, there exists a $y$
such that $|y| \leq poly(|x|)$ and $P(y)=x$.
\end{definition}

Intuitively, we think of $P'$ as a procedure for verifying that $y$ is a proof that some $x \in L$
and if so, it outputs $x$. 
(For all strings $x$ that do not encode valid proofs, $P'(x)$ may just output some canonical $x_0 \in L$.)
It is a simple exercise to see that for every language $L$, any propositional proof system $P$
according to our definition can be converted to a Cook-R-eckow proof system $P'$, and vice versa,
and furthermore the runtime properties of $P$ and $P'$ will be the same.
In the forward direction, say $P$ is a proof system for $L$ according to our definition.
Define Merlin's string $y$ as encoding a pair $(x,y')$, and on input $y=(x,y')$, $P'$
runs $P$ on the pair $(x,y')$. If $P$ accepts, then $P'(y)$ outputs $x$,
and if $P$ rejects, then $P'(y)$ outputs (the encoding of) a canonical $x^0$ in $L$.
Conversely, say that $P'$ is a Cook-Reckhow proof system for $L$.
$P(x,y)$ runs $P'$ on $y$ and accepts if and only if $P'(y)=x$.

%Without loss of generality, we will focus on the coNP complete languages TAUT and UNSAT
%for formulas in 3DNF (respectiviely 3CNF) form.
%Thus TAUT will consist of all (encodings of) Boolean formulas in 3DNF form that are tautologies, and
%similarly UNSAT consists of all (encodings of) Boolean formulas in 3CNF form that are unsatisfiable.
%(By convention, some proof systems such as Resolution are defined with respect to UNSAT
%whereas others are defined with respect to TAUT, so we will need to consider both although
%this is a minor inconvenience.)

\begin{definition}
Let $P_1$ and $P_2$ be two proof systems for a language $L$ in $\cc{coNP}$.
$P_1$ p-simulates $P_2$ if for every $x \in L$ and for every $y$ such that
$P_2(x,y)=1$, there exists $y'$ such that $|y'|\leq \poly(|y|)$, and $P_1(x,y')=1$.
\end{definition}

Informally, $P_1$ p-simulates $P_2$ if proofs in $P_1$ are no longer than proofs in $P_2$ (up to
polynomial factors) .

\begin{definition}
Let $P_1$ and $P_2$ be two proof systems for a language $L$ in $\cc{coNP}$.
$P_1$ and $P_2$ are \definedWord{p-equivalent} if $P_1$ p-simulates $P_2$ and $P_2$ p-simulates $P_1$.
\end{definition}

\noindent{\bf Standard Propositional Proof Systems}
For TAUT (or UNSAT), there are a variety of standard and
well-studied proof systems, the most important ones including Extended Frege (EF), Frege, Bounded-depth Frege,
and Resolution. A Frege rule is an inference rule of the form:
$B_1, \ldots, B_n \implies B$, where $B_1,\ldots, B_n,B$ are propositional formulas.
If $n=0$ then the rule is an axiom.
For example, $A \lor \neg A$ is a typical Frege axiom, and
$A, \neg A \lor B \implies B$ is a typical Frege rule.
A Frege system is specified by a finite set, $R$ of rules.
Given a collection $R$ of rules, a derivation of 3DNF formula $f$ is a sequence
of formulas $f_1,\ldots,f_m$ such that each $f_i$ is either an instance
of an axiom scheme, or follows from two previous formulas by one of the rules in $R$,
and such that the final formula $f_m$ is $f$.
In order for a Frege system to be a proof system in the Cook-Reckhow sense, its
corresponding set of rules must be sound and complete.
Work by Cook and Reckhow in the 70's (REF) showed that Frege systems are
very robust in the sense that all Frege systems are
polynomially-equivalent.

Bounded-depth Frege proofs ($\cc{AC}^0$-Frege) are proofs that are Frege
proofs but with the additional restriction that each formula in the
proof has bounded depth. (Because our connectives are binary AND, OR and negation,
by depth we assume the formula has all negations at the leaves, and
we count the maximum number of alternations of AND/OR connectives in the formula.)
Polynomial-sized $AC^0$-Frege proofs correspond to the complexity class
$AC^0$ because such proofs allow a polynomial number of lines, each of which
must be in $AC^0$. 
%a polyomial-sized bounded-depth formula. 

Extended Frege systems generalize Frege systems by allowing, in addition to
all of the Frege rules, a new axiom of the form $y \leftrightarrow A$,
where $A$ is a formula, and $y$ is a new variable not occurring in $A$.
Whereas polynomially-size Frege proofs allow a polynomial number of
lines, each of which must be a polynomial-sized formula,
using the new axiom, polynomial-size EF proofs allow a polynomial number
of lines, each of which can be a polynomial-sized circuit.
See \cite{krajicekBook} for precise definitions of Frege, $\cc{AC}^0$-Frege, and EF proof systems.

\medskip

\noindent {\bf Probabilistic Proof Systems}
The concept of a proof system for a language in $\cc{coNP}$ can be generalized 
in the natural way, to obtain Merlin--Arthur style proof systems.

\begin{definition}
Let $L$ be a language in $\cc{coNP}$, and let $V$ be a probabilistic polynomial-time algorithm
with two inputs $x,y \in \{0,1\}^*$.
(We think of $V$ as the verifier.)
$V$ is a \definedWord{probabilistic proof system} for $L$ if:
\begin{enumerate}
\item (Perfect Soundness) For every $x$ that is not in $L$,
and for every $y$, 
$$Pr_r[P(x,y) =1] =0 ,$$
where the probability is over the random coin tosses, $r$ of $P$.
\item (Completeness) For every $x$ in $L$,
there exists a $y$ such that
$$Pr_r[P(x,y) =1] \geq 3/4.$$
\end{enumerate}
$P$ is \definedWord{polynomially bounded} if for every $x \in L$, there exists $y$ such that
$|y|=poly(|x|)$ and $Pr_r[P(x,y)=1] \geq 3/4$.
\end{definition}

It is clear that for any $\cc{coNP}$-complete language $L$, there is a polynomially
bounded probabilistic proof system for $L$ if and only if $\cc{coNP} \subseteq \cc{MA}$.

Again we have chosen to define our probabilitic proof systems to
match the definition of $\cc{MA}$. The probabilistic proof system that
would be analogous to the standard Cook--Reckhow proof system would
be somewhat different, as defined below.
Again, a simple argument like the one above shows that our probablistic proof systems
are essentially equivalent to a probabilistic Cook--Reckhow proof systems.

\begin{definition}
A \definedWord{probabilistic Cook--Reckhow proof system} is a probabilistic polynomial-time algorithm 
$A$ (whose run time is independent of its random choices) such that
\begin{enumerate}
\item There is a surjective function $f\colon \Sigma^{*} \to TAUT$ such that $A(x)=f(x)$ with probability at least $2/3$ (over $A$'s random choices), and

\item Regardless of $A$'s random choices, its output is always a tautology.
\end{enumerate}

Such a proof system is \definedWord{polynomially bounded} or \definedWord{p-bounded} if for every tautology $\varphi$, there is some $\pi$ (for ``proof'') such that $f(\pi)=\varphi$ and $|\pi| \leq \poly(|\varphi|)$.
\end{definition}

We note that both Pitassi's algebraic proof system \cite{pitassi96} and the Ideal Proof System are 
probabilistic Cook--Reckhow systems. The algorithm $P$ 
takes as input a description of a (constant-free) algebraic circuit $C$ together with a tautology $\varphi$, 
and then verifies that the circuit is indeed an \I-certificate for $\varphi$ by using the standard $\cc{coRP}$ 
algorithm for polynomial identity testing. 
The proof that Pitassi's algebraic proof system is a probabilistic Cook--Reckhow system is essentially the same.
                                                                                                                             
\subsection{Commutative algebra} \label{app:background:algebra}
The following preliminaries from commutative algebra are needed only in Section~\ref{sec:syzygy} and Appendix~\ref{app:RIPS}.

A \definedWord{module} over a ring $R$ is defined just like a vector space, except over a ring instead of a field. That is, a module $M$ over $R$ is a set with two operations: addition (making $M$ an abelian group), and multiplication by elements of $R$ (``scalars''), satisfying the expected axioms (see any textbook on commutative algebra, \eg, \cite{atiyahMacdonald,eisenbud}). A module over a field $R = \F$ is exactly a vector space over $\F$. Every ring $R$ is naturally an $R$-module (using the ring multiplication for the scalar multiplication), as is $R^{n}$, the set of $n$-tuples of elements of $R$. Every ideal $I \subseteq R$ is an $R$-module---indeed, an ideal could be defined, if one desired, as an $R$-submodule of $R$---and every quotient ring $R/I$ is also an $R$-module, by $r \cdot (r_0 + I) = rr_0 + I$.

Unlike vector spaces, however, there is not so nice a notion of ``dimension'' for modules over arbitrary rings. Two differences will be particularly relevant in our setting. First, although every vector subspace of $\F^{n}$ is finite-dimensional, hence finitely generated, this need not be true of every submodule of $R^n$ for an arbitrary ring $R$. Second, every (finite-dimensional) vector space $V$ has a basis, and every element of $V$ can be written as a \emph{unique} $\F$-linear combination of basis elements, but this need not be true of every $R$-module, even if the $R$-module is finitely generated, as in the following example.

\begin{example}
Let $R=\C[x,y]$ and consider the ideal $I = \langle x, y \rangle$ as an $R$-module. For clarity, let us call the generators of this $R$-module $g_1 = x$ and $g_2 = y$. First, $I$ cannot be generated as an $R$-module by fewer than two elements: if $I$ were generated by a single element, say, $f$, then we would necessarily have $x=r_1 f$ and $y=r_2 f$ for some $r_1,r_2 \in R$, and thus $f$ would be a common divisor of $x$ and $y$ in $R$ (here we are using the fact that $I$ is both a module and a subset of $R$). But the GCD of $x$ and $y$ is $1$, and the only submodule of $R$ containing $1$ is $R \neq I$. So $\{g_1, g_2\}$ is a minimum generating set of $I$. But not every element of $I$ has a unique representation in terms of this (or, indeed, any) generating set: for example, $xy \in I$ can be written either as $r_1 g_1$ with $r_1=y$ or $r_2 g_2$ with $r_2 = x$.
\end{example}

A ring $R$ is \definedWord{Noetherian} if there is no strictly increasing, infinite chain of ideals $I_1 \subsetneq I_2 \subsetneq I_3 \subsetneq \dotsb$. Fields are Noetherian (every field has only two ideals: the zero ideal and the whole field), as are the integers $\Z$. Hilbert's Basis Theorem says that every ideal in a Noetherian ring is finitely generated. Hilbert's (other) Basis Theorem says that if $R$ is finitely generated, then so is the polynomial ring $R[x]$ (and hence any polynomial ring $R[\vec{x}]$. Quotient rings of Noetherian rings are Noetherian, so every ring that is finitely generated over a field (or more generally, over a Noetherian ring $R$) is Noetherian.

Similarly, an $R$-module $M$ is Noetherian if there is no strictly increasing, infinite chain of submodules $M_1 \subsetneq M_2 \subsetneq M_3 \subsetneq \dotsb$. If $R$ is Noetherian as a ring, then it is Noetherian as an $R$-module. It is easily verified that direct sums of Noetherian modules are Noetherian, so if $R$ is a Noetherian ring, then it is a Noetherian $R$-module, and consequently $R^{n}$ is a Noetherian $R$-module for any finite $n$. Just as for ideals, every submodule of a Noetherian module is finitely generated.

The remaining preliminaries from commutative algebra are only needed in Appendix~\ref{app:RIPS}.

The \definedWord{radical} of an ideal $I \subseteq R$ is the ideal $\sqrt{I}$ consisting of all $r \in R$ such that $r^k \in I$ for some $k > 0$. An ideal $I$ is \definedWord{prime} if whenever $rs \in P$, at least one of $r$ or $s$ is in $P$. For any ideal $I$, its radical is equal to the intersection of the prime ideals containing $I$: $\sqrt{I} = \bigcap_{\text{prime } P \supseteq I} P$. We refer to prime ideals that are minimal under inclusion, subject to containing $I$, as ``minimal over $I$;'' there are only finitely many such prime ideals. The radical $\sqrt{I}$ is thus also equal to the intersection of the primes minimal over $I$.

An \definedWord{algebraic set} in $\F^n$ is any set of the form $\{\vec{x} \in \F^{n} : F_1(\vec{x}) = \dotsb = F_m(\vec{x}) = 0\}$, which we denote $V(F_1, \dotsc, F_m)$ (``$V$'' for ``variety''). The algebraic set $V(F_1, \dotsc, F_m)$ depends only on the ideal $\langle F_1, \dotsc, F_m \rangle$, and even its radical, in the sense that $V(F_1, \dotsc, F_m) = V(\sqrt{\langle F_1, \dotsc, F_m \rangle})$. Conversely, the set of all polynomials vanishing on a given algebraic set $V$ is a radical ideal, denoted $I(V)$. An algebraic set is \definedWord{irreducible} if it cannot be written as a union of two algebraic proper subsets. $V$ is irreducible if and only if $I(V)$ is prime. The \definedWord{irreducible components} of an algebraic set $V = V(I)$ are the maximal irreducible algebraic subsets of $V$, which are exactly the algebraic sets corresponding to the prime ideals minimal over $I$.

If $U$ is any subset of a ring $R$ that is closed under multiplication---$a,b \in U$ implies $ab \in U$---we may define the localization of $R$ at $U$ to be the ring in which we formally adjoin multiplicative inverses to the elements of $U$. Equivalently, we may think of the localization of $R$ at $U$ as the ring of fractions over $R$ where the denominators are all in $U$. If $P$ is a prime ideal, its complement is a multiplicatively closed subset (this is an easy and instructive exercise in the definition of prime ideal). In this case, rather than speak of the localization of $R$ at $R \backslash P$, it is common usage to refer to the localization of $R$ and $P$, denoted $R_P$. Similar statements hold for the union of finitely many prime ideals. We will use the fact that the localization of a Noetherian ring is again Noetherian (however, even if $R$ is finitely generated its localizations need not be, \eg the localization of $\Z$ at $P = \langle 2 \rangle$ consists of all rationals with odd denominators; this is one of the ways in which the condition of being Noetherian is nicer than that of being merely finitely generated).
\section{Divisions: the Rational Ideal Proof System} \label{app:RIPS}
We begin with an example where it is advantageous to include divisions in an \I-certificate. Note that this is different than merely computing a polynomial \I-certificate using divisions. In the latter case, divisions can be eliminated \cite{strassenDivision}. In the case we discuss here, the certificate itself is no longer a polynomial but is a rational function.

\begin{example} \label{ex:inversion}
The inversion principle, one of the so-called ``Hard Matrix Identities'' \cite{soltysCook}, states that 
\[
XY = I \Rightarrow YX = I.
\]
They are called ``Hard'' because they were proposed as possible examples---over $\F_2$ or $\Z$---of propositional tautologies separating Extended Frege from Frege. Indeed, it was only in the last 10 years that they were shown to have efficient Extended Frege proofs \cite{soltysCook}, and it was quite nontrivial to show that they have efficient $\cc{NC}^2$-Frege proofs \cite{hrubesTzameretDet}, despite the fact that the determinant can be computed in $\cc{NC}^2$. It is still open whether the Hard Matrix Identities have ($\cc{NC}^1$)-Frege proofs, and believed not to be the case.

In terms of ideals, the inversion principle says that the $n^2$ polynomials $(YX - I)_{i,j}$ (the entries of the matrix $YX -I$) are in the ideal generated by the $n^2$ polynomials $(XY-I)_{i,j}$. The simplest rational proof of the inversion principle that we are aware of is as follows:
\[
X^{-1} (XY-I) X = YX-I
\]
Note that $X^{-1}$ here involves dividing by the determinant. When converted into a certificate, if we write $Q$ for a matrix of placeholder variables $q_{i,j}$ corresponding to the entries of the matrix $XY-I$, we get $n^2$ certificates from the entries of $X^{-1} Q X$. Note that each of these certificates is a rational function that has $\det(X)$ in its denominator. Turning this into a proof that does not use divisions is the main focus of the paper \cite{hrubesTzameretDet}; thus, if we had a proof system that allowed divisions in this manner, it would potentially allow for significantly simpler proofs. In this particular case, we assure ourselves that this is a valid proof because if $XY-I=0$, then $X$ is invertible, so $X^{-1}$ exists (or equivalently, $\det(X) \neq 0$).
\end{example}

In order to introduce an \I-like proof system that allows rational certificates, we generalize the preceding reasoning. We must be careful what we allow ourselves to divide by. If we are allowed to divide by arbitrary polynomials, this would yield an unsound proof system, because then from any polynomials $F_1(\vec{x}), \dotsc, F_m(\vec{x})$ we could derive \emph{any} other polynomial $G(\vec{x})$ via the false ``certificate'' $\frac{G(x)}{F(x)}\f_1$. The following definition is justified by Proposition~\ref{prop:RIPS}.

Unfortunately, although we try to eschew as many definitions as possible, the results here are made much cleaner by using some additional (standard) terminology from commutative algebra which is covered in Appendix~\ref{app:background:algebra} such as prime ideals, irreducible components of algebraic sets, and localization of rings. 

\begin{definition}[Rational Ideal Proof System] \label{def:RIPS}
A \definedWord{rational \I certificate} or \definedWord{R\I-certificate} that a polynomial $G(\vec{x}) \in \F[\vec{x}]$ is in the radical of the $\overline{\F}[\vec{x}]$-ideal generated by $F_1(\vec{x}), \dotsc, F_m(\vec{x})$ is a rational function $C(\vec{x}, \vec{\f})$ such that
\begin{enumerate}
\setcounter{enumi}{-1}
\item \label{condition:local} Write $C = C'/D$ with $C',D$ relatively prime polynomials. Then $1/D(\vec{x}, \vec{F}(\vec{x}))$ must be in the localization of $\F[\vec{x}]$ at the union of the prime ideals that are minimal subject to containing the ideal $\langle F_1(\vec{x}), \dotsc, F_m(\vec{x}) \rangle$ (We give a more elementary explanation of this condition below), 

\item \label{condition:RIPSideal}$C(x_1,\dotsc,x_n,\vec{0}) = 0$, and

\item \label{condition:RIPSnss} $C(x_1,\dotsc,x_n,F_1(\vec{x}),\dotsc,F_m(\vec{x})) = G(\vec{x})$.
\end{enumerate}
A \definedWord{R\I proof} that $G(\vec{x})$ is in the radical of the ideal $\langle F_1(\vec{x}), \dotsc, F_m(\vec{x}) \rangle$ is an $\F$-algebraic circuit with divisions on inputs $x_1,\ldots,x_n,\f_1,\ldots,\f_m$ computing some R\I certificate.
\end{definition}

Condition~(\ref{condition:local}) is equivalent to: if $G(\vec{x})$ is an invertible constant, then $D(\vec{x}, \vec{\f})$ is also an invertible constant and thus $C$ is a polynomial; otherwise, after substituting the $F_i(\vec{x})$ for the $\f_i$, the denominator $D(\vec{x}, \vec{F}(\vec{x}))$ does not vanish identically on any of the irreducible components (over the algebraic closure $\overline{\F}$) of the algebraic set $V(\langle F_1(\vec{x}), \dotsc, F_m(\vec{x}) \rangle) \subseteq \overline{\F}^{n}$. In particular, for proofs of unsatisfiability of systems of equations, the Rational Ideal Proof System reduces by definition to the Ideal Proof System. For derivations of one polynomial from a set of polynomials, this need not be the case, however; indeed, there are examples for which \emph{every} R\I-certificate has a nonconstant denominator, that is, there is a R$\I$-certifiate but there are no \I-certificates (see Example~\ref{ex:divNeeded}).

The following proposition establishes that Definition~\ref{def:RIPS} indeed defines a sound proof system. 

\begin{proposition} \label{prop:RIPS}
If there is a R\I-certificate that $G(\vec{x})$ is in the radical of $\langle F_1(\vec{x}), \dotsc, F_m(\vec{x}) \rangle$, then $G(\vec{x})$ is in fact in the radical of $\langle F_1(\vec{x}), \dotsc, F_m(\vec{x}) \rangle$. 
\end{proposition}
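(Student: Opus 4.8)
The plan is to reduce the statement to the prime structure of the radical. Recall from Appendix~\ref{app:background:algebra} that $\sqrt{I} = \bigcap_{P} P$, where $P$ runs over the finitely many primes minimal over $I := \langle F_1(\vec{x}), \dotsc, F_m(\vec{x}) \rangle$ (if $I$ is the whole ring there is nothing to prove, so assume $I$ is proper). Hence it suffices to show $G(\vec{x}) \in P$ for each such minimal prime $P$. To set up, write the R\I-certificate as $C = C'/D$ with $C', D \in \F[\vec{x},\vec{\f}]$ relatively prime, as in the definition. Condition~(\ref{condition:RIPSideal}), $C(\vec{x},\vec{0})=0$, says the numerator vanishes at $\vec{\f}=\vec{0}$, i.e.\ $C' \in \langle \f_1, \dotsc, \f_m \rangle$; writing $C' = \sum_i \f_i H_i(\vec{x},\vec{\f})$ and applying the evaluation homomorphism $\f_i \mapsto F_i(\vec{x})$ shows $C'(\vec{x},\vec{F}(\vec{x})) = \sum_i F_i(\vec{x})\, H_i(\vec{x},\vec{F}(\vec{x})) \in I$. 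Condition~(\ref{condition:RIPSnss}), $C(\vec{x},\vec{F}(\vec{x}))=G(\vec{x})$, clears denominators to the polynomial identity
\[
C'(\vec{x},\vec{F}(\vec{x})) \;=\; G(\vec{x}) \cdot D(\vec{x},\vec{F}(\vec{x})).
\]

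The crux is then condition~(\ref{condition:local}). The complement $S$ of the union of the minimal primes over $I$ is multiplicatively closed (being the complement of a finite union of primes, cf.\ Appendix~\ref{app:background:algebra}), and the hypothesis that $1/D(\vec{x},\vec{F}(\vec{x}))$ lies in the localization $S^{-1}\F[\vec{x}]$ is exactly the statement that $D(\vec{x},\vec{F}(\vec{x}))$ divides some element of $S$, and hence that $D(\vec{x},\vec{F}(\vec{x})) \notin P$ for every minimal prime $P$ over $I$. Now fix such a $P$. By the displayed identity together with $C'(\vec{x},\vec{F}(\vec{x})) \in I \subseteq P$, we get $G(\vec{x}) \cdot D(\vec{x},\vec{F}(\vec{x})) \in P$; since $P$ is prime and $D(\vec{x},\vec{F}(\vec{x})) \notin P$, we conclude $G(\vec{x}) \in P$. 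Intersecting over all minimal primes $P$ over $I$ yields $G(\vec{x}) \in \sqrt{I}$, as desired.

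I do not expect a genuinely hard step here: the argument is pure bookkeeping around the definition. The one place that needs care — and the reason the definition comes with its ``more elementary explanation'' — is the faithful translation of condition~(\ref{condition:local}) into ``$D(\vec{x},\vec{F}(\vec{x}))$ avoids every minimal prime over $I$'', together with the observation that this is precisely the hypothesis that legitimizes cancelling $D$ inside each $P$ (and, dually, that dropping it would be unsound, as the spurious certificate $\frac{G(\vec{x})}{F_1(\vec{x})}\f_1$ illustrates). I would also record the boundary case separately for cleanliness: if $G$ is an invertible constant then condition~(\ref{condition:local}) forces $D$ to be an invertible constant as well, so $C$ is an honest polynomial and the same computation applies, recovering the remark that on unsatisfiable systems the Rational Ideal Proof System coincides with \I.
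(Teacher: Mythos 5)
Your argument coincides with the paper's ``algebraic proof'' of this proposition: derive $G\cdot D_F \in \langle F_1,\dotsc,F_m\rangle$ from the numerator $C'$, note that condition~(\ref{condition:local}) says $D_F$ avoids every minimal prime over the ideal, and then use primality to cancel $D_F$ in each such prime before intersecting. The paper also supplies a parallel geometric proof (vanishing on irreducible components plus Zariski-density), which you omit but do not need.
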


\begin{proof}
Let $C(\vec{x}, \vec{\f}) = \frac{1}{D(\vec{x}, \vec{\f})} C'(\vec{x}, \vec{\f})$ be a R\I certificate that $G$ is in $\sqrt{\langle F_1, \dotsc, F_m \rangle}$, where $D$ and $C'$ are relatively prime polynomials. Then $C'(\vec{x}, \vec{\f})$ is an \I-certificate that $G(\vec{x})D(\vec{x}, \vec{F}(\vec{x}))$ is in the ideal $\langle F_1(\vec{x}), \dotsc, F_m(\vec{x}) \rangle$ (recall Definition~\ref{def:IPSideal}). Let $D_{F}(\vec{x}) = D(\vec{x}, \vec{F}(\vec{x}))$. 

Geometric proof: since $G(\vec{x}) D_{F}(\vec{x}) \in \langle F_1(\vec{x}), \dotsc, F_m(\vec{x}) \rangle$, $GD_{F}$ must vanish identically on every irreducible component of the algebraic set $V(F_1, \dotsc, F_m)$. On each irreducible component $V_i$, since $D_{F}(\vec{x})$ does not vanish identically on $V_i$, $G(\vec{x})$ must vanish everywhere except for the proper subset $V(D_{F}(\vec{x})) \cap V_i$. Since $D_{F}$ does not vanish identically on $V_i$, we have $\dim V(D_{F}) \cap V_i \leq \dim V_i - 1$ (in fact this is an equality). In particular, this means that $G$ must vanish on a dense subset of $V_i$. Since $G$ is a polynomial, by (Zariski-)continuity, $G$ must vanish on all of $V_i$. Finally, since $G$ vanishes on every irreducible component of $V(F_1, \dotsc, F_m)$, it vanishes on $V(F_1, \dotsc, F_m)$ itself, and by the Nullstellensatz, $G \in \sqrt{\langle F_1, \dotsc, F_m\rangle}$.

Algebraic proof: for each prime ideal $P_i \subseteq \overline{\F}[\vec{x}]$ that is minimal subject to containing $\langle F_1, \dotsc, F_m \rangle$, $D_{F}$ is not in $P_i$, by the definition of $R\I$-certificate. Since $GD_{F} \in \langle F_1, \dotsc, F_m \rangle \subseteq P_i$, by the definition of prime ideal $G$ must be in $P_i$. Hence $G$ is in the intersection $\bigcap_i P_i$ over all minimal prime ideals $P_i \supseteq \langle F_1, \dotsc, F_m \rangle$. This intersection is exactly the radical $\sqrt{\langle F_1, \dotsc, F_m \rangle}$.
\end{proof}

Any derivation of a polynomial $G$ that is in the radical of an ideal $I$ but not in $I$ itself will require divisions. Although it is not \emph{a priori} clear that R\I could derive even one such $G$, the next example shows that this is the case. In other words, the next example shows that certain derivations \emph{require} rational functions.

\begin{example} \label{ex:divNeeded}
Let $G(x_1, x_2) = x_1$, $F_1(\vec{x}) = x_1^2$, $F_2(\vec{x}) = x_1 x_2$. Then $C(\vec{x}, \vec{\f}) = \frac{1}{x_1-x_2}(\f_1 - \f_2)$ is a R\I-certificate that $G \in \sqrt{\langle F_1, F_2 \rangle}$: by plugging in one can verify that $C(\vec{x}, \vec{F}(\vec{x})) = G(\vec{x})$. For Condition~(\ref{condition:local}), we see that $V(F_1, F_2)$ is the entire $x_2$-axis, on which $x_1 - x_2$ only vanishes at the origin. However, there is no \I-certificate that $G \in \langle F_1, F_2 \rangle$, since $G$ is \emph{not} in $\langle F_1, F_2 \rangle$: $\langle F_1, F_2 \rangle = \{ x(H_1(\vec{x}) x_1 + H_2(\vec{x}) x_2)\}$ where $H_1, H_2$ may be arbitrary polynomials. Since the only constant of the form $H_1(\vec{x}) x_1 + H_2(\vec{x}) x_2$ is zero, $G(x) = x \notin \langle F_1, F_2 \rangle$.
\end{example}

In the following circumstances a R\I-certificate can be converted into an \I-certificate.

\paragraph{Notational convention.} Throughout, we continue to use the notation that if $D$ is a function of the placeholder variables $\f_i$ (and possibly other variables), then $D_{F}$ denotes $D$ after substituting in $F_i(\vec{x})$ for the placeholder variable $\f_i$.

\begin{proposition} \label{prop:RIPS2IPS}
If $C = C'/D$ is a R\I proof that $G(\vec{x}) \in \sqrt{\langle F_1(\vec{x}), \dotsc, F_m(\vec{x}) \rangle}$, such that $D_{F}(\vec{x})$ does not vanish \emph{anywhere} on the algebraic set $V(F_1(\vec{x}), \dotsc, F_m(\vec{x}))$, then $G(\vec{x})$ is in fact in the ideal $\langle F_1(\vec{x}), \dotsc, F_m(\vec{x}) \rangle$. Furthermore, there is an \I proof that $G(\vec{x}) \in \langle F_1(\vec{x}), \dotsc, F_m(\vec{x}) \rangle$ of size $\poly(|C|,|E|)$ where $E$ is an $\I$ proof of the unsolvability of $D_{F}(\vec{x}) = F_1(\vec{x}) = \dotsb = F_m(\vec{x}) = 0$. 
\end{proposition}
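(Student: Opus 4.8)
\noindent\textit{Proof plan.}
The plan rests on two observations. First, the numerator $C'$ of the given $\mathrm{R}\I$ certificate is, on its own, an $\I$ certificate in the sense of Definition~\ref{def:IPSideal} that the polynomial $G(\vec{x})D_{F}(\vec{x})$ lies in $\langle F_1(\vec{x}),\dotsc,F_m(\vec{x})\rangle$: condition~(\ref{condition:RIPSideal}) forces $C'(\vec{x},\vec{0})=0$ (using that $D(\vec{x},\vec 0)\not\equiv 0$, which is implicit in that condition making sense), and condition~(\ref{condition:RIPSnss}) gives $C'(\vec{x},\vec{F}(\vec{x}))=G(\vec{x})D_{F}(\vec{x})$. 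Second, the hypothesis that $D_{F}$ vanishes nowhere on $V(F_1,\dotsc,F_m)$ says precisely that $V(F_1,\dotsc,F_m,D_{F})=\emptyset$ over $\overline{\F}$, so by the Nullstellensatz $1\in\langle F_1,\dotsc,F_m,D_{F}\rangle$ --- which is exactly why the $\I$ proof $E$ of the unsolvability of $D_{F}=F_1=\dotsb=F_m=0$ exists at all. Combining the two: writing $1=\sum_i A_i(\vec{x})F_i(\vec{x})+B(\vec{x})D_{F}(\vec{x})$ we get $G=\sum_i (GA_i)F_i+B\cdot(GD_{F})$, and since $GD_{F}\in\langle F_1,\dotsc,F_m\rangle$ this already proves the first assertion, $G\in\langle F_1,\dotsc,F_m\rangle$. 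The remaining work is to turn this reasoning into an actual $\I$ proof of size $\poly(|C|,|E|)$.

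For the quantitative part I would proceed as follows. Using Strassen's division elimination~\cite{strassenDivision}, extract from the $\mathrm{R}\I$ proof division-free circuits of size $\poly(|C|)$ for the numerator $C'(\vec{x},\vec{\f})$ and the denominator $D(\vec{x},\vec{\f})$. Next, split the given proof $E(\vec{x},\vec{\f},z)$ --- where $z$ is the placeholder for $D_{F}$ --- as $E=z\cdot W(\vec{x},z)+E_{\f}(\vec{x},\vec{\f},z)$, where $E_{\f}:=E(\vec{x},\vec{\f},z)-E(\vec{x},\vec{0},z)$ manifestly lies in $\langle\f_1,\dotsc,\f_m\rangle\subseteq\F[\vec{x},\vec{\f},z]$, and $W:=E(\vec{x},\vec{0},z)/z$ is a polynomial (since $E(\vec{x},\vec{0},z)$ is divisible by $z$, because $E(\vec{x},\vec{0},0)=0$), computable by a circuit of size $\poly(|E|)$ by one further, trivial single-variable application of division elimination. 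Substituting $\vec{\f}=\vec{F}(\vec{x})$, $z=D_{F}(\vec{x})$ into $E=zW+E_{\f}$ and using $E(\vec{x},\vec{F}(\vec{x}),D_{F}(\vec{x}))=1$ yields the key identity $1=D_{F}(\vec{x})\,W(\vec{x},D_{F}(\vec{x}))+E_{\f}(\vec{x},\vec{F}(\vec{x}),D_{F}(\vec{x}))$.

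The $\I$ certificate is then obtained by composition: set
\[
C''(\vec{x},\vec{\f})\;:=\;C'(\vec{x},\vec{\f})\cdot W\bigl(\vec{x},D(\vec{x},\vec{\f})\bigr)\;+\;G(\vec{x})\cdot E_{\f}\bigl(\vec{x},\vec{\f},D(\vec{x},\vec{\f})\bigr),
\]
i.e.\ substitute the circuit $D(\vec{x},\vec{\f})$ for the placeholder $z$ everywhere. At $\vec{\f}=\vec{0}$ the first summand vanishes because $C'(\vec{x},\vec{0})=0$ and the second vanishes because $E_{\f}(\vec{x},\vec{0},\cdot)\equiv 0$ identically, so $C''(\vec{x},\vec{0})=0$; at $\vec{\f}=\vec{F}(\vec{x})$ the first summand equals $G(\vec{x})D_{F}(\vec{x})\cdot W(\vec{x},D_{F}(\vec{x}))$ and the second equals $G(\vec{x})\cdot E_{\f}(\vec{x},\vec{F}(\vec{x}),D_{F}(\vec{x}))$, whose sum is $G(\vec{x})$ by the key identity. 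Hence $C''$ is an $\I$ certificate that $G\in\langle F_1,\dotsc,F_m\rangle$; note the $F_i$ are never substituted into $C''$ itself, only used when verifying it, so its size is $\poly(|C|,|E|,|G|)$, which is $\poly(|C|,|E|)$ under the standing convention that $G$ and the $F_i$ have polynomial-size circuits (the dependence on $|G|$ can otherwise simply be recorded, and in the setting of Proposition~\ref{prop:RIPS2IPS} we anyway have $|G|\le\poly(|C|,|F_1|,\dotsc,|F_m|)$ since $G=C(\vec{x},\vec{F}(\vec{x}))$).

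The conceptual core --- the decomposition $E=zW+E_{\f}$ together with the substitution $z\mapsto D(\vec{x},\vec{\f})$ --- is exactly what lets one ``invert $D_{F}$ modulo $\langle F_1,\dotsc,F_m\rangle$'' while keeping the $C''(\vec{x},\vec{0})=0$ side condition intact, and is short once one sees it. The step I expect to need the most care is the division elimination: one must produce polynomial-size circuits for a numerator and denominator agreeing with the \emph{reduced} fraction $C'/D$ of Definition~\ref{def:RIPS} --- so that $D_{F}=D(\vec{x},\vec{F}(\vec{x}))$ really is the polynomial that the hypothesis and the proof $E$ refer to, rather than that polynomial times a spurious factor --- and then check that this clean-up, plus the single-variable division defining $W$, costs only a polynomial blow-up. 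All of this is standard, but it is where the bookkeeping lives.
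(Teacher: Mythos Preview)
Your proposal is correct and follows essentially the same route as the paper. The only difference is cosmetic: the paper decomposes $E(\vec{x},\vec{\f},d)=d\,\Delta(\vec{x},\vec{\f},d)+E'(\vec{x},\vec{\f})$ by setting the extra placeholder $d$ to zero (so the ``remainder'' $E'$ is free of $d$), whereas you decompose $E=z\,W(\vec{x},z)+E_{\f}(\vec{x},\vec{\f},z)$ by setting $\vec{\f}$ to zero (so your $W$ is free of $\vec{\f}$). Both splits yield a valid \I certificate of the same shape $C'\cdot(\text{inverse-of-}D)+G\cdot(\text{ideal part})$, and the size analysis is the same. Your explicit flagging of the reduced-fraction issue in the Strassen extraction is a point the paper glosses over.
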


\begin{proof}
Since $D_{F}(\vec{x})$ does not vanish anywhere on $V(F_1, \dotsc, F_m)$, the system of equations $D_F(\vec{x}) = F_1(\vec{x}) = \dotsb = F_m(\vec{x}) = 0$ is unsovlable. 

Geometric proof idea: The preceding means that when restricted to the algebraic set $V(F_1, \dotsc, F_m)$, $D_{F}$ has a multiplicative inverse $\Delta$. Rather than dividing by $D$, we then multiply by $\Delta$, which, for points on $V(F_1, \dotsc, F_m)$, amounts to the same thing.

Algebraic proof: Let $E(\vec{x}, \vec{\f}, d)$ be an \I-certificate for the unsolvability of this system, where $d$ is a new placeholder variable corresponding to the polynomial $D_{F}(\vec{x}) = D(\vec{x}, \vec{F}(\vec{x}))$. By separating out all of the terms involving $d$, we may write $E(\vec{x}, \vec{\f}, d)$ as $d\Delta(\vec{x}, \vec{\f}, d) + E'(\vec{x}, \vec{\f})$. As $E(\vec{x}, \vec{F}(\vec{x}), D_{F}(\vec{x})) = 1$ (by the definition of \I), we get:
\[
D_{F}(\vec{x})\Delta(\vec{x}, \vec{F}(\vec{x}), D_{F}(\vec{x})) = 1 - E'(\vec{x}, \vec{F}(\vec{x})).
\]
Since $E'(\vec{x}, \vec{\f}) \in \langle \f_1, \dotsc, \f_m \rangle$, this tells us that $\Delta(\vec{x}, \vec{F}(\vec{x}), D_{F}(\vec{x}))$ is a multiplicative inverse of $D_{F}(\vec{x})$ modulo the ideal $\langle F_1, \dotsc,  F_m \rangle$. The idea is then to multiply by $\Delta$ instead of dividing by $D$. More precisely, the following is an \I-proof that $G \in \langle F_1, \dotsc, F_m \rangle$:
\begin{equation} \label{eqn:RIPScert}
C_{\Delta}(\vec{x}, \vec{\f}) \defeq C'(\vec{x}, \vec{\f})\Delta(\vec{x}, \vec{\f}, D(\vec{x}, \vec{\f})) + G(\vec{x})E'(\vec{x}, \vec{\f}).
\end{equation}
Since $C'$ and $E'$ must individually be in $\langle \f_1, \dotsc, \f_m \rangle$, the entirety of $C_{\Delta}$ is as well. To see that we get $G(\vec{x})$ after plugging in the $F_i(\vec{x})$ for the $\f_i$, we compute:
\begin{eqnarray*}
C_{\Delta}(\vec{x}, \vec{F}(\vec{x})) & = & C'(\vec{x}, \vec{F}(\vec{x}))\Delta(\vec{x}, \vec{F}(\vec{x}), D(\vec{x}, \vec{F}(\vec{x}))) + G(\vec{x})E'(\vec{x}, \vec{F}(\vec{x})) \\
 & = & C'(\vec{x}, \vec{F}(\vec{x}))\left(\frac{1-E'(\vec{x}, \vec{F}(\vec{x}))}{D_{F}(\vec{x})} \right) + G(\vec{x})E'(\vec{x}, \vec{F}(\vec{x})) \\
 & = & G(\vec{x})\left(1-E'(\vec{x}, \vec{F}(\vec{x}))\right) + G(\vec{x})E'(\vec{x}, \vec{F}(\vec{x})) \\
 & = & G(\vec{x}).
\end{eqnarray*}

Finally, we give an upper bound on the size of a circuit for $C_{\Delta}$. The numerator and denominator of a rational function computed by a circuit of size $s$ can be computed individually by circuits of size $O(s)$. The basic idea, going back to Strassen \cite{strassenDivision}, is to replace each wire by a pair of wires explicitly encoding the numerator and denominator, to replace a multiplication gate by a pair of multiplication gates---since $(A/B) \times (C/D) = (A \times C)/(B \times D)$---and to replace an addition gate by the appropriate gadget encoding the expression $(A/B) + (C/D) = (AD + BC)/BD$. In particular, we may assume that a circuit computing $C'/D$ has the following form: it first computes $C'$ and $D$ separately, and then has a single division gate computing $C'/D$. 
Thus from a circuit for $C$ we can get circuits of essentially the same size for both $C'$ and $D$. Given a circuit for $E = d' \Delta + E'$, we get a circuit for $E'$ by setting $d'=0$. We can then get a circuit for $d'\Delta$ as $E - E'$. From a circuit for $d'\Delta$ we can get a circuit for $\Delta$ alone by first dividing $d'\Delta$ by $d'$, and then eliminating that division using Strassen \cite{strassenDivision}. Combining these, we then easily construct a circuit for the \I-certificate $C_{\Delta}$ of size $\poly(|C|, |E|)$.
\end{proof}

\begin{example}
Returning to the inversion principle, we find that the certificate from Example~\ref{ex:inversion} only divided by $\det(X)$, which we've already remarked does not vanish \emph{anywhere} that $XY - I$ vanishes. By the preceding proposition, there is thus an \I-certificate for the inversion principle of polynomial size, \emph{if} there is an \I-certificate for the unsatisfiability of $\det(X) = 0 \land XY-I=0$ of polynomial size. In this case we can guess at the multiplicative inverse of $\det(X)$ modulo $XY-I$, namely $\det(Y)$, since we know that $\det(X)\det(Y) = 1$ if $XY=I$. Hence, we can try to find a certificate for the unsatisfiability of $\det(X) = 0 \land XY-I=0$ of the form 
\[
\det(X) \det(Y) + (\text{something in the ideal of } \langle (XY-I)_{i,j \in [n]} \rangle) = 1.
\]
In other words, we want a refutation-style \I-proof of the implication $XY = I \Rightarrow \det(X)\det(Y)=1$, which is another one of the Hard Matrix Identities. Such a refutation is exactly what Hrubes and Tzameret provide \cite{hrubesTzameretDet}.
\end{example}

In fact, for this particular example we could have anticipated that a rational certificate was unnecessary, because the ideal generated by $XY-I$ is prime and hence radical. (Indeed, the ring $\F[X, Y]/\langle XY - I \rangle$ is the coordinate ring of the algebraic group $\text{GL}_n(\F)$, which is an irreducible variety.)

Unfortunately, the Rational Ideal Proof System is not complete, as the next example shows.

\begin{example}
Let $F_1(x) = x^2$ and $G(x) = x$. Then $G(x) \in \sqrt{\langle F_1(\vec{x}) \rangle}$, but any R\I certificate would show $G(x) D(x) = F_1(x) H(x)$ for some $D, H$. Plugging in, we get $x D(x) = x^2 H(x)$, and by unique factorization we must have that $D(x) = x D'(x)$ for some $D'$. But then $D$ vanishes identically on $V(F_1)$, contrary to the definition of R\I-certificate.
\end{example}

To get a more complete proof system, we could generalize the definition of R\I to allow dividing by any polynomial that does not vanish to appropriate \emph{multiplicity} on each irreducible component (see, \eg, \cite[Section~3.6]{eisenbud} for the definition of multiplicity). For example, this would allow dividing by $x$ to show that $x \in \sqrt{\langle x^2 \rangle}$, but would disallow dividing by $x^2$ or any higher power of $x$. However, the proof of soundness of this generalized system is more involved, and the results of the next section seem not to hold for such a proof system. As of this writing we do not know of any better characterization of when R\I certificates exist other than the definition itself.

\begin{definition}
A R\I certificate is \definedWord{Hilbert-like} if the denominator doesn't involve the placeholder variables $\f_i$ and the numerator is $\vec{\f}$-linear. In other words, a Hilbert-like R\I certificate has the form $\frac{1}{D(\vec{x})}\sum_{i} \f_i G_i(\vec{x})$.
\end{definition}

\begin{lemma} \label{lem:RIPSgen2Hilb}
If there is a R\I certificate that $G \in \sqrt{\langle F_1, \dotsc, F_m \rangle}$, then there is a Hilbert-like R\I certificate proving the same.
\end{lemma}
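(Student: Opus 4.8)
The plan is to mimic the proof of Proposition~\ref{prop:gen2Hilb} (general versus Hilbert-like \I, via sparse multivariate interpolation), adapting it to the rational setting. Let $C = C'/D$ be a given R\I-certificate that $G \in \sqrt{\langle F_1, \dotsc, F_m \rangle}$, where $C', D$ are relatively prime polynomials with $D$ not involving the $\f_i$ except possibly through the substitution condition. First I would observe that, just as in Definition~\ref{def:RIPS}, the key structural fact is Condition~(\ref{condition:local}): after substituting $F_i(\vec{x})$ for $\f_i$, the polynomial $D_F(\vec{x}) = D(\vec{x}, \vec{F}(\vec{x}))$ does not vanish identically on any irreducible component of $V(F_1, \dotsc, F_m)$. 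Multiplying through by $D$, the numerator $C'(\vec{x}, \vec{\f})$ is an (ordinary, polynomial) \I-certificate witnessing that $G(\vec{x}) D_F(\vec{x}) \in \langle F_1, \dotsc, F_m \rangle$ (this is the content of the first paragraph of the proof of Proposition~\ref{prop:RIPS}).

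The main step is then to apply the interpolation argument from Proposition~\ref{prop:gen2Hilb} to the numerator $C'(\vec{x}, \vec{\f})$, viewing it as a polynomial in the placeholder variables $\vec{\f}$ with coefficients in $\F[\vec{x}]$. Since $C'(\vec{x}, \vec{0}) = 0$ (as $C(\vec{x}, \vec{0}) = 0$ and $D$ is a nonzero polynomial, so every $\vec{\f}$-monomial of $C'$ is non-constant), the same monomial-by-monomial rewriting used in Proposition~\ref{prop:gen2Hilb} produces a \emph{Hilbert-like} polynomial expression $\sum_i \f_i \tilde{G}_i(\vec{x})$ such that, upon substituting $F_i$ for $\f_i$, one recovers $C'(\vec{x}, \vec{F}(\vec{x})) = G(\vec{x}) D_F(\vec{x})$. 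Dividing by $D(\vec{x})$ — which by hypothesis involves only the $\vec{x}$ variables — yields the Hilbert-like R\I-certificate $\frac{1}{D(\vec{x})} \sum_i \f_i \tilde{G}_i(\vec{x})$, whose value after substitution is $G(\vec{x})$, as required. I would also need to check that Condition~(\ref{condition:local}) is inherited: the denominator is literally unchanged ($D$ does not involve the $\f_i$ in the Hilbert-like form by construction, and $D_F$ is the same polynomial), so the localization/non-vanishing condition carries over verbatim.

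The main obstacle I anticipate is the subtlety in \emph{where} the $\f_i$-variables sit relative to $D$: the definition of Hilbert-like R\I-certificate requires the denominator to not involve $\vec{\f}$ at all, so I must first argue that a general R\I-certificate can be massaged into one of the form $C'(\vec{x}, \vec{\f})/D(\vec{x})$ with $D$ purely in $\vec{x}$. This should follow because, after substitution, $D(\vec{x}, \vec{F}(\vec{x}))$ is already a polynomial in $\vec{x}$ alone, and Condition~(\ref{condition:local}) only constrains $D_F$; but care is needed to ensure the relatively-prime normalization in Definition~\ref{def:RIPS} is compatible with pushing all $\f$-dependence into the numerator without changing the substituted value — one clean way is to simply clear denominators and re-divide by $D_F$ only \emph{after} interpolation, so that we never need $D$ itself to be $\f$-free, only $D_F$. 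A secondary (minor) concern is the field-size caveat from the footnote to Proposition~\ref{prop:gen2Hilb}: if $|\F|$ is too small the interpolation points may not exist over $\F$, but the same remedy applies (pass to a small extension, or note the bound $(\log T)^k$ blow-up), and this does not affect the qualitative statement of the lemma.
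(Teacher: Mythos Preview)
Your proposal is essentially correct and matches the paper's approach: replace the denominator $D(\vec{x},\vec{\f})$ by $D_F(\vec{x})=D(\vec{x},\vec{F}(\vec{x}))$, then linearize the numerator $C'$ monomial-by-monomial (keeping one placeholder $\f_i$ per monomial and substituting $F_j(\vec{x})$ for the rest), and finally divide by $D_F$. Your ``obstacle'' paragraph already identifies the right fix, so the initial assumption that $D$ is $\vec{\f}$-free is harmless.

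One remark: the sparse-interpolation machinery and the field-size caveat from Proposition~\ref{prop:gen2Hilb} are irrelevant here, since Lemma~\ref{lem:RIPSgen2Hilb} is a purely existential statement; those tools belong to Corollary~\ref{cor:RIPSgen2Hilb}, which is the size-efficient version. The paper's proof of the lemma itself is two sentences and uses no interpolation.
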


\begin{proof}
Let $C = C'(\vec{x}, \vec{\f})/D(\vec{x}, \vec{\f})$ be a R\I certificate. First, we replace the denominator by $D_{F}(\vec{x}) = D(\vec{x}, \vec{F}(\vec{x}))$.  Next, for each monomial appearing in $C'$, we replace all but one of the $\f_i$ in that monomial with the corresponding $F_i(\vec{x})$, reducing the monomial to one that is $\vec{\f}$-linear.
\end{proof}

As in the case of \I, we only know how to guarantee a size-efficient reduction under a sparsity condition. The following is the R\I-analogue of Proposition~\ref{prop:gen2Hilb}.

\begin{corollary} \label{cor:RIPSgen2Hilb}
If $C = C'/D$ is a R\I proof that $G \in \sqrt{\langle F_1, \dotsc, F_m \rangle}$, where the numerator $C'$ satisfies the same sparsity condition as in Proposition~\ref{prop:gen2Hilb}, then there is a Hilbert-like R\I proof that $G \in \sqrt{\langle F_1, \dotsc, F_m \rangle}$, of size $\poly(|C|)$.
\end{corollary}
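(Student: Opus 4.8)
The plan is to combine the proof of Lemma~\ref{lem:RIPSgen2Hilb} with the sparse-interpolation argument of Proposition~\ref{prop:gen2Hilb}, the only genuinely new point being how to carry the single division gate along. First I would put the given R\I proof into a convenient normal form: as observed in the proof of Proposition~\ref{prop:RIPS2IPS} (following Strassen \cite{strassenDivision}), any size-$s$ circuit with divisions computing a rational function $C'/D$ can be converted, at polynomial cost, into one that first computes $C'$ and $D$ by division-free subcircuits of size $O(s)$ and then applies a single division gate. So from the R\I proof $C = C'/D$ of size $|C|$ I extract division-free circuits for the numerator $C'(\vec{x},\vec{\f})$ and the denominator $D(\vec{x},\vec{\f})$, each of size $O(|C|)$, and the sparsity hypothesis in the statement is a hypothesis on this $C'$.

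Next, following the proof of Lemma~\ref{lem:RIPSgen2Hilb}, I would replace the denominator: substituting an $O(|C|)$-size circuit for each $F_i(\vec{x})$ in place of $\f_i$ in the $D$-circuit yields a $\poly(|C|)$-size circuit for $D_{F}(\vec{x}) := D(\vec{x},\vec{F}(\vec{x}))$, which no longer mentions the placeholder variables. For the numerator I would run sparse multivariate interpolation (Zippel's algorithm, exactly as in Proposition~\ref{prop:gen2Hilb}) on $C'$ viewed as a polynomial in $\vec{\f}$ with coefficients in $\F[\vec{x}]$; by the sparsity assumption there are only $\poly(|C|)$ many monomials $\vec{\f}^{\vec{e}}$ appearing, each with a coefficient $c_{\vec{e}}(\vec{x})$ computed by a $\poly(|C|)$-size circuit. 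For each such monomial I would keep one variable $\f_{i_0}$ with $e_{i_0}>0$ and substitute $F_j(\vec{x})$ for all the remaining placeholder factors — concretely, divide $c_{\vec{e}}(\vec{x})\vec{\f}^{\vec{e}}$ by $\f_{i_0}$ and eliminate this division via Strassen \cite{strassenDivision} (or use $\tfrac{1}{e_{i_0}}\partial/\partial\f_{i_0}$), plug in the $F_j$, then multiply by $\f_{i_0}$ — producing a small circuit for an $\vec{\f}$-linear term. Summing these and grouping by which $\f_{i_0}$ is kept gives a $\poly(|C|)$-size circuit for $\sum_i \f_i G_i(\vec{x})$, and a single division by $D_{F}(\vec{x})$ then gives the Hilbert-like R\I proof $\frac{1}{D_{F}(\vec{x})}\sum_i \f_i G_i(\vec{x})$ of size $\poly(|C|)$.

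Finally I would verify that the resulting object really is a Hilbert-like R\I certificate in the sense of Definition~\ref{def:RIPS}. Conditions~(\ref{condition:RIPSideal}) and~(\ref{condition:RIPSnss}) are immediate: the new numerator is $\vec{\f}$-linear, hence vanishes at $\vec{\f}=\vec{0}$, and by construction it still evaluates to $C'(\vec{x},\vec{F}(\vec{x}))$ when the $F_i$ are substituted, so the whole certificate evaluates to $C'(\vec{x},\vec{F}(\vec{x}))/D_{F}(\vec{x}) = C(\vec{x},\vec{F}(\vec{x})) = G(\vec{x})$. For Condition~(\ref{condition:local}) the point is that after substituting the $F_i$ the denominator is exactly $D_{F}(\vec{x})$ — the very polynomial whose reciprocal lies in the relevant localization because $C$ was assumed to be an R\I certificate — and passing to lowest terms only replaces $D_{F}$ by a divisor of it, which remains in that localization. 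The step I expect to require the most care is precisely this bookkeeping around relative primality and the localization condition, together with checking that the interpolation size bound stays $\poly(|C|)$ under the stated sparsity hypothesis (including, over finite base fields of size $< Dt\binom{n}{2}$, the extra $(\log T)^{k}$ factor already noted in Proposition~\ref{prop:gen2Hilb}); the interpolation machinery itself is identical to the proof of Proposition~\ref{prop:gen2Hilb} and can be quoted essentially verbatim.
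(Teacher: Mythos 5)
Your proposal is correct and follows essentially the same route as the paper's proof: normalize to a single division via Strassen, substitute $F_i$ into the denominator subcircuit to get $D_F(\vec{x})$, apply the sparse interpolation machinery of Proposition~\ref{prop:gen2Hilb} verbatim to the numerator, and reattach the division. The only thing you add beyond what the paper writes is the explicit check that the resulting object satisfies the three conditions of Definition~\ref{def:RIPS} (including that passing to lowest terms only divides out factors of $D_F$, which cannot lie in any minimal prime since $D_F$ itself does not); the paper leaves this implicit, so your extra care is welcome but not a different argument.
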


\begin{proof}
We follow the proof of Lemma~\ref{lem:RIPSgen2Hilb}, making each step effective. As in the last paragraph of the proof of Proposition~\ref{prop:RIPS2IPS}, any circuit with divisions computing a rational function $C'/D$, where $C',D$ are relatively prime polynomials can be converted into a circuit without divisions computing the pair $(C', D)$. By at most doubling the size of the circuit, we can assume that the subcircuits computing $C'$ and $D$ are disjoint. Now replace each $\f_i$ input to the subcircuit computing $D$ with a small circuit computing $F_i(\vec{x})$. Next, we apply sparse multivariate interpolation to the numerator $C'$ exactly as in Proposition~\ref{prop:gen2Hilb}. The resulting circuit now computes a Hilbert-like R\I certificate. 
\end{proof}

\subsection{Towards lower bounds}
We begin by noting that, since the numerator and denominator can be computed separately (originally due to Strassen \cite{strassenDivision}, see the proof of Proposition~\ref{prop:RIPS2IPS} above for the idea), it suffices to prove a lower bound on, for each R\I-certificate, either the denominator or the numerator.

As in the case of Hilbert-like \I and general \I (recall Section~\ref{sec:syzygy}), the set of R\I certificates showing that $G \in \sqrt{\langle F_1, \dotsc, F_m \rangle}$ is a coset of a finitely generated ideal. 

\begin{lemma}
The set of R\I-certificates showing that $G \in \sqrt{\langle F_1, \dotsc, F_m \rangle}$ is a coset of a finitely generated ideal in $R$, where $R$ is the localization of $\F[\vec{x}, \vec{\f}]$ at $\bigcup_i P_i$, where the union is over the prime ideals minimal over $\langle F_1, \dotsc, F_m \rangle$.

Similarly, the set of Hilbert-like R\I certificates is a coset of a finitely generated submodule of $R'^{m}$, where $R' = R \cap \F[\vec{x}]$ is the localization of $\F[\vec{x}]$ at $\bigcup_i (P_i \cap \F[\vec{x}])$.
\end{lemma}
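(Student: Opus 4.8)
The plan is to follow exactly the template of Lemma~\ref{lem:fgsyz} (and its extension to general \I-certificates in Section~\ref{sec:syzygy}): present the set of certificates as the intersection of an ideal with a coset of an ideal inside a suitable \emph{Noetherian} ring $R$, and then invoke the facts recorded in Appendix~\ref{app:background:algebra} that the complement of a finite union of primes is multiplicatively closed, that localizations of Noetherian rings are Noetherian, and that every ideal (submodule) of a Noetherian ring (module) is finitely generated.

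The first step --- and the one I expect to require the most care --- is to pin down the ambient ring $R$. Let $\phi\colon \F[\vec{x},\vec{\f}] \to \F[\vec{x}]$ be the substitution homomorphism $\f_i \mapsto F_i(\vec{x})$, whose kernel is $\langle \f_1 - F_1(\vec{x}), \dotsc, \f_m - F_m(\vec{x})\rangle$, and let $P_1,\dotsc,P_k$ be the finitely many primes of $\F[\vec{x}]$ minimal over $\langle F_1(\vec{x}),\dotsc,F_m(\vec{x})\rangle$. Put $\mathfrak{P}_j := \phi^{-1}(P_j)$, a prime of $\F[\vec{x},\vec{\f}]$, and let $R$ be the localization of $\F[\vec{x},\vec{\f}]$ at $S := \F[\vec{x},\vec{\f}] \setminus \bigcup_j \mathfrak{P}_j$; being the complement of a finite union of primes, $S$ is multiplicatively closed and saturated, hence closed under passing to divisors. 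Two elementary observations carry the argument: (i) for a polynomial $D$ one has $D \in S$ if and only if $D(\vec{x},\vec{F}(\vec{x})) \notin \bigcup_j P_j$, so Condition~(\ref{condition:local}) of Definition~\ref{def:RIPS} for a reduced fraction $C = C'/D$ asserts precisely that $D \in S$, i.e.\ that $C \in R$; and (ii) both $\langle \f_1,\dotsc,\f_m\rangle$ and $\ker\phi$ lie in every $\mathfrak{P}_j$ (write $\f_i = (\f_i - F_i) + F_i$ with $F_i \in P_j$), so $S$ avoids both, and consequently $\phi$ and the ``evaluate at $\vec{\f}=\vec{0}$'' map extend to ring homomorphisms $\mathrm{ev}_F,\mathrm{ev}_0 \colon R \to R'$ restricting to the identity on $\F[\vec{x}]$, where $R' = R \cap \F[\vec{x}]$ is exactly the localization of $\F[\vec{x}]$ at $\bigcup_j P_j$, with $\ker(\mathrm{ev}_0) = \langle \vec{\f}\rangle R$ and $\ker(\mathrm{ev}_F) = (\ker\phi)R$.

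Granting this, the rest is routine. Every R\I-certificate that $G \in \sqrt{\langle F_1,\dotsc,F_m\rangle}$ lies in $R$, since its reduced denominator divides an element of $S$ and $S$ is closed under divisors; and under observation (ii) Conditions~(\ref{condition:RIPSideal}) and~(\ref{condition:RIPSnss}) say exactly $\mathrm{ev}_0(C) = 0$ and $\mathrm{ev}_F(C) = G$. Hence the set of R\I-certificates is
\[
\ker(\mathrm{ev}_0) \,\cap\, \mathrm{ev}_F^{-1}(G) \;=\; \langle \vec{\f}\rangle R \,\cap\, \bigl(G + (\ker\phi)R\bigr).
\]
The elementary fact that the intersection of an ideal $J$ with a coset $a+I$ is empty or a coset of $I \cap J$ (mimic the linear-algebra argument: subtract two solutions) then exhibits this set, when nonempty, as a coset of $\langle\vec{\f}\rangle R \cap (\ker\phi)R = \bigl(\langle\vec{\f}\rangle \cap \ker\phi\bigr)R$ (localization commutes with finite intersections), the extension to $R$ of the finitely generated ideal of zero-certificates from Section~\ref{sec:syzygy}, which is finitely generated because $R$ is Noetherian. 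For the Hilbert-like statement one repeats the proof of Lemma~\ref{lem:fgsyz} verbatim over $R'$ in place of $\F[\vec{x}]$: a Hilbert-like R\I-certificate $\tfrac{1}{D(\vec{x})}\sum_i \f_i G_i(\vec{x})$ is determined by the tuple $(G_1/D,\dotsc,G_m/D)$, Condition~(\ref{condition:local}) forces $1/D \in R'$ so the tuple lies in $R'^{m}$, and Conditions~(\ref{condition:RIPSideal})--(\ref{condition:RIPSnss}) collapse to the single $R'$-linear equation $\sum_i F_i(\vec{x})\, g_i = G(\vec{x})$; its solution set in $R'^{m}$ is a coset of the syzygy module, which is finitely generated since $R'$ is Noetherian and $R'^{m}$ is therefore a Noetherian $R'$-module.

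The main obstacle is genuinely the choice of $R$: it must be large enough that Condition~(\ref{condition:local}) becomes bare membership $C \in R$, but small enough that $\mathrm{ev}_0$ and $\mathrm{ev}_F$ are defined on all of $R$. The more obvious choice of localizing $\F[\vec{x},\vec{\f}]$ at $\bigcup_j P_j\cdot\F[\vec{x},\vec{\f}]$ fails on the second count --- there $\f_1$ and $\f_1 - F_1$ become units, neither substitution map is globally defined, and the certificate set is then only \emph{contained} in, not equal to, the expected coset. It is precisely the reduced-fraction formulation of Definition~\ref{def:RIPS}, together with the saturatedness of $S$, that keeps the argument on track.
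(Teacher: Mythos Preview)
Your proposal is correct and follows the same approach as the paper: mimic the proof of Lemma~\ref{lem:fgsyz} over the localized ring, using that localizations of Noetherian rings are Noetherian. The paper's own proof is essentially a two-sentence pointer to exactly this.

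That said, your write-up goes well beyond the paper in one respect worth flagging. The paper simply asserts that $R$ is ``the localization of $\F[\vec{x},\vec{\f}]$ at $\bigcup_i P_i$'' and leaves the reader to sort out what that means when the $P_i$ are primes of $\F[\vec{x}]$. You correctly identify that the naive reading --- localizing at the extended ideals $P_i\,\F[\vec{x},\vec{\f}]$ --- fails, because then each $\f_i$ (and each $\f_i - F_i$) becomes a unit and neither $\mathrm{ev}_0$ nor $\mathrm{ev}_F$ is globally defined on $R$; the certificate set would then only be contained in, not equal to, the coset in question. Your fix via the contracted primes $\mathfrak{P}_j = \phi^{-1}(P_j)$ is exactly right: it makes Condition~(\ref{condition:local}) equivalent to membership in $R$ (using saturatedness of $S$), and the inclusion $\langle\vec{\f}\rangle,\ker\phi \subseteq \mathfrak{P}_j$ ensures both evaluation maps extend to all of $R$. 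The paper's proof does not address this point at all, so your version is strictly more complete while remaining the same argument in spirit.
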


\begin{proof}
The proof is essentially the same as that of Lemma~\ref{lem:fgsyz}, but with one more ingredient. Namely, we need to know that the rings $R$ and $R'$ are Noetherian. This follows from the fact that polynomial rings over fields are Noetherian, together with the general fact that any localization of a Noetherian ring is again Noetherian.
\end{proof}

Exactly analogous to the the case of \I certificates, we define general and Hilbert-like R\I zero-certificates to be those for which, after plugging in the $F_i$ for $\f_i$, the resulting function is identically zero. In the case of Hilbert-like R\I, these are again syzygies of the $F_i$, but now syzygies with coefficients in the localization $R' = \F[\vec{x}]_{P_1 \cup \dotsb \cup P_k}$. 

However, somewhat surprisingly, we seem to be able to go further in the case of R\I than \I, as follows. In general, the ring $\F[\vec{x}, \vec{\f}]_{P_1 \cup \dotsb \cup P_k}$ is a Noetherian \emph{semi-local} ring, that is, in addition to being Noetherian, it has finitely many maximal ideals, namely $P_1, \dotsc, P_k$. Ideals in and modules over semi-local rings enjoy properties not shared by ideals and modules over arbitrary rings.

In the special case when there is just a single prime ideal $P_1$, the localization is a \emph{local} ring (just one maximal ideal). We note that this is the case in the setting of the Inversion Principle, as the ideal generated by the $n^2$ polynomials $XY-I$ is prime. Local rings are in some ways very close to fields---if $R$ is a local ring with unique maximal ideal $P$, then $R/P$ is a field---and modules over local rings are much closer to vector spaces than are modules over more general rings. This follows from the fact that $M/P$ is then in fact a vector space over the field $R/P$, together with Nakayama's Lemma (see, \eg, \cite[Corollary~4.8]{eisenbud} or \cite[Section~2.8]{reidCA}). Once nice feature is that, if $M$ is a module over a local ring, then every minimal generating set has the same size, which is the dimension of $M/P$ as an $R/P$-vector space. We also get that for every minimal generating set $b_1, \dotsc, b_k$ of $M$ (``$b$'' for ``basis'', even though the word basis is reserved for free modules), for each $m \in M$, any two representations $m = \sum_{i=1}^{k} r_i b_i$ with $r_i \in R$ differ by an element in $PM$. This near-uniqueness could be very helpful in proving lower bounds, as normal forms have proved useful in proving many circuit lower bounds.

\begin{open}
Does every R\I proof of the $n \times n$ Inversion Principle $XY = I \Rightarrow YX = I$ require computing a determinant? That is, is it the case that for every R\I certificate $C=C'/D$, some determinant of size $n^{\Omega(1)}$ reduces to one of $C, C', D$ by a $O(\log n)$-depth circuit reduction?
\end{open}

A positive answer to this question would imply that the Hard Matrix Identities do not have $O(\log n)$-depth R\I proofs unless the determinant can be computed by a polynomial-size algebraic formula. Since \I (and hence R\I) simulates Frege-style systems in a depth-preserving way (Theorem~\ref{thm:depth}), a positive answer would also imply that there are not ($\cc{NC}^1$-)Frege proofs of the Boolean Hard Matrix Identities unless the determinant has polynomial-size \emph{algebraic} formulas. Although answering this question may be difficult, the fact that we can even \emph{state} such a precise question on this matter should be contrasted with the preceding state of affairs regarding Frege proofs of the Boolean Hard Matrix Identities (which was essentially just a strong intuition that they should not exist unless the determinant is in $\cc{NC}^1$).
\section{Geometric \texorpdfstring{\I}{\Itext}-certificates} \label{app:geom}
We may consider $F_1(x_1, \dotsc, x_n), \dotsc, F_m(x_1,\dotsc,x_n)$ as a polynomial map $F = (F_1,\dotsc,F_m)\colon\F^{n} \to \F^{m}$. Then this system of polynomials has a common zero if and only if $0$ is the image of $F$. In fact, we show that for any Boolean system of equations, which are those that include $x_1^2 - x_1 = \dotsb = x_n^2 - x_n = 0$, or multiplicative Boolean equations---those that include $x_1^2 - 1 = \dotsb = x_n^2 - 1 = 0$---the system of polynomials has a common zero if and only if $0$ is in the \emph{closure} of the image of $F$.

The preceding is the geometric picture we pursue in this section; next we describe the corresponding algebra. The set of \I certificates is the intersection of the ideal $\langle \f_1, \dotsc, \f_m \rangle$ with the coset $1 + \langle \f_1 - F_1(\vec{x}), \dotsc, \f_m - F_m(\vec{x}) \rangle$. The map $a \mapsto 1 - a$ is a bijection between this coset intersection and the coset intersection $\left(1 + \langle \f_1, \dotsc, \f_m \rangle \right) \cap \langle \f_1 - F_1(\vec{x}), \dotsc, \f_m - F_m(\vec{x}) \rangle$. In particular, the system of equations $F_1 = \dotsb = F_m = 0$ is unsatisfiable if and only if the latter coset intersection is nonempty. 

We show below that if the latter coset intersection contains a polynomial involving only the $\f_i$'s---that is, its intersection with the subring $\F[\vec{\f}]$ (rather than the much larger ideal $\langle \vec{\f} \rangle \subseteq \F[\vec{x}, \vec{\f}]$) is nonempty---then $0$ is not even in the closure of the image of $F$. Hence we call such polynomials ``geometric certificates:''

\begin{definition}[The Geometric Ideal Proof System] \label{def:geompf}
A \definedWord{geometric \I certificate} that a system of $\F$-polynomial equations $F_1(\vec{x}) = \dotsb = F_m(\vec{x}) = 0$ is unsatisfiable over $\overline{\F}$ is a polynomial $C \in \F[\f_1, \dotsc, \f_m]$ such that
\begin{enumerate}
\item \label{condition:geom_nonzero} $C(0,0,\dotsc,0) = 1$, and

\item \label{condition:geom_ideal} $C(F_1(\vec{x}), \dotsc, F_{m}(\vec{x})) = 0$. In other words, $C$ is a polynomial relation amongst the $F_i$.
\end{enumerate}
A \definedWord{geometric \I proof} of the unsatisfiability of $F_1 = \dotsb = F_m = 0$, or a \definedWord{geometric \I refutation} of $F_1 = \dotsb = F_m = 0$, is an $\F$-algebraic circuit on inputs $\f_1, \dotsc,\f_m$ computing some geometric certificate of unsatisfiability.
\end{definition}

If $C$ is a geometric certificate, then $1-C$ is an \I certificate that involves only the $\f_i$'s, somewhat the ``opposite'' of a Hilbert-like certificate. Hence the smallest circuit size of any geometric certificate is at most the smallest circuit size of any algebraic certificate. We do not know, however, if these complexity measures are polynomially related:

\begin{open} \label{question:geometric}
For Boolean systems of equations, Geometric \I polynomially equivalent to \I? That is, is there always a geometric certificate whose circuit size is at most a polynomial in the circuit size of the smallest algebraic certificate?
\end{open}

Although the Nullstellensatz doesn't guarantee the existence of geometric certificates for arbitrary unsatisfiable systems of equations---and indeed, geometric certificates need not always exist---for \emph{Boolean} systems of equations (usual or multiplicative) geometric certificates always exist. In fact, this holds for any system of equations which contains at least one polynomial containing only the variable $x_i$, for each variable $x_i$:

\begin{proposition} \label{prop:geometric}
Let $\F$ be either a (topologically) dense subfield of $\C$ or any algebraically closed field. A Boolean system of equations over $\F$---or more generally any system of equations containing, for each variable $x_i$, at least one non-constant equation involving only $x_i$\footnote{We believe that the ``correct'' generalization here is to systems of equations $F_1 = \dotsb = F_m = 0$ such that
the corresponding map $F \colon \F^{n} \to \image(F)$ is \emph{flat} (see, \eg, \cite[Chapter~6]{eisenbud}) and has zero-dimensional fibers, that is, the inverse image of any point is a finite set. Systems satisfying the hypothesis of Proposition~\ref{prop:geometric} satisfy these hypotheses as well, but we have not checked carefully if the result extends in this generality.}
---has a common root if and only if it does not have a geometric certificate.
\end{proposition}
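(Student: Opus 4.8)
\textbf{Proof proposal for Proposition~\ref{prop:geometric}.}

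The plan is to prove the two directions separately, the easy direction being soundness of geometric certificates and the hard direction being the existence of a geometric certificate whenever the system is unsatisfiable. For soundness, suppose $C \in \F[\vec{\f}]$ satisfies $C(\vec{0})=1$ and $C(F_1(\vec{x}),\dotsc,F_m(\vec{x}))=0$ identically. If the system had a common root $\vec{a} \in \overline{\F}^n$, then $F_i(\vec{a})=0$ for all $i$, so evaluating the polynomial identity $C(\vec{F}(\vec{x}))=0$ at $\vec{x}=\vec{a}$ gives $C(\vec{0})=0$, contradicting $C(\vec{0})=1$. (Over a dense subfield of $\C$ one argues the same way, since a polynomial identity over $\F$ remains an identity over $\C$, and a common root over $\C$ would give the same contradiction.) This shows: if a geometric certificate exists, the system is unsatisfiable. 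Equivalently, geometrically, $C$ is a polynomial vanishing on $\image(F)$ but not at $0$, hence $0 \notin \overline{\image(F)}$, which certainly implies $0 \notin \image(F)$.

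The substantive direction is the converse: if the system $F_1=\dotsb=F_m=0$ is unsatisfiable and each variable $x_i$ appears in some univariate equation of the system, then a geometric certificate exists. The key point is that under this hypothesis the image $\image(F) \subseteq \overline{\F}^m$ is already (Zariski-)closed, so that $0 \notin \image(F)$ upgrades to $0 \notin \overline{\image(F)}$, and then the ideal $I(\overline{\image(F)}) \subseteq \F[\vec{\f}]$ of polynomials vanishing on the image contains a polynomial not vanishing at $0$ — rescaling gives $C$ with $C(\vec{0})=1$, and $C(\vec{F}(\vec{x}))=0$ identically because $C$ vanishes on $\image(F)$. So everything reduces to showing $\image(F)$ is closed. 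Here I would use the following structure: say $x_i$ satisfies $p_i(x_i)=0$ with $p_i$ non-constant of degree $d_i$; then the quotient ring $A := \F[\vec{x}]/\langle p_1(x_1),\dotsc,p_n(x_n)\rangle$ is a \emph{finite-dimensional} $\F$-algebra (spanned by the monomials $x_1^{e_1}\cdots x_n^{e_n}$ with $e_i < d_i$). The map $F$ factors through $\mathrm{Spec}\,A \to \overline{\F}^m$, i.e. through the map $B := \F[\vec{\f}]/\ker(\F[\vec{\f}]\to A) \hookrightarrow A$, and since $A$ is module-finite over $B$ (it's even finite over $\F$), the morphism $\mathrm{Spec}\,A \to \mathrm{Spec}\,B$ is \emph{finite}, hence closed, and surjective onto $\mathrm{Spec}\,B = \overline{\image(F)}$. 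Thus $\image(F) = \overline{\image(F)}$ is closed. (This is exactly the ``finite morphisms are closed'' / ``lying over'' package in commutative algebra; the footnote's flatness-with-finite-fibers remark is the natural generalization, but the module-finite case suffices for the stated hypothesis.) Once $\image(F)$ is closed, unsatisfiability says $0 \notin \image(F)$, so $0 \notin V(I(\image(F)))$, so $I(\image(F)) \not\subseteq \mathfrak{m}_0 := \langle \f_1,\dotsc,\f_m\rangle$; pick $C_0 \in I(\image(F)) \setminus \mathfrak{m}_0$, so $C_0(\vec{0}) = \lambda \neq 0$, and set $C := \lambda^{-1} C_0$.

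The main obstacle I expect is the passage from the algebraic fact "$0$ is not in the image" to "$0$ is not in the \emph{closure} of the image" — i.e. establishing that $\image(F)$ is closed under the stated hypothesis, and doing so cleanly over both an algebraically closed $\F$ and a dense subfield of $\C$ (in the latter case one works over $\C$ throughout and notes that the resulting $C$, after clearing denominators, can be taken with coefficients in $\F$ since $\ker(\F[\vec{\f}] \to A)$ is already an $\F$-ideal). A secondary technical point is making sure the encoding of $C$ as an algebraic circuit is not an issue — the proposition only asserts existence of a geometric certificate, not a small one (that is Open Question~\ref{question:geometric}), so any polynomial $C$ works and no size bookkeeping is needed. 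One should also double-check the trivial observation used implicitly: for a Boolean system the equations $x_i^2 - x_i$ (or $x_i^2 - 1$) are exactly of the required univariate form, so the general hypothesis genuinely subsumes the Boolean case.
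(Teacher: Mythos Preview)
Your overall strategy---reduce the existence of a geometric certificate to the statement ``$\image(F)$ is Zariski-closed'' and prove closedness via ``finite morphisms are closed''---is a genuinely different route from the paper's. The paper instead proves the weaker statement $0\in\overline{\image(F)}\Rightarrow 0\in\image(F)$ directly, using convergent sequences for $\F\subseteq\C$ and a formal Laurent-series / valuative argument (\cite[Lemma~20.28]{BCS}) for arbitrary algebraically closed $\F$: if $0$ is in the closure there are Laurent series $\chi_i(\varepsilon)$ with $F(\vec\chi(\varepsilon))$ a power series vanishing at $\varepsilon=0$, and the univariate hypothesis forces each $\chi_i$ to be a genuine power series, so $\vec\chi(0)$ is a root. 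Your commutative-algebra approach is cleaner and yields the stronger conclusion that the whole image is closed.

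However, the execution has a real gap. You pass to $A=\F[\vec x]/\langle p_1(x_1),\dotsc,p_n(x_n)\rangle$ and then assert that ``the map $F$ factors through $\mathrm{Spec}\,A$'' and that $\mathrm{Spec}\,B=\overline{\image(F)}$ for $B=\F[\vec\f]/\ker(\F[\vec\f]\to A)$. Neither is true: $F$ is defined on all of $\F^n$, not just on the finite set $V(p_1,\dotsc,p_n)$, and $\ker(\F[\vec\f]\to A)$ is strictly larger than $\ker(\F[\vec\f]\to\F[\vec x])$. Your argument therefore only produces $C\in\F[\vec\f]$ with $C(\vec 0)\ne 0$ and $C(\vec F(\vec x))\in\langle p_1(x_1),\dotsc,p_n(x_n)\rangle$, not $C(\vec F(\vec x))=0$ identically---which is exactly what a geometric certificate requires.

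The fix is simple and keeps your approach intact: do not quotient. Observe that $\F[\vec x]$ is already integral (hence module-finite) over the subring $R:=\F[F_1,\dotsc,F_m]$, because each $x_i$ satisfies the monic equation $p_i(T)-p_i(x_i)=0$ with $p_i(x_i)\in R$. Then $\mathrm{Spec}\,\F[\vec x]\to\mathrm{Spec}\,R$ is a finite morphism, hence closed, and since $R\hookrightarrow\F[\vec x]$ it is surjective; thus $\image(F)=V(\ker(\F[\vec\f]\to\F[\vec x]))=\overline{\image(F)}$ is closed, and the rest of your argument (pick $C_0\in\ker\phi\setminus\mathfrak m_0$, rescale) goes through.
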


The condition of this proposition is almost surely more stringent than necessary, but the next example shows that at least some condition is necessary. 

\begin{example}
Let $F_1(x,y) = xy - 1$ and $F_2(x,y) = x^2 y$. There is no solution to $F_1 = F_2 = 0$, as $F_1 = 0$ implies that both $x$ and $y$ are nonzero, but if this is the case then $x^2 y = F_2(x,y)$ is also nonzero. Yet $0$ is in the closure of the image of the map $F = (F_1, F_2)\colon \F^{2} \to \F^2$. There are (at least) two ways to see this. First, we exhibit $0$ as an explicit limit of points in the image. Let $\chi_1(\varepsilon) = \varepsilon$ and $\chi_2(\varepsilon) = 1/\varepsilon$. Then $F_1(\chi_1(\varepsilon), \chi_2(\varepsilon)) = 0$ identically in $\varepsilon$, and $F_2(\chi_1(\varepsilon), \chi_2(\varepsilon)) = \varepsilon$. Thus, if we take the limit as $\varepsilon \to 0$, we find that $0$ is in the closure of the image of $F$.\footnote{If $\F$ is a dense subfield of $\C$, this limit may be taken in the usual sense of the Euclidean topology. For arbitrary algebraically closed fields $\F$, the same construction works, but must now be interpreted in the context of Lemma~\ref{lem:eps}.}

Alternatively, in this case we can determine the entire image exactly (usually a very daunting task): it is $\{(a,b) \in \F^2 : a \neq -1 \text{ and } b \neq 0\} \cup \{(-1,0)\}$. This can be determined by solving the equations by the elementary method of substitution, and careful but not complicated case analysis. It is then clear (geometrically in the case of subfields of $\C$, and by a dimension argument over an arbitrary algebraically closed field) that the closure of the image is the entirety of $\F^2$, and in particular contains $0$.
\end{example}

The next example rules out another natural attempt at generalizing Proposition~\ref{prop:geometric}, and also shows that the existence of geometric certificates for a given set of equations can depend on the equations themselves, and not just on the ideal they generate. 

\begin{example}
Let $F_1(x,y) = xy-1$ and $F_2(x,y)=x^2y$ as before, and now also add $F_3(x,y) = x^2(1-y)$. We already saw that $F_1 = F_2 = 0$ is unsatisfiable, so $F_1 = F_2 = F_3 = 0$ is unsatisfiable as well. However, $F_1 = F_3 = 0$ has one, and only one, solution, namely $x=y=1$. Let $F = (F_1,F_2,F_3)\colon \F^2 \to \F^3$. To see that $\vec{0}$ is in the closure of the image of $F$, we again consider $\lim_{\varepsilon \to 0} F(\varepsilon, 1/\varepsilon)$. As before $F_1(\varepsilon,1/\varepsilon)=0$ and $F_2(\varepsilon,1/\varepsilon) = \varepsilon$, whose limit is zero as $\varepsilon \to 0$. Similarly, we get $F_3(\varepsilon, 1/\varepsilon) = \varepsilon^2 (1 - 1/\varepsilon) = \varepsilon (\varepsilon - 1)$, which again goes to $0$ as $\varepsilon \to 0$.

Note that if we replace equations $F_1$ and $F_3$ by another set of equations with the same set of solutions (in this case, a singleton set), but satisfying the conditions of Proposition~\ref{prop:geometric}, such as $F_1' = (x-1)^k$ and $F_3' = (y-1)^\ell$ for some $k,\ell > 0$, then $\vec{0}$ is no longer in the closure of the image. For if $(F_1',F_2,F_3')$ approaches $(0,0,0)$, then $x$ and $y$ must both approach $1$, but then $F_2 = x^2 y$ also approaches $1$. Furthermore, by the Nullstellensatz, for some $k,\ell > 0$, the polynomials $(x-1)^k$ and $(y-1)^\ell$ both in the ideal $\langle F_1, F_3 \rangle$. Thus, although the solvability of a system of equations is determined entirely by (the radical of) the ideal they generate, the geometry of the corresponding map---and even the existence of geometric certificates---can change depending on which elements of the ideal are used in defining the map.
\end{example}

The following lemma is the key to Proposition~\ref{prop:geometric}. 

\begin{lemma} \label{lem:geometric}
Let $\F$ be (1) a dense subfield of $\C$ (in the Euclidean topology), or (2) any algebraically closed field. Let $F_1(\vec{x}), \dotsc, F_{m}(\vec{x})$ be a system of equations over $\F$, and let $F=(F_1,\dotsc,F_m)\colon \F^{n} \to \F^{m}$ be the associated polynomial map, as above. If, for $i=1,\dotsc,n$, $F_i(\vec{x})$ is a nonzero function of $x_i$ alone, then the set of equations $F_1 = \dotsb = F_m = 0$ has a solution if and only if $0$ is in the closure $\overline{\image(F)}$. 
\end{lemma}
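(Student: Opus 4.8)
The plan is to prove the two directions separately, the easy one by pure algebra and the hard one by an elimination/projection argument together with a ``going to the limit'' lemma that makes sense over an arbitrary algebraically closed field.

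\textbf{Easy direction.} Suppose $F_1=\dotsb=F_m=0$ has a solution $\vec{a}\in\overline{\F}^n$; in fact if $\F$ is a dense subfield of $\C$ we should first note that by the Nullstellensatz solvability over $\overline{\F}=\C$ is the relevant notion, while for algebraically closed $\F$ we take $\vec{a}\in\F^n$. Then $F(\vec{a})=0$, so $0\in\image(F)\subseteq\overline{\image(F)}$. (One subtlety: in case (1), $\image(F)$ means the image of $F$ restricted to $\F^n$, so I need that $0$ lies in the closure of $F(\F^n)$, not just of $F(\C^n)$. This follows because each $F_i(x_i)$ is a nonzero one-variable polynomial: its finitely many roots over $\C$ need not lie in $\F$, but I can instead run the argument of the hard direction, which produces the conclusion ``no geometric certificate'' directly from solvability over $\overline\F$, and then it is the \emph{other} direction that needs the density hypothesis. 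So I will actually phrase the lemma's proof as: (a) $0\in\overline{\image(F)}\implies$ solvable, using density/closedness; and (b) solvable $\implies$ no geometric certificate $\implies 0\in\overline{\image(F)}$, via Proposition~\ref{prop:geometric}'s algebraic half. Let me restructure accordingly below.)

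\textbf{Hard direction: $0\in\overline{\image(F)}\implies$ solvable.} The contrapositive is cleaner: assume $F_1=\dotsb=F_m=0$ is unsatisfiable over $\overline{\F}$; I want to show $0\notin\overline{\image(F)}$. The hypothesis that each $F_i(x_i)$, for $i\le n$, is a nonzero univariate polynomial is what makes the map $F$ \emph{finite} onto its image: the subring $\F[F_1,\dotsc,F_m]\subseteq\F[\vec x]$ contains, up to the leading coefficient, $x_i^{d_i}+(\text{lower order in }x_i)$ for each $i$, so $\F[\vec x]$ is integral — hence a finitely generated module — over $\F[F_1,\dotsc,F_m]$. Integral/finite morphisms are \emph{closed}, so $\image(F)$ is already a (Zariski-)closed subset of $\overline\F^m$, i.e. $\overline{\image(F)}=\image(F)$, and since $0\notin\image(F)$ we are done. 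The only remaining point is to connect this Zariski-closedness to the Euclidean closure in case (1): for constructible sets over $\C$, Zariski closure equals Euclidean closure, and moreover I must pass from $F(\F^n)$ to $F(\C^n)$ — but here unsatisfiability over $\C$ forces $0\notin F(\C^n)$, and $F(\F^n)\subseteq F(\C^n)$, so $0\notin\overline{F(\F^n)}$ in either topology. This is the step I expect to be the main obstacle: assembling the correct statement of ``finite morphisms are closed'' (or equivalently the Lying Over / Going Up machinery of integral extensions, cf.\ \cite{atiyahMacdonald,eisenbud}) and making sure the topological translation in case (1) is airtight, since the paper wants a single clean statement covering both a non-closed subfield of $\C$ and arbitrary algebraically closed fields.

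\textbf{Alternative for the hard direction, matching the paper's style.} Rather than invoking ``finite $\Rightarrow$ closed'' as a black box, I can give the hands-on argument hinted at by the surrounding examples: if $0\in\overline{\image(F)}$ then (in case (1), by taking a convergent sequence; in case (2), by the curve-selection / $\varepsilon$-valuation device referred to as Lemma~\ref{lem:eps} in the text) there is a ``point at infinity or a finite point'' mapping to $0$. Concretely, use the integrality to bound: on any sequence $\vec x^{(k)}$ with $F(\vec x^{(k)})\to 0$, each coordinate $x_i^{(k)}$ satisfies a monic-up-to-scalar equation whose coefficients are polynomials in the $F_j(\vec x^{(k)})$, hence are bounded; so $\vec x^{(k)}$ lies in a bounded (in case (2): ``of bounded degree/valuation'') set, extract a limit point $\vec a$, and by continuity $F(\vec a)=0$, giving a genuine solution. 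In case (2) ``bounded'' and ``limit'' are interpreted through the standard device of working over the field of Puiseux series or a nonprincipal ultrapower, exactly as Lemma~\ref{lem:eps} is set up to allow; the integrality bound is what guarantees the limit stays finite (this is precisely where the hypothesis on the $F_i(x_i)$ is used, and where the omitted ``flatness with zero-dimensional fibers'' generalization mentioned in the footnote would come in). I would present this second version in the paper, since it is self-contained modulo Lemma~\ref{lem:eps} and makes transparent the role of the hypothesis, relegating the ``finite morphisms are closed'' viewpoint to a remark.
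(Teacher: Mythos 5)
Your first (integrality) argument for the hard direction is a genuine and cleaner alternative to the paper's. The paper never observes that the hypothesis makes $\F[\vec x]$ a finite module over $\F[F_1,\dotsc,F_m]$; instead, for dense subfields of $\C$ it runs a sequence/accumulation argument (the coordinates $\nu^{(k)}_i$ cluster around the finitely many roots of $F_i$, so one extracts a convergent subsequence), and for general algebraically closed fields it invokes Lemma~\ref{lem:eps} and shows, by a leading-degree count on $F_i(\chi_i(\varepsilon))$, that each $\chi_i$ must be a genuine power series so that $\vec\chi(0)$ is a bona fide solution. Your integrality view gets the same conclusion structurally in one line---a finite morphism is closed, so $\image(F)$ over $\overline\F$ is already Zariski-closed and the closure adds nothing---and makes the role of the hypothesis transparent (it is exactly a finiteness/Noether-normalization condition). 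The paper's hands-on version has the compensating advantage of making visible the more general sufficient condition mentioned in its footnote (flatness with zero-dimensional fibers). Your ``second approach'' is, as you say, essentially the paper's argument and is correct.

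Two gaps remain. First, your restructuring of the easy direction via Proposition~\ref{prop:geometric} does not close: the chain solvable $\Rightarrow$ no geometric certificate $\Rightarrow$ $0\in\overline{\image(F)}$ lands you only in the \emph{Zariski} closure, and for case (1) the Zariski closure of $F(\F^n)$ contains the Euclidean closure (the Zariski topology is coarser), so you have proved the weaker containment and would still need the Mumford-type equality of the two closures---precisely the step you were trying to sidestep. The direct argument is simpler and is what is wanted: a $\C$-solution $\vec a$ is approximated by points of $\F^n$ by density, and $F$ is continuous, so $F(\vec a)=0$ is a Euclidean limit of points of $F(\F^n)$. Second, the integrality claim fails verbatim if some $F_i(x_i)$, $i\le n$, is a nonzero \emph{constant}, which the lemma's wording permits: then $x_i$ satisfies no monic relation over $\F[F_1,\dotsc,F_m]$ and $\F[\vec x]$ is not finite over it. That case is trivial (the image lies in the closed hyperplane $y_i=F_i\neq 0$, so $0$ is not in any closure and the system is unsolvable), but it needs to be split off before citing Lying Over / Going Up, or the hypothesis should be read as ``non-constant.''
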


\begin{proof}
If the system $F$ has a common solutions, then $0$ is in the image of $F$ and hence in its closure. 

Conversely, suppose $0$ is in the closure of the image of $F$. We first prove case (1) (the characteristic zero case) as it is somewhat simpler and gives the main idea, and then we prove case (2), the case of an arbitrary algebraically closed field.

(1) Dense subfields of $\C$. First, we note that the closure of the image of $F$ in the Zariski topology agrees with its closure in the standard Euclidean topology on $\F^{n}$, induced by the Euclidean topology on $\C^{n}$. For $\F = \C$, see, \eg, \cite[Theorem~2.33]{mumford}. For other dense $\F \subsetneq \C$, suppose $\vec{y}$ is in the $\F$-Zariski-closure of $F(\F^{n})$, that is, every $\F$-polynomial that vanishes everywhere on $F(\F^{n})$ also vanishes at $\vec{y}$. By the aforementioned result for $\C$, there is a sequence of points $\vec{x}_1, \vec{x}_2, \dotsc \in \C^{n}$ such that $\vec{y} = \lim_{k \to \infty} F(\vec{x}_k)$. As $\F$ is dense in $\C$ in the Euclidean topology, there is similarly a sequence of points $\vec{x}'_1, \vec{x}'_2, \dotsc \in \F^{n}$ such that $|\vec{x}_k - \vec{x}'_k| \leq 1/k$ for all $k$. Hence $\lim_{k \to \infty} \vec{x}_k = \lim_{k \to \infty} \vec{x}'_k$. Each $F(\vec{x}'_k) \in \F^{m}$, so we get a sequence of points $F(\vec{x}'_1), F(\vec{x}'_2), \dotsc \in \F^{m}$ whose limit is $\vec{y}$.

In particular, $0$ is in the (Zariski-)closure of the image of $F$ if and only if there is a sequence of points $v^{(1)}, v^{(2)}, v^{(3)}, \dotsc \in \image(F)$ such that $\lim_{k \to \infty} v^{(k)} = 0$. As each $v^{(k)}$ is in the image of $F$, there is some point $\nu^{(k)} \in \F^{n}$ such that $v^{(k)} = F(\nu^{(k)})$. As the $v^{(k)}$ approach the origin, each $F_i(\nu^{(k)})$ approaches $0$, since it is the $i$-th coordinate of $v^{(k)}$: $v^{(k)}_i = F_i(\nu^{(k)})$. 

In particular, since $F_1(\vec{x})$ depends only on $x_1$ and is nonzero (by assumption), the first coordinates $\nu^{(k)}_{1}$ must accumulate around the finitely many zeroes of $F_1(x_1)$. Similarly for each coordinate $i=1,\dotsc,n$ of $\nu^{(k)}$. 
Thus there is an infinite subsequence of the $\nu^{(k)}$ that approaches one single solution $\vec{z}$ to $F=0$. By choosing such a subsequence and re-indexing, we may assume that $\lim_{k \to \infty} \nu^{(k)} = \vec{z}$.

Finally, by assumption and continuity, we have
\[
0 = \lim_{k \to \infty} v^{(k)} = \lim_{k \to \infty} F(\nu^{(k)}) = F(\lim_{k \to \infty} \nu^{(k)}) = F(\vec{z}),
\]
so $\vec{z}$ is a common root of the original system $F_1 = \dotsb = F_m = 0$. Hence, if $0$ is in the closure of the image of $F$, then $0$ is in the image.

(2) $\F$ any algebraically closed field. Here we cannot use an argument based on the Euclidean topology, but there is a suitable, purely algebraic analogue, encapsulated in the following lemma:

\begin{lemma}[{See, \eg, \cite[Lemma~20.28]{BCS}}] \label{lem:eps}
If $p$ is a point in the closure of the image of a polynomial map $F\colon \F^{n} \to \F^{m}$, then there are formal Laurent series\footnote{A formal Laurent series is a formal sum of the form $\sum_{k=-k_0}^{\infty} a_k \varepsilon^{k}$. By ``formal'' we mean that we are paying no attention to issues of convergence (which need not even make sense over various fields), but are just using the degree of $\varepsilon$ as an indexing scheme.} $\chi_1(\varepsilon), \dotsc, \chi_n(\varepsilon)$ in a new variable $\varepsilon$ such that $F_i(\chi_1(\varepsilon), \dotsc, \chi_n(\varepsilon))$ is in fact a \emph{power series}---that is, involves no negative powers of $\varepsilon$---for each $i=1,\dotsc,m$, and such that evaluating the power series $(F_1(\vec{\chi}(\varepsilon)), \dotsc, F_m(\vec{\chi}(\varepsilon))$ at $\varepsilon=0$ yields the point $p$.
\end{lemma}

Note that the evaluation at $\varepsilon=0$ must occur \emph{after} applying $F_i$, since each individual $\chi_i$ may involve negative powers of $\varepsilon$.

As $F_1$ involves only $x_1$, in order for $F_1(\vec{\chi}(\varepsilon)) = F_1(\chi_1(\varepsilon))$ to be a power series in $\varepsilon$, it must be the case that $\chi_1(\varepsilon)$ itself is a power series (contains no negative powers of $\varepsilon$). For if the highest degree term of $F_1$ is some constant times $x_1^{d}$, and the lowest degree term of $\chi_1(\varepsilon)$ is of degree $-D$, then $F_1(\chi_1(\varepsilon))$ contains the monomial $\varepsilon^{-dD}$ with nonzero coefficient. A similar argument applies to $\chi_i$ for $i=1,\dotsc, n$. Thus each $\chi_i$ is in fact a power series, involving no negative terms of $\varepsilon$, and hence can be evaluated at $0$. Since evaluating at $\varepsilon=0$ now makes sense even before applying the $F_i$, and is a ring homomorphism (we might say, ``is continuous with respect to the ring operations''), we get that
\[
0 = F_i(\vec{\chi}(\varepsilon))|_{\varepsilon=0} = F_i(\vec{\chi}(\varepsilon)|_{\varepsilon=0}) = F_i(\vec{\chi}(0))
\]
for each $i=1,\dotsc,m$, and hence $\vec{\chi}(0)$ is a solution to $F_1(\vec{x}) = \dotsb = F_m(\vec{x}) = 0$.
\end{proof}

\begin{proof}[Proof of Proposition~\ref{prop:geometric}]
Let $F_1, \dotsc, F_m$ be an unsatisfiable system of equations over $\F$ satisfying the conditions of Lemma~\ref{lem:geometric}, and let $F = (F_1, \dotsc, F_m) \colon \F^{n} \to \F^{m}$ be the corresponding polynomial map. 

First, suppose that $F_1 = \dotsb = F_m = 0$ has a solution. Then $0 \in \image(F)$, so any $C(\f_1, \dotsc, \f_m)$ that vanishes everywhere on $\image(F)$, as required by condition (\ref{condition:geom_ideal}) of Definition~\ref{def:geompf}, must vanish at $\vec{0}$. In other words, $C(0,\dotsc,0) = 0$, contradicting condition (\ref{condition:geom_nonzero}). So there are no geometric certificates.

Conversely, suppose $C(\f_1, \dotsc, \f_m)$ is a geometric certificate. Then $C$ vanishes at every point of the image $\image(F)$ and hence at every point of its closure $\overline{\image(F)}$, by (Zariski-)continuity. By condition (\ref{condition:geom_nonzero}) of Definition~\ref{def:geompf}, $C(0,\dotsc,0) = 1$. Since $C$ does not vanish at the origin, $\vec{0} \notin \overline{\image(F)}$. Then by Lemma~\ref{lem:geometric}, $\vec{0}$ is not in the image of $F$ and hence $F_1 = \dotsb = F_m = 0$ has no solution.
\end{proof}

Finally, as with \I certificates and Hilbert-like \I certificates (see Section~\ref{sec:syzygy}), a \definedWord{geometric zero-certificate} for a system of equations $F_1(\vec{x}), \dotsc, F_m(\vec{x})$ is a polynomial $C(\f_1, \dotsc, \f_m) \in \langle \f_1, \dotsc, \f_m \rangle$---that is, such that $C(0,\dotsc,0) = 0$---and such that $C(F_1(\vec{x}), \dotsc, F_{m}(\vec{x})) = 0$ identically as a polynomial in $\vec{x}$. The same arguments as in the case of algebraic certificates show that any two geometric certificates differ by a geometric zero-certificate, and that the geometric certificates are closed under multiplication. Furthermore, the set of geometric zero-certificates is the intersection of the ideal of (algebraic) zero-certificates $\langle \f_1, \dotsc, \f_m \rangle \cap \langle \f_1 - F_1(\vec{x}), \dotsc, \f_m - F_m(\vec{x}) \rangle$ with the subring $\F[\vec{\f}] \subset \F[\vec{x}, \vec{\f}]$. As such, it is an ideal of $\F[\vec{\f}]$ and so is finitely generated. Thus, as in the case of \I certificates, the set of all geometric certificates can be specified by giving a single geometric certificate and a finite generating set for the ideal of geometric zero-certificates, suggesting an approach to lower bounds on the Geometric Ideal Proof System.

We note that geometric zero-certificates are also called syzygies amongst the $F_i$---sometimes ``geometric syzygies'' or ``polynomial syzygies'' to distinguish them from the ``module-type syzygies'' we discussed above in relation to Hilbert-like \I. As in all the other cases we've discussed, a generating set of the geometric syzygies can be computed using \Grobner bases, this time using elimination theory: compute a \Grobner basis for the ideal $\langle \f_1 - F_1(\vec{x}), \dotsc, \f_m - F_m(\vec{x}) \rangle$ using an order that eliminates the $x$-variables, and then take the subset of the \Grobner basis that consists of polynomials only involving the $\f$-variables. The ideal of geometric syzygies is exactly the ideal of the closure of the image of the map $F$, and for this reason this kind of syzygy is also well-studied. This suggests that geometric properties of the image of the map $F$ (or its closure) may be useful in understanding the complexity of individual instances of $\cc{coNP}$-complete problems.

\end{document}